\newcommand{\href}[1]{#1} 
\let\origdoublepage\cleardoublepage
\newcommand{\clearemptydoublepage}{%
  \clearpage{\pagestyle{empty}\origdoublepage}}
\let\cleardoublepage\clearemptydoublepage
\begin{document}

\pagestyle{empty}
\pagenumbering{roman}


\begin{titlepage}
        \begin{center}
        \vspace*{1.0cm}

        \Huge
        {\bf Safe Memory Reclamation Techniques}

        \vspace*{1.0cm}

        \normalsize
        by \\

        \vspace*{1.0cm}

        \Large
        Ajay Singh \\

        \vspace*{3.0cm}

        \normalsize
        A thesis \\
        presented to the University of Waterloo \\ 
        in fulfillment of the \\
        thesis requirement for the degree of \\
        Doctor of Philosophy \\
        in \\
        Computer Science \\

        \vspace*{2.0cm}

        Waterloo, Ontario, Canada, 2024 \\

        \vspace*{1.0cm}

        \copyright\ Ajay Singh 2024 \\
        \end{center}
\end{titlepage}

\pagestyle{plain}
\setcounter{page}{2}

\cleardoublepage 
\phantomsection    
 
\addcontentsline{toc}{chapter}{Examining Committee}
\begin{center}\textbf{Examining Committee Membership}\end{center}
  \noindent
The following served on the Examining Committee for this thesis. The decision of the Examining Committee is by majority vote.
  \bigskip
  
\noindent
\begin{tabbing}
Internal-External Member: \=  \kill 
External Examiner: \>  Erez Petrank \\ 
\> Professor, \\
\> Dept. of Computer Science, \\
\> Technion — Israel Institute of Technology \\
\end{tabbing} 
  
\noindent
\begin{tabbing}
Internal-External Member: \=  \kill 
Supervisor(s): \> Trevor Brown \\
\> Associate Professor, \\ 
\> Cheriton School of Computer Science, University of Waterloo \\
\\
\> Peter Buhr \\
\> Associate Professor, \\
\> Cheriton School of Computer Science, University of Waterloo \\
\end{tabbing}
  
\noindent
\begin{tabbing}
Internal-External Member: \=  \kill 
Internal Member: \> Ali Mashtizadeh \\
\> Associate Professor, \\
\> Cheriton School of Computer Science, University of Waterloo \\
\\
\> Sihang Liu \\
\> Assistant Professor, \\
\> Cheriton School of Computer Science, University of Waterloo \\
\end{tabbing}
  
\noindent
\begin{tabbing}
Internal-External Member: \=  \kill 
Internal-External Member: \> Wojciech Golab \\
\> Professor, \\
\> Dept. of Electrical and Computer Engineering, \\
\> University of Waterloo \\
\end{tabbing}


\cleardoublepage
\phantomsection    

 \addcontentsline{toc}{chapter}{Author's Declaration}
 \begin{center}\textbf{Author's Declaration}\end{center}

 \noindent
I hereby declare that I am the sole author of this thesis. This is a true copy of the thesis, including any required final revisions, as accepted by my examiners.
 \noindent  
  \bigskip
  
  \noindent
I understand that my thesis may be made electronically available to the public.

\cleardoublepage
\phantomsection    


\addcontentsline{toc}{chapter}{Abstract}
\begin{center}\textbf{Abstract}\end{center}

This dissertation presents three paradigms to address the challenge of concurrent memory reclamation, manifesting as use-after-free errors that arise in concurrent data structures using non-blocking techniques. Each paradigm aligns with one of our three objectives for practical and safe memory reclamation algorithms.

\noindent
\textbf{Objective 1:} Design memory reclamation algorithms that are fast, have a bounded memory footprint, and are easy to use — requiring neither intrusive changes to data structures nor specific architecture or compiler support. These algorithms should also deliver consistent performance across various workloads and be applicable to a wide range of data structures. To achieve this, we introduce the neutralization paradigm with the NBR (Neutralization-Based Reclamation) algorithm and its enhanced version, NBR+ (Optimized Neutralization-Based Reclamation). These algorithms use POSIX signals and a lightweight handshaking mechanism to facilitate safe memory reclamation among threads. By relying solely on atomic reads and writes, they achieve bounded garbage and high performance with minimal overhead compared to existing algorithms. They are straightforward to implement, similar in reasoning and programming effort to two-phased locking, and compatible with numerous data structures.

\noindent
\textbf{Objective 2:} Eliminate the asymmetric synchronization overhead in existing reclamation algorithms, which often incur costly memory fences while eagerly publishing reservations, as seen in algorithms like hazard pointers and hazard eras. We propose the reactive synchronization paradigm, implemented through deferred memory reclamation and POSIX signals. This mechanism enables threads to privately track memory references (or reservations) and share this information on demand, using the publish-on-ping algorithm. This approach serves as a drop-in replacement for hazard pointers and hazard eras and includes a variant (EpochPOP) that combines epochs with the robustness of hazard pointers to approach the performance of epoch-based reclamation.

\noindent
\textbf{Objective 3:} Completely eliminate the batching common in current reclamation algorithms to allow immediate memory reclamation, similar to sequential data structures, while maintaining high performance. 
We introduce Conditional Access, a hardware-software co-design paradigm implemented in a graphite multi-core simulator. This paradigm leverages cache coherence to enable efficient detection of potential use-after-free errors without explicit shared-memory communication or additional coherence traffic. Conditional Access provides programmers with hardware instructions for immediate memory reclamation with minimal overhead in optimistic data structures.

To validate our claims, we designed and conducted extensive benchmark tests to evaluate all proposed algorithms on high-end machines under various scenarios. We paired these algorithms with several real-world concurrent data structures, representing various memory access patterns, and compared their time and space efficiency against numerous state-of-the-art memory reclamation algorithms, demonstrating significant improvements.



\cleardoublepage
\phantomsection    

\addcontentsline{toc}{chapter}{Acknowledgements}
\begin{center}\textbf{Acknowledgements}\end{center}

My Saashtaang Dandvat Pranaam (signifies complete reverence and surrender of ego to a revered figure) to my Gurus (the one who dispels ignorance) who have shown me the way to good research.

I express my deepest gratitude to my Guru, Trevor Brown, whose insights and guidance have profoundly shaped my development as a researcher. His ability to offer fresh perspectives during our meetings was truly inspiring. I recall moments of uncertainty that transformed into clarity, a plethora of ideas, and renewed enthusiasm thanks to Trevor's mentorship. His knack for concluding brainstorming sessions and discussions on a positive and creative note greatly enhanced my experience.
I feel fortunate to be one of Trevor's disciples, benefitting from his gentle mentorship and the freedom and motivation he provided to explore my ideas. His guidance was instrumental in steering me towards impactful work, and his support was a source of reassurance, even during challenging times. Trevor's emphasis on automation and efficiency in research has been invaluable, saving me considerable time. His lightning-fast solutions to bottlenecks and live coding sessions made me affectionately nickname him "the Flash."
I am deeply grateful for the additional term of financial support that allowed me to focus solely on writing my thesis. Beyond our direct interactions, Trevor’s work ethic and approach to research provided significant passive learning opportunities. I also thoroughly enjoyed TAing for his CS798 course. Beyond identifying my own strengths and weaknesses with his support, one of the most enduring lessons I have learned from him is how to truly enjoy the research process.

Holding Trevor’s hands, I had the fortune of collaborating with Guru Michael Spear from Lehigh University on our hardware-software co-design project, which evolved into Conditional Access, as well as another project that was ongoing at the time of writing. Mike’s approach to idea generation and evaluation, coupled with his talent for infusing humor into our discussions, made our collaboration both educational and enjoyable. I am grateful to him for teaching me how to explore ideas, scrutinize them, and extract their essence.
In the early days of our collaboration, observing the debates between Mike and Trevor, as well as engaging in discussions with them, was pivotal in my development as a researcher. This experience was especially valuable when navigating challenges with the Graphite simulator and writing manuscripts.

I am also deeply thankful to Guru Ali Mashtizadeh for introducing me to the system research community through our work on NBR and CLpush. His advice on connecting the dots and prioritizing information while reading research papers and generating new ideas has been invaluable.
The vibrant atmosphere he creates during paper submissions, with the added bonus of free food, is particularly memorable. I appreciate his enthusiasm in visiting the labs to assist with code, simulator, and kernel-level concepts, as well as his patience while recording my first conference presentation video for PPOPP 2021.

Guru Peter Buhr deserves special mention for his constructive feedback on my presentations and writing. I had the privilege of observing him during reading sessions and while TAing for his concurrency course. I aspire to embody even a fraction of his qualities as an educator. From him, I have learned how complex concepts can be distilled into simpler forms and the importance of asking the most profound questions in the simplest possible ways.

Although vastly experienced and knowledgeable, working with Trevor Brown, Michael Spear, Ali Mashtizadeh, and Peter Buhr felt like collaborating with peers. Their approachability and willingness to engage in discussions on equal footing greatly enriched my experience.

I sincerely thank the members of my committee — Professor Wojciech Golab, Professor Sihang Liu, and Professor Erez Petrank — for reviewing my work and providing invaluable feedback. Their insights not only helped refine the writing of this thesis but also broadened my understanding of various aspects of my research and potential future directions.

I also want to thank my lab mates and friends who made my time here memorable: William Sigouin for being a great TA, Gaetano Coccimiglio for his support in running a reading group, and Rosina Kharal for the home-cooked meals every time we met. Gautam Pathak provided companionship on biking trails and rock climbing adventures; Daewoo Kim was there for lunches during deadlines, accompanied me to the gym whenever we felt the need to work on our fitness, and offered chewing gum during stressful times, all while dealing with his own work in the lab. Mohammad Khalaji, Emil Tsalapatis, and Daewoo engaged me in enriching discussions about the PhD journey, the world, and beyond. I greatly appreciate Kenneth R. Hancock's willingness to read my TPDS draft on short notice. I am also grateful to Sean Ovens and Quan Pham for supporting me on the ongoing distributed memory project, which allowed me to focus more effectively on writing my thesis.
Lastly, special thanks to the friends with whom I spent most of my time outside the lab, Navya Nair from the faculty of environments, and Harish Mendu from Actuarial Science, for giving me a significant share of your time and for all the cooking and fun together at the student residence.

Special thanks to Lori Paniak and the other support staff at the Cheriton School of Computer Science for making my life easier by handling queries and administrative work. Additionally, I acknowledge the unexpected experience brought about by COVID-19, which added an unforeseen dimension to my journey.

I express my gratitude to Professor Sathya Peri, my Master's degree supervisor, for introducing me to the field of concurrency research.

\textbf{I would be remiss if I did not acknowledge the work of researchers who came before me, whose leftover ideas I picked up, whose code bases I repurposed, listening and watching whom I got inspiration. If at all there is something worthy of being called a novelty would be my perspective to the solutions presented in the dissertation. }

\cleardoublepage
\phantomsection    

\addcontentsline{toc}{chapter}{Dedication}
\begin{center}\textbf{Dedication}\end{center}

\begin{itemize}
    \item I dedicate this work to my mother, who, despite not being able to read or write, and my father, who could only study up to high school, both of whom supported my education in every possible way. I also dedicate it to my loving sisters, Chandni and Pooja.
    My heartfelt gratitude extends to my grandparents, who encouraged my parents to support my studies to whatever extent I wished.

    \item I dedicate this to my teachers and to my village Birahimpur, where every individual is part of my clan, for instilling in me a strong sense of purpose.


    \item Lastly, I cannot complete this without acknowledging the unwavering support of my partner, Swati Yadav, who, despite her own challenges, embraced my purpose as her own, sacrificed many personal moments and stood by me throughout these years.
\end{itemize}


\cleardoublepage
\phantomsection    


\renewcommand\contentsname{Table of Contents}
\tableofcontents
\cleardoublepage
\phantomsection    

\addcontentsline{toc}{chapter}{List of Figures}
\listoffigures
\cleardoublepage
\phantomsection		

\addcontentsline{toc}{chapter}{List of Tables}
\listoftables
\cleardoublepage
\phantomsection		




\pagenumbering{arabic}

\newcommand{\punt}[1]{}
\newcommand{\cmnt}[1]{}

\definecolor{xxxcolor}{rgb}{0.8,0,0}
\newcommand{\XXX}[1]{{\color{xxxcolor} XXX: #1}\xspace}

\newcommand{\func}[1]{\texttt{#1}}
\newcommand{\var}[1]{\texttt{#1}}

\newcommand{\nosplit}{\linebreak}

\def\nohyphens{\hyphenpenalty=10000\exhyphenpenalty=10000}

\newcommand{\tilda}{\symbol{126}}

\newcommand{\ang}[1]{\langle #1 \rangle}
\newcommand{\Ang}[1]{\Big\langle #1 \Big\rangle}
\newcommand{\ceil}[1]{\lceil #1 \rceil}
\newcommand{\floor}[1]{\lfloor #1 \rfloor}

\newtheorem{theorem}{Theorem}
\newtheorem{lemma}[theorem]{Lemma}
\newtheorem{corollary}[theorem]{Corollary}
\newtheorem{proposition}[theorem]{Proposition}
\newtheorem{property}[theorem]{Property}
\newtheorem{claim}[theorem]{Claim}
\newtheorem{definition}{Definition}
\newtheorem{guarantee}{Guarantee}
\newtheorem{requirement}[theorem]{Requirement}
\newcounter{history}
\newcommand{\hist}[1]{\refstepcounter{history} {#1}}

\newcommand{\trevor}[1]{\textbf{[[[Trevor: #1]]]}}
\newcommand{\ajay}[1]{\textcolor{red}{#1}}


\newtheorem{acknowledgement}[theorem]{Acknowledgement}
\newtheorem{observation}[theorem]{Observation}
\newtheorem{assumption}[theorem]{Assumption}

\newcommand{\chapref}[1]{Chapter~\ref{chap:#1}}
\newcommand{\secref}[1]{Section~\ref{sec:#1}}
\newcommand{\figref}[1]{Figure~\ref{fig:#1}}
\newcommand{\tabref}[1]{Table~\ref{tab:#1}}
\newcommand{\stref}[1]{step~\ref{step:#1}}
\newcommand{\thmref}[1]{Theorem~\ref{thm:#1}}
\newcommand{\lemref}[1]{Lemma~\ref{lem:#1}}
\newcommand{\insref}[1]{line~\ref{ins:#1}}
\newcommand{\corref}[1]{Corollary~\ref{cor:#1}}
\newcommand{\axmref}[1]{Proposition~\ref{axm:#1}}
\newcommand{\defref}[1]{Definition~\ref{def:#1}}
\newcommand{\eqnref}[1]{Eqn(\ref{eq:#1})}
\newcommand{\eqvref}[1]{Equivalence~(\ref{eqv:#1})}
\newcommand{\ineqref}[1]{Inequality~(\ref{ineq:#1})}
\newcommand{\exref}[1]{Example~\ref{ex:#1}}
\newcommand{\propref}[1]{Property~\ref{prop:#1}}
\newcommand{\clmref}[1]{Claim~\ref{clm:#1}}
\newcommand{\obsref}[1]{Observation~\ref{obs:#1}}
\newcommand{\asmref}[1]{Assumption~\ref{asm:#1}}
\newcommand{\thref}[1]{Thread~\ref{th:#1}}
\newcommand{\trnref}[1]{Transaction~\ref{trn:#1}}
\newcommand{\lstref}[1]{listing~\ref{lst:#1}}
\newcommand{\Lstref}[1]{Listing~\ref{lst:#1}}
\newtheorem{question}{Question}

\newcommand{\subsecref}[1]{SubSection{\ref{subsec:#1}}}

\newcommand{\histref}[1]{\ref{hist:#1}}

\newcommand{\apnref}[1]{Appendix~\ref{apn:#1}}
\newcommand{\invref}[1]{Invariant~\ref{inv:#1}}

\newcommand{\Chapref}[1]{Chapter~\ref{chap:#1}}
\newcommand{\Secref}[1]{Section~\ref{sec:#1}}
\newcommand{\Figref}[1]{Figure~\ref{fig:#1}}
\newcommand{\Tabref}[1]{Table~\ref{tab:#1}}
\newcommand{\Stref}[1]{Step~\ref{step:#1}}
\newcommand{\Thmref}[1]{Theorem~\ref{thm:#1}}
\newcommand{\Lemref}[1]{Lemma~\ref{lem:#1}}
\newcommand{\Corref}[1]{Corollary~\ref{cor:#1}}
\newcommand{\Axmref}[1]{Proposition~\ref{axm:#1}}
\newcommand{\Defref}[1]{Definition~\ref{def:#1}}
\newcommand{\Eqref}[1]{eq(\ref{eq:#1})}
\newcommand{\Eqvref}[1]{Equivalence~(\ref{eqv:#1})}
\newcommand{\Ineqref}[1]{Inequality~(\ref{ineq:#1})}
\newcommand{\Exref}[1]{Example~\ref{ex:#1}}
\newcommand{\Propref}[1]{Property~\ref{prop:#1}}
\newcommand{\Obsref}[1]{Observation~\ref{obs:#1}}
\newcommand{\Asmref}[1]{Assumption~\ref{asm:#1}}
\newcommand{\reqref}[1]{Requirement~\ref{req:#1}}
\newcommand{\guarref}[1]{Guarantee~\ref{guar:#1}}

\newcommand{\Lineref}[1]{Line~\ref{lin:#1}}
\newcommand{\lineref}[1]{line~\ref{lin:#1}}
\newcommand{\algoref}[1]{Algorithm~\ref{algo:#1}}
\newcommand{\Algoref}[1]{{\sf Algorithm$_{\ref{algo:#1}}$}}

\newcommand{\Apnref}[1]{Section~\ref{apn:#1}}
\newcommand{\Invref}[1]{Invariant~\ref{inv:#1}}
\newcommand{\Confref}[1]{Conflict~\ref{conf:#1}}

\newcommand{\theqed}{$\Box$}
\newcommand{\nsqed}{\hspace*{\fill} \theqed}

\renewcommand{\thefootnote}{\alph{footnote}}
\newcommand{\ignore}[1]{}
\newcommand{\myparagraph}[1]{\noindent\textbf{#1}}

%

\newcommand{\lastup} {lastUpdt}
\newcommand{\lupdt}[2] {#2.lastUpdt(#1)}
\newcommand{\fkmth}[3] {#3.firstKeyMth(#1, #2)}

%
\newcommand{\nz}{\texttt{restartable}\xspace}
\newcommand{\cas}[3] {CAS(#1, #2, #3)}

\newcommand{\qp}{\emph{quiescent-phase}\xspace}
\newcommand{\rdp}{\emph{$\Phi_{read}$}\xspace}
\newcommand{\wtp}{\emph{$\Phi_{write}$}\xspace}

\newcommand{\rd}{\emph{reader}\xspace}
\newcommand{\wt}{\emph{writer}\xspace}
\newcommand{\rl}{\emph{reclaimer}\xspace}
\newcommand{\rrc}{\emph{reader-reclaimer}\xspace}
\newcommand{\wrc}{\emph{writer-reclaimer}\xspace}

\newcommand{\knbr}{\emph{NBR}\xspace}
\newcommand{\ds}{\emph{data structure}\xspace}
\newcommand{\rb}{\emph{retireBag}\xspace}
\newcommand{\lb}{\texttt{limboBag}\xspace}
\newcommand{\hw}{\emph{HiWatermark}\xspace}
\newcommand{\lw}{\emph{LoWatermark}\xspace}
\newcommand{\smr}{\emph{safe memory reclamation}\xspace}

\newcommand{\tid}[1]{$T_#1$}
\newcommand{\pred}{\texttt{pred}\xspace}
\newcommand{\curr}{\texttt{curr}\xspace}
\newcommand{\suc}{\texttt{succ}\xspace}

\newcommand{\nbr}{NBR\xspace}
\newcommand{\nbrp}{NBR+\xspace}
\newcommand{\rcu}{RCU\xspace}
\newcommand{\qsbr}{QSBR\xspace}
\newcommand{\debra}{DEBRA\xspace}
\newcommand{\ibr}{IBR\xspace}
\newcommand{\geibr}{2GEIBR\xspace}
\newcommand{\cl}{crystallineL\xspace}
\newcommand{\cw}{crystallineW\xspace}
\newcommand{\hp}{HP\xspace}
\newcommand{\he}{HE\xspace}
\newcommand{\wfe}{WFE\xspace}
\newcommand{\ebr}{EBR\xspace}
\newcommand{\rgp}{\textit{neutralization event}\xspace}

\newcommand{\ssj}{\texttt{sigsetjmp()}}
\newcommand{\slj}{\texttt{siglongjmp()}}

\definecolor{light-gray}{gray}{0.80}
\lstdefinestyle{ajstyle}{ %
	language=C++,
	numbers=left,                    
	morekeywords={*, startOp,...},   
	breaklines=true,                 
	frame=single,
belowcaptionskip=1\baselineskip,
showstringspaces=false,
basicstyle=\ttfamily,
keywordstyle=\bfseries\color{green!40!black},
commentstyle=\itshape\color{purple!40!black},
identifierstyle=\color{blue},
stringstyle=\color{orange},	
}

\lstdefinestyle{ajstyleds}{ %
	language=C++,
	morekeywords={*, startOp,...},   
	breaklines=true,                 
	frame=single,
belowcaptionskip=1\baselineskip,
xleftmargin=\parindent,	
showstringspaces=false,
basicstyle=\scriptsize\ttfamily,
keywordstyle=\bfseries\color{green!40!black},
commentstyle=\itshape\color{purple!40!black},
identifierstyle=\color{blue},
stringstyle=\color{orange},	
numberstyle=\scriptsize,
}
\lstset{escapechar=|,style=ajstyle}


\algnewcommand{\IfThenElse}[3]{
  \State \algorithmicif\ #1\ \algorithmicthen\ #2\ \algorithmicelse\ #3}
\algnewcommand{\IfThen}[2]{
  \State \algorithmicif\ #1\ \algorithmicthen\ #2}  
\algnewcommand{\IfThenNoS}[2]{
 \algorithmicif\ #1\ \algorithmicthen\ #2}  
\algnewcommand{\IfNoS}[2]{
 \algorithmicif\ #1\ \algorithmicthen\ #2}

\newcommand{\crdp}[1]{\texttt{cread}(#1)}
\newcommand{\cwrp}[2]{\texttt{cwrite}(#1, #2)}
\newcommand{\cwr}{\texttt{cwrite}\xspace}
\newcommand{\crd}{\texttt{cread}\xspace}
\newcommand{\utag}[1]{\texttt{untagOne} #1}
\newcommand{\utagp}[1]{\texttt{untagOne}(#1)}
\newcommand{\utagall}{\texttt{untagAll}\xspace}
\newcommand{\hbrbit}{\textit{accessRevokedBit}\xspace}
\newcommand{\hbrset}{\textit{tagSet}\xspace}
\newcommand{\head}{\textit{head}\xspace}
\newcommand{\ca}{\textit{Conditional Access}\xspace}
\newcommand{\uaf}{\textit{use-after-free}\xspace}

\newcommand{\aj}[1]{\textcolor{red}{#1}}


\chapter{Introduction}
This dissertation focuses on the design and implementation of several practical
safe memory reclamation algorithms for concurrent data structures.

\section{Motivation}
The non-blocking (or lock-free) programming paradigm for designing concurrent data structures has emerged as an alternative to the blocking paradigm on ubiquitous multicore systems. This is primarily due to its advantages of scalability, fault tolerance (preemption tolerance in practice), and freedom from issues typical of the blocking paradigm, such as deadlock, priority inversion, and convoying.
Key building blocks of the non-blocking programming paradigm are non-blocking data structures, for example lists, stacks, queues, trees, dequeues, priority queues, and snapshot objects. 
Given their benefits, these data structures are being progressively incorporated into a variety of concurrent software systems. Examples include open-source standard libraries such as Facebook's Folly, Intel TBB, libcds, and Concurrency Kit; operating system kernels, including Massalin and Pu's lock-free OS kernel~\cite{massalin1992lock} and RCU-protected lists, hash tables, and radix trees in other operating system kernels~\cite{mckenney2013rcu}), and index structures within database systems.

The advantages of non-blocking data structures stem from a key design characteristic:
threads can read or update shared memory locations without blocking
each other. This means that threads can concurrently read a shared memory location while another thread updates it.
This is facilitated by efficient hardware-level atomic read-modify-write primitives, such as compare-and-swap instruction. Apart from being non-blocking, these primitives are fine-grained, reducing contention granularity to a few words (32 or 64 bits) and permitting all possible interleavings of thread executions, unlike their blocking counterparts.
Consequently, at least one thread can make progress independently of other threads, ensuring system-wide forward progress, and multiple threads can work on independent parts of a data structure in parallel.
This enables all threads to leverage multiple cores, providing scalability, especially in real-world scenarios of oversubscription and fault tolerance during thread preemption.

However, this same design characteristic makes designing and implementing non-blocking data structures and algorithms more difficult than their blocking counterparts.
Data structure designers must ensure the correctness of individual reads and updates to shared memory locations.
Since reads are concurrent with updates, programmers must establish that the result of reads is correct. Similarly, multiple threads may attempt updates to a shared memory location concurrently, but only one of them succeeds at a time, such as when using the compare-and-swap instruction. In both cases, if the reads are incorrect or if the compare-and-swap fails, the programmer may need to take corrective action.

In addition to the complexity of the design, practitioners face another synchronization problem of similar complexity when memory recycling is involved. Since threads can access memory locations when they are being updated, threads may access a location that has been freed (returned to the operating system for reuse), resulting in undefined behavior as also mentioned in the C/C++ standard 2020. Such accesses cause program crashes due to segmentation faults and are known as \textit{use-after-free} errors. This problem of concurrent memory reclamation is the central topic in this dissertation.

One might consider using garbage collectors to solve the concurrent memory reclamation
problem. Nevertheless, several programming languages like C/C++ do not implement garbage collection. Moreover, even in languages that incorporate garbage collection for heap memory, there is a growing tendency to resort to off-heap memory to alleviate memory pressure in big data applications. For example, Oracle's Java supports unmanaged off-heap memory allocation in addition to garbage-collected heap memory allocations~\cite{java22}. This is used for in-memory data structures in the Cassandra and HBase database systems~\cite{fakhoury2024nova}.
Consequently, data structures in these cases require concurrent memory reclamation algorithms.  

There has been significant progress in handling the complexity of designing time-- and space-- efficient non-blocking data structures.
For example, Brown, Ellen, and Ruppert proposed LLX, VLX, and SCX primitives to simplify programming non-blocking tree-based data structures~\cite{brown2014general}. More recently, Naama Ben-David, Blelloch and Wei built on prior work by Turek, Shasha, Prakash and Barnes on lock-free locks~\cite{turek1992locking} to produce lock-free data structures from their lock-based implementations~\cite{ben2022lock}. 

Similarly, various memory reclamation algorithms have been designed to solve the problem of concurrent memory reclamation~\cite{hart2007performance,fraser2004practical,mckenney1998read,brown2015reclaiming, michael2004hazard, gidenstam2008efficient, valois1995lock, wen2018interval, ramalhete2017brief, nikolaev2020universal, nikolaev2019hyaline, nikolaev2021crystalline, balmau2016fast, alistarh2017forkscan,alistarh2018threadscan, alistarh2014stacktrack,dragojevic2011power, hart2007performance, dice2016fast, zhou2017hand, morrison2015temporally, kang2020marriage, braginsky2013drop, cohen2015efficient, cohen2015automatic, cohen2018every, moreno2023releasing, cohen2018every, sheffi2021vbr}. These algorithms are commonly referred to as \textbf{safe memory reclamation algorithms}.
Popularized by the seminal work of Maged Michael~\cite{michael2004hazard}, the task of a safe memory reclamation algorithm is to synchronize the freeing of a shared memory location with threads that can concurrently access it. 


Designing and implementing a memory reclamation algorithm is as complex as designing and implementing non-blocking data structures.
This complexity is the main reason why memory reclamation algorithms are designed independently of the non-blocking data structures and algorithms they are used with.
Consequently, many reclamation algorithms appear to conflict with the design goals of non-blocking data structures when practitioners use them. 
For instance, some reclamation algorithms add overhead at every access to a shared memory location, undermining the goal of achieving faster searches. Others permit unbounded garbage accumulation, which jeopardizes the progress guarantees of the associated data structures.

For the non-blocking data structures to be suitable for real applications in production, they must be paired with a memory reclamation algorithm that maintains scalability, non-blocking progress, and ease of programmability. Additionally, the algorithm should be widely applicable, have a low memory footprint, and perform consistently across various workloads and scenarios.
This is the current focus of research on concurrent memory reclamation algorithms, leading to a growing body of work in this area. 
Therefore, in this thesis, we revisit the paradigms of designing concurrent memory reclamation algorithms and propose new approaches that address the shortcomings of previous approaches.

\section{Contribution}

The primary contribution of this dissertation is the introduction of several innovative safe memory reclamation algorithms utilizing three paradigms aimed at addressing use-after-free errors in data structures designed with non-blocking techniques. 
These paradigms include: Neutralization Paradigm, Reactive Synchronization Paradigm, and Hardware-Software Co-Design Paradigm. Using these paradigms, we have created six safe memory reclamation algorithms: NBR (neutralization-based reclamation), NBR+ (optimized neutralization-based reclamation), CA (conditional access), HazardPOP (Hazard Pointers with Publish on Ping), EpochPOP (Robust Epoch-Based Reclamation with Publish on Ping), and HazardEraPOP (Hazard Era with Publish on Ping).

\begin{itemize}
    \item 
\textbf{Neutralization Paradigm}. (\chapref{chapnbr},~\chapref{chapnbrp}, and~\chapref{chapappuse}).
In the first paradigm, accessing threads and reclaiming threads cooperate to eliminate use-after-free errors on a shared memory location. Essentially, all threads access a shared memory location in such a way that they can either discard their access at any time or reserve the location beforehand and announce their reservation when needed. A reclaiming thread communicates with all other threads in the system that it is about to reclaim a memory location. In response, other threads either discard their access to let the reclaiming thread reclaim the location or announce their reservation, indicating to the reclaiming thread to skip freeing unless the reservation is released. This cooperative mechanism is termed neutralization and is implemented using POSIX signals available in commodity operating systems, built on top of hardware Inter-Processor Interrupts.  

The two new algorithms using this paradigm, NBR and its signal-optimized version NBR+, only use atomic reads, writes, and read-modify-write instructions such as compare-and-swap and fetch-and-add and POSIX signals. These algorithms are non-blocking with a non-blocking operating system kernel. Interestingly, these algorithms significantly improve over the state of the art in safe memory reclamation: they are fast (NBR+ is faster than best known epoch based reclamation algorithms), achieve an upper bound on the amount of unreclaimed memory, are straightforward to use with many different data structures, and in most cases, require similar reasoning and programmer effort to two-phase locking. The algorithms were implemented, with results reproduced in Rust by other researchers.

\item \textbf{Reactive Synchronization Paradigm}.(\chapref{chaprsp}).

It is surprising that most of the well-known methods for safe memory reclamation algorithms impose uneven synchronization overhead. Although an individual thread seldom performs reclamation, other threads eagerly log and communicate costly bookkeeping related to reclamation, even if they are not accessing the shared memory being reclaimed. The process-wide memory barrier (\texttt{sys\_membarrier}~\cite{membarrierSystemwide} on Linux) used in Folly's hazard pointers do help reduce this uneven overhead, However, the availability and implementation of \texttt{sys\_membarrier} (which can be blocking) vary across kernel versions and architectures. When unavailable, the system may fall back to \texttt{mprotect}, which can degrade performance~\cite{mprotectoverhead}. Additionally, \texttt{sys\_membarrier} has been linked to security vulnerabilities~\cite{CVE202426602}.

We turn this paradigm on its head and mitigate the asymmetric overhead by allowing threads to silently maintain the reclamation-related bookkeeping information and communicate the information only right before a thread reclaims. We term this the Reactive Synchronization Paradigm for designing safe memory reclamation algorithms. Reactive synchronization is enabled with the help of POSIX signals. Although this paradigm utilizes signals similarly to the neutralization paradigm, it triggers a different behavior from threads.

We apply this paradigm to multiple state of the art safe memory reclamation algorithms, including hazard pointers (proposed for inclusion in C++26), epoch-based reclamation, and hazard-eras, thereby introducing a family of reactive synchronization-based algorithms. 
To evaluate these algorithms, we design a comprehensive benchmark suite that compares them with various state of the art memory reclamation methods and data structures.
The results show that the reactive synchronization-based algorithms achieve up to 5$\times$ improvement over the original algorithms while retaining their original desirable properties, such as ease of use, wide applicability and bounded unreclaimed garbage. 

\item \textbf{Co-design Reclamation Paradigm}.(\chapref{chapirp}).

The fourth algorithm we developed, named Conditional Access, is a co-design method that combines hardware and software to address use-after-free issues in concurrent data structures. 
At its core, it utilizes hardware cache coherence to offer programmers efficient memory access instructions. Using these instructions, programmers can safely access their program's memory and immediately release all memory locations.
It realizes an ideal case of having a concurrent data structure with the memory efficiency of a sequential data structure while retaining the scalability of a concurrent implementation. This is particularly useful in a system where a low free-to-use (time between free and reuse of a memory location) latency is desirable. 
For example, in the Universal Memory Allocator in the FreeBSD kernel, low free-to-use latency could help reduce memory pressure in data structures (e.g. radix zone) where memory is frequently allocated and freed~\cite{umasmrfreebsd} and also improve cache efficiency.

This hardware-software co-design paradigm pushes the boundary of the safe memory reclamation algorithm space by allowing implicit memory management, leveraging hardware to eliminate complex and suboptimal bookkeeping mechanisms at the data structure level.
The immediate reclamation of memory without compromising on speedup makes it ideal in modern data centers, to save costs related to memory over-allocation and to facilitate memory overcommitment~\cite{vmwareunderstanding}. For instance, in virtualized data centers, host machines can take advantage of memory overcommitment to allocate the available dynamic memory across several virtual machine instances.


We prototyped Conditional Access on the Open Source Graphite Multicore Simulator~\cite{miller2010graphite} and benchmark it against multiple state-of-the-art memory reclamation techniques. Our extensive evaluation showcases the practical feasibility, efficiency, and superiority of Conditional Access, strengthening the case for more exploration in the hardware-software co-design based approaches.

\end{itemize}

\section{Outline}

\chapref{chapbg} introduces the system model used to implement all four safe memory reclamation algorithms, along with the terminology employed throughout the thesis. It covers the categories of concurrent data structures that benefit from our reclamation algorithms and discusses the challenge of reclaiming memory, particularly the problem of use-after-free errors when threads free memory locations that others may still be using.

\chapref{chapsurvey} provides a survey of the latest safe memory reclamation algorithms, highlighting their characteristics and common issues. 
We identify three main problems with current approaches: the lack of several desirable properties in a single algorithm, the prevalence of asymmetric synchronization overhead in existing reclamation algorithms, and the neglect of lower-level system events, such as those in the allocator and cache subsystems.
These issues serve as motivation for proposing our three new paradigms, which underpin the design of the six safe memory reclamation algorithms presented in the subsequent chapters.

\chapref{chapnbr} introduces the first design paradigm, neutralization, to address the weaknesses of existing algorithms. 
We present the Neutralization-Based Reclamation (NBR) algorithm, including its pseudocode, correctness proof, and evidence of its ability to bound unreclaimed garbage.
Although NBR is theoretically efficient, preliminary evaluations reveal scalability issues on large NUMA machines, leading to the development of the optimized NBR+ algorithm, discussed in the next chapter.

\chapref{chapnbrp} introduces \nbrp, a highly scalable algorithm based on the neutralization paradigm.
We present its pseudocode, proof of correctness, and demonstrate that it effectively bounds the amount of unreclaimed garbage.
The chapter includes an extensive evaluation of both neutralization paradigm algorithms -- NBR and NBR+ -- paired with popular data structures, comparing them with multiple state-of-the-art safe memory reclamation techniques across various workloads.
This evaluation highlights the superior scalability of both NBR and NBR+.  

\chapref{chapappuse} examines the applicability of NBR and NBR+ across different data structures, categorizing them into compatible, semi-compatible, and incompatible types.
We review a selection of popular data structures to assess their compatibility with NBR and NBR+, as well as other popular state-of-the-art safe memory reclamation algorithms.
Furthermore, we compare the ease of use of NBR and NBR+ with other reclamation algorithms such as hazard pointers and DEBRA, using a data structure as an example.
The chapter concludes with a primer on the POSIX signal API, providing guidance for programmers to effectively utilize our algorithms.

\chapref{chapnbr},~\chapref{chapnbrp}, and~\chapref{chapappuse} are based on the following published manuscripts.
\begin{itemize}[noitemsep]
    \item \cite{singh2021nbr} NBR: Neutralization Based Reclamation. Ajay Singh, Trevor Brown, and Ali Mashtizadeh. In Principles and Practice of Parallel Programming (PPoPP), 2021. (received best artifact award and was among the top 4 manuscripts).
    \item \cite{singhTPDS2023NBRP} Simple, Fast and Widely Applicable Concurrent Memory Reclamation via Neutralization. Ajay Singh, Trevor Brown, and Ali Mashtizadeh. In Transactions on Parallel and Distributed Systems (TPDS), Feb. 2024.  Full version in ~\cite{singh2020nbr}.
\end{itemize}

\chapref{chaprsp} addresses the second issue of asymmetric synchronization overhead by proposing a reactive synchronization paradigm. This paradigm aims to reduce overhead by allowing threads to maintain and communicate reclamation-related information only when necessary.

\chapref{chapirp} addresses the third issue of overlooking lower-level system interactions in existing algorithms by introducing Conditional Access, a hardware-software co-design paradigm.
This chapter demonstrates how using lower-level system events can help achieve a memory footprint similar to sequential data structures while maintaining superior scalability. 
We propose new hardware instructions and implement a prototype on the Graphite multicore simulator. Additionally, we design a benchmark suite to evaluate Conditional Access, which confirms its efficiency and ideal memory footprint. 

This chapter is based on the following manuscript.
\begin{itemize}[noitemsep]
\item \cite{singh2023IPDPS} Efficient Hardware Primitives for Immediate Memory Reclamation in Optimistic Data Structures. Ajay Singh, Trevor Brown, and Michael Spear. International Parallel and Distributed Processing Symposium (IPDPS), 2023. Full version in ~\cite{singh2023efficient}. 
\end{itemize}

\chapref{chapconclusion} concludes the dissertation along with discussion on the limitations of the proposed algorithms and suggesting directions for future research.

\chapter{Background}
\label{chap:chapbg}

In this chapter, we discuss the assumed system model for all the safe memory reclamation algorithms presented in this dissertation.
We first discuss concurrent data structures, primarily non-blocking data structures and optimistic blocking data structures, and explain the challenge of concurrent memory reclamation associated with their use.
This challenge, which manifests itself as use-after-free errors, arises when threads access a memory location that is concurrently being freed. We then introduce a class of algorithms --- safe memory reclamation algorithms--- that address this challenge.
In addition, this chapter defines several key terms to help the reader understand the exposition of the algorithms in the subsequent chapters.

\section{System Model}

We consider the asynchronous shared memory model described by Herlihy~\cite{herlihy1991wait, herlihy1990linearizability}, which consists of a fixed number of \textit{processes}.
The term "asynchronous" implies that no assumptions are made about the underlying scheduler.  
Processes can be arbitrarily delayed or experience halting failures. Consequently, a delayed process is indistinguishable from a halted process. 
Each process has a private memory only accessible to itself and a shared memory accessible to all processes.

Shared memory is divided into primitive objects, typically up to 64 bits in size on current systems, and accessed using primitive hardware operations.
These include atomic \func{read} and \func{write} operations, as well as read-modify-write operations such as \func{compare-and-swap} and \func{fetch-and-add}.
A \func{read} operation atomically retrieves the last value written in a memory location, while a \func{write} operation atomically updates the content of a memory location.   
\func{Compare-and-swap} takes a memory address, an expected content of the memory address, and a new content to be written to the memory address as input. Atomically compares the current content at the memory address with the expected content, and if they match, updates the memory address with the new content, returning true. Otherwise, if the comparison fails, it returns false. A \func{fetch-and-add} takes a memory location and a number as input and atomically adds the input number to the current value at the memory location. 

Additionally, for the algorithms discussed in ~\chapref{chapnbr} through \chapref{chaprsp}, we assume that threads can communicate using POSIX signals.
The advantage of using POSIX signals is that we can relax the traditional thread failure model in the asynchronous shared memory setting. In theory, a thread can run arbitrarily slowly or halt altogether, and one cannot distinguish between a slow thread and a halted one. In practice, threads that appear to be delayed are either busy with other work, trapped in infinite loops, or descheduled. The operating system knows which threads are running, descheduled, or have terminated or become zombies (awaiting events to terminate). The \func{pthread\_kill} function used to signal threads returns an error code if a thread is a zombie, or terminated, allowing a signaling thread to learn about such threads. A programmer can use this information to ignore a signal sent to such threads, assuming that thread ids are not reused. 
This leaves running threads and descheduled threads.
Running threads will be interrupted by a signal to execute a signal handler. 
Notably, our model assumes that the signal receiving thread is interrupted before \func{pthread\_kill} returns, although, in practice, there is a slight delay, which is examined in \chapref{chapnbr}.
As for descheduled threads, modern schedulers are (somewhat) fair, and ensure each thread is scheduled to run periodically, at which point a thread will eventually execute a signal handler. In particular, the signal handler is run when the thread is context switched back, before any regular program instructions are executed.

\section{Concurrent Data Structures} 
In our context, a primitive object is called a \textit{field}, and a collection of fields forms a complex object, referred to as a \textit{node}. 
A linked concurrent data structure is an implementation of an abstract data type in which each node has one or more fields that point to other nodes.
Henceforth, whenever we refer to a concurrent data structure, we mean a linked concurrent data structure.
These data structures have a fixed set of \textit{entry points} that are pointers to nodes and support high-level operations, such as \textit{insert}, \textit{delete} and \textit{lookup}, among others.

\begin{itemize}
    \item \textbf{Insert} operation adds a node to the data structure if it is not already present.
    \item \textbf{Delete} operation removes a node from the data structure if it is present.
    \item \textbf{Contains} operation tests if a node is present in the data structure. 
\end{itemize}

Each \textit{operation} in turn is a sequence of steps (hardware primitives that work on primitive objects). 
We assume that these operations never return a pointer to a node.

\noindent
\textbf{States of a Node:}
We assume existence of a memory allocator that provides operations to allocate and free nodes.
A data structure object (or node) typically goes through several states from its allocation to the time it is reclaimed to be allocated again. These states are as follows:

\begin{itemize}
    \item \textbf{Unallocated:} A node is termed \textit{unallocated} when its memory is inaccessible to executing threads~\cite{sheffi2023era}.
    \item \textbf{Allocated:} The node is referred to as \textit{allocated} when the request for a memory region successfully returns from the underlying memory allocator, but the node has not yet been inserted into the data structure.
    \item \textbf{Reachable:} The node is inserted into the data structure and threads could access it by accessing a finite set of pointers from the entry point of the data structure.
    \item \textbf{Deleted:} The node is logically marked for removal from the data structure, but threads could still reach it.
    \item \textbf{Unreachable:} The node is physically removed from the data structure, so no threads starting from the entry point could reach it.
    \item \textbf{Free:} The node has been returned to the allocator and could be recycled (reclaimed) for subsequent allocation requests.
\end{itemize}

An \textit{unreachable} but not yet \textit{free} node is said to be in the \textit{retired} state, and is called \textit{garbage}. 
For our neutralization based algorithms (\chapref{chapnbr} - \chapref{chapnbrp}) and publish on ping algorithms (\chapref{chaprsp}), a node is considered \textit{safe for reclamation} if it is in the \textit{retired} state and no thread can access it. 
Otherwise, it is \textit{unsafe for reclamation}.
A node is \textit{safe to be accessed} if it has not been previously freed to the operating system since the last time it was allocated.
Additionally, Sheffi and Petrank~\cite{sheffi2023era} state that a node can be considered \textit{safe for access} after being freed, provided there is a guarantee that no threads will use its contents.

In the following sections, we discuss approaches to design concurrent data structures which are relevant to the reclamation algorithms we propose.

\subsection{Blocking Concurrent Data Structures}

The blocking paradigm for designing concurrent data structures uses mutual exclusion to synchronize access to shared memory locations. Mutual exclusion is implemented using locking primitives that provide acquire and release operations.
Essentially, before a thread $T1$ executes a read or update step on a shared memory location, it has to atomically acquire a lock and after it has completed the step, it has to release the lock.
Any other threads running concurrently must wait for $T1$ to release the lock before they can acquire it and perform a read or update operation on the memory location.

This locking paradigm introduces scalability issues in concurrent data structures.
In more detail, locks prevent other threads from taking their own steps in the program to complete their operations (even if they are on disjoint parts of the data structure).
Therefore, the available cores remain idle as the threads wait for the locks to be released.
This limitation prevents data structures from taking advantage of the parallelism provided by the increasing number of processing cores in the hardware, resulting in poor scalability (esp. in an oversubscribed system).
For example, in a hand-over-hand locking implementation of a linked list, a thread $T1$ intending to update a node near the end of the list could be stuck while waiting for a lock to be released on a node at the beginning of the list. If another thread $T2$ has already acquired the lock at the beginning of the list and is in the middle of its update, $T1$ must wait because locks do not allow concurrent reads during an update, which negatively impacts \textit{scalability}. 

Another factor that contributes to slow performance in lock-based data structures is the cache contention that arises due to the exclusive ownership of cache lines when a lock is acquired (essentially a write operation) on modern cache-coherent multicore processors~\cite{moir2018concurrent}. These issues, to some extent, can be addressed by reducing the lock granularity (number of instructions executed while holding a lock) and frequency of locks (as in optimistically synchronized data structures, which are discussed in \secref{obds}).
New advances like transactional lock elision have attempted to address the scalability issues but have their own downsides due to their fastpath/slowpath approach~\cite{rajwar2001speculative, dice2016refined}.


\subsection{Non-Blocking Concurrent Data Structures}

The non-blocking paradigm for designing concurrent data structures
allows threads to read memory locations even though other threads might be concurrently updating them. 
From a programmer's perspective, 
unlike the blocking paradigm, threads do not use lock based mutual exclusion to block a runnable thread to read a shared memory location that is concurrently being updated.
This enables threads to take advantage of all available cores. 

Consider the same example of the linked list. The thread $T2$ can read the earlier nodes and proceed without waiting for thread $T1$ to complete its update.
Note that programmers might need to ensure that reads are correct and may require corrective actions, like helping, but the main purpose here is to convey that threads do not block (spin or sleep) for other threads to finish and could concurrently access memory locations. As a result, the two threads could execute concurrently on two different cores, leveraging the underlying parallelism in the hardware.
This 
helps in improving \textit{scalability}, especially under oversubscription, where threads working on disjoint memory locations can make use of allocated core cycles. Additionally, this paradigm provides \textit{fault tolerance} because a blocked thread cannot indefinitely prevent other threads from taking steps in their operations, thus ensuring system-wide progress. 

The progress guarantees of a non-blocking data structure further fall into three categories, namely obstruction-freedom, lock-freedom and wait-freedom~\cite{scott2013shared}. Depending on which one of the above progress guarantee a non-blocking data structure provides, it can be categorized as obstruction-free, lock-free and wait-free. An obstruction-free data structure ensures that a thread executing alone will complete its operation in a bounded number of steps. 
A lock-free data structure design ensures that at least one thread completes executing its operation on the data structure in a bounded number of steps.
A wait-free data structure design ensures that all threads will complete executing their operation on the data structure in a bounded number of steps.
The lock-free data structures are more popular amongst the three and are the main focus of the techniques developed in this thesis. 
Non-blocking data structures utilize read-modify-write hardware instructions, such as compare-and-swap, fetch-and-add, and test-and-set.

\subsubsection{Lock-free Data Structures}
Implementations of a lock-free data structure may vary widely depending on the data structure at hand and type of synchronization primitives available~\cite{scott2013shared}.
However, what remains common is that threads can read concurrently with updates and do not have to block (spin or sleep) waiting for other threads to complete. Additionally, amongst multiple threads that attempt to update a shared memory location, for example changing the next pointer of a field to insert or delete a node in a list, at least one thread is always guaranteed to complete its operation~\cite{treiber1986systems, michael2002high, harris2001pragmatic, timnat2014practical, sundell2004efficient, michael1996simple, ellen2010non}. 

\subsection{Optimistic Blocking Data Structures}
\label{sec:obds}
Optimistic concurrency control, introduced to optimize transactions in databases~\cite{kung1981optimistic},  inspired the design of optimistic concurrent data structures~\cite{guerraoui2016optimistic, heller2005lazy, herlihy2007simple, david2015asynchronized, cha2001cache, wang2018building, shi2023optiql, leis2016art}. In these data structures, operations consist of an optimistic search phase where threads read shared memory locations in a synchronization-free manner, followed by a validation phase using locks to ensure the correctness of the searches, possibly executing updates, in update phase, before returning. 

Similar to non-blocking data structures, a thread can read shared memory locations in search phase while they are concurrently being updated by another thread in update phase.
Consequently, these data structures also permit high concurrency, as multiple threads can work concurrently on different parts of the data structure, thereby enhancing scalability.
Examples include the lazy list~\cite{heller2005lazy, herlihy2007simple} by Heller, Herlihy, Luchangco, Moir, Scherer III and Shavit, the binary search trees~\cite{bronson2010practical} by Bronson, Casper, Chafi and Olukotun,  and the ticket lock-based binary search trees~\cite{david2015asynchronized} by David, Guerraoui, and Trigonakis.

\ignore{However, this improvement in performance and scalability introduces a challenge, as explained below.}

\section{Challenge: Concurrent Memory Reclamation}

In non-blocking data structures and during the search phase of optimistic blocking data structures, threads optimistically access shared memory locations, assuming conflicts are rare. This contributes to scalability, but the downside is that while executing a delete operation, if a thread reclaims a node, other threads could be concurrently accessing that node and might incur a \textbf{use-after-free} error.
This can lead to two primary issues:
\begin{enumerate}
    \item \textbf{Segmentation Faults:} Other threads might crash when they access a memory location that has been reclaimed.
    \item \textbf{Correctness Bugs:} Other threads might continue execution but work on incorrect data, introducing correctness bugs. This can manifest itself as the more complex ABA problem. 
\end{enumerate}

A well-known example is the \textit{lazy} linked list~\cite{heller2005lazy}, where threads perform searches without locking and only obtain locks on a select few nodes to confirm the search or execute updates.

\begin{figure}
\centering
\begin{tikzpicture}[
  node distance=1.5cm,
  every node/.style={draw, rectangle, minimum size=1cm},
  arrow/.style={-{Latex[scale=1.5]}, thick},
]

\node (n1) {Node 1};
\node[right=of n1] (n2) {Node 2};
\node[right=of n2] (n3) {Node 3};

\draw[arrow] (n1.east) -- (n2.west);
\draw[arrow] (n2.east) -- (n3.west);

\node[above=1cm of n2, , draw=none] (t1) {1. \{access a pointer to n2\}$_{T1}$};
\draw[arrow, dashed] (t1) -- (n2);
\node[below=1cm of n2, draw=none] (t2) {2. \{access a pointer to n2\}$_{T2}$};
\draw[arrow, dashed] (t2) -- (n2);

\end{tikzpicture}
\hfill
\begin{tikzpicture}[
  node distance=1.5cm,
  every node/.style={draw, rectangle, minimum size=1cm},
  arrow/.style={-{Latex[scale=1.5]}, thick}
]

\node (n1) {Node 1};
\node[right=3cm of n1] (n3) {Node 3};
\node[below=0.5cm of $(n1)!0.5!(n3)$] (n2) {Node 2};

\draw[arrow] (n1.east) -- (n3.west);
\draw[arrow, dashed] (n2.east) -> (n3.west);

\node[below=1cm of n2, draw=none] (t2) {3. \{unlink n2\}$_{T2}$ \hspace{0.5cm} 4. \{free n2\}$_{T2}$};
\draw[arrow, dashed] (t2.north) -- (n2.south);

\node[above=1cm of n2, draw=none] (t1) {5. \{deref pointer to n2\}$_{T1}$};
\draw[arrow, dashed, red] (t1.south) -- (n2.north);

\end{tikzpicture}
\caption{Example of steps leading to use-after-free error on a lazy list~\cite{heller2005lazy}.}
\label{fig:uafex}
\end{figure}
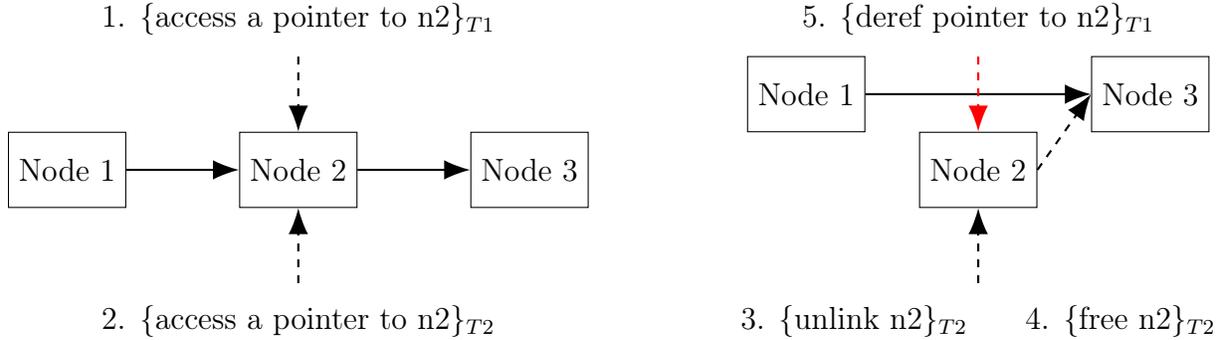
Suppose a thread $T1$, in the middle of its search, acquires a reference to a node (step 1 in the \Figref{uafex}). Another thread $T2$, executing a delete operation, reads the node (step 2), unlinks and subsequently frees (or reclaims) the node (step 3 and step 4 in figure). If the former thread, unaware of the concurrent free operation, attempts to dereference the node (step 5 in figure), it will encounter a use-after-free error.
Such accesses are unsafe, leading to undefined behavior (see C/C++ standard 2020~\cite{ISOCPP2020}), and can cause program crashes
due to segmentation faults.
In a type stable memory system, the crash might not occur, but this can lead to the more subtle ABA bug~\cite{scott2013shared}.

In summary, unlike in the blocking programming paradigm, 
in the non-blocking paradigm, threads may access reclaimed shared memory locations.
Fundamentally, this is a special type of read-write conflict on a shared memory node between two threads. One thread could \textit{reclaim} (or free) the node, while another thread could still \textit{access} it, unaware that the node was freed since the last access, leading to the use-after-free error.

Resolving this error requires the conflicting threads to coordinate their access to the shared memory nodes.
In the literature, this is the task of a class of concurrent memory reclamation algorithms, also called safe memory reclamation algorithms. 
These algorithms typically determine when a shared location is eligible to be safely freed~\cite{michael2004hazard}.
Conversely, it is equally the task to determine when a shared location can be safely accessed. 
Both views of looking at what a concurrent memory reclamation algorithm is supposed to do are captured when we see the problem as a special type of read-write conflict.

\section{Solution: Safe Memory Reclamation Algorithms}

Safe Memory Reclamation algorithms implement a synchronization mechanism to resolve use-after-free errors in concurrent data structures. These algorithms provide an interface to programmers, who wrap their data structure operations and individual memory access instructions using the provided interface. 
Depending on a particular safe memory reclamation algorithm, the exact interface may vary. However, the common aspect is that programmers, instead of directly invoking \texttt{free} on a pointer to a node, have to invoke a \texttt{retire} function provided by the interface. In addition, the use of the interface can be automated. In this dissertation, we focus on the manual use of safe memory reclamation algorithms. These algorithms can be divided into two categories: automatic and manual. 
Garbage collection, in general, can be seen as a form of automated safe memory reclamation that operates across the entire system. While safe memory reclamation algorithms primarily help recycle memory in concurrent data structures, garbage collection takes care of the memory recycling for the whole system, without limiting itself to the specific application type.

\Figref{smrlandscape} illustrates the interaction between a safe memory reclamation (SMR) algorithm, a concurrent data structure, and a memory allocator. In the absence of an SMR algorithm, a programmer calls \texttt{allocate} to obtain the memory location for a new node and \texttt{free} to deallocate the memory of a node upon deletion. Conversely, with an SMR algorithm in place, instead of invoking \texttt{free} directly, the programmer sends the deleted node to the SMR module, which leverages its synchronization mechanism to safely return the node to the allocator.

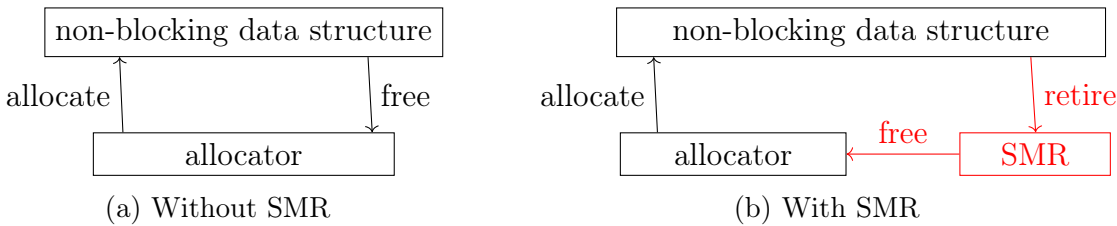
\begin{figure}[H]
\centering
\begin{subfigure}{.5\textwidth}
\centering

\begin{tikzpicture}[node distance=1cm]
\def\rectwidth{4cm}
\node (ds) [draw, rectangle, minimum width=\rectwidth, align=center] {non-blocking data structure};
\node (allocator) [draw, rectangle, below=of ds, minimum width=\rectwidth, align=center] {allocator};
\draw[->] ([xshift=0.4cm]allocator.north west) -- node[left] {allocate} ([xshift=1cm]ds.south west);
\draw[->] ([xshift=-1cm]ds.south east) -- node[right] {free} ([xshift=-0.3cm]allocator.north east);
\end{tikzpicture}
\caption{Without SMR}
\label{fig:wosmr}
\end{subfigure}\hfill
\begin{subfigure}{.5\textwidth}
\centering

\begin{tikzpicture}[node distance=1cm]
\def\rectwidth{3cm}
\def\combinedwidth{6.5cm} 
\node (ds) [draw, rectangle, minimum width=\combinedwidth, align=center] {non-blocking data structure};
\node (allocate) [draw, rectangle, minimum width=3cm, align=center, below=of ds, xshift=-1.7cm] {allocator};
\node (smr) [draw, red, rectangle, minimum width=2cm, align=center, right=of allocate, xshift=0.5cm] {SMR};
\draw[->] ([xshift=0.5cm]allocate.north west) -- node[left] {allocate} ([xshift=0.5cm]ds.south west);
\draw[->, red] ([xshift=-1cm]ds.south east) -- node[right] {retire} ([xshift=-1cm]smr.north east);
\draw[->, red] ([xshift=-0cm]smr.west) -- node[above] {free} ([xshift=-0cm]allocate.east);
\end{tikzpicture}
\caption{With SMR}
\label{fig:withsmr}
\end{subfigure}

\caption{A programmer's view on how a data structure interacts with a memory allocator.}
\label{fig:smrlandscape}
\end{figure}

\subsection{Assumptions}
\label{sec:introasmp}
Safe memory reclamation algorithms make the following assumptions about the concurrent data structures to which they are added.
\begin{enumerate}
    \item A node that can be reclaimed may only be accessed by the operations of the data structure.
    \item A node is only retired after it has been unlinked from the data structure. Other threads that started accessing the data structure before the node was unlinked can access the node, but any thread that started accessing the data structure from its entry points after the node was unlinked cannot access it.
    \item At any time, only one thread may retire a node. This is required to prevent double-freeing. Additionally, a programmer should use retire for all the nodes that are deleted to avoid memory leaks.
    
    \item Similar to most safe memory reclamation algorithms in literature, we assume a fixed number of threads in each of our algorithms, meaning that threads do not enter or leave dynamically during execution of our benchmarks. 
\end{enumerate}

\subsection{Key Terms}
\label{sec:keyterms}

We use the term \textit{reader} or \textit{writer} for a thread that holds a reference to a node in the data structure and can read (\textit{resp.} write) to the node.
A thread that invokes \textit{retire} on a node and can free the node is called \textit{reclaimer}. Note that we interchangeably use \textit{reclaim} and \textit{free} for a node to indicate that it will be returned to the operating system for later reuse.
Threads in the system alternate roles between being a \textit{reader}, \textit{writer}, or \textit{reclaimer} depending on their current task.
We also use the term \textit{reader} to describe a thread that holds a reference to a node, regardless of whether it plans to read from or write to the node. The context will make this distinction clear, or we will specify as needed.

Typically, a safe memory reclamation algorithm maintains the retired nodes in a local list, termed a \textit{limbo bag} or \textit{retire list}. These nodes in the lists are freed after the synchronization mechanism of a safe memory reclamation algorithm, say at a time $t$, indicates that no thread holds a reference to the retired nodes. The retired nodes after time $t$ are referred to as \textit{safe} to free nodes, and before time $t$ the references to the retired nodes are \textit{hazardous} or \textit{unsafe} and can not be freed.

Threads discard all references to reclaimable shared nodes between consecutive data structure operations and can be referred to as \textit{quiescent}. When a thread turns quiescent, it helps a \textit{reclaimer} learn that it cannot hold a reference to a \textit{retired} shared node.

We now describe terms relevant to a concurrent data structure that uses a safe memory reclamation technique.

\begin{enumerate} 
\item [P1] \textit{Performance}: Refers to throughput in terms of the number of operations per second for a data structure used with a safe memory reclamation algorithm. 

\item [P2] \textit{Garbage}: Refers to the number of unlinked but unreclaimed nodes, i.e., retired nodes, in a data structure. It provides an estimate of the memory footprint of a data structure. 

\item [P3] \textit{Usability}: Refers to the ease with which a programmer can use a safe memory reclamation technique with a data structure. Intrusive changes to data structure layout, code modifications, and reliance on specialized hardware instructions,  compilers or custom allocators are often a hindrance to the ease of adoption and integration of a safe memory reclamation technique.

\item [P4] \textit{Consistent Performance}: Refers to the stability of the performance of a concurrent data structure to which safe memory reclamation is added.
Ideally, a safe memory reclamation algorithm should help a data structure's performance remain stable and consistent across different workloads, such as shifting between read-intensive and update-intensive scenarios, and when the system is oversubscribed with more threads than available cores.

\item [P5] \textit{Applicability}: Refers to the generality of a safe memory reclamation algorithm and measures how many types of data structures the safe memory reclamation algorithm can be used with. Ideally, an algorithm should be applicable to a wide range of data structures.
\end{enumerate}



\chapter{Survey of Concurrent Reclamation Algorithms}
\label{chap:chapsurvey}
In this chapter, we survey state-of-the-art concurrent reclamation algorithms, focusing on their key traits, advantages and disadvantages.

\section{Reference Counting}
One of the earliest and fairly general concurrent memory reclamation algorithms is lock-free reference counting~\cite{detlefs2002lock, michael1995correction, valois1995lock}.
This algorithm employs a count field per node to track the number of active references across all threads.
Before accessing a node, threads increment the count field to indicate that the node should not be freed, and decrement the count when they no longer will use the node. 
In this way, a thread can reclaim a node whenever its count drops to zero, avoiding use-after-free errors.

However, reference-counting-based algorithms introduce significant overhead due to cache invalidations caused by frequent updates to reference counts. Specifically, each node access requires incrementing the associated reference count, which is later decremented when the location is no longer needed.   
In modern optimistic data structures, in which threads search without locking, reference counting can diminish or negate any performance gains over traditional locking algorithms.
Techniques such as update coalescing~\cite{levanoni2001fly}, deferred reference counting~\cite{anderson2021concurrent, correia2021orcgc}, and immediate reference counting~\cite{jung2024concurrent} have demonstrated improved performance.

Additionally, reference counting also introduces other complications. For example, the count is often stored alongside the node which complicates any advanced pointer arithmetic or implicit pointers, and can require changes to node layouts (or the use of a custom allocator) as well as size. 
Reference counting typically requires a programmer to invoke a \textit{deref} operation to dereference a pointer (and sometimes to explicitly invoke operations for read, write, and CAS)~\cite{detlefs2002lock, herlihy2005nonblocking, blelloch2020concurrent, nikolaev2019hyaline, gidenstam2008efficient}, adding significant overhead and programmer effort.
Programmer intervention is also needed to identify and break pointer \textit{cycles} in garbage nodes, for example using weak references~\cite{anderson2022turning,jones2023garbage} or setting the pointer fields to null.

\section{Pointer Reservations}
Pointer reservation-based reclamation algorithms require each thread to \textit{reserve} pointers before dereferencing them.
Reservations enable a thread attempting to reclaim a pointer to determine whether it is safe to free, and a thread that reserves the pointer to learn that the pointer will not be reclaimed while it is reserved~\cite{michael2004hazard, herlihy2005nonblocking, dice2016fast}. Therefore, avoiding use-after-free errors.
Reserving a pointer typically involves atomically writing it to a shared memory location and subsequently performing a \textit{validation}. The validation ensures that the reserved pointer has not been already deleted, and it can be specific to a data structure implementation.
Think of the reservation mechanism as a form of bookkeeping, which a thread about to reclaim a node (shared memory location) uses to establish whether the location can be freed without incurring a use-after-free error (or safely freed).


Similarly to reference counting, reserving a shared memory location (or pointer to nodes in data structures) every time it is accessed introduces significant overhead.
Additionally, programmers are responsible for correctly releasing previously reserved pointers.
Moreover, if the validation of a reservation fails, threads need to take alternative actions, such as restarting an operation.
This may require modifying the original data structure in potentially complex ways that may necessitate re-proving correctness or progress.

An important factor to consider when applying pointer-reservation-based algorithms is their incompatibility with data structures that involve traversing marked (logically deleted) nodes~\cite{heller2005lazy, Brown:2014, afek2014cb, drachsler2014practical, ellen2010non, natarajan2014fast, shafiei2013non, bronson2010practical, fatourou2019persistent, prokopec2012concurrent, harris2001pragmatic}. In these data structures, such as the lazy list~\cite{heller2005lazy} or the Harris list~\cite{harris2001pragmatic}, the validation fails and the operation is retried when either the node from which a new pointer is read is marked for deletion, or the node to which the pointer points is marked, or both are marked. 
Since these data structures allow executions where a thread could reach a node from a sequence of logically deleted nodes, 
a thread might read a pointer to an already freed node from a logically deleted reserved node. 
This could cause a crash when the thread attempts to access the freed node.
A detailed explanation of the problem with the applicability of pointer reservation techniques to such data structures appears in Brown's DEBRA paper~\cite{brown2015reclaiming}.
Consequently, pointer-reservation-based reclamation algorithms have limited applicability because they do not directly apply to the aforementioned data structures.
Some reservation-based algorithms, such as the hybrid of hazard pointer~\cite{michael2004hazard} and reference counting~\cite{valois1995lock} by Gidenstam et al.~\cite{gidenstam2008efficient}, sidestep these issues at the cost of considerably higher programmer effort due to intrusive changes to data structures. In addition, it is not clear how effective the technique is.

Jung, Lee, Kim and Kang~\cite{jung2023applying} proposed HP++ : a technique to extend the applicability of the hazard pointer technique to data structures that allow traversal of logically deleted nodes by ensuring that threads that unlink nodes also protect the concurrent traversals from use-after-free. This requires adding an invalidate bit to a node's memory layout and extending the interface for the hazard pointers with a function for unlinking and another function for correctly figuring out the nodes to be invalidated.

\section{Epoch Based Reclamation}
Epoch Based Reclamation (EBR) algorithms~\cite{hart2007performance,fraser2004practical,mckenney1998read,brown2015reclaiming, desnoyers2009low} offer a straightforward, high-performance approach.
In these algorithms, execution is divided into \textit{epochs}, and each thread must participate regularly in the reclamation algorithm for the epoch to change. The Epoch is advanced when all threads report that they are executing in the latest epoch.
Typically, this imposes a small amount of overhead on each data structure operation.
Classically, each thread maintains three batches of objects it is waiting to reclaim, one for each of the last three epochs. Whenever the epoch changes, each thread can reclaim its oldest batch.
This method results in minimal overhead; however, if a thread gets delayed and does not report the most recent epoch, it can hinder the progression of the epoch, thereby obstructing the reclamation of all memory.

Brown introduced an optimized variant of EBR called DEBRA, as well as an enhanced variant called DEBRA+\cite{brown2015reclaiming} that bounds garbage in compatible non-blocking data structures.
However, when a thread is signalled in DEBRA+, it restarts its operation, even if it had already begun modifying the data structure.
Thus, data structure-specific recovery code is required to deal with the inconsistent states that can result. 
It is not clear how DEBRA+ could be applied to any lock-based (or optimistic) algorithm~\cite{heller2005lazy, david2015asynchronized}, since restarting a thread after it has acquired a lock could cause deadlocks.

\section{Epoch Reservations}
Following DEBRA and DEBRA +, other algorithms such as Hazard Eras (HE)~\cite{ramalhete2017brief} and Interval Based Reclamation (IBR)~\cite{wen2018interval} emerged,  aiming to limit the amount of garbage that cannot be reclaimed to individual epochs (or specific ranges of epochs) where a thread is stuck.
These algorithms augment nodes with additional metadata --- the epochs when a node was allocated (birth epoch) and unlinked (retire epoch) from the data structure. These epochs essentially denote the lifetime of a node. The idea is that, while accessing nodes within data structure operations, threads will reserve current epochs. When reclaiming nodes, threads will check if the lifetime of a node being freed intersects with reserved epochs of all the threads in the system. If the lifetime of a node intersects with any thread's reserved epochs, the node is not reclaimed, as the thread could be accessing the node. Otherwise, the reclaiming thread can infer that no thread could access the node and that the node can be safely reclaimed.

In IBR, a thread reserves an epoch interval which starts from the epoch that a thread saw when its operation started (i.e., before it accessed any node in the data structure) up to the latest epoch which the thread reserves while it last accessed a node in its operation. In this way, a thread that is stalled could only prevent nodes whose lifetime intersects with the reserved interval, and all nodes retired before the reserved interval or allocated after the reserved interval can still be reclaimed. In practice, this reduces the chances of unbounded garbage.
Unfortunately, in certain scenarios involving a thread that executes a long-running operation, memory consumption can still be extremely high, proportional to the number of allocations.
Theoretically speaking, with pathological thread scheduling, where threads stall while holding the maximum possible epoch interval in IBR, unbounded garbage may occur because threads never reclaim. 

HE, on the other hand, reserves more precise epochs. Essentially, it is like hazard pointers, but instead of reserving each pointer, it reserves the global epoch at the time it accesses a node. The reclaiming thread frees only those nodes whose lifetime does not intersect with epochs reserved by all threads.
In this way, a stalled thread could only reserve as many epochs as it can access pointers to nodes at a time. Thus, HE has bounded garbage.

\textbf{Wait-Free-Eras}~\cite{nikolaev2020universal} Nikolaev and Ravindran's Wait-Free-Eras scheme adds wait-freedom to hazard eras and interval based memory reclamation schemes. They note that most of the interface of these techniques is wait-free except for the function that is supposed to read the next pointer.
The read function in hazard eras contains a loop that attempts to reserve the current global epoch when a pointer to a node is accessed.
This reservation is crucial to determine if a node can be safely freed: if the reserved epoch overlaps with the node's lifetime, the reclaiming thread skips freeing the node unless the reservation is released.
To successfully reserve the latest epoch, a thread executing the read function validates, after loading the pointer, that the epoch has remained unchanged during the read operation. If the epoch has changed, the thread retries until the validation is successful. Theoretically, in an unfortunate scheduling, the global epoch could keep changing perpetually, causing the thread to starve indefinitely.

To prevent indefinite starvation of a thread during pointer reading, the authors suggest a fast-path slow-path strategy. In this approach, a thread that repeatedly fails to read transitions to a slow path.
In the slow path, the reading thread declares its intention to read (through a descriptor), and before increasing the global epoch, other threads check each thread's descriptor to determine if helping is needed. 
If help is required, the helping thread reads the pointer to the node on behalf of the starving thread before updating the global epoch.
This method enables a reading thread to reserve the latest global epoch while reading the pointer, ensuring that the global epoch remains unchanged during the read operation.
Specifically, the helping thread holds the current global epoch and updates the descriptor with the node pointer for the reading (which was starving) thread, allowing it to leave the slow path. To prevent repeated helping, a tag field (akin to a version number) is used in both the reservation slots and descriptor field, necessitating the use of WCAS (Wide-compare-and-swap) instruction.

These algorithms require per node metadata, necessitating changes to the memory layout of nodes.
Moreover, it appears that most epoch reservation algorithms, including HE, IBR, and WFE, cannot be used with a wide variety of data structures in which threads traverse chains of unlinked nodes, as explained in the pointer reservation reclamation algorithms above. This can be mitigated by using the technique proposed in ~\cite{jung2023applying}.
Additionally, these algorithms exhibit nontrivial overhead due to the bookkeeping required at every read of a node.

\section{Hybrid and OS/Hardware Primitives}
Popular hybrid memory reclamation algorithms combine the approaches of previously mentioned reclamation algorithms, including leveraging operating system features like context switches~\cite{balmau2016fast} and POSIX signals~\cite{alistarh2017forkscan,alistarh2018threadscan}, as well as hardware primitives such as hardware transactional memory~\cite{alistarh2014stacktrack,dragojevic2011power}.

In \cite{balmau2016fast}, Balmau et al. employ context switches triggered by the periodic scheduling of auxiliary processes per core to timely flush hazard pointers.
In this algorithm, reclaimers wait for a set interval of time to pass since a node (or object) is retired to ensure that reservations to the object (if any) are published, reclaiming the object if it is not reserved. 
These algorithms incur overhead to periodically publish reservations, even when threads might not be reclaiming.
The same paper has another variant, Qsense, which optimizes for the common case by employing an EBR-like approach as a fast code path~\cite{hart2007performance}.
However, to ensure bounded garbage when threads experience delays, Qsense switches to a hazard pointer inspired slow path.
Qsense provides bounded garbage, but has been shown to be slower than EBR~\cite{balmau2016fast}.

Dice et al.~\cite{dice2016fast} explored using the write-protection feature of memory pages that triggers a global barrier to facilitate timely publication of hazard pointers before threads reclaim. However, it could block threads trying to reserve an object if a reclaimer stalls after write protecting the page on which the object resides. 
The techniques in this work mainly build upon system properties that are not always well specified.

Another algorithm, StackTrack~\cite{alistarh2014stacktrack} by Alistarh et. al., focuses on automating memory reclamation through compiler assistance.
It utilizes hardware transactional memory's ability to abort when one transaction's data set conflicts with that of another. It divides a data structure operation into a series of small hardware transactions. Each subtransaction before committing publishes its current stack and register values. This is necessary to ensure that the pointer to nodes at the end of a sub-transaction are not reclaimed before another transaction begins.
A reclaiming thread before freeing its list of retired nodes scans stack and register contents of all other threads and skips freeing the nodes that are found in the stack or registers of the other thread. To ensure progress, the technique deploys a hazard pointer-like fallback mode.
However, StackTrack requires significant programmer intervention and assumes that programs will not hide pointers (via, e.g., special pointer arithmetic), which implies data structures that use pointer hiding or bit stealing cannot use it, limiting its applicability.
Furthermore, due to their reliance on HP, StackTrack and Qsense have the same issues with respect to applicability as other HP and epoch-reservation-based approaches.

Alistarh et. al. in ForkScan~\cite{alistarh2017forkscan} uses OS-level copy-on-write support for memory pages and POSIX signaling to automatically reclaim memory for concurrent data structures.
Like StackTrack, ForkScan does not work if users perform special pointer arithmetic.
Due to the way that ForkScan identifies objects that are safe to reclaim, it has the same applicability issues as StackTrack and Qsense.
ForkScan's successor, ThreadScan~\cite{alistarh2018threadscan} offers numerous improvements over the original but suffers from the same applicability issue.

Dragojevic, Herlihy et al. in \cite{dragojevic2011power} proposed HOHRC and FastCollect which mainly aims to make the scanning of reservations in hazard pointer based algorithms efficient and adaptive \textemdash independent of the prior knowledge of the number of processes in the system. At heart, these mechanisms implement a collect object as a list of reservations and exports register, deregister and collect interfaces to threads to reserve unreserve or scan reservations. Scanning occurs using a sequence of short HTM transactions that appears to collect all the nodes that are reserved when a thread attempts to reclaim its retired objects.

Zhou, Luchangco and Spear\cite{zhou2017hand} make use of the sequence of short hardware transactions which execute in hand over hand fashion to design concurrent data structures that retain the property of immediate memory reclamation. 
The execution of transactions in hand over hand manner ensures that a thread executes as if it were running a single transaction, thus utilizing the implicit bookkeeping mechanism of HTM to safely access memory locations. 

Though it has the property of immediate reclamation, it incurs the bookkeeping overhead of maintaining optimal sized transactions. Moreover, frequent start and commit of several transactions consumes a significant compute time. Because these mechanisms use HTM, they inherit its restrictions. For instance, they prohibit data structure operations from using instructions that do not support hardware transactions.

Another approach by Morrison and Afek assumes an alternative memory model called temporally bounded total store order (TBTSO)~\cite{morrison2015temporally} and applies it to HPs to guarantee that reserved pointers will be published to all threads within a bounded time, facilitating safe memory reclamation.

PEBR~\cite{kang2020marriage} technique by Kang and Jung is another hybrid of HP and EBR.
The key idea is to \textit{eject} a stalled thread from the epoch advancement mechanism and to use hazard pointers to protect the references of the ejected thread. 
Every read needs to reserve objects as in HP, but the reservation is made more efficient using algorithms from~\cite{dice2016fast}.
The implementation of PEBR is available in Rust and it
has been shown to be slower than Rust's Crossbeam implementation of EBR.

Anderson et. al. ~\cite{anderson2021concurrent} propose an efficient wait-free automatic technique which mitigates the per-access overhead associated with modification of reference counters. The key idea in their technique is to use hazard pointers to defer the increment and decrement of reference counters for a managed node rather than deferring the reclamation of the node.
A similar automated technique was proposed by Correia et. al. in OrcGC~\cite{correia2021orcgc}.
Subsequently, in ~\cite{anderson2022turning}, Anderson et. al. show that the deferred reference counting can be implemented with other SMR techniques like IBR and Hyaline replacing hazard pointer based protection of reference counts.

In Drop the Anchor (DTA)~\cite{braginsky2013drop}, Braginsky, Kogan and Petrank integrate epoch-based reclamation (EBR) and Hazard Pointers (HP). DTA relies primarily on EBR and enhances robustness against rare stalled threads by employing a recovery mechanism. This mechanism utilizes periodically published hazard pointers. During recovery, the process duplicates the range of objects reachable from the stalled thread into the data structure, enabling other threads to resume reclamation without having to worry about freeing objects reachable from the stalled threads.

In 2024, Kim, Jung and Kang~\cite{kim2024expediting} published a hybrid of HP and RCU to help reduce the per node overhead in HP. The technique proposes dividing data structure operations in alternating sequences of HP protected region followed by RCU protected region. Thus, the overhead of reserving nodes at every access is reduced to every $n$ node. For example, in a list, the first two nodes can be HP protected and then an RCU protected region can be started to protect access of the next $n$ nodes. The traversal alternates between these two protected regions until the operation is complete.
The HP protected regions serve as checkpoints, which allows a subsequent RCU region to think of the last HP protected region as the entry point of the data structure. 

To ensure bounded garbage, the technique takes inspiration from neutralization based reclamation (which is a contribution of this thesis discussed in \chapref{chapnbr} and \ref{chap:chapnbrp}), where a reclaiming thread can send a neutralizing signal selectively to delayed threads in RCU protected regions. 
The preceding HP protected region serves as a checkpoint from which a neutralized thread can restart, thus restarting from an entry point of the data structure is not required (which is different from neutralization based reclamation where threads restart from the entry point of a data structure).
As a result, it extends the applicability of neutralization based reclamation (presented in ~\chapref{chapnbr}) to data structures where threads do not restart from an entry point after an auxiliary update (for example, after unlinking a marked node encountered during traversals).
Essentially, in their technique before starting an update, threads use HP to reserve all the nodes required to enable starting a subsequent RCU protected region, banking on a data structure's ability to verify that the previously protected nodes are not logically deleted for safety.
This allows a thread in the middle of an update to defer neutralizing to the end of the update and let a reclaiming thread to reclaim its garbage, excluding any reserved nodes. 
Their technique allows signal based neutralizing mechanism~\cite{singh2021nbr, singhTPDS2023NBRP} to be applied to data structures with auxiliary updates~\cite{harris2001pragmatic, michael2002high, natarajan2014fast} and at the same time maintains their benefit of low overhead on data structure traversals.

\ignore{
TREV----
}

\section{Alternative Paradigms for Safe Memory Reclamation}

\subsection{Optimistic Access Paradigm}
\label{sec:oaparadigm}

Cohen and Petrank~\cite{cohen2015efficient} introduced an innovative paradigm that allows threads to optimistically access potentially deleted objects and roll back to a safe point if validation (after access) indicates that the object may have been deleted. Their technique enables threads to continue allocating and retiring nodes from a pre-allocated pool of memory until threads can no longer allocate new nodes. This process is referred to as one phase or round of reclamation. To start a new round, the retired memory nodes are reclaimed without waiting for threads that might still hold references to them. This makes the retired nodes available for new allocation requests, marking the beginning of the next round. Threads that might still hold references to a retired node are notified of the reclamation of retired memory nodes (the beginning of a new round) by setting their warning bits. Before reading the contents of a memory location, each thread checks if its warning bit is set. If the bit is set, the thread learns that the node it is about to access might have been reclaimed. Upon learning about a concurrent reclamation event (the beginning of the next reclamation round, in the words of the authors), the reading threads reset their reclamation bits and restart their data structure operations.

The nodes retired in one round can only be reclaimed in subsequent rounds. This requirement ensures the safety of reads because threads become aware of concurrent reclamations (by observing that their warning bits are set) only when a new reclamation round starts. If a node retired in one phase is reclaimed in the same phase, a reading thread might access stale content, leading to the ABA problem or, more generally, breaking correctness. To keep track of the reclamation rounds, each thread maintains a local round number. To ensure that a retired node is only reclaimed in a subsequent round, each retired node is also tagged with the current reclamation round number. These round numbers act as versions. Additionally, while a thread starts a new reclamation round by recycling the retired nodes, any other thread trying to allocate or retire nodes executes a helping function to ensure lock-freedom. Recycling of a node about to be modified is prevented by protecting it with hazard pointers. In the original algorithm (Algorithm 2 in ~\cite{cohen2015efficient}), after protecting the nodes involved in an update instruction (e.g., all pointer parameters of a CAS), a thread restarts if the warning bit is set.

The \textbf{Optimistic Access (OA)} algorithm involves instrumenting reads, writes, and CAS instructions and has been shown to work with the data structures in their normalized form~\cite{timnat2014practical} (a pattern which makes restarting an operation from an arbitrary point in code easier). 
In an effort to improve the ease of programming, the authors introduced \textbf{Automatic Optimistic Access (AOA)}~\cite{cohen2015automatic}, which relieves programmers from figuring out the placement of retire calls and automates the transformation into normalized form. The technique works similarly to a mark-and-sweep garbage collector but only for lock-free data structures in normalized form. Although automatic, programmers still need to do some initial setup to use AOA. For example, they need to provide all the entry points of a data structure, invoke the AOA initialization function, and provide the description of nodes within the data structure.

Subsequently, the authors proposed \textbf{Free Access (FA)}~\cite{cohen2018every} which applies to arbitrary data structures (except those that use instructions such as SWAP), eliminating the requirement for data structures to be presented in a normalized form\cite{timnat2014practical}.
Free Access is based on the mark-and-sweep style reclamation mechanism of AOA.

At a high level, Free Access considers a data structure operation to consist of a sequence of read-only and write-only phases. At the start of each read-only phase, a checkpoint is established, from which threads can restart their read phase if a reclamation event has begun since the phase started. The checkpoint involves protecting the current local pointers and the program counter.

Restart is required to ensure threads do not access a reclaimed node. Similarly, at the start of a write-only phase, threads use hazard pointers to protect all pointers that will be used in the phase so that threads do not modify the reclaimed pointers. To guarantee that the hazard pointers have not been reclaimed concurrently, threads, after saving hazard pointers at a shared memory location, also verify using a warning bit (similar to OA and AOA) that a reclamation event has not occurred, implying that the saved hazard pointers cannot be reclaimed.

Free Access utilizes a compiler pass to annotate data structure operations with reclamation specific code. Specifically, a compiler pass identifies the read-only and write-only phases of a data structure and automates the protection of local pointers at the beginning of the read and write phases and the checkpoint required to restart threads during the read-only phase. It is also required to ensure that the local pointers from which a thread could restart are reachable from an entry point of the data structure; otherwise, threads could access a potentially unreachable node, whose content could be arbitrary.

Improving the Optimistic Access paradigm of safe memory reclamation further, Sheffi and Petrank~\cite{sheffi2021vbr} introduced \textbf{version based reclamation (VBR)}--- an optimistic access-based safe memory reclamation that allows access to reclaimed nodes for reads as well as writes. Notably, previous reclamation schemes, in order to prevent modifying a reclaimed node (which might be in use as a different node in a data structure), used hazard pointers to delay their reclamation. This incurred memory fence overhead associated with the acquisition of hazard pointers. In contrast, VBR allows for reclamation of retired nodes without waiting for other threads and increments a global epoch variable, which each thread reads to learn about the reclamation event, instead of using warning bits like those in previous approaches. 
Every time a thread reads the content of a node, it compares its last read epoch value with the current global epoch to learn whether a reclamation event has occurred. If the global epoch has changed since it was last read, it implies that the node might have been reclaimed and the content is stale, so the reading thread rolls back to a safe checkpoint in the program. Similarly, any updates to a field in a node are done using wide-compare-swap (WCAS), which uses version numbers to detect if the node has potentially been reclaimed. If so, the WCAS fails and the operation rolls back to a previous safe checkpoint. Each node must be instrumented with a birth epoch and a retire epoch. Additionally, every mutable field of the node has an associated version number, which is essentially the birth epoch used along with the birth and retire epochs to decide if the accesses are valid (i.e. the node was not reclaimed).


While the OA, AOA, FA, and VBR algorithms offer an alternative low overhead paradigm for safe memory reclamation, they rely on a strategy where \textit{unsafe accesses} to a potentially reclaimed memory are performed first, followed by a check to ensure that the accessed memory has not been reclaimed. Originally, this approach introduced challenges for developers: they either need to handle segmentation faults (potentially losing a useful debugging tool) or employ a type-stable allocator that retains all memory pages, which limits the system’s ability to free memory in long-running applications. These trade-offs can impact programmability. However, in 2023, Moreno and Rocha~\cite{moreno2023releasing} proposed a lock-free allocator that addresses this issue by allowing Optimistic Access line of algorithms to safely returning freed memory by exploiting virtual memory's ability to remap pages to a single physical frame. 


\subsection{Reserve to Free Paradigm}

Nikolaev and Ravindran proposed ~\cite{nikolaev2019hyaline} Hyaline, which flips the traditional approach of reserving nodes before access. Instead of reserving the nodes threads intend to access, threads reserve the nodes they intend to free and require other threads to respond, stating that they don't need these nodes. 

A 2D grid of linked lists mediates this communication, where rows represent threads and columns represent batches of retired nodes. Threads executing data structure operations are considered active as they can hold a reference to a retired node; otherwise, a thread is inactive.

A thread that intends to free a batch of nodes links this batch into the grid, making the batch accessible from each active thread's row. The idea is that when active threads become inactive, they traverse their lists, unreserving the batches they encounter, enabling them to be freed (when the last thread with access to them unreserves). The algorithm uses a limited form of reference counting for batches in this grid. A batch can be reclaimed as soon as all threads communicate that they no longer can access nodes in the batch (i.e., when the reference counter of the batch drops to zero).

However, similar to EBR, if a thread stalls, it may never mark itself as inactive by decrementing a reference counter of a batch attached to its row. Other threads thinking that the thread is still active could continue reserving their batches by attaching to this thread's row. This can prevent the reclamation of an unbounded number of batches.


This problem is addressed in a subsequent version, where threads also announce the last era in which they were active, and nodes record the era in which they were allocated (birth era), similar to IBR. Therefore, a stalled thread could only prevent nodes of batches that were allocated in the era on or before the thread stalled. Nodes allocated after the era announced by the stalled thread continue to be reclaimed.

However, like IBR, a stalled thread can prevent reclamation of all objects that were allocated up until the thread stalled. For a long-running operation, a thread may continuously keep reserving the latest era, essentially reserving all epochs since the beginning of the execution, preventing an unbounded number of nodes from being reclaimed.

This issue with robustness is fixed in the subsequent lock-free reclamation scheme named Crystalline-L~\cite{nikolaev2024family}. Crystalline-L follows a hazard era-like fine-grained reservation approach where only the eras corresponding to the objects being accessed are reserved, and previous reserved eras are overwritten so that objects corresponding to those now unreserved eras can be freed.

Briefly, instead of one row per thread in the grid, Crystalline-L has NUM\_HE rows per thread, where NUM\_HE is the number of hazard eras a data structure operation needs to reserve at once. And, like the previous versions, batches of retired nodes are inserted into all rows in the grid that represent an active thread in the current era (so that rows representing old eras do not grow). Since threads can only reserve a limited number of eras at a time, the amount of garbage that a stalled thread can prevent from being freed is reduced, similar to that in hazard eras.

The authors improved the progress guarantees of Crystalline-L by making it wait-free (Crystalline-W). Essentially, the wait-free version of Crystalline-L replaces a CAS loop that links batches into the grid with an unconditional SWAP to prevent a thread from repeatedly failing to attach its retired batch because other threads succeed in attaching their own batches. In doing so, the authors solve interesting race conditions by tainting the next pointers of the attached nodes. Another potential unbounded loop occurs in Crystalline-L when a thread, during its protect call, is unable to converge on a value of the era in which it accessed a reference due to continuously changing global era. To enable wait-freedom, the scheme uses a fast-path slow-path approach to eliminate the unbounded loop.

Performance-wise and memory-efficiency-wise, Crystalline is also shown to have improved over Hyaline.

\section{SMR in Garbage Collected Languages like Java}

Garbage collected languages like Java provide access to non garbage collected off-heap memory~\cite{java22}. This capability allows programmers to execute system calls and access libraries written in C/C++, and helps to avoid the scalability issues of garbage collection, which worsen with an increasing amount of on-heap memory footprint in multithreaded data structures. Several big data systems use off-heap memory support in Java; for example, Cassandra~\cite{cassandra} and HBase~\cite{apachehbase}.

Solving the problem of concurrent memory reclamation for unmanaged off-heap memory in these programming languages requires safe memory reclamation algorithms. Fakhoury et al.\cite{fakhoury2024nova} draw inspiration from SMR techniques to propose an abstraction that allows safe and efficient allocation, access, and deallocation of off-heap memory. Specifically, they utilize Optimistic Access\cite{cohen2015efficient} to protect each read or write operation to an off-heap memory location. They augment each off-heap \textit{memory location} and an on-heap \textit{shared pointer} pointing to that memory location with a special header consisting of a version number field and a boolean \textit{deleted} field.

When a pointer is created to an off-heap memory location, the current era is assigned to the headers of both the memory location and the shared pointer. During each subsequent read of the shared pointer, the version numbers in the headers and the deleted bit of the memory location are compared. If these have not changed since the last access of the shared pointer, the read is allowed; otherwise, it fails.

As in Optimistic Access, reading a deleted memory location is allowed for validation. During writes, a hazard-pointer-like technique is used. Specifically, upon writing to an off-heap memory location, the memory location is reserved (similar to a hazard pointer reservation). It is then verified that the memory location has not been deleted before the write executes, and afterward the memory location is unreserved (similar to a hazard pointer).

To implement this technique, they proposed an interface named SOMAR, which provides speculative read and write functions to access off-heap memory locations. These operations can fail if the deleted bit indicates that the off-heap memory location was deleted (or reused by allocation to some other shared pointer) or if version-based validation fails. The SOMAR interface differs from a typical SMR interface in that it does not assume that operations cannot return a shared memory location. The implementation of this interface is referred to by the authors as NOVA. This technique has been shown to be useful for data structures where only partial data reside in off-heap memory, though its generalization remains unclear.

\section{Summary}

In this chapter, we provide an extensive survey of concurrent safe memory reclamation algorithms, focusing on their design aspects. We identify three categories of issues with existing approaches to designing safe memory reclamation algorithms and address each in the chapters of this dissertation.

The \textbf{first issue} we uncover is that no existing reclamation technique satisfies all the key properties desirable for a practical safe reclamation algorithm. Some algorithms ensure safety by requiring threads to \textit{wait} for other threads to \textit{discard} their references to deleted shared memory locations before reclaiming them. These algorithms assume an asynchronous shared memory model, where threads may experience arbitrarily long delays. If a thread holding a reference to a memory location encounters a significant delay, no thread may ever reclaim deleted memory locations, potentially leading to an out-of-memory situation as newer memory locations continue to be allocated.

In some algorithms, to ensure that a reference to a shared memory location is not freed during access, a thread eagerly executes reclamation-related steps before accessing a memory location. This approach involves communicating with other threads that might free the memory location. Such measures incur non-trivial overhead even when no reclamation is occurring. The overhead can be \textit{coarse-grained}, incurred at every data structure operation (e.g., lookup, delete, or insert), or very \textit{fine-grained}, occurring whenever a new shared memory location is accessed within data structure operations. As a result, threads sacrifice the performance and scalability benefits of the original data structure.

Other algorithms attempt to avoid the approaches of waiting for threads to discard references or eagerly executing reclamation steps, but these may sacrifice wide applicability to different data structures or ease of programming.

The \textbf{second issue} with existing concurrent memory reclamation algorithms is that they introduce asymmetric synchronization overhead on non-blocking data structures. This asymmetry arises because, even though reclamation events are infrequent, all threads perform relatively frequent reclamation-related steps in their common-case execution path to synchronize with reclaiming threads. This results in non-trivial coarse-grained or fine-grained overhead. Such an approach is counterproductive to the goal of optimizing performance in non-blocking data structures, especially during searches.

The \textbf{third issue} is that concurrent reclamation algorithms often fail to account for the nuances of the broader system in which they are deployed. For instance, they may overlook how memory locations are managed or how access to them is synchronized at lower layers of the system stack, such as allocators and cache subsystems. Designing reclamation algorithms in isolation from these lower layers can lead to inefficiencies. In fact, as detailed in ~\cite{kim2024token}, freeing retired nodes in batches can negatively interact with memory allocators, leading to poor scalability.

To address these three issues, we introduce three paradigms for designing safe memory reclamation algorithms. To address the first issue of missing several desirable properties in a single algorithm, we propose a neutralization paradigm that is discussed in \chapref{chapnbr}, \ref{chap:chapnbrp}, and \ref{chap:chapappuse}. To address the second issue of asymmetric overhead we propose a reactive synchronization based paradigm which is detailed in \chapref{chaprsp}. To address the third issue, we propose a hardware-software co-design paradigm, which we leverage to propose a novel Conditional Access safe memory reclamation algorithm in \chapref{chapirp}.

\chapter{Neutralization Paradigm}
\label{chap:chapnbr}

In \chapref{chapsurvey}, we presented an extensive survey of current safe memory reclamation algorithms (SMR) and identified that existing SMR algorithms fall short of satisfying five key properties defined in \secref{keyterms}: \textit{high performance}, \textit{bounded garbage}, \textit{usability}, \textit{consistent performance}, and \textit{applicability}.
Note, subsequently Sheffi and Petrank~\cite{sheffi2023era} formally defined easy integration which may appear similar to usability, but the key difference is that in our definition we do not allow changing a node's memory layout.
In particular, achieving \textit{bounded garbage} and \textit{high performance} simultaneously is quite challenging. In this chapter, we address this limitation by proposing an alternative approach to safely reclaim memory, called the neutralization paradigm. Using the neutralization paradigm, we propose a safe memory reclamation algorithm called Neutralization-Based Reclamation (NBR).

The outline of this chapter is as follows.
\begin{itemize}
\item \secref{nbrintro} builds upon the survey presented in \chapref{chapsurvey} to emphasize that achieving both \textit{speed} and \textit{bounded garbage} with existing approaches is difficult. It is even harder to achieve other desirable properties, such as \textit{wide applicability}, \textit{usability}, and \textit{consistent performance}, along with these two properties simultaneously.

\item 
\secref{nzparadigm} Explains the neutralization paradigm as a candidate for achieving these properties.

\item \secref{nbr} presents the design and implementation of our safe memory reclamation algorithm: NBR, which leverages POSIX signals to implement the neutralization paradigm.

\item \secref{nbrcorr} proves that NBR prevents use-after-free errors and bounds garbage when used with a data structure.

\item \secref{nbreval} highlights that, in practice, the signals used to implement the neutralization mechanism incur overhead, particularly when the number of threads is high. This chapter presents only a preliminary experiment to illustrate the signaling overhead issue; a more detailed evaluation is provided in \chapref{chapnbrp}.
\end{itemize}

The chapter concludes with a brief summary of the content.

\section{Introduction}
\label{sec:nbrintro}

Researchers have developed a rich variety of manual concurrent memory reclamation algorithms with a diverse spectrum of desirable properties, idiosyncrasies, and limitations.

Perhaps the most popular family of SMR algorithms, Hazard Pointer Based (HP) algorithms~\cite{michael2004hazard, herlihy2005nonblocking, dice2016fast, ramalhete2017brief, wen2018interval, nikolaev2020universal} ensure safe reclamation and \textit{bounded garbage} by requiring readers to explicitly reserve nodes before accessing them.
This allows a thread attempting to reclaim a set of nodes to identify which ones it can safely reclaim (i.e., the ones not reserved by any thread). 
Since a reader must reserve each node before accessing it, HP-based SMR algorithms impose significant overhead, especially in data structures where threads repeatedly follow pointers (e.g., lists and trees).
Reference counting based reclamation (RCBR) techniques~(e.g., \cite{detlefs2002lock}) typically offer similar overheads and can have unbounded garbage if unlinked nodes have cycles.

In contrast, Epoch-Based Reclamation (EBR) algorithms~\cite{brown2015reclaiming, hart2007performance,fraser2004practical,mckenney1998read} ensure safety in a highly efficient way by dividing execution into \textit{epochs} and reclaiming nodes in large batches.
Intuitively, the epoch changes when all threads have ``forgotten'' pointers to nodes that were reclaimed in a previous epoch.
Thus, whenever the epoch changes, all threads can reclaim a batch of retired nodes. 
The synchronization needed to track epoch changes imposes significantly less overhead than HP-based algorithms, but a stalled thread can prevent the epoch from advancing, so garbage is not bounded. 

Hybrid SMR algorithms attempt to blend the best attributes of these approaches, combining ideas from each~\cite{gidenstam2008efficient, braginsky2013drop, brown2015reclaiming, nikolaev2019hyaline, anderson2021concurrent}.
Such algorithms typically offer \textit{high performance} and some notion of \textit{bounded garbage}, but often compromise significantly on \textit{applicability} and \textit{usability}.

One such hybrid algorithm, DEBRA+~\cite{brown2015reclaiming}, a variant of EBR with a restricted form of hazard pointers, is both efficient and bounds garbage.
It achieves bounded garbage using POSIX signals and data structure specific recovery code.
Specifically, a thread whose reclamation is delayed by a slow thread will send a signal to the slow thread.
Upon receipt of the signal, the thread executes its recovery code and then restarts its data structure operation, 
allowing reclamation to continue and ultimately guaranteeing a bound on the number of unreclaimed nodes. 
However, the bound on garbage comes at the cost of both \textit{usability} and \textit{applicability}, as users need to write data structure specific recovery code that is not always straightforward or even possible.
Moreover, it is not clear how DEBRA+ could be used for lock-based data structures, as neutralizing a thread that holds a lock could cause deadlock.

In summary, HP bounds garbage but incurs high overhead, EBR offers low overhead but allows unbounded garbage, and hybrid approaches compromise significantly on applicability and usability.

\subsubsection{Goal}
\textit{``To design concurrent memory reclamation algorithms that match or outperform existing algorithms, bound garbage, are simple to use, exhibit consistent performance and are applicable to a large class of data structures"}

\subsubsection{Observation}
We find that the majority of existing solutions are pessimistic. Meaning, each thread either protects every reference or a set of references using epochs. The former approach, taken by hazard pointers and reference counting, incurs per access overhead but allows objects that are not protected to be immediately reclaimed. In contrast, the latter approach, as in epoch based reclamation, significantly reduces per access overhead to per operation but leads to unbounded garbage impacting progress when a thread gets delayed. This pessimistic approach, both while accessing objects and while freeing them, makes achieving speed and bounded garbage simultaneously a hard task.

Our solution to this challenge takes a contrary approach. We believe that a solution to concurrent memory reclamation lies in answers to the following two broad challenges:
\begin{enumerate}
    \item Can we eliminate the need to reserve every object beforehand and instead allow threads to freely access all references without any synchronization? Our hope is that solving this issue will address the main source of overhead and hindrance in usability.
    \item Can we, instead of passively waiting for references to become eligible for reclamation, actively force threads to communicate to determine references that can be reclaimed? 
\end{enumerate}

\section{Neutralization to Enforce Coordination}
\label{sec:nzparadigm}

Taking hints from Brown's DEBRA+~\cite{brown2015reclaiming}, a reclaiming thread (termed as reclaimer) could enforce communication by signaling all threads to give up their current references to objects and restart their operation.
This way the reclaimer could continue reclamation.
However, this gives rise to another problem. Restarting an operation in the middle of an update could render the underlying data structure incorrect. DEBRA+ resolves this by requiring programmers to provide a HP-based recovery code which can be used to restart the operation or clean up inconsistencies that may arise due to restarting in the middle of updates. However, that limits DEBRA+’s applicability to specific lock-free data structures and usability. For instance, certain algorithms that use locks, like a lock-based binary search tree featuring lock-free search operations as in \cite{david2015asynchronized}(DGT), are incompatible with DEBRA+. This is because restarting after lock acquisition could lead to deadlock.

Fortunately, many concurrent data structures have a typical pattern: long read-only phases followed by short write phases\cite{timnat2014practical}. This observation helps us design an alternative fast and correct restart strategy. 
Simply put, a thread in the read phase (termed as reader) can restart safely, while a thread in the write phase (termed as writer) should protect all references it will use and thus does not need to restart. This helps the reclaimer to assume that all references it intends to free are either safe to free or are protected, thus facilitating the immediate determination of references that can be reclaimed. 
In NBR, a thread using neutralization can enforce reclamation without waiting for other threads, similar to DEBRA+, but it differs in that it does not require recovery code and is widely applicable to a variety of data structures.

We define a reference as hazardous when the memory it points to \textit{can} be concurrently freed, according to the terminology from Maged Michael's hazard pointers paper~\cite{michael2004hazard}.

\begin{definition}
     In NBR, a thread $T$ is said to have been neutralized in response to a POSIX signal if one of the following conditions hold true:
    \begin{enumerate}
        \item $T$ discards all hazardous references and restarts. or
        \item $T$ doesn't restart but protects all the hazardous references.
    \end{enumerate} 
\end{definition}

The neutralization paradigm avoids waiting for stalled threads to reclaim memory (unlike epoch based approaches) and does not require readers to repeatedly reserve nodes during searches (unlike pointer reservation or epoch reservation approaches).
More specifically, readers do not reserve nodes, and writers reserve nodes exactly once, after they have finished gathering pointers to all the nodes they intend to modify, and before they have begun modifying these nodes.
Synchronization between reclaiming threads and threads accessing the data structure is mediated via \textit{neutralizing}.
This allows the neutralization paradigm to guarantee bounded garbage with the speed of epoch based algorithms, while remaining applicable to many data structures that are incompatible with existing pointer reservation and hybrid SMR algorithms.

Neutralization in NBR, as in DEBRA+, is achieved by using POSIX signals. However, unlike DEBRA+, during a neutralizing event, participating threads, based on their roles as readers or writers, take specific actions to ensure that all accesses are safe. Readers discard all references they acquired before the event and retry their operation from the beginning, effectively resetting their state. On the other hand, writers continue to operate unimpeded as long as they promise only to access nodes that they reserved before the neutralization event. If a writer has not yet gathered pointers to all the nodes it intends to modify, it behaves like a reader.

\section{NBR: Neutralization Based Reclamation}
\label{sec:nbr}
In this section we discuss Neutralization Based Reclamation--- a safe memory reclamation algorithm designed using neutralization paradigm. 
By neutralizing other threads, a reclaimer can effectively force readers (or writers that have not yet begun writing) to drop their pointers to any nodes that the reclaimer might be trying to reclaim.
Crucially, this allows a reclaimer to reclaim nodes without having to wait for an epoch to advance (unlike EBR), and readers can avoid the overhead of reserving individual nodes (unlike HP).
NBR retains the key benefit of HP: There is an upper bound on the number of garbage nodes other threads could prevent from being reclaimed, which is a function of the number of nodes that threads can reserve at a given time and the number of threads in the system. 
In practise, this bound is small as the number of threads and the number of reserved nodes (hazard pointers at a time) per thread are typically constant, e.g., lists require only 2 hazard pointers per thread.
In NBR, only the nodes that active writers intend to modify are reserved.
This allows NBR to guarantee bounded garbage without introducing per-node overheads for reservations.

In more detail, each thread places
unlinked objects in a thread-local buffer, and when the buffer’s size exceeds a predetermined threshold, the thread sends a neutralizing signal to all other threads. Upon receiving such a signal, a thread checks whether its current data structure operation has already done any writes to shared memory (nodes), and if not, restarts its operation (using the C/C++ procedures \func{sigsetjmp} and \func{siglongjmp}). Otherwise, it finishes executing its operation, ensuring that it must only access reserved nodes.


\subsection{Assumptions on data structures}
\label{sec:dsassmp}

\begin{figure}[h]
\centering
\resizebox{1\linewidth}{!}{\setlength{\unitlength}{4144sp}%
\begingroup\makeatletter\ifx\SetFigFont\undefined%
\gdef\SetFigFont#1#2#3#4#5{%
  \reset@font\fontsize{#1}{#2pt}%
  \fontfamily{#3}\fontseries{#4}\fontshape{#5}%
  \selectfont}%
\fi\endgroup%
\begin{picture}(7629,1974)(2869,-3001)
\thinlines
{\color[rgb]{0,0,0}\put(9541,-2401){\line( 1, 0){ 45}}
\put(9586,-2401){\line( 0, 1){990}}
\put(9586,-1411){\line(-1, 0){ 45}}
}%
{\color[rgb]{0,0,0}\put(4366,-2041){\vector(-1, 0){  0}}
\put(4366,-2041){\vector( 1, 0){1755}}
}%
{\color[rgb]{0,0,0}\multiput(10486,-1861)(-23.68421,0.00000){39}{\makebox(1.5875,11.1125){\small.}}
}%
{\color[rgb]{0,0,0}\put(4321,-1501){\vector( 0, 1){  0}}
\put(4321,-1501){\vector( 0,-1){765}}
}%
{\color[rgb]{0,0,0}\put(4051,-1861){\vector( 0,-1){585}}
}%
\thicklines
{\color[rgb]{0,0,0}\put(3850,-1861){\vector(-1, 0){  0}}
\put(3850,-1861){\vector( 1, 0){381}}
}%
\thinlines
{\color[rgb]{0,0,0}\put(3871,-1411){\line(-1, 0){ 45}}
\put(3826,-1411){\line( 0,-1){990}}
\put(3826,-2401){\line( 1, 0){ 45}}
}%
{\color[rgb]{0,0,0}\multiput(3781,-1861)(-23.68421,0.00000){39}{\makebox(1.5875,11.1125){\small.}}
}%
{\color[rgb]{0,0,0}\put(4321,-1861){\line( 1, 0){5220}}
}%
\thicklines
{\color[rgb]{0,0,0}\put(6346,-1951){\vector(-1, 0){  0}}
\put(6346,-1951){\vector( 1, 0){381}}
}%
\thinlines
{\color[rgb]{0,0,0}\put(6391,-1635){\vector( 0, 1){  0}}
\put(6391,-1635){\vector( 0,-1){496}}
}%
{\color[rgb]{0,0,0}\put(6706,-1501){\vector( 0, 1){  0}}
\put(6706,-1501){\vector( 0,-1){765}}
}%
{\color[rgb]{0,0,0}\put(6810,-2041){\vector(-1, 0){  0}}
\put(6810,-2041){\vector( 1, 0){2675}}
}%
{\color[rgb]{0,0,0}\multiput(4276,-1501)(113.02326,0.00000){22}{\line( 1, 0){ 56.512}}
}%
{\color[rgb]{0,0,0}\multiput(4321,-2266)(113.02326,0.00000){22}{\line( 1, 0){ 56.512}}
}%
{\color[rgb]{0,0,0}\put(3949,-1276){\vector(-1, 0){  0}}
\put(3949,-1276){\vector( 1, 0){5556}}
}%
{\color[rgb]{0,0,0}\put(6526,-2047){\vector( 0,-1){663}}
}%
\put(9766,-1591){\makebox(0,0)[lb]{\smash{{\SetFigFont{12}{14.4}{\rmdefault}{\mddefault}{\updefault}{\color[rgb]{0,0,0}Quiescent}%
}}}}
\put(9766,-1816){\makebox(0,0)[lb]{\smash{{\SetFigFont{12}{14.4}{\rmdefault}{\mddefault}{\updefault}{\color[rgb]{0,0,0}phase}%
}}}}
\put(4186,-2401){\makebox(0,0)[lb]{\smash{{\SetFigFont{12}{14.4}{\rmdefault}{\mddefault}{\updefault}{\color[rgb]{0,0,0}begin$\Phi_{read}$}%
}}}}
\put(6616,-2401){\makebox(0,0)[lb]{\smash{{\SetFigFont{12}{14.4}{\rmdefault}{\mddefault}{\updefault}{\color[rgb]{0,0,0}end$\Phi_{read}$}%
}}}}
\put(4951,-1456){\makebox(0,0)[lb]{\smash{{\SetFigFont{12}{14.4}{\rmdefault}{\mddefault}{\updefault}{\color[rgb]{0,0,0}neutralizable}%
}}}}
\put(4861,-1771){\makebox(0,0)[lb]{\smash{{\SetFigFont{12}{14.4}{\rmdefault}{\mddefault}{\updefault}{\color[rgb]{0,0,0}($\Phi_{read}$)}%
}}}}
\put(5941,-1186){\makebox(0,0)[lb]{\smash{{\SetFigFont{12}{14.4}{\rmdefault}{\mddefault}{\updefault}{\color[rgb]{0,0,0}data-structure operation}%
}}}}
\put(7381,-1546){\makebox(0,0)[lb]{\smash{{\SetFigFont{12}{14.4}{\rmdefault}{\mddefault}{\updefault}{\color[rgb]{0,0,0}non-neutralizable}%
}}}}
\put(7381,-1771){\makebox(0,0)[lb]{\smash{{\SetFigFont{12}{14.4}{\rmdefault}{\mddefault}{\updefault}{\color[rgb]{0,0,0}($\Phi_{write}$)}%
}}}}
\put(6166,-2806){\makebox(0,0)[lb]{\smash{{\SetFigFont{12}{14.4}{\rmdefault}{\mddefault}{\updefault}{\color[rgb]{0,0,0}conceptual}%
}}}}
\put(6166,-2986){\makebox(0,0)[lb]{\smash{{\SetFigFont{12}{14.4}{\rmdefault}{\mddefault}{\updefault}{\color[rgb]{0,0,0}reservation}%
}}}}
\put(3736,-2626){\makebox(0,0)[lb]{\smash{{\SetFigFont{12}{14.4}{\rmdefault}{\mddefault}{\updefault}{\color[rgb]{0,0,0}preamble}%
}}}}
\end{picture}
\caption{
Typical \nbr compatible data structure operation with preamble (optional), read-phase~(\rdp) that ends with a conceptual reservation phase followed by write-phase~(\wtp). 
}
\label{fig:3phase}
\end{figure}

\nbr can be applied to data structure operations that follow a simple template.
Each operation should consist of a sequence of \textit{phases}: preamble, read-phase (\rdp), (conceptual) reservation phase, and write-phase (\wtp), as illustrated in \figref{3phase}.
The read-phase and write-phase can be repeated any number of times as long as every read-phase starts from an entry point in the data structure operation.
As we shall see, this template ensures that it is safe for a thread to restart its \rdp from any arbitrary point in the \rdp.
Note that outside a data structure operation threads are in {\em quiescent phase}, which is like preamble phase.

Immediately before begin\rdp a checkpoint is set so that threads in \rdp can be neutralized (discard all local references to shared memory) and restart.

\subsection{Requirements}
\label{sec:nbrreq}

To integrate NBR correctly with a data structure, a programmer should ensure that the following requirements are satisfied:
\begin{enumerate}[label=\Roman*]
    \item Each \rdp in a data structure starts from an entry point (e.g. the head in lists or root in trees). This means shared node references from previous phases are not retained.
    \item All references to shared pointers that could be accessed within a \wtp are reserved before entering the \wtp.
\end{enumerate}

Moreover, given that NBR's restarting mechanism employs the \func{sigsetjmp} and \func{siglongjmp} functions from the POSIX signal API, it is important for programmers to be aware of key guidelines for their use throughout each phase.
More detailed discussion on these requirements and their impact on applicability to various data structures, along with a discussion on how to integrate NBR with the Harris list~\cite{harris2001pragmatic} and lazylist~\cite{heller2005lazy} appears in \chapref{chapappuse} (\secref{comptds} and \secref{usb}, respectively).


\begin{enumerate}
    \item \textbf{Preamble.} Accesses (reads/writes/CASs) to program global variables are permitted. System calls (heap allocation/deallocation, file I/O, network I/O, etc.) are permitted. Access to shared records, for example, nodes of a shared data structure, is \textit{not} permitted. 
    We use the term global to mean program-wide variables (different from the data structure nodes), for example, these global variables may hold configuration values, counters, or thread local data structures that are meant to persist between operations.

    \item \textbf{{\em Read phase} (\rdp).} Reading program global variables is permitted and reading shared records is permitted if pointers to them were obtained during this phase (e.g., by traversing a sequence of shared objects by following pointers starting from a global variable---i.e., a \textit{root}). Writes/CASs to shared records, writes/CASs to shared globals, and system calls, are \textit{not} permitted. To understand the latter restriction, suppose an operation allocates a node using \func{malloc} during its \rdp, and before it uses the node, the thread performing the operation is neutralized. This would cause a memory leak.
    
    Additionally, writes to thread local data structures are not recommended. To see why, suppose a thread maintains a thread local doubly-linked list, and also updates this list as part of the \rdp of some operation on the shared data structure. If the thread is neutralized in middle of its update to this local list, it might corrupt the structure of the list. Note that in some cases, it is okay to write to a thread local variable or data structure. For example, if a thread wants to count how many times it gets neutralized while executing a \rdp then neutralization of this thread is not a problem. 
    
    More generally, any idempotent updates to thread local data structures should remain correct, even if the \rdp is restarted. The programmer must simply proceed with caution.
    These restrictions exist because \nbr uses \func{sigsetjmp} to implement a checkpoint from which a thread restarts and neutralization fires a POSIX signal that (sometimes) causes a thread to \func{siglongjmp} to its last checkpoint. For readers not familiar with the caveats of using these subroutines, we explain further the rules in \secref{nbrrules}. \emph{In summary, a programmer should be vary of doing anything in the read-phase which when undone (due to restarting) may cause incorrect behaviour.}
    
    \item \textbf{Reservation phase.} This is not a concrete phase in a data structure operation, but rather a \textit{conceptual stage} where shared nodes to be modified in the next phase are identified in a data structure operations. At this stage the identified nodes can be reserved by a programmer. We denote these nodes as \textit{reserved} nodes.

    \item \textbf{{\em Write phase} (\wtp ).}  Accesses (reads/writes/CASs) to global variables, and system calls, are permitted. Accesses (including write/CAS) to shared records are permitted only if the records are \textit{reserved}. To understand what could go wrong if this restriction is violated, we need to better understand \nbr, so we will return to this restriction with an example in \secref{exwtp}.
\end{enumerate}


\subsection{NBR interface}


NBR provides the following essential functions. Detailed information regarding the interface and its implementation can be found in \algoref{nbr} in \secref{nbrimpl}.

\begin{itemize}
    \item \Call{Checkpoint}{ }, invoked immediately before the start of the read phase in a data structure operation.
    \item \Call{begin\rdp}{ }, marks the beginning of the read phase in an operation and also this is used to clear reservations from previous write-phase, if any. 
    \item \Call{end\rdp}{...}, Invoked at the end of the read phase to reserve the set of nodes identified in this phase, indicating the start of the write phase. The set of nodes to be reserved are passed as parameter to this function.
    \item \Call{retire}{ }, Invoked after a node is unlinked to temporarily buffer it in a per-thread retire list.
\end{itemize}

\subsection{Design}
\label{sec:nbrdesign}

In \nbr, each thread accumulates nodes that it has unlinked in a private buffer (or \textit{limbo bag}).
When the size of a thread $T$'s buffer exceeds a predetermined threshold, the thread sends a neutralizing signal to all other threads. 
Upon receipt of such a signal, the behavior of a thread $T1$ depends on the phase in which it is executing.

\textbf{If $T1$ is in a quiescent phase, or preamble}, it holds no pointers to shared nodes, and does not prevent $T$ from reclaiming nodes in its buffer.
$T1$ simply continues executing (effectively ignoring the signal).

On the other hand, \textbf{if $T1$ is in \rdp}, it may hold pointers to nodes in $T$'s buffer.
If $T1$ were to continue executing, it would have to prevent $T$ from reclaiming nodes.
However, note that $T1$ has not yet performed any modifications to any shared nodes (since it is still in \rdp).
So, $T1$ can simply discard all of its pointers (that are in its private memory), and jump back to the start of its \rdp, without leaving any shared data structures in an inconsistent state.
To implement this jump, every data structure operation invokes \func{sigsetjmp} at the start of its \rdp, which creates a \textit{checkpoint} (saving the values of all stack variables).
Subsequently, a thread can invoke \func{siglongjmp} to return to the last place it performed \func{sigsetjmp} (and restore the values of all stack variables).
Technically \func{sigsetjmp} saves the current stack frame. Stack variables defined deeper on the stack will not necessarily be saved or restored.
The thread can then retry executing its \rdp, traversing a new sequence of nodes, starting from the \textit{root}, without any risk of accessing any nodes \func{free}d by $T$ (since those are no longer reachable).

The subtlety in \nbr arises when \textbf{$T1$ is in \wtp}.
In this case, $T1$ may hold pointers to nodes in the buffer of $T$.
Thus, if it continues executing, $T1$ must prevent $T$ from reclaiming these nodes.
Moreover, since $T1$ may have modified some shared nodes (but not completed its operation yet), we cannot simply restart its data structure operation, or we may leave the data structure in an inconsistent state.
So, $T1$ will \textit{not} restart its operation.
Instead, it will simply \textit{continue executing} wherever it was when it received the signal (effectively ignoring the signal).
At this point, the reader might wonder how we \textit{simultaneously avoid}: 
\begin{itemize}
    \item [\textbf{A.}] {Blocking the reclamation of $T$, and}
    \item [\textbf{B.}] {The possibility that $T1$ continues executing and one of the nodes it is about to access is concurrently \func{free}d by $T$.}
\end{itemize}

The solution lies in the \textit{reservation} phase of $T1$.
During the reservation phase of $T1$, just before it begins its \wtp, $T1$ \textit{reserves} all shared nodes it will access in its \wtp by \textit{announcing} pointers to them in a shared array.
These reservations serve a similar purpose to hazard pointers (HP), but are quite different from HP in terms of performance and safety guarantees. For example, there is no need for validation after the reservation step. This is mainly because by the time $T1$ is in its \wtp (so it will ignore any neutralization signals), its reservations are guaranteed to be visible to all threads,
and $T$ can refer to these reservations to avoid reclaiming any of those nodes. More detailed discussion appears in \secref{nbrimpl} where we discuss the implementation that will explain why it is safe not to validate and in \secref{usb} where we discuss how programmers use the NBR interface.

In short, operations in the \rdp discard their pointers and restart, and operations in the \wtp must have reserved them. This empowers the \textit{reclaimers} to assume that \rd{s} lose all of their pointers in response to neutralizing, and the \wt{s} lose all pointers that are not reserved. As a result, once a thread sends a neutralizing signal to all other threads, it can scan all reservations, and free any nodes in its buffer (limbo bag) that are not reserved.

\begin{algorithm} 
\small
    \caption{Neutralization Based Reclamation (\nbr) Interface. 
    }\label{algo:nbr}

    \begin{algorithmic}[1]

\Statex \textbf{thread local variable:} 
    \State int \texttt{tid;} \Comment{\texttt{current thread id}}
    \State node *\texttt{limboBag;} \label{lin:rb}\Comment{\texttt{retired nodes. Maxsize:S}}
    \State atomic<bool> \texttt{restartable;} \label{lin:rst}\Comment{\texttt{tracks \rdp/\wtp}}
    \State node *\texttt{tail;} \label{lin:tl}\Comment{\texttt{last node in limboBag}}
\Statex \textbf{shared variable:\texttt{n:\#threads, r:max reserved nodes.}} 
\State atomic<node*> \texttt{reservations[n][r];} \label{lin:resv} \Comment{\texttt{Assume:r$<<$S}}
\Statex
\LComment{Programmers create a checkpoint just before invoking begin\rdp to enable restart of a thread from this checkpoint.}     
        \Procedure{CHECKPOINT}{ } \Comment{\texttt{must be INLINED.}} \label{lin:chkp}
            \While{\texttt{sigsetjmp(...))}} \Comment{\texttt{signal mask not saved.}} \label{lin:sigset}
            \State {unblock neutralizing signal.} \Comment{\texttt{post siglongjmp.}}
            \EndWhile
        \EndProcedure
\Statex

        \Procedure{signalHandler}{ } \label{lin:sigh}
            \If{\texttt{!restartable}} 
            \State {\texttt{return;}} \Comment{\texttt{in \wtp, ignore signal and return.}}
            \EndIf
            \State \texttt{siglongjmp(...);}\Comment{\texttt{in \rdp, jump to checkpoint.}}\label{lin:sigjmp}            
        \EndProcedure
\Statex
\LComment{This function clears previous reservations and sets the restartable flag to true, marking beginning of a read-phase.}
        \Procedure{begin\rdp}{ }\label{lin:brp}
            \State \texttt{reservations[tid].clear();} \label{lin:arpc}
            \State \texttt{restartable = True;}\label{lin:nz01}
            
        \EndProcedure
\Statex
\LComment{programmer invokes this function to reserve nodes identified in read-phase and marks the beginning of subsequent write phase.}
        \Procedure{end\rdp}{rec = \{$rec_1,\cdot\cdot\cdot, rec_R$\}}\label{lin:bwp}
            \State \texttt{reservations[tid]=rec; }\label{lin:arp_add}
            \State \texttt{restartable = False;} \label{lin:nz10}
        \EndProcedure
\Statex
        \Procedure{retire}{rec} \label{lin:func_retire}
            \If{\texttt{isLimboBagTooLarge()}} \label{lin:ioop}
                \State \texttt{\Call{signalAll}{ };}\label{lin:sa} 
                \State \texttt{\Call{reclaimFreeable}{tail};}\label{lin:rf}
            \EndIf \label{lin:ioop_eif}
            \State \texttt{limboBag[tid].append(rec);}\label{lin:re_app}
        \EndProcedure
\Statex        
        \Procedure{reclaimFreeable}{tail} \label{lin:func_rf}
            \State \texttt{$A$=\Call{collectReservations()}{}; }\label{lin:carp} 
            \State \texttt{$R$=limboBag[tid].remove(A, tail);}\label{lin:retg} 
            \State \texttt{free(\{$R$\});} \label{lin:free}
        \EndProcedure
    \end{algorithmic}
\end{algorithm}

\subsection{Implementation}
\label{sec:nbrimpl}
In this section we present the interface of NBR and its implementation. The pseudocode for \nbr\ appears in \algoref{nbr}.
We assume the C++ memory model.
In particular, this means \texttt{atomic} accesses are by default sequentially consistent. 

\subsubsection{Description of Variables}
Each thread has an ID denoted by \texttt{tid}.
The thread collects retired nodes in its \texttt{limboBag}, which is implemented as an array.
The \texttt{tail} variable points to the last retired node added to the \texttt{limboBag}.
Each thread also has a local \texttt{restartable} flag, which determines whether the thread will restart if it receives a neutralizing signal.
This flag is set when a thread enters its read-phase (\rdp), and reset 
when the thread exits the \rdp. 
Prior to exiting the \rdp, the thread enter a conceptual reservation phase.
During this phase, it announces all the nodes it will access in its subsequent write-phase (if any), using a single-writer multi-reader (SWMR) \texttt{reservations} array. 
We assume the maximum number \texttt{r} of reserved nodes is strictly less than the maximum size \texttt{s} of a limbo bag.

\subsubsection{CHECKPOINT() and SIGNALHANDLER()}
To enable thread restarts, \Call{checkpoint()}{} (\lineref{chkp}) is invoked just before entering the \rdp.
The checkpoint indicates a location in code from which the executing thread can safely restart after it receives a neutralizing signal.
When \texttt{sigsetjmp} (\lineref{sigset}) is executed, it saves the current execution context (stack frame pointer, program counter and register contents) and returns false when executed for the first time, effectively creating a checkpoint from which the thread can restart.

A thread invokes \Call{begin\rdp()}{} to initiate a \rdp.
While in the \rdp, if a neutralizing signal is received, the thread executes a custom signal handler (\lineref{sigh}).
From within the signal handler, it uses \texttt{siglongjmp} (\lineref{sigjmp}) to jump back to the \texttt{checkpoint} (restoring the saved context) and re-executes the \texttt{sigsetjmp}, which this time returns true.
It also unblocks the neutralizing signal (which was automatically blocked by \texttt{siglongjmp} to avoid recursive signal handler invocations).
This effectively restarts the \rdp from scratch with a new checkpoint.

\subsubsection{BEGIN\rdp()}
As mentioned above, a thread invokes \Call{begin\rdp()}{} to start its \rdp (\lineref{brp}).
The thread first clears its previous reservations so that it can reserve new nodes in the reservation phase.
It then sets \texttt{restartable} to true (using a sequentially consistent store; \lineref{nz01}).
This ensures that the thread becomes restartable before it accesses any shared nodes in its \rdp.

\subsubsection{END\rdp(...)}
In order to enter a \wtp, a thread invokes \Call{end\wtp()}{} (\lineref{bwp}).
The thread begins by reserving a sequence of nodes (\lineref{arp_add}) in its designated slots within the \texttt{reservations} array (implementing the conceptual reservation phase).
Then, it resets the \texttt{restartable} flag (with a sequentially consistent store; \lineref{nz10}). 
This store ensures that when this thread, in its \wtp, executes a signal handler and sees restartable is false, it has already reserved all of the nodes it will need in its \wtp, so any reclaimers that read these reservations will see all of the nodes the thread will access. 
Note that these reservations are different from classic hazard pointers based reservations, both because nodes are reserved \textit{after} the nodes have been accessed, and because no special validation is required after the stores to \texttt{reservations}.

\subsubsection{RETIRE()}
Whenever a thread unlinks a node from a data structure, it invokes \Call{retire}, which adds the node (\textit{rec}) to the thread's \texttt{limboBag} (\lineref{re_app}).
If the size of the \texttt{limboBag} exceeds a predefined threshold (\lineref{ioop}), 16k in our experiments, neutralizing signals are sent to all other threads using the \Call{signalAll}{} function (\lineref{sa}).
In our implementation, \Call{signalAll}{} employs \textit{pthread\_kill()} with the standard Linux signal \textit{SIGQUIT}.
To simplify determining the number and IDs of threads that should receive a neutralizing signal, we assume a fixed number of threads in the system, with no dynamic thread entry or exit (see assumption in \secref{introasmp}).

Upon receipt of the signal, a thread immediately executes \Call{signalHandler}{} (\lineref{sigh}).
If a thread is restartable (i.e, in a \rdp), it restarts from its previous \texttt{checkpoint}.
Otherwise, it returns and continues where it left off.

After, a reclaiming thread has sent neutralizing signals to all other threads (\lineref{sa}), it executes \Call{reclaimfreeable}{} (\lineref{rf}).
Within \Call{reclaimfreeable}{} (\lineref{func_rf}), the thread collects all reservations in a set A. 
These reservations are used to determine which nodes are safe to free (\lineref{retg}), and these nodes are then deallocated (\lineref{free}).

\subsection{Coordination for Reclamation}
\label{sec:nzimpl}

In the previous section we discussed the design and implementation of \nbr. In this section we elaborate more on how \rd{s}, \wt{s} and \rl{s} collaborate to achieve safety of reclamation.

\paragraph{Reader-reclaimer handshake.}\label{sec:rrhandshake}
Each thread $T1$ at the time of \textsc{begin\rdp} saves its execution state (program counter and stack frame) using \textit{sigsetjmp} so that when it becomes restartable it can jump back to this state upon receiving a neutralizing signal.
When a \rl $T$ sends a neutralization signal to thread $T1$, the operating system causes the control flow of $T1$ to be interrupted, so that $T1$ will immediately execute a \textit{signal handler} if $T1$ is currently running. Otherwise, if $T1$ is not currently running, the next time it is scheduled to run it will execute the signal handler before any other steps.
The signal handler determines whether $T1$ is restartable by reading the local \nz variable.
If the thread is restartable, then the signal handler will invoke \func{siglongjmp} and jump back to the start of the \rdp (so it is as if $T1$ never started the \rdp{}). 

This behaviour represents a sort of two-step \textit{handshake} between \textit{readers} (threads in \rdp) and \textit{reclaimers} (threads executing lines 16 and 17 in \func{retire}) to avoid scenarios where a reader might access a \func{free}d node. 
A \rl guarantees that before \emph{reclaiming} any of its \emph{unlinked} nodes, it will signal all threads, and all \rd{s} guarantee that they will relinquish any reference to unsafe nodes when they receive a neutralization signal.

\paragraph{Writer-reclaimer handshake.} \label{sec:whandshake}
\textbf{(1)} Each \rl signals all threads before starting to reclaim any nodes.
When a \wt receives a signal, it executes a \emph{signalHandler} that determines the thread is non-restartable, and immediately returns. 
The \rl then goes on to reclaim its \emph{limboBag} (\lineref{rf}), except for any reserved nodes contained therein, independently from the actions of the \wt.
This is safe because a \wt, before entering into the \wtp, \textit{reserves} all of the shared nodes it might access in its \wtp 
(\lineref{arp_add}).
Thus, \textbf{(2)} the \wt guarantees to the \rl that, although it will not restart its data structure operation, it will only access \emph{reserved} nodes.
The \textbf{(3)} \rl, in turn, guarantees it will scan all announcements after signaling and before reclaiming the contents of its \emph{limboBag}, and will consequently avoid reclaiming any nodes that will be accessed by the \wt in its \wtp.

This three-step handshake formed by (1), (2) and (3) avoids scenarios where a \wt might access a \func{free}d node. 
Crucially, all \wt{s} atomically ensure that their reserved nodes are visible to the \rl at the moment they become non-restartable.
In turn, \rl{s} scan reservations \textit{after} sending neutralization signals at which point any thread that does not restart has already made its reservations visible.

\subsection{Reservations in Write-Phase}
\label{sec:exwtp}

In this section, we will trace an incorrect execution that could occur if a thread accesses any node that is not reserved \textit{before} entering the \wtp.

Suppose a thread $T$ is in a \wtp, and sleeps just before it accesses a shared node $rec$, which it has \textit{not} reserved.
Then, another thread $T1$ sends a neutralization signal to $T$ using \func{signalAll}.
Next, $T1$ scans the \emph{reservations} array of the thread $T$.
$T$ did not reserve $rec$ so $T1$ will not find $rec$ in $T$'s reserved nodes (which violates the writer-reclaimer handshake, \secref{whandshake}).
Therefore, $T1$ will assume that $rec$ can be freed safely, and will do so.
Finally, $T$ wakes up and proceeds with its \textit{unsafe} access of $rec$.

 Note, despite the use of HP like reservations, \nbr does not inherit the drawbacks of HP, namely, (D1) high overhead of announcing an HP after every read of a pointer to a shared node and (D2) and inability to be used with data structures that allow traversal over marked nodes~\cite{brown2015reclaiming}. Firstly, \nbr only requires to publish its reservations once per operation right before entering the \wtp and not after every read of a shared node. Thus, it avoids the high overhead of frequent publishing of reservations in HPs.
 Secondly, unlike HP, NBR is safe with data structures that may require traversal over marked ( or logically deleted ) nodes due to \rd-\rl and \wt-\rl handshakes.


\subsection{Optimization: Relaxing the Memory Model}
\label{sec:nbr_implrelx}

In the previous section, we presented the theoretical algorithm (\algoref{nbr}) assuming sequential consistency.
However, in practice, there is considerable interest in relaxing memory models to obtain higher performance. 
An implementation of \nbr\ on a relaxed memory model should provide two essential guarantees.
\begin{guarantee}
\label{guar:gurwtp}
When a thread $T_1$ executing a signal handler reads its \texttt{restartable} variable and sees $false$, $T_1$ must have already written its reservations, and any other thread $T_2$ that subsequently reads $T_{1}'s$ reservations must see those reservations that were written by $T_1$.
\end{guarantee}

\begin{guarantee}
\label{guar:gurrdp}
A thread $T$ must store true to its \texttt{restartable} variable before it discovers any new nodes in its \rdp.
\end{guarantee}

On modern Intel and AMD x86/64 systems, which implement total store order (TSO), one possible optimization is to set the C++ \texttt{atomic} memory order of all loads and stores in the algorithm to \texttt{memory\_order\_relaxed}, except for the stores to the \texttt{restartable} variable.
As long as the stores to \texttt{restartable} are sequentially consistent the algorithm remains correct. Precisely, the implementation should ensure \guarref{gurwtp} and \guarref{gurrdp}.


If \guarref{gurwtp} is violated, the following incorrect execution may occur on x86/64: a thread $T$ reserves node $rec$ and writes $0$ to \texttt{restartable}.
Suppose the reservations of thread $T$ remain in the processor's store buffer, and are not visible to other threads yet.
Then, another thread $T1$ sends a neutralizing signal to $T$, scans the reservations and does not see $rec$, and consequently frees $rec$.
Upon receiving the signal, $T1$ will \textit{not} restart since it has already written 0 to restartable.\footnote{A detailed explanation of the behaviour of store buffers and serializing instructions on modern processor architectures is out of scope. But, briefly, if $T$ and $T1$ are executing on different processors, then $T$ will not see the effects of any pending writes in the store buffer of $T1$, but $T1$ \textit{will} see the effects its own pending writes in order to maintain sequential consistency.}
Instead, it continues executing, and dereferences $rec$ (accessing a freed node).

Similarly, \guarref{gurrdp} ensures that \texttt{restartable} will be set to $true$ before $T_1$ discovers any new nodes in its \rdp.
In a more relaxed memory model, it could be violated as it would be possible for $T$'s reads of shared nodes to be reordered before this write. In other words, some reads of shared nodes in \rdp may appear to occur in \textit{preamble} (or previous \wtp{}) due to instruction reordering.
This could end up breaking the rule that says access to shared nodes is not permitted in \textit{preamble} as discussed in \secref{dsassmp}.
As a result, the thread, which is not yet restartable, might ignore a neutralization signal and access a \func{free}d node.

Note that on modern Intel and AMD x86/64 systems, these sequentially consistent stores to \texttt{restartable} typically compile to a \texttt{mov} instruction followed by an \texttt{mfence}, which is much slower than an \texttt{xchg} instruction (\texttt{atomic\_exchange} in C++) which would also offer the required semantics.
(In fact, some modern compilers will emit \texttt{xchg} for these stores with high optimization compiler flags.)


\section{Correctness}
\label{sec:nbrcorr}
\begin{assumption}
\label{asm:sigg}
If \xspace \tid{i} in order to reclaim its retired nodes sends a signal to \tid{j}, then by the time \tid{i} finishes sending the signal, \tid{j} is guaranteed to receive it and execute a signal handler before taking further steps in its program.
\end{assumption}

Note, in theory, we simplify by assuming signals are delivered immediately, but in practice there is some delay, and some understanding of the hardware and operating system stack is necessary to understand how large the delay can be. (We experimentally measure this delay for the machines on which we evaluate NBR in~\secref{practicalsigdelivery}.) With some reasonable assumptions on the hardware, for example, the APIC taking a few hundred cycles to deliver an IPI, and approximating the time it takes to retire currently executing instruction in a processor pipeline, we can reason about what a safe waiting period would be in practice.

\begin{property}
\label{prop:rdp}
A thread \tid{j} in \rdp
\begin{enumerate}[label=\ref{prop:rdp}.\arabic*]
    \item \label{prop:rdp1} {upon receiving a signal executes a signal handler and restarts from the entry point of a data structure (i.e gets neutralized).}
    \item \label{prop:rdp2} {is permitted to dereference reference fields of shared nodes to discover new shared nodes (unless it is neutralized).}
\end{enumerate}
\end{property}

\begin{property}
\label{prop:wtp}
A thread \tid{j} in \wtp
\begin{enumerate}[label=\ref{prop:wtp}.\arabic*]
    \item \label{prop:wtp1} {reserves all nodes to be used in \wtp before entering \wtp.}
    \item \label{prop:wtp12} {upon receiving a neutralizing signal simply continues its execution as if no signal was received. }
    \item \label{prop:wtp2} {does not access any nodes it did not reserve prior to entering \wtp.}
\end{enumerate}
\end{property}

\begin{property}
Every \rl thread \tid{r} does the following in order:
\label{prop:rl}
\begin{enumerate}[label=\ref{prop:rl}.\arabic*]
    \item \label{prop:rl1} {sends signals to all participating threads.}
    \item \label{prop:rl2} {scans all reserved nodes of each participating thread \tid{j}.}
    \item \label{prop:rl3} {reclaims nodes in its bag that are not reserved.}
\end{enumerate}
\end{property}

\begin{lemma}
\label{lem:resscan}
A reclaiming thread $T_r$ is guaranteed to scan all the reservations of a thread $T_w$ if $T_w$ enters its \wtp before it is signalled by $T_r$.  
\end{lemma}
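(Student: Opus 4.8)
The plan is to chain the reclaimer's and writer's ordering properties together with the signal-delivery assumption and the memory-visibility guarantee, showing that by the time $T_r$ performs its scan, every reservation made by $T_w$ is both written and observable.

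First I would pin down the relevant events in $T_w$'s program order. By \propref{wtp1} (realized at lines~\ref{lin:arp_add}--\ref{lin:nz10}), $T_w$ writes all the nodes it will use into its slots of the \texttt{reservations} array before it stores $false$ to \texttt{restartable}, and this store marks $T_w$'s entry into \wtp. The hypothesis of the lemma is exactly that this entry completes before $T_r$ signals $T_w$.

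Next I would invoke the reclaimer's fixed order of operations. By \propref{rl}, $T_r$ first sends neutralizing signals to all threads (\propref{rl1}, \lineref{sa}) and only afterwards scans their reservations (\propref{rl2}, \lineref{carp}). By \asmref{sigg}, once \textsc{signalAll} returns, $T_w$ has already run its signal handler before executing any further step. Since $T_w$ is in \wtp---it stored $false$ to \texttt{restartable} before the signal arrived---its handler reads \texttt{restartable}, observes $false$, and returns without restarting (\propref{wtp12}).

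The crux, and the step I expect to be the main obstacle, is turning this real-time ordering into a guarantee that $T_r$ actually \emph{sees} the reservations rather than stale slots. Under sequential consistency the total order ``reserve $<$ signal $<$ scan'' settles the matter at once, but the honest argument must survive a relaxed memory model, which is precisely what \guarref{gurwtp} is engineered to handle: because $T_w$'s handler read \texttt{restartable} and saw $false$, $T_w$ must already have written its reservations, and any thread that reads those slots \emph{afterward} is guaranteed to observe them. Since $T_r$'s scan at \lineref{carp} is sequenced strictly after it sent the signal---hence after $T_w$'s handler ran---$T_r$'s read of $T_w$'s reservation slots falls into the ``subsequent read'' case of \guarref{gurwtp}, so it observes every reservation $T_w$ made. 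I would close by noting that this is exactly \propref{rl2} instantiated for $T_w$, which is the claim of the lemma.
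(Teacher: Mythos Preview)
Your proposal is correct and follows essentially the same approach as the paper: both chain the ordering ``reservations $<$ entry to \wtp $<$ signal $<$ scan'' (the paper does this with timestamps $t_{res} < t_{wp} < t_{sig} < t_{scan}$, you do it via \propref{wtp1}, the hypothesis, \asmref{sigg}, and \propref{rl}). Your explicit invocation of \guarref{gurwtp} to justify visibility under a relaxed memory model is actually more careful than the paper's informal appeal to the CAS on \texttt{restartable}, but the underlying argument is the same.
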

\begin{proof}
A thread \tid{w} reserves nodes by announcing them in a single writer multi reader array. On x86/64, this announcement must be followed by a memory ordering instruction to ensure that the reservations are visible when the thread enters its \wtp. Specifically, \nbr uses a CAS to reset a thread local variable \textit{restartable} which simultaneously begins \wtp and ensures that the preceding reservations are visible to any other thread that sees \tid{w} is in \wtp. 

Let $t_{res}$ be the time at which the last reservation was made and $t_{wp}$ be the time at which \wtp began. As we just explained, this means \textbf{(A)} $t_{res}$ $<$ $t_{wp}$. 

To show that \tid{r} observes all of \tid{w}'s reservations, we show that 
$t_{scan}$ $<$ $t_{res}$.

From, \propref{rl} we know that \textbf{(B)} $t_{sig}$ $<$ $t_{scan}$, where $t_{sig}$ is when \tid{r} sent its last signal, and $t_{scan}$ is when it reads the first reservation slot of \tid{w}.
And, since \tid{w} was in \wtp when it received the signal, we have \textbf{(C)} $t_{wp}$ $<$ $t_{sig}$.

By \textbf{(A)}, \textbf{(B)} and \textbf{(C)}, $t_{res} < t_{wp} < t_{sig} < t_{scan}$. 
\end{proof}

\begin{lemma}[NBR is Safe]
\label{lem:nbr}
No \rl thread in \nbr reclaims an unsafe node.
\end{lemma}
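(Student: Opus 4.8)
The plan is to fix an arbitrary \rl thread $T_r$ together with an arbitrary node $n$ that $T_r$ passes to \func{free} on \lineref{free} of \func{reclaimFreeable}, and to prove that $n$ is safe to reclaim, i.e. (i) $n$ is retired and (ii) no thread can ever dereference $n$ after it is freed. Part (i) is immediate from the structure of \func{retire}: the nodes reclaimed are exactly those already resident in $T_r$'s \texttt{limboBag} up to \texttt{tail}, each of which was appended by an earlier \func{retire} call and hence was unlinked before the current \func{signalAll}. For part (ii) I would reduce the claim to a single statement about an arbitrary other thread $T_j$: \emph{$T_j$ never dereferences $n$ after the moment $T_r$ frees it}. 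By \propref{rl1} every \rl signals all threads before it scans reservations (\propref{rl2}) and frees (\propref{rl3}), and by \asmref{sigg} $T_j$ executes its signal handler before taking any further program step. The proof then proceeds by a case analysis on the value of \texttt{restartable} that $T_j$'s handler reads, which by \guarref{gurrdp} and \guarref{gurwtp} faithfully reflects whether $T_j$ is in \rdp or not.

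Case 1: $T_j$ reads \texttt{restartable}$=\mathit{true}$, so it is in \rdp. By \propref{rdp1} the handler neutralizes $T_j$, discarding all of its local references and restarting it from the data-structure entry point. Because $n$ was unlinked before the signal and unlinking is permanent, $n$ is unreachable from any entry point for the remainder of the execution; by the reachability assumption of \secref{introasmp}, a thread that (re)starts a traversal after $n$ was unlinked cannot reach $n$. Hence $T_j$ cannot rediscover $n$, and \guarref{gurrdp} guarantees that $T_j$ had set \texttt{restartable} before touching any shared node, so there is no window in which $T_j$ holds a reference inside \rdp yet escapes neutralization. Any node $T_j$ later reserves and accesses in a subsequent \wtp is one it freshly discovered from the entry point, so it is reachable and therefore distinct from $n$.

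Case 2: $T_j$ reads \texttt{restartable}$=\mathit{false}$. If $T_j$ is quiescent or in its preamble it holds no references to shared records and performs no shared-record access there, and its next \rdp again begins at an entry point, so by the same reachability argument it cannot reach $n$. The only remaining possibility is that $T_j$ is in \wtp. By \propref{wtp12} it continues execution, but by \propref{wtp2} it accesses only nodes it reserved before entering \wtp. Since $T_j$ was in \wtp when the signal arrived, it entered \wtp before being signalled, so \lemref{resscan} (with $T_j$ playing the role of $T_w$) guarantees that $T_r$'s scan on \lineref{carp} observes \emph{all} of $T_j$'s reservations. Consequently, had $n$ been reserved by $T_j$ it would lie in the collected set $A$ and be removed from the freeable set on \lineref{retg}, contradicting the assumption that $T_r$ frees $n$. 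Therefore $n$ is not among $T_j$'s reserved nodes, and by \propref{wtp2} $T_j$ never accesses it. As $T_r$ and $T_j$ were arbitrary, no \rl ever reclaims an unsafe node.

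The step I expect to be the crux is the writer case, where correctness rests entirely on the visibility ordering of \lemref{resscan}: I must be sure that the reservation writes of a \wt become globally visible no later than the store that turns \texttt{restartable} to $\mathit{false}$, so that a handler observing $\mathit{false}$ is always backed by reservations that $T_r$'s post-signal scan can see. This is exactly \guarref{gurwtp}, and it is why on x86/64 the store to \texttt{restartable} must remain sequentially consistent rather than relaxed. A secondary subtlety is the phase-boundary read of \texttt{restartable}: I would argue, using \guarref{gurrdp} and \guarref{gurwtp}, that whichever value the handler observes, the matching invariant holds --- either \emph{restartable, with no shared access having escaped neutralization}, or \emph{non-restartable, with reservations already published} --- so the two cases above are jointly exhaustive and there is no racy in-between state in which $T_j$ could access the freed node $n$.
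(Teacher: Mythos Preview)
Your proposal is correct and follows essentially the same approach as the paper: a case analysis on whether the signalled thread is in \rdp (restartable) or in \wtp (non-restartable), using \asmref{sigg} with \propref{rdp} for the reader case and \propref{wtp} together with \lemref{resscan} for the writer case. Your version is in fact more explicit than the paper's own proof --- you separately handle the quiescent/preamble sub-case, you spell out the appeal to \lemref{resscan} in the writer case (the paper's proof leaves this implicit), and you discuss the phase-boundary race via \guarref{gurrdp} and \guarref{gurwtp} --- but the overall structure and the key ingredients are the same.
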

\begin{proof}
Suppose to obtain a contradiction that some \rl \tid{r} reclaims an unsafe node $rec$.
This can occur in only two ways: (1) a \wt accesses a node $rec$ that it did not reserve, or (2) a \rd accesses a node $rec$ in the limboBag of \tid{r} that is being reclaimed. 

It is easy to argue that (1) does not happen.
By \propref{rl1}, \tid{r} must have sent a signal to the \wt before reclaiming $rec$.
By \propref{wtp}, if the \wt accesses $rec$ it must reserve $rec$ before entering \wtp.

Next we show (2) does not happen.
As above, by \propref{rl}, \tid{r} must have sent signal to the \rd before reclaiming $rec$.
Once \tid{r} has sent this signal, \asmref{sigg} and \propref{rdp1} imply that the \rd will execute its signal handler as its next step in its execution, at which point it will discard any private reference to $rec$. 
Consequently, by the time \tid{r} begins reclaiming nodes in its limboBag, the \rd will no longer have access to $rec$.


\end{proof}

\begin{lemma}[\nbr is robust]
The number of nodes that are retired but not yet reclaimed is bounded.
\end{lemma}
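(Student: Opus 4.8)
The plan is to bound the total garbage by controlling each thread's \lb individually and then summing over the $n$ threads. First I would observe from \func{retire} that a thread performs \func{signalAll} followed by \func{reclaimFreeable} the moment its \lb exceeds the threshold $S$ (\lineref{ioop}), and only afterward appends the new node (\lineref{re_app}). Hence every \lb is capped at about $S$: between consecutive reclamations it grows by one node per \func{retire}, and each reclamation must push it back below $S$. Summing this cap over all $n$ threads then yields a global bound of $O(nS)$ on retired-but-unreclaimed nodes.

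The crux is therefore to show that every invocation of \func{reclaimFreeable} genuinely brings the bag back below $S$, i.e., that reclamation makes progress irrespective of how the other threads are scheduled. To this end I would count the nodes a reclaimer \tid{r} is forced to retain. By \propref{rl}, \tid{r} frees every node in its bag that is not reserved, so the only survivors are reserved nodes. I would then bound the total number of reserved nodes in the whole system by $n\cdot r$: each \wt announces at most $r$ nodes (Requirement~II and \propref{wtp1}), whereas every \rd has cleared its reservation slots at the start of its read-phase (\lineref{arpc}) and holds none. Since, by assumption, a node is retired by at most one thread, each reserved-and-retired node lies in a single \lb, so \tid{r} retains at most $nr$ nodes after reclaiming. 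Using the assumption $r \ll S$ (so that $nr < S$), the bag drops strictly below $S$, confirming the cap and hence the $O(nS)$ bound.

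Finally I would argue that this bound is independent of thread delays, which is the essence of robustness. A stalled \rd cannot obstruct reclamation: by \asmref{sigg} and \propref{rdp1} it is neutralized upon receiving \tid{r}'s signal and discards all its references, contributing nothing to the reserved set. A stalled \wt can block at most the $r$ nodes it reserved before entering \wtp, and by \lemref{resscan} the reclaimer is guaranteed to observe exactly those reservations, so it neither over-frees (safety, by \lemref{nbr}) nor under-frees beyond the $nr$ survivors. Consequently no scheduling can drive the garbage beyond the fixed $O(nS)$ bound.

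The step I expect to be the main obstacle is establishing progress of reclamation, precisely that the number of nodes a reclaimer must retain is globally bounded by $nr$ rather than growing with the execution length. This hinges on carefully arguing that readers contribute zero reservations and that each retired node is counted in exactly one bag, so that the per-operation reservation limit $r$ becomes a true system-wide cap; getting the interplay between \lemref{resscan} (accurate scanning) and the clearing of reservations on each restart exactly right is the delicate part.
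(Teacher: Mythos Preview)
Your proposal is correct and follows essentially the same approach as the paper: bound each \lb by the threshold $S$ (the paper's $h$), argue that reclamation retains at most the globally reserved nodes $nr$ (the paper's $k(p-1)$), use $r\ll S$ to conclude the bag drops below $S$, and sum over threads. Your treatment is actually somewhat more careful than the paper's---you explicitly separate the reader case (cleared reservations) from the writer case and articulate the ``one bag per retired node'' fact, whereas the paper states these points more tersely---but the decomposition and key counting argument are the same.
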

\begin{proof}
Let $k$ be an upper bound on the number of nodes a thread reserves per operation, $p$ be the number of processes, and $h$ be the maximum limboBag size at which a thread decides to reclaim (i.e., when a predetermined limboBag size is exceeded).
%
Let, \tid{r} be a \rl and \tid{j} be an arbitrary thread.
If \tid{j} is delayed (or crashes), it can reserve at most $k$ nodes in \tid{r}'s limboBag.
A retired node can be present in only one limboBag, so in the worst case a single thread can prevent only $k$ nodes from being reclaimed (across all limboBags).
It follows that, in a system where $p-1$ threads can crash, those $p-1$ threads can prevent at most $k(p-1)$ nodes in total from being reclaimed.
Moreover, a \rl always reclaims when it exceeds its limboBag size. The limboBag size is much larger than the number of nodes a thread can reserve so a limboBag can contain at most $h$ nodes after reclamation.

\end{proof}

\begin{corollary}
A thread can prevent only the nodes it reserves in a single operation from being reclaimed.
\end{corollary}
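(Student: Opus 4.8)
The plan is to extract and sharpen the per-thread argument that is already implicit in the proof of robustness and in \lemref{nbr}. Fix an arbitrary reclaimer \tid{r} that executes \propref{rl} — it signals all threads, then scans reservations, then frees the unreserved contents of its \lb — and fix an arbitrary thread \tid{j}. I want to show that the only nodes of \tid{r} that \tid{j} can save from this particular reclamation are those \tid{j} has announced in the reservation slots of its current operation.

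First I would case-split on the phase \tid{j} occupies at the instant \tid{r} performs its scan (\propref{rl2}). If \tid{j} is quiescent or in its preamble, it holds no references to shared nodes and has no freshly announced reservations, so it blocks nothing. If \tid{j} is in \rdp, then by \asmref{sigg} and \propref{rdp1} the signal from \tid{r} neutralizes \tid{j} before \tid{j} takes any further step, so \tid{j} discards all its private references; moreover its reservation slots were cleared at \textsc{begin}\rdp (\lineref{arpc}) and not yet repopulated, so again \tid{j} protects none of \tid{r}'s nodes.

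The only remaining case is \tid{j} in \wtp. Here, by \propref{wtp1}, every node \tid{j} will touch was reserved before it entered \wtp, and by \lemref{resscan} the reclaimer \tid{r} — having signalled before scanning — is guaranteed to observe all of those reservations. Hence \tid{r} skips exactly \tid{j}'s reserved nodes and frees everything else, so the set of nodes that \tid{j} prevents from being reclaimed is contained in its reservation array. Combined with the two previous cases, a thread protects at most the contents of its reservation slots at scan time.

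Finally I would tie this to a \emph{single} operation: reservations are written once, at \textsc{end}\rdp (\lineref{arp_add}), and wiped at the following \textsc{begin}\rdp (\lineref{arpc}), so the slots are never accumulated across operations and at any instant hold precisely the nodes reserved by the operation currently in progress (bounded by the per-operation bound $k$ from the robustness proof). The step I expect to be the main obstacle is the timing across an indefinite delay or crash: I must argue that a thread stalled mid-\wtp still has exactly its current operation's reservations visible to a scanning reclaimer — which is what \guarref{gurwtp} supplies — while a thread stalled anywhere in \rdp contributes nothing, so that no execution lets a thread carry protection from one operation into the reclamation window of another. With that timing pinned down, the corollary follows directly from the \wt case together with the per-operation clearing of the reservations array.
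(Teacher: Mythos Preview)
Your proposal is correct and aligns with the paper's approach; in fact the paper gives no separate proof of this corollary at all, treating it as an immediate extraction from the robustness lemma, whose proof already states that ``a single thread can prevent only $k$ nodes from being reclaimed.'' Your explicit phase-by-phase case analysis is a sound and more detailed elaboration of that same underlying argument; the only minor imprecision is the quiescent/preamble case, where the reservation slots may still hold entries from the just-completed operation rather than being empty --- but since those too come from a single operation, the corollary is unaffected.
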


In NBR, we assume the number of nodes that can be reserved by a single data structure operation is smaller than the limboBag size (otherwise a thread could prevent all nodes in a limboBag from being reclaimed).
Data structures typically require only a small number of reservations per operation.
For example in our experiments, the lazy linked list~\cite{heller2005lazy} required a maximum of two reservations per operation, and the harris list~\cite{harris2001pragmatic}, DGT binary search tree~\cite{david2015asynchronized}, and relaxed (a,b)~tree~\cite{brown2017techniques} needed to reserve a maximum of three nodes per operation.

\subsection{Immediacy of signal delivery in practice}
\label{sec:practicalsigdelivery}

\nbr assumes signal delivery is immediate for safety (\asmref{sigg}), so that neutralizing signals are delivered to target threads before the sender initiates reclamation. 
We examine the signal handling code of two open source operating systems, FreeBSD and Linux, and find that signal delivery is immediate (i.e. occurs in finite time), unless the receiving thread has masked the signal or has exited. 
Additionally, our experiments, spanning over 10 hours, confirm low latency between signal generation and delivery, supporting~\asmref{sigg}.

To understand POSIX signal guarantees better, we analyze the steps involved in the signal delivery process (see \figref{sighandle}). First, the signaling thread calls \texttt{pthread\_kill}, which invokes a system call and transfers control to the operating system (kernel switch).
Second, the kernel processes the call by marking the signal as pending in the target thread's per-thread structure (e.g., thread in FreeBSD). 
Third, if the thread is currently running, the kernel sends an Inter-Processor Interrupt (IPI) to asynchronously interrupt the target thread and transfer control to the operating system.
Fourth, the destination core is interrupted (assuming an IPI is sent) through the local APIC, in case of x86 systems.
Lastly, when the thread is resumed, the signal is delivered to the user space, followed by the execution of the user space signal handler.

\begin{figure}
\centering
\resizebox{1\linewidth}{!}{\setlength{\unitlength}{4144sp}%
\begingroup\makeatletter\ifx\SetFigFont\undefined%
\gdef\SetFigFont#1#2#3#4#5{%
  \reset@font\fontsize{#1}{#2pt}%
  \fontfamily{#3}\fontseries{#4}\fontshape{#5}%
  \selectfont}%
\fi\endgroup%
\begin{picture}(5829,1815)(889,-2311)
\thinlines
{\color[rgb]{0,0,0}\put(2026,-2131){\vector( 1, 0){3195}}
}%
{\color[rgb]{0,0,0}\put(1351,-2266){\framebox(662,403){}}
}%
{\color[rgb]{0,0,0}\put(3286,-1186){\vector( 0,-1){540}}
}%
{\color[rgb]{0,0,0}\put(6346,-1051){\vector( 0, 1){540}}
}%
{\color[rgb]{0,0,0}\put(3106,-691){\vector( 0,-1){540}}
}%
{\color[rgb]{0,0,0}\put(5176,-2299){\framebox(720,438){}}
}%
{\color[rgb]{0,0,0}\put(6481,-1726){\vector( 0, 1){540}}
}%
{\color[rgb]{0,0,0}\multiput(3511,-511)(0.00000,-118.42105){10}{\line( 0,-1){ 59.211}}
\multiput(3511,-1636)(116.18182,0.00000){28}{\line( 1, 0){ 58.091}}
}%
{\color[rgb]{0,0,0}\put(901,-961){\line( 1, 0){5670}}
}%
{\color[rgb]{0,0,0}\put(946,-1411){\line( 1, 0){5670}}
}%
\put(1441,-2176){\makebox(0,0)[lb]{\smash{{\SetFigFont{12}{14.4}{\rmdefault}{\mddefault}{\updefault}{\color[rgb]{0,0,0}LAPIC}%
}}}}
\put(5266,-2176){\makebox(0,0)[lb]{\smash{{\SetFigFont{12}{14.4}{\rmdefault}{\mddefault}{\updefault}{\color[rgb]{0,0,0}LAPIC}%
}}}}
\put(2206,-2086){\makebox(0,0)[lb]{\smash{{\SetFigFont{12}{14.4}{\rmdefault}{\mddefault}{\updefault}{\color[rgb]{0,0,0}3}%
}}}}
\put(2971,-2086){\makebox(0,0)[lb]{\smash{{\SetFigFont{12}{14.4}{\rmdefault}{\mddefault}{\updefault}{\color[rgb]{0,0,0}IPI}%
}}}}
\put(3151,-1681){\makebox(0,0)[lb]{\smash{{\SetFigFont{12}{14.4}{\rmdefault}{\mddefault}{\updefault}{\color[rgb]{0,0,0}2}%
}}}}
\put(3106,-781){\makebox(0,0)[lb]{\smash{{\SetFigFont{12}{14.4}{\rmdefault}{\mddefault}{\updefault}{\color[rgb]{0,0,0}1}%
}}}}
\put(946,-781){\makebox(0,0)[lb]{\smash{{\SetFigFont{12}{14.4}{\rmdefault}{\mddefault}{\updefault}{\color[rgb]{0,0,0}user space (pthread\_kill)}%
}}}}
\put(946,-1231){\makebox(0,0)[lb]{\smash{{\SetFigFont{12}{14.4}{\rmdefault}{\mddefault}{\updefault}{\color[rgb]{0,0,0}kernal space}%
}}}}
\put(3736,-1186){\makebox(0,0)[lb]{\smash{{\SetFigFont{12}{14.4}{\rmdefault}{\mddefault}{\updefault}{\color[rgb]{0,0,0}force kernel swicth to deliver signal}%
}}}}
\put(1171,-1771){\makebox(0,0)[lb]{\smash{{\SetFigFont{12}{14.4}{\rmdefault}{\mddefault}{\updefault}{\color[rgb]{0,0,0}cpu1}%
}}}}
\put(6211,-646){\makebox(0,0)[lb]{\smash{{\SetFigFont{12}{14.4}{\rmdefault}{\mddefault}{\updefault}{\color[rgb]{0,0,0}5}%
}}}}
\put(5176,-1816){\makebox(0,0)[lb]{\smash{{\SetFigFont{12}{14.4}{\rmdefault}{\mddefault}{\updefault}{\color[rgb]{0,0,0}cpu2}%
}}}}
\put(6301,-1636){\makebox(0,0)[lb]{\smash{{\SetFigFont{12}{14.4}{\rmdefault}{\mddefault}{\updefault}{\color[rgb]{0,0,0}4}%
}}}}
\end{picture}
\caption{Break down of signal transmission. \textbf{Step 1}: \func{pthread\_kill} is invoked from user space. \textbf{Step 2}: kernel switch is triggered by the corresponding syscall. \textbf{Step 3}: as a result of the signal sending routines CPU triggers an IPI using its local APIC (advanced programmable interrupt controller). After step 3 cpu1 may return to execute its program without waiting for the serialization of its IPI. \textbf{Step 4}: cpu2 triggers the kernel switch to deliver the signal immediately. \textbf{Step5}: when the control return to userspace the signal is delivered.}
\label{fig:sighandle}
\end{figure}

For our purposes, we are primarily concerned with the point at which, after the signal is sent, we can guarantee that the target thread stops executing new user space instructions.
For non-running target threads, this is true when the sender returns from its \texttt{pthread\_kill} because the \texttt{pthread\_kill} notifies the target thread that a signal is pending by setting a flag in a pending signal vector in the kernel thread structure.
So, when the target thread is scheduled to run again, it checks its signal pending mask and immediately delivers the signal.
In the case of running threads, the kernel employs an IPI to interrupt the core where the target thread is executing.
This is an \textit{asynchronous} process, which may need to wait for a certain amount of time to confirm the delivery of the IPI.

We have observed that in practice this waiting time for IPI delivery is reasonably low. To determine the upper bound of this waiting time, we design several benchmarks involving two threads placed on consecutive CPU cores within the same socket.
The first thread sends a signal to the second thread by invoking \texttt{pthread\_kill}. We measure the end-to-end times (CPU cycles required to send and deliver a signal) using the \texttt{rdtsc} instruction and further breakdown the measured times by using DTrace to analyze internal operating system functions which allows us to identify signals that induce an IPI. 

The latency is divided into two parts: the latency of transmitting the IPI to the target core and how the processor drains or aborts the remaining in-flight instructions. The IPI is sent at the end of the \texttt{pthread\_kill} operation, and the local APIC completes the delivery (hardware portion) asynchronously. 
To establish an upper bound on the end-to-end signal delivery cost, we measured the local delivery cost using the \texttt{int3} instruction~\cite{intelSDM}. This is one byte instruction (typically used in debuggers to invoke debug exception handler or set software breakpoints) which triggers an exception that is delivered faster than an interrupt routed through the local APIC. This measurement represents the lower bound of the signal delivery cost on the destination core, as externally generated interrupts will take longer than an interrupt generated by the instruction stream. 

It is worth noting that if multiple interrupts are pending, the delivery of the IPI may be delayed. However, for the purpose of our wait time analysis, once the first IPI is received and delivered at the target core, all other queued IPIs will be delivered immediately. Hence, it is safe to assume that no further instructions will be executed.

\tabref{sigdeliverystat} presents the measurements for the number of CPU cycles required for signal sending and delivery, including end-to-end delivery, the cost of sending an IPI, the local signal delivery cost, and the time to execute a \texttt{pthread\_kill} syscall. We have experimented with multiple hardware platforms, but we report these measurements on an Intel Xeon Gold 6342 CPU running at 2.80 GHz (Ice Lake) with FreeBSD 13.0.

We approximate the maximum waiting time for the IPI to complete by subtracting the local signal delivery time from the end-to-end time. This computation provides an empirical upper bound on the moment when the processor stops executing user-space instructions. Additionally, we subtract the time it takes to complete \texttt{pthread\_kill} since the source cannot initiate reclamation until control is returned from the operating system. This calculation results in unaccounted cycles that vary from zero to 283 cycles, on multiple machines we tested. 

The machines used for evaluation of NBR in our experiments section had no unaccounted for cycles. This may be in part due to the extra overhead in returning from kernel mode when Kernel Page Table Isolation (KPTI, kernel patch to mitigate Meltdown like security vulnerabilities) is enabled, which masks the signal delivery latency, giving the effect of immediate signal delivery. 
However, on the Intel’s Icelake machine we calculated roughly 283 cycles that we can not account for.
The unaccounted cycles represent the latency required for the two Local APICs to communicate the IPI request between the cores and any additional cost for interrupt delivery. 
Thus, on this machine threads should wait for 283 cycles before initiating reclamation to ensure \asmref{sigg}’s validity. 
To get a sense of how long this is, note that this latency is within a small factor of the cost of a cache line transfer between L2 caches.

In general, we have observed that signaling mechanism is designed to be fast and the signal delivery cost is low. However, it is important to acknowledge that for some specific processor designs the signal delivery cost may vary. While the exact variations are dependent on the specific architecture, we believe that it is still feasible to approximate the wait times by employing the profiling and benchmarking technique similar to ours. 
By carefully analyzing the characteristics of the architecture at hand, it is possible to infer reliable wait time for threads to initiate reclaiming after sending signals to ensure validity of \asmref{sigg}, on such architectures. Alternatively, fast user-space interrupts could also be explored to address this~\cite{useripilinux}.


\ignore{
\nbr, assumes (\asmref{sigg}) that signals are delivered immediately. By analysing the signal handling code of two open source OSes, FreeBSD\footnote{version 13.1, \url{https://github.com/freebsd/freebsd-src/blob/main/sys/kern/kern_sig.c}} and Linux\footnote{version 5.18, \url{https://elixir.bootlin.com/linux/latest/source/kernel/sched/core.c}}, we find that signal delivery is immediate unless the receiving thread has masked the signal or has exited.  Furthermore, experiments that have run for more than 10 hours confirm that 
the signals delivery latency is negligible for our purposes, i.e, the latency between the signal generation and delivery phase is low, supporting \asmref{sigg}.

To better understand the guarantees provided by POSIX let us examine how signals are delivered.  First, the signalling thread calls \func{pthread\_kill} that invokes a system call and transfers control to the OS.  Second, the kernel will process the call by marking the signal as pending in the target thread's per-thread structure (e.g., \func{thread} in FreeBSD).  Third, if the thread is running the kernel sends an IPI to asynchronously interrupt the the target thread and transfer control to the OS.  Fourth, the destination core will be interrupted, assuming an IPI is sent, through the local APIC in the case of x86.  Fifth, when the thread is resumed the signal will be delivered to the user space.  The final step is to execute the user space signal handler.



For our purposes, we only care at what point, after the signal was sent, we can guarantee that the target thread stops executing instructions.
For threads that are not running, this is true upon return from \func{pthread\_kill}.  \func{pthread\_kill} notifies the target thread that a signal is pending by setting a flag in a pending signal vector in the kernel thread structure.  The target thread will check its signal pending mask before resuming execution and immediately deliver the signal.

For threads that are running, the kernel uses an inter-processor interrupt (IPI) to interrupt the core where the target thread is running.  This process is asynchronous and we may need to wait a certain amount of time to confirm that the IPI was delivered.  In this section, we use several benchmarks to approximate a upper bound on number of cycles a thread needs to wait to be certain that the signal was delivered.

The benchmark consists of two threads placed on consecutive CPU cores on the same socket. The first thread sends a signal to the second thread by calling \func{pthread\_kill}.  We measure the end-to-end times using the {\tt rdtsc} instruction.  We break down the numbers by measuring internal operating system functions using DTrace, which allows us to confirm which signals induced an IPI.
We experimented with several hardware platforms, but we report measurements from an Intel Xeon Gold 6342 CPU running at 2.80~GHz running FreeBSD 13.0.

The latency is broken into two parts: the latency of transmitting the IPI to the target core and how the processor drains or aborts the remaining in-flight instructions.
Also, note that if there are multiple pending interrupts it may delay the delivery of the IPI, but this is of no concern as in either case no further instructions should be executed once the thread is interrupted and the IPI is received at the target core.

Table~\ref{tab:sigdeliverystat} shows the measurements for the number of CPU cycles required to send and deliver a signal (end to end delivery), the cost of sending an IPI, the local signal delivery cost, time to execute a \func{pthread\_kill} syscall.
The IPI is sent at the end of the {\tt pthread\_kill} operation and the local APIC may complete the hardware portion asynchronously.

The end to end signal delivery costs can be upper bounded by examining the local signal delivery costs.
The local delivery cost was measured using the {\tt int3} instruction to trigger an exception that is delivered faster than an interrupt that goes through the local APIC.
This measures the lower bound of the signal delivery cost on the destination core, because externally generated interrupts will take longer than an interrupt generated by the instruction stream (specifically \func{int3} and \func{into}).

We approximate the maximum amount of time that we must wait for the IPI to complete (meaning the target thread will not execute any additional instructions), as the end-to-end time minus the local signal delivery time.
This computes an upper bound on the point in time where the processor will stop executing instructions.
We can also subtract the time it takes to complete \func{pthread\_kill} as that as source will not be able to start reclamation until control is returned from the OS.
This results in roughly 283 cycles that we can not account for on the Icelake machine.  The machine used in the evaluation had no unaccounted for cycles this may be in part due to the extra overhead in returning from kernel mode when KPTI is enabled.

The unaccounted for cycles is the latency required for the two Local APICs to communicate the IPI request between the cores and any additional cost for interrupt delivery.  The total communication and interrupt delivery costs are partially masked by the \func{pthread\_kill} syscall return path.  The unaccounted latency is within a small factor of the cost of a cache line transfer between L2s, which serves as a good approximation of the local APIC communication costs.

}
\ignore{
\nbr, as stated in \asmref{sigg}, assumes that the neutralizing signals from a reclaiming thread will be delivered immediately to all the threads in their \rdp before starting reclamation.
We analyzed the signal handling code of two open source OSes, FreeBSD\footnote{version 13.1, \url{https://github.com/freebsd/freebsd-src/blob/main/sys/kern/kern_sig.c}} and Linux\footnote{version 5.18, \url{https://elixir.bootlin.com/linux/latest/source/kernel/sched/core.c}} to understand their behavior.
Signal delivery is immediate unless the receiving thread has masked the signal or has exited.
Furthermore, all of our experiments that have run for 10 hours or more confirm that the signals delivery latency is negligible for our purposes, i.e, the latency between the signal generation and delivery phase is low, supporting the \asmref{sigg}.

The signaling thread calls {\tt pthread\_kill} to send a signal to one other thread.  The kernel will update the target thread's {\tt thread structure} to mark the signal as pending.  If the thread is descheduled, this is enough, as the thread will check the signal pending mask before resuming execution and immediately deliver it to the thread.  Running threads are immediately notified through an interprocessor interrupt (IPI).  This will trigger an interrupt that will hand control back to the operating system at which point the signal is guaranteed to be delivered before executing any more instructions.

The \figref{sighandle} shows a breakdown of the signal transmission to a target thread across user space, kernel space, and hardware. Steps 1 to 3 are part of the \textit{signal generation} and steps 4 to 5 constitute the \textit{signal delivery}.
The signal generation phase is synchronous, i.e., before the call to \func{pthread\_kill} returns, the {\tt thread} data structure at the target thread is updated.
The signal delivery phase appears to be asynchronous and it is not clear how much time it takes to deliver the signal to the target thread.  This includes scheduling latency for threads that are not running and the IPI latency for running threads.

For our purposes, we only care at what point we can guarantee no further instructions can be executed.
For threads that are not running this is true upon return from \func{pthread\_kill}, but for threads that are running we need to wait a certain amount of time after.
The remaining latency is broken into two parts: the latency of transmitting the IPI to the target local core and how the processor drains or aborts the remaining instructions in the reorder window within the processor.
Also, note that if there are multiple pending interrupts it may delay the delivery of the IPI, but this is of no concern as in either case no further instructions should be executed once the thread is interrupted and the IPI is received at the target core.
}
\ignore{
In NBR the reclaiming thread sends $N-1$ signals one after the other, where $N$ is the total number of threads. Meaning, that after the $i^{th}$ signal has been sent the thread can be certain that its ${i-1}^{th}$ signal has been delivered. Therefore, after sending all $N-1$ signals the reclaimer thread only needs to wait for the amount of time it takes to send one more signal to guarantee that all its signals have been delivered.
Any latency experienced during signal delivery will be a constant factor and ensure that \asmref{sigg} is valid on modern machines.

In this section, we dissect the signal transmission process and provide an approximate upper bound on the latency (number of CPU cycles) of the signal delivery.  We experimented with several hardware platforms, but we conducted these measurements on an Intel Xeon Gold 6342 CPU running at 2.80~GHz running FreeBSD 13.0.

The benchmark consists of two threads placed on consecutive CPU cores on the same socket. The first thread sends a signal to the second thread by calling $pthread\_kill$.  We measure the end-to-end times using the {\tt rdtsc} instruction and shared variable.  We further broke down the numbers by then analyzing the internal operating system functions using DTrace.

Table~\ref{tab:sigdeliverystat} shows the measurements for the number of CPU cycles required to complete the {\tt pthread\_kill} call, the time to receive the signal, the cost of sending an IPI, and the local signal delivery cost.  The IPI is sent almost at the very end of the {\tt pthread\_kill} operation.

The exception delivery and kernel signal delivery costs can be lower bounded by the local signal delivery costs.  The local delivery cost was measured using the {\tt int3} instruction to trigger an exception (generated by the instruction stream) that should be delivered faster than an interrupt (external to the core) that goes through the local APIC.

We can approximate the maximum amount of time that we must wait for the IPI to complete, as the {\tt pthread\_kill} + delivery time less {\tt pthread\_kill} system call time and signal delivery time.  This results in roughly 283 cycles that we can not account for on the Icelake machine.  The machine used in the evaluation had no unaccounted for cycles this may be in part due to the extra overhead in returning from kernel mode when KPTI is enabled.

The unaccounted for cycles is the latency required for the two Local APICs to communicate the IPI request between the cores and any additional cost for interrupt delivery.  The total communication and interrupt delivery costs are partially masked by the system call return path.  The unaccounted latency is within a small factor of the cost of a cache line transfer between L2s, which likely approximates the APIC communication costs.
}

\begin{table}
\centering
\begin{tabular}{lrr}
\toprule
\textbf{Operation}                       & \textbf{CPU Cycles} \\
\midrule
end-to-end signal delivery               & 6527    \\
sending ipi                              & 393     \\
local signal delivery                    & 4324    \\
{\tt pthread\_kill}                      & 1920    \\
\bottomrule
\end{tabular}
\caption{Breakdown of the approximate signal delivery latencies in cycles.  We break down the {\tt pthread\_kill} into the call cost and the actual sending of the IPI.  Signals are delivered using pthread\_kill to a thread in another core in the same socket.}
\label{tab:sigdeliverystat}
\end{table}


In the following section we present a preliminary evaluation of the technique to highlight a performance problem that occurs in the algorithm presented in this chapter which arises due to overhead of excessive signals and present an extensive evaluation later in ~\chapref{chapnbrp}. 

\section{Preliminary Performance Evaluation}
\label{sec:nbreval}
In this section we evaluate throughput of \nbr when integrated with an external binary search tree of David et~al (DGT)~\cite{david2015asynchronized} to highlight a performance issue. (More extensive experimentation appears in \chapref{chapnbrp}.)
For comparison, we additionally implemented the tree with the best known EBR algorithm \texttt{debra}, hazard eras (\texttt{he}) and hazard pointers (\texttt{hp}) and a leaky implementation which we denote as \texttt{none}. 

We ran the experiment on a quad-socket Intel Xeon Platinum 8160 machine with 192 threads, 384GB of RAM,  33MB of L3 cache per socket (total 132MB). The machine ran Ubuntu 20.04 with kernel 5.8 with GCC 9.3.0 with \texttt{-O3} optimization flag and utilized \emph{jemalloc} as the memory allocator~\cite{evans2006scalable}. 

The experiment measured throughput across varying number of threads for the tree with a maximum size of 2 million nodes with 10\%, 50\% and 100\% update operations.  
The reported results are obtained by averaging data from 5 timed trials, each lasting 5 seconds. The 
experiments were conducted with thread counts ranging from 1 to 384 threads on a system with 192 hardware threads and more than 91\% of data points had less than 5\% variance.
Before each execution, the tree was pre-filled to half of its maximum size.

\begin{figure}
\centering
            \includegraphics[width=0.33\linewidth, height=6cm, keepaspectratio]{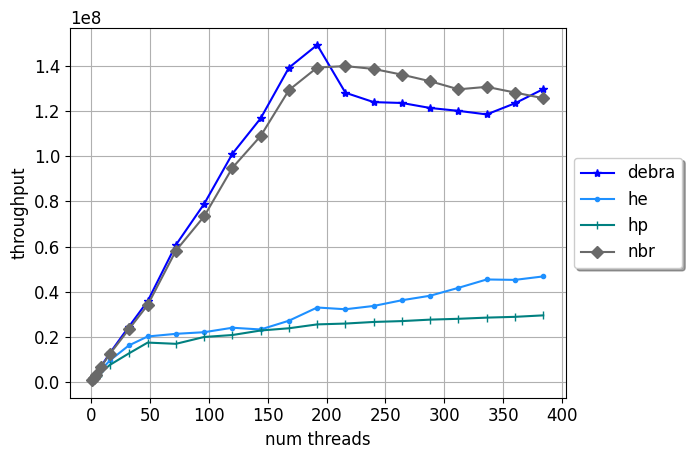}\hfill
            \includegraphics[width=0.33\linewidth, height=6cm, keepaspectratio]{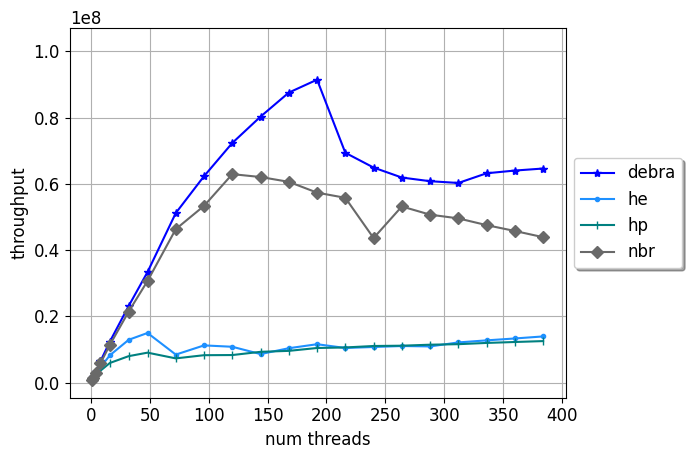}\hfill
            \includegraphics[width=0.33\linewidth, height=6cm, keepaspectratio]{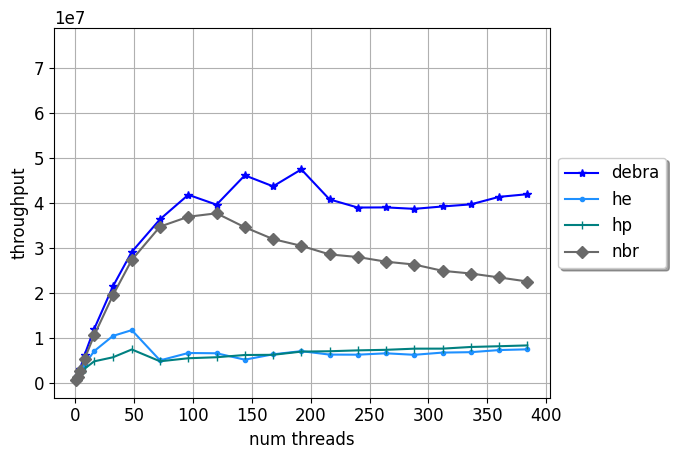}\hfill
    \caption{DGT throughput. Updates: Left: 10\%. Middle: 50\%. Right: 100\%. Max size:2000000. Y axis: throughput in million operations per second. X-axis: \#threads.}
    \label{fig:e1nbrdgt}
\end{figure}


\Figref{e1nbrdgt}, show the throughput of the tree with varying workloads. In general, it can be observed that especially with 50\% and 100\% update workloads, \nbr is similar to \debra (a high performing EBR style SMR) at lower thread counts, i.e. up to 48 threads. Then \textbf{at higher thread counts, \nbr is significantly slower}. We analyze this reason for this slowdown in \nbr using the perf tool on Linux and found that most of the slowdown occurred due to system activity of signals. Note that with a higher number of updates the reclamation occurs more frequently, which implies that signals are sent frequently to other threads by \nbr. Furthermore, as the thread count increases the number of signals sent at each reclamation event increase in similar proportion, leading to higher signalling overhead. 

\newpage
\section{Summary}
\label{sec:nbrsummary}

In this chapter, 
we described the design and implementation of our first safe memory reclamation algorithm: NBR.
Under the hood, NBR leverages POSIX signals in existing operating systems.
We presented proof that when applied to a data structure, NBR ensures the safety of reclamation and \textit{bounds garbage}. The safety of reclamation depends on the immediacy of signal delivery. Thus, we investigated the signal delivery timing in commodity operating systems in popular architectures and established that signals are delivered almost immediately in real systems. Finally, we showed that although theoretically NBR is \textit{efficient}, in practice, due to signal overhead, it can be slow at high thread counts on large NUMA systems.

We started with the goal of designing a fast, bounded garbage, widely applicable, simple to use, and consistently performing safe memory reclamation algorithm. Among these goals, we have shown that a safe memory reclamation algorithm designed with the neutralization paradigm, i.e., NBR, can be theoretically efficient and bound garbage. 

In the upcoming chapters, we establish the remaining properties.
Specifically, we first demonstrate that \nbr is fast and exhibits consistent performance in practice by introducing \nbrp in \chapref{chapnbrp}, which addresses the issue of signal overhead.
Later, in \chapref{chapappuse}, we show that \nbr is widely applicable and easy to use, thus fulfilling the objectives we set out for our first contribution using the neutralization paradigm in this dissertation.


\chapter{Optimizing Neutralization Based Reclamation}
\label{chap:chapnbrp}

In the previous \chapref{chapnbr}, we presented \nbr and showed that it is theoretically fast and bounds the amount of garbage. However, in practice, it experienced slowdown due to excess signalling overheads.
In this chapter, we propose NBR+, a signal-optimized safe memory reclamation algorithm that is also fast in practice. In a nutshell, this chapter focuses on establishing that the neutralization paradigm yields reclamation algorithms that are fast, consistent, and bound garbage. 
The remaining desirable properties of wide applicability and usability are the subject of subsequent \chapref{chapappuse}.

The outline of this chapter is as follows. 
In \secref{nbrpintro}, we closely examine the neutralizing mechanism in NBR and discuss how the average number of signals sent by each thread can be reduced. This leads us to develop \nbrp, a faster safe memory reclamation algorithm using the neutralization paradigm. We present the design and implementation of \nbrp in \secref{nbrp}, followed by proof of its safety and bounded garbage property in \secref{nbrpcorr}. 
Finally, we describe our benchmark in \secref{nbrpeval}, which we use to extensively evaluate NBR and NBR+ for multiple data structures and compare them with various state-of-the-art safe memory reclamation techniques. We study both the throughput and the memory consumption behavior of our algorithms.

\section{Introduction}
\label{sec:nbrpintro}
The \nbr algorithm, discussed in the preceding chapter, presents a simple solution to the safe memory reclamation problem.
However, every thread is required to send neutralizing signals to every other thread in order to initiate the reclamation of its limbo bag. 
These signals on Linux trigger page fault routines and a switch from user to kernel mode. As a result, the use of POSIX signals in \nbr introduces significant overhead.
In fact, in a system with $n$ threads, $O(n^2)$ signals are required for all threads to reclaim their limbo bags at least once. For larger number of threads the overhead drastically increases which hinders the scalability of the data structure the neutralization based algorithm is used with.

In NBR, when a thread wants to reclaim its limbo bag, it sends $n-1$ signals to other threads.
This causes all the other $n-1$ threads to discard any unreserved references to the shared nodes. Meaning, not only the unreserved nodes in the reclaimer's limbo bag but the unreserved nodes in all other thread's limbo bags can also be safely freed.

More specifically, a reclaimer in order to reclaim nodes in its limbo bag starts sending neutralizing signals to all $n-1$ threads at time $t$ and completes sending signals at time $t'$. Then, at time $t'$ the reclaimer can safely reclaim all unreserved nodes in its limbo bag. The nodes in the limbo bag of the reclaimer are essentially the nodes that were retired before time $t$. In addition,  at time $t'$ other $n-1$ threads can also safely free the nodes in their limbo bags that were retired before $t$.
In NBR, every time a thread tries to reclaim its limbo bag, it creates a time interval [t, t'] during which each thread is neutralized. We term it as a \textit{neutralization event}.

Therefore, if somehow we could propagate this information that a thread $T$ has started and finished sending signals, i.e a \rgp has occurred, then all other threads could piggyback on $T$ to partially or completely reclaim their own limbo bags without sending signals of their own.
In other words, in the best case, all $n$ participating threads could reclaim memory after detecting exactly one reclamation event, induced by sending a total of $n-1$ signals. As a result, in the best case, the number of signals required for all threads to reclaim their limbo bags at least once reduces from $O(n^2)$ to $O(n)$ which in turn translates into reduction of signal overhead. 
\section{NBR+}
\label{sec:nbrp}

\subsection{Design}
The key insight in \nbrp is that when a reclaimer sends neutralization signals to \textit{all} threads, \textit{all} threads discard their pointers to unreserved nodes, and thus \textit{all} threads can potentially reclaim some nodes in their limbo bags. 
\nbrp proposes a mechanism to take advantage of this information and significantly reduce signal overhead, making the neutralization technique highly efficient.

This suggests a design wherein each thread: 
\begin{enumerate}
    \item Passively detects a \rgp by observing signals sent by another thread, and
    \item Determines which nodes in its limbo bag were unlinked \textit{before} the \rgp (i.e., are safe to reclaim).
\end{enumerate}

In \nbrp, each thread passively monitors for a \rgp starting from a predetermined lower limbo bag size known as the \textit{LoWatermark}, before reaching the maximum limbo bag size referred to as the \textit{HiWatermark}.
When a thread reaches the \textit{HiWatermark}, it announces the beginning of its \rgp, sends neutralizing signals, and announces the end of its \rgp. Other threads, which have passed the \textit{LoWatermark} but not yet reached the \textit{HiWatermark}, observe these announcements to infer that a \rgp has occurred. 
They can then safely reclaim all nodes in their limbo bags retired before reaching the \textit{LoWatermark}. If a thread has not reached \lw or \hw, it simply continues to add the retired node to its limbo bag.

\subsection{Implementation}

\label{sec:nbrp_impl}
The pseudocode for the algorithm appears in \algoref{nbrp}. It is built on top of \nbr in \algoref{nbr} and the part that is different from \nbr is highlighted in {yellow}.

\subsubsection{Variable Description}
The variable \texttt{otid} depicts a logical thread id of another thread other than the thread that is currently executing the code. \texttt{scanTS} is a thread-local array of slots where each thread saves the timestamps announced by another thread. The thread local variable \texttt{firstLoWmEntryFlag} is used to depict that the thread's limbo bag has crossed its \textit{LoWatermark} and is reset after the limbo bag size is reduced below the LoWatermark size. The \texttt{bookmarkTail} is used to save the pointer to the last retired node in the limbo bag of a thread when it passes the \textit{LoWatermark} size. The \textit{announceTS} is an array of $n$ slots. Each thread writes a timestamp in its dedicated slot and all other threads can read from it.  

We explain the design of \nbrp by building our exposition around three main design challenges.
\begin{enumerate}
    \item [(C1)] {When should a thread start tracking other threads' signals to detect a \rgp?}
    \item [(C2)] {How can a thread \textit{recognize} that a \rgp has occurred?}
    \item [(C3)] {Once a thread recognises that a \rgp has occurred how should it determine which nodes in its limbo bag are safe to reclaim?}
\end{enumerate}

\begin{algorithm}
\small
\caption{\nbrp incorporates all variables and procedures from \nbr in \algoref{nbr} while introducing a modified \func{RETIRE()} and additional helper procedures with self explanatory names. All changes different from \nbr are highlighted.}\label{algo:nbrp}

\begin{algorithmic}[1]

\Statex \textbf{thread local variable:}
    \State int \texttt{tid;} \Comment{\texttt{current thread id}}
    \State node *\texttt{limboBag;} \label{lin:p_rb}\Comment{\texttt{Maxsize:S}}
    \State atomic<bool> \texttt{restartable;} \label{lin:p_rst}\Comment{\texttt{tracks \rdp/\wtp}}
    \State node *\texttt{tail;} \label{lin:p_tl}\Comment{\texttt{last node in limboBag}}
\BeginBox[fill=yellow]
        \State int \texttt{otid;} \Comment{\texttt{other thread's ids, excluding tid.}}
        \State int \texttt{scanTS[n];} \label{lin:tlvsta} \Comment{\texttt{n:\#threads}}
        \State bool \texttt{firstLoWmEntryFlag=true;}
        \State node* \texttt{bookmarkTail;} 
\EndBox
\Statex
\Statex \textbf{shared variable:} 
\State atomic<node*> \texttt{reservations[n][r];} \label{lin:p_resv} \Comment{\texttt{r:max reserved nodes. Assume:r$<<$S}}

\BeginBox[fill=yellow]
    \State atomic<int> \texttt{announceTS[n];}\label{lin:shvalg2}
\EndBox
\Statex     
        \Procedure{CHECKPOINT}{ } \Comment{\texttt{must be INLINED.}} \label{lin:p_chkp}
            \While{\texttt{sigsetjmp(...))}} \Comment{\texttt{signal mask not saved.}} \label{lin:p_sigset}
            \State {unblock neutralizing signal.} \Comment{\texttt{post siglongjmp.}}
            \EndWhile
        \EndProcedure
\Statex
        \Procedure{signalHandler}{ } \label{lin:p_sigh}
            \If{\texttt{!restartable}} 
            \State {\texttt{return;}} \Comment{\texttt{in \wtp, ignore signal and return.}}
            \EndIf
            \State \texttt{siglongjmp(...);}\Comment{\texttt{in \rdp, jump to checkpoint.}}\label{lin:p_sigjmp}            
        \EndProcedure
\Statex
        \Procedure{begin\rdp}{ }\label{lin:p_brp}
            \State \texttt{reservations[tid].clear();} \label{lin:p_arpc}
            \State \texttt{restartable = True;}\label{lin:p_nz01}
        \EndProcedure
\Statex
        \Procedure{end\rdp}{rec = \{$rec_1,\cdot\cdot\cdot, rec_R$\}}\label{lin:p_bwp}
            \State \texttt{reservations[tid]=rec; }\label{lin:p_arp_add}
            \State \texttt{restartable = False;} \label{lin:p_nz10}
        \EndProcedure
\Statex        
        \Procedure{reclaimFreeable}{tail} \label{lin:p_func_rf}
            \State \texttt{$A$=\Call{collectReservations()}{}; }\label{lin:p_carp} 
            \State \texttt{$R$=limboBag[tid].remove(A, tail);}\label{lin:p_retg} 
            \State \texttt{free(\{$R$\});} \label{lin:p_free}
        \EndProcedure
        \algstore{part1}
    \end{algorithmic}
\end{algorithm}

\begin{algorithm}
\small
\begin{algorithmic}[1]
\algrestore{part1}

\BeginBox[fill=yellow]
    \Procedure{retire}{rec}
        \If{isAtHiWm()} \label{lin:iahw}
            \State \texttt{\Call{FAA}{\&announceTS[tid],1};}\Comment{neutralization event begin } \label{lin:afaa1}
            \State \texttt{\Call{signalAll()}{}} \label{lin:sigall}
            \State \texttt{\Call {FAA}{\&announceTS[tid],1};}\Comment{neutralization event end } \label{lin:afaa2}                
            \State \texttt{\Call{reclaimFreeable}{tail};} \label{lin:recfr}
            \State \texttt{\Call{cleanUp()}{};}\label{lin:iahwend}
        \ElsIf{isAtLoWm()} \label{lin:ialw}
            \If{firstLoWmEntryFlag}
                \State \texttt{bookmarkedTail $=$ tail;} \label{lin:bmtail}
                \State \texttt{scanTS[tid] $=$ \Call {$scanAnnounceTS()$}{}} \label{lin:readscants}
                \State \texttt{firstLoWmEntryFlag$=$0;} 
            \EndIf
            \For{each otid }\label{lin:attemptfreebeg} 
                \If{announceTS[otid]$\geq$scanTS[tid][otid]+2 } \label{lin:pback}
                    \State \texttt{\Call{reclaimFreeable}{bookmarkTail};} \label{lin:prf}
                    \State \texttt{\Call{cleanUp()}{};} \label{lin:cluplw}
                    \State \texttt{break;}
                \EndIf            
            \EndFor \label{lin:attemptfreeend}
        \EndIf \label{lin:ialw_end}
        \State \texttt{limboBag[tid].append(rec);} \label{lin:apprec}
    \EndProcedure
    \Statex        
    \Procedure{cleanUp}{} \label{lin:clup}
        \State \texttt{firstLoWmEntryFlag $=$ 1;}
    \EndProcedure
\EndBox
\end{algorithmic}
\end{algorithm}

As a solution to \textbf{(C1)}, each thread in \nbrp, in addition to watching the \texttt{limboBag} size to determine when it becomes \textit{too large} (triggering neutralization), also determines when the \texttt{limboBag} size crosses a predetermined threshold called the \lw (e.g., one half full or one quarter full). 
If a thread's \texttt{limboBag} is full, we say that the thread is at the \hw. 
If a thread's \lb keeps growing without reclamation, it will first cross the \lw and then hit the \hw. 
As shown in \algoref{nbrp}, a thread determines whether it has passed \hw or \lw using procedures {isAtHiWm()} (\lineref{iahw}) and {isAtLoWm()} (\lineref{ialw}). 
Once a thread has passed the \lw, it begins recording and analyzing information about signals sent by other threads to detect \textit{neutralization event}.

To tackle \textbf{(C2)}, a \rl at the \lw (who wants to detect a \rgp) must perform a sort of handshake with another \rl at the \hw (who triggers a \rgp).
\nbrp implements this handshake using per-thread single-writer multi-reader timestamps (similar to vector clocks).

Whenever a \rl{} hits the \hw, it first increments its timestamp (to an odd value) to indicate that it is \textit{currently broadcasting signals} (\lineref{afaa1}).
This denotes the \textit{beginning} of a \rgp.
It then sends signals to all threads, and increments its timestamp again (to an even value) to indicate that it has \textit{finished broadcasting signals} (\lineref{afaa2}).
This denotes the \textit{end} of the \rgp. 

Whenever a \rl $T$ passes the \lw, it 
collects and saves the current timestamps of all threads in its local \texttt{scanTS} array (\lineref{readscants}), as well as the current \textit{tail} pointer of its \texttt{limboBag}  in its local \texttt{bookmarkTail} variable(\lineref{bmtail}), so it can remember precisely which nodes it had unlinked \textit{before} it reached its \lw.
$T$ then periodically collects the timestamps of all threads, comparing the new values it sees to the original values it saw when it passed the \lw (\lineref{attemptfreebeg} - \lineref{attemptfreeend}).
Note, scanning \texttt{announceTS[]} would incur overhead in terms of cache misses.
This cost is amortized over multiple \func{retire} operations by scanning \texttt{announceTS[]} after a fixed number of calls to \func{retire} have been made. 
It continues to do this until it detects a \rgp or hits the \hw itself (and sends signals to induce its own \rgp).
Observe that, after $T$ hits its \lw, if the timestamp of any thread changes from one even number to another even number, then that thread has both \textit{begun and finished} sending signals to \textit{all} threads since $T$ hit the \lw.
Thus, $T$ can identify that a \rgp has occurred since it hit its \lw, solving (C2). 

Finally, to tackle \textbf{(C3)}, observe that T saves the last node ($tail$ of its \textit{limboBag}) it had retired before entering \lw at \lineref{bmtail}.
If T successfully observes a \rgp as explained in the solution to (C2), then all threads would either have discarded or reserved all their private references to the nodes in $T$'s \texttt{limboBag} up to the saved \texttt{bookmarkTail}. 
Thus, $T$ can invoke \Call{reclaimFreeable()}{} to free all unreserved nodes up to the \texttt{bookmarkTail} (\lineref{prf}). solving (C3).

\Call{cleanUp()}{} (\lineref{clup}) method is used to set \emph{firstLoWmEntryFlag} after a thread reclaims either at \lw (\lineref{cluplw}) or at \hw (\lineref{iahwend}) to prepare it for subsequent reclamation. 

A thread that has not reached the \lw or the \hw simply continues to append any retired nodes to its \lb (\lineref{apprec}).

\ignore{
A reclaimer that has reached the \hw reclaims exactly as it would in \nbr, meaning it broadcasts a neutralization signal to all threads (\lineref{sigall}), causing an \rgp, and then 
reclaims safe nodes (skipping any reserved nodes, (\lineref{recfr})).

Execution on the \lw path (\lineref{ialw}--\ref{lin:ialw_end}) monitors the signals that have been sent in the system to try to avoid sending signals in the future, incurring some overhead that is amortized over multiple data structure operations.

To solve \textbf{(C2)}, a thread that passes \lw has to periodically check for a \emph{relaxed grace period}.
In order to be able to reclaim memory without sending any signals, the reclaimer at the \lw must perform a 
handshake with another reclaimer at the \hw (who \textit{is} sending signals). 
}

\subsubsection{Why not reclaim every time a signal is received?}

At first, it may appear that a thread $T$ can reclaim its \texttt{limboBag} as soon as it receives a neutralizing signal from a \rl thread $T'$.
However, the receipt of a single signal is not enough for $T$ to safely reclaim memory.
To safely reclaim the set \texttt{R} of nodes in its \lb up to its $bookmarkTail$, $T$ needs to know that \textit{all} threads have been neutralized \textit{since $T$ retired the nodes in \texttt{R}}.
Otherwise, some other thread may still have a pointer to a node in \texttt{R}.

Let us discuss an example of what can go wrong if a thread reclaims its limbo bag after it receives a single signal.
Consider a system with three threads $T1$, $T2$ and $T3$.
Suppose $T1$ is at its \hw, $T2$ is at its \lw and $T3$ holds a private reference to a node $rec$ that is in $T2$'s limbo bag.
$T1$, being at its \hw, begins neutralizing all threads one by one.
First, it sends a neutralizing signal to $T2$ (starting a \rgp).
$T2$, upon receiving the signal, reclaims its \lb including \texttt{rec}.
Note, that $T1$ hasn't neutralized $T3$ yet, meaning a \rgp has not yet occurred.
Now, if $T3$ accesses \texttt{rec}, a \textit{use-after-free} error would occur.
To prevent this, $T2$ should not reclaim the contents of its limbo bag unless $T1$ \textit{completes} the \rgp by neutralizing $T3$ (preventing $T3$ from doing this unsafe access). 
The crucial point is that $T2$ must detect the \textit{start} and  \textit{end} of a \rgp to know that it can safely reclaim nodes in its limbo bag.

\section{Correctness}
\label{sec:nbrpcorr}

\begin{lemma}[\nbrp is safe]
No \rl thread in \nbrp reclaims an unsafe record.
\end{lemma}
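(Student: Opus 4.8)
The plan is to mirror the structure of the \nbr safety proof (\lemref{nbr}) and reduce to it wherever possible, since \nbrp changes only \emph{when} and \emph{how} a reclaimer decides that a \rgp has occurred, not the underlying reader/writer/reclaimer handshakes. Concretely, I would argue by contradiction exactly as in \lemref{nbr}: assume some \rl \tid{r} frees an unsafe node $rec$, which can only happen if either (1) a \wt accesses $rec$ without having reserved it, or (2) a \rd accesses $rec$ while \tid{r} reclaims it. Because \func{reclaimFreeable} is unchanged (it still calls \func{collectReservations} before freeing), the writer--reclaimer handshake and \lemref{resscan} carry over verbatim, ruling out (1). So the whole burden falls on ruling out (2), and here the two call sites of \func{reclaimFreeable} must be handled separately.

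Case A: \tid{r} reclaims from the \hw path (\lineref{recfr}). Here \tid{r} executes \func{signalAll} (\lineref{sigall}) before reclaiming, exactly as in \nbr; the surrounding \func{FAA} increments to \texttt{announceTS} (\lineref{afaa1}, \lineref{afaa2}) do not alter the control flow, so the reader--reclaimer argument of \lemref{nbr} applies unchanged, using \asmref{sigg} and \propref{rdp1}. Case B: \tid{r} reclaims from the \lw path (\lineref{prf}). This is the novel case. First I would record the structural facts from the code: the reclamation is bounded by \texttt{bookmarkTail}, so every freed node was retired (unlinked) before \tid{r} recorded \texttt{bookmarkTail} and \texttt{scanTS} together at \lw entry (\lineref{bmtail}, \lineref{readscants}); and \tid{r} only reaches \lineref{prf} after observing \texttt{announceTS[otid]} $\geq$ \texttt{scanTS[tid][otid]}$+2$ for some \texttt{otid} (\lineref{pback}).

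The crux is then a detection lemma analogous to \lemref{resscan}: this $+2$ observation, given the begin/end (odd/even) encoding of \texttt{announceTS} via the two \func{FAA}s bracketing \func{signalAll}, certifies that \texttt{otid} executed a \emph{complete} \rgp---broadcasting a neutralizing signal to every thread---with all of those signals delivered strictly after \tid{r} recorded \texttt{bookmarkTail}, hence after all reclaimed nodes were unlinked. Granting this lemma, every \rd was neutralized and by \propref{rdp1} restarted from an entry point at a moment when none of the reclaimed nodes were reachable, so no \rd can re-acquire and later dereference such a node; combined with the writer case this contradicts the existence of $rec$ and closes the proof.

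I expect the detection lemma to be the main obstacle. The subtlety is purely one of temporal ordering and parity: I must show that two observed increments of \texttt{announceTS[otid]} genuinely bracket a full begin-to-end broadcast lying \emph{inside} the window that opens when \tid{r} reads \texttt{scanTS}. The clean subcase is when the scanned value is even (\texttt{otid} idle), where the two increments are precisely one begin followed by one end occurring after the read. The delicate subcase is when the scanned value is odd (\texttt{otid} was mid-broadcast at scan time): here I would need to argue carefully that the counted increments still witness a complete neutralization event whose signals all land after \texttt{bookmarkTail} was recorded, appealing to the fact that a node reclaimed up to \texttt{bookmarkTail} was unlinked before the scan and that \func{signalAll} signals \emph{all} threads within a single begin/end pair. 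Pinning down this ordering rigorously---and confirming that the threshold $+2$ (rather than, say, an explicit even-to-even transition) suffices---is the step that demands the most care, after which the reduction to \lemref{nbr}'s handshake arguments is routine.
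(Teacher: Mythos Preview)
Your proposal is correct and follows the same route as the paper: split into \hw and \lw reclamation, dispatch the \hw case directly to \lemref{nbr}, and for the \lw case use the \texttt{bookmarkTail} bound together with the observed timestamp increments to argue that a complete \rgp occurred between the scan and the reclaim, so every reader was neutralized after the reclaimed records were unlinked and every writer's reservations are seen by \func{reclaimFreeable}. Your scrutiny of the detection step is in fact finer than the paper's own proof, which simply asserts that two increments of \texttt{announceTS} witness a full neutralization event in the window without separately treating the odd-scan subcase you flag (the design text speaks of an even-to-even transition, whereas the pseudocode tests $+2$).
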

\begin{proof}
In \nbrp, threads can reclaim at the \lw or at the \hw. 
Reclamation at the \hw is similar to reclamation in \nbr, and the argument that reclaimers at the \hw do not reclaim unsafe records is similar to the lemma~\ref{lem:nbr}.

It remains to prove that 
reclamation at the \lw is safe.
We argue that any record $rec$ reclaimed by a thread \tid{{lw}} at the \lw\ must be safe.
In other words, $rec$ must not be reserved by any thread, and no thread should have a private reference to $rec$.

In \nbrp, \tid{{lw}} reclaims only up to its \emph{bookmarkTail}, which means it only reclaims records that it had retired before the time $t$ when \tid{{lw}} reached the \lw and scanned the timestamps of all threads.
And, \tid{{lw}} reclaims these records only at a later time $t'$ when it sees that at least one thread has incremented its timestamps at least twice. 
These timestamp increments indicate that, between times $t$ and $t'$, all threads received signals and discarded their pointers to unreserved records.
Since $rec$ is retired before time $t < t'$, it follows that at time $t'$ any thread that has a reference to $rec$ must have reserved $rec$ before $t'$.
If $rec$ is still reserved when \tid{{lw}} scans the reservations of all threads (after $t'$) then $rec$ will not be reclaimed.
Otherwise, $rec$ is safe to reclaim.
\end{proof}

\begin{lemma}[\nbrp is robust]
The number of records that are retired but not yet reclaimed is bounded.
\end{lemma}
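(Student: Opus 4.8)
The plan is to mirror the robustness argument already given for \nbr, exploiting the fact that reclamation at the \hw\ in \nbrp\ is identical to reclamation in \nbr. The only genuinely new ingredient is the \lw/piggyback path, so the heart of the proof is to show that this opportunistic path can only cause a thread to reclaim \emph{earlier}, never to defer reclamation beyond the point already guaranteed by \nbr. Once that is established, the same quantitative bound carries over directly.

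Let $k$ denote an upper bound on the number of nodes a thread reserves in a single operation, let $p$ be the number of threads, and let $h$ be the \hw\ (the maximum \lb\ size). First I would establish the key invariant that every thread's \lb\ is capped at $h$. Inspecting \func{retire}, a thread tests \func{isAtHiWm()} before appending the new record, and when this test succeeds it \emph{unconditionally} broadcasts signals, triggers its own \rgp, and calls \func{reclaimFreeable}. Crucially, this branch does not depend on observing any other thread: a thread never relies on a peer's \rgp\ in order to bound its own bag. Hence, regardless of what happens on the \lw\ path, no \lb\ ever exceeds $h$.

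Next I would bound the total garbage. A retired node resides in exactly one \lb, so the global count of retired-but-unreclaimed nodes is the sum of the individual \lb\ sizes, which by the cap above is at most $p\,h$. To identify the genuinely unreclaimable portion, observe that when a \rl\ completes a \rgp\ (at either watermark) and scans reservations, the only nodes it must skip are those currently reserved by some thread. Each thread reserves at most $k$ nodes, so at most $k(p-1)$ nodes can be withheld from any single reclaiming thread; by the safety arguments already established (the \hw\ case reduces to \lemref{nbr}, and the \lw\ case to the \nbrp\ safety lemma proved above) all other nodes up to the relevant tail are freed. Thus a stalled or crashed thread prevents at most $k$ of its reserved nodes from being reclaimed, and $p-1$ faulty threads withhold at most $k(p-1)$ nodes in total.

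The main obstacle is ruling out the scenario that the piggyback mechanism starves a thread of reclamation---for instance, a thread parked at the \lw\ indefinitely awaiting a change in \texttt{announceTS[]} that never arrives. I would dispatch this by noting that a thread waiting at the \lw\ continues to append retired nodes (\lineref{apprec}), so its \lb\ grows monotonically and must reach the \hw\ within a bounded number of \func{retire} calls, at which point it forces its own \rgp. Therefore the \lw\ path is strictly a fast path: it can only shrink a bag sooner, and the $h$-cap established above holds unconditionally. Combining the per-bag cap with the $k(p-1)$ reservation bound yields a finite bound on the number of retired-but-unreclaimed nodes, completing the proof.
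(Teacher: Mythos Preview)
Your proposal is correct and follows essentially the same approach as the paper: both reduce to the \nbr\ robustness argument via the observation that a reclaimer always reclaims at the \hw, capping each \lb\ at $h$, and that at most $k(p-1)$ reserved records can be withheld. You are actually more careful than the paper, which simply asserts the argument is ``similar'' to \nbr\ without explicitly dispatching the \lw\ path; your observation that the \lw\ path is a strict fast-path (a waiting thread keeps appending and must eventually hit the \hw) makes explicit what the paper leaves implicit.
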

\begin{proof}
The argument is similar to that in \lemref{nbr} in \chapref{chapnbr}.
Let $k$ be an upper bound on the number of records a thread reserves per operation, $p$ be the number of processes, and $h$ be the maximum limbo bag size at which a thread decides to reclaim (i.e., the \hw).
%
Let, \tid{r} be a \rl and \tid{j} be an arbitrary thread.
If \tid{j} is delayed (or crashes), it can reserve at most $k$ records in \tid{r}'s limbo bag.
A retired record can be present in only one limbo bag, so in the worst case a single thread can prevent only $k$ records from being reclaimed (in all limbo bags).
It follows that, in a system where $p-1$ threads can crash, those $p-1$ threads can prevent at most $k(p-1)$ records in total from being reclaimed.
Moreover, a \rl always reclaims when it hits the \hw, so a limbo bag contains at most $h$ records.

\end{proof}

\section{Evaluation}
\label{sec:nbrpeval}
We rigorously evaluate \nbr and \nbrp. 
To the best of our knowledge, this is the most extensive evaluation to date in terms of the number of reclamation schemes and variety of data structures evaluated.

\myparagraph{Setup:} We conducted our experiments on a quad-socket Intel Xeon Platinum 8160 machine with 192 threads, 384GB of RAM,  33MB of L3 cache per socket (total 132MB). The machine ran Ubuntu 20.04 with kernel 5.8 with GCC 9.3.0.
We implemented all algorithms in the Setbench~\cite{brown2020non} benchmark with \texttt{-O3} optimization flag and utilized \emph{jemalloc} as the memory allocator~\cite{evans2006scalable}. Our evaluation consisted of three types of experiments:
\begin{enumerate}
    \item [(E1):] {Evaluating \nbr(+) throughput with different workloads thread counts to understand scalability.}
    \item [(E2):] {Measuring throughput with varying data structure sizes to understand the impact of contention and cache misses.}
    \item [(E3):] {Evaluates peak memory usage of \nbrp with and without stalled threads to understand its memory behaviour
    .}
    
\end{enumerate}

\ignore{
\begin{table}[ht]
\centering
\begin{tabular}{lllll}
SMR                                             & \makecell{Bounds\\ Garbage?}  & Progress & \makecell{performance\\ summary} & Type \\
\hline
\hline
\rowcolor[rgb]{0.753,0.753,0.753}\nbrp        & YES                           & cond. lockfree    & high  & HYB\\ 
nbr                                             & YES                           & cond. lockfree    & medium & HYB\\
\rowcolor[rgb]{0.753,0.753,0.753} \debra         & NO                            & blocking          & high & EPOCH\\ 
\hp                                              & YES                           & lock-free         & low & \hp\\ 
\rowcolor[rgb]{0.753,0.753,0.753} \qsbr \& \rcu   & NO                            & blocking          & medium & EPOCH\\ 
\geibr                                          & NO                           & lock-free         & medium & HYB\\ 
\rowcolor[rgb]{0.753,0.753,0.753} \he            & NO                          & lock-free          & low & HYB\\ 
\wfe                                             & YES                           & wait-free         & medium & HYB\\ 
\rowcolor[rgb]{0.753,0.753,0.753} crystallineL  & YES                          & lock-free         & medium & HYB\\ 
crystallineW                                    & YES                          & wait-free         & medium & HYB\\

\end{tabular}
\caption{Reclamation algorithms used in the benchmark. We categorize an algorithm as medium if at least once it performed medium, as low if at least once it performed lowest, and high if always it performed above the medium category.}
\label{tab:smrtab}
\end{table}
}

\begin{table}
\fontsize{10pt}{10pt}\selectfont
\centering
\begin{tabular}{lllll}
\toprule
\textbf{SMR Algorithm}                                            & \textbf{\makecell{Bounds\\ Garbage?}}  & \textbf{Progress} & \textbf{Type} & \textbf{\makecell{Memory Layout\\ Changes?}} \\
\midrule
\rowcolor[rgb]{0.753,0.753,0.753}\nbrp\cite{singh2021nbr}        & Yes                           & cond. lock-free   & OS signals (relaxed HP) &  No\\ 
\nbr\cite{singh2021nbr}                                             & Yes                           & cond. lock-free   & OS signals (relaxed HP) & No\\
\rowcolor[rgb]{0.753,0.753,0.753} \debra\cite{brown2015reclaiming}         & No                            & blocking         & EPOCH & No\\ 
\hp\cite{michael2004hazard}                                              & Yes                           & lock-free        & \hp & No\\ 
\rowcolor[rgb]{0.753,0.753,0.753} \qsbr \& \rcu \cite{hart2007performance}   & No                            & blocking         & EPOCH & No\\ 
\geibr\cite{wen2018interval}                                          & No                           & lock-free         & EPOCH + \hp & Yes\\ 
\rowcolor[rgb]{0.753,0.753,0.753} \he \cite{ramalhete2017brief}            & No                          & lock-free          & EPOCH + \hp & Yes\\ 
\wfe \cite{nikolaev2020universal}                                             & Yes                           & wait-free        & EPOCH + \hp & Yes\\ 
\rowcolor[rgb]{0.753,0.753,0.753} crystallineL\cite{nikolaev2021crystalline}  & Yes                          & lock-free         & EPOCH + RC & Yes\\ 
crystallineW\cite{nikolaev2021crystalline}                                    & Yes                          & wait-free         & EPOCH + RC & Yes\\
\rowcolor[rgb]{0.753,0.753,0.753} VBR\cite{sheffi2021vbr}  & Yes                          & lock-free         & EPOCH & Yes\\ 
\bottomrule
\end{tabular}
\caption{SMRs used in benchmark. OS implies use of Operating System features, \hp: hazard pointers like reservation. EPOCH: use of EBR. RC: reference counting. The QSBR (threads announce quiescence at the end of operations) and RCU implementations are adapted from the IBR benchmark~\cite{wen2018interval}.
}
\label{tab:smrtab}
\end{table}

For each experiment type, we evaluate multiple memory reclamation algorithms applied to different data structures. Each configuration is run multiple times, and the results are averaged to produce a data point in a plot. In each run, all threads access a single data structure and use a single memory reclamation algorithm for a fixed period of time.

Data structures include lists, hash tables, and trees which exhibit varying memory access patterns. 
Specifically, we utilized the lazylist~(LL)~\cite{heller2005lazy}, Harris list~(HL)~\cite{harris2001pragmatic} and Harris-Michael list~(HMList)~\cite{michael2004hazard} for lists,
HMList chaining based hashtable (HMHT), for hash table, and external binary search tree of David et~al. (DGT)~\cite{david2015asynchronized} and Brown's relaxed (a,b)-tree (BABT)~\cite{brown2017techniques} for trees.

LL and DGT belong to the class of data structures that are naturally compatible with \nbr as they have a single \rdp and \wtp per operation. 
HL and BABT belong to the class of data structures with multiple \rdp and \wtp that start every \rdp from root. Thus, they are suitable for \nbr with careful separation of the phases as discussed in \secref{comptds}.
Finally, HMList and HMHT represent the semi-compatible data structures discussed in \secref{semicomptds}.
Each of these data structures is implemented with up to \textbf{twelve} different reclamation algorithms (as applicable), including \nbr, \nbrp and \textit{none} algorithm which is a leaky implementation. 
The reclamation algorithms are summarized in \tabref{smrtab}. 
\ignore
{The algorithms \qsbr, \rcu and \debra are epoch-based.
The algorithms \geibr, hazard eras (\he), and crystallineL represent the most recent hybrids of epoch-based and \hp-based approaches. 
We also evaluate algorithms that provide stronger wait-free progress guarantees like Wait Free Eras (\wfe) and crystallineW.
Besides this, the plots also include hazard pointers (\hp). A \textit{none} algorithm is included for reference, which is just the data structure implementation without any reclamation.
}

Trees do not include the lines for crystallineL and crystallineW. Although these algorithms worked correctly with lists, they occasionally crashed when used with trees in oversubscribed scenarios.
We may be using them wrong, but they appear to have the same usage limitations as HPs, and it is not clear how to use HPs with these trees~\cite{brown2015reclaiming}. 
Therefore, we decided to exclude them from the tree plots. Additionally, since \hp does not apply to HL~\cite{michael2004hazard}, it was omitted from the HL plots. Furthermore, VBR\cite{sheffi2021vbr} was not used in our experiments because it assumes a type-stable allocator that never frees memory to the OS, and a fair comparison would require forcing all algorithms to use memory pools. 

The reported results are obtained by averaging data from 5 timed trials, each lasting 5 seconds. The 
experiments were conducted with thread counts ranging from 1 to 384 threads on a system with 192 hardware threads and more than 91\% of data points had less than 5\% variance.
Before each execution, the data structure was pre-filled to half of its maximum size.

For each of (E1), (E2), and (E3) we measure throughput for three workloads, (1) \textit{update-intensive}: 100\% updates where 50\% of operations are inserts and the rest are deletes, (2) \textit{balanced}: 25\% of operations are inserts, 25\% are deletes and rest are searches, and (3) \textit{search-intensive}: 5\% of the operations are inserts, 5\% are deletes and the rest are searches.

\begin{figure}
\centering
     \begin{minipage}{\textwidth}
        \begin{subfigure}{\textwidth}
            \includegraphics[width=0.33\linewidth, height=6cm, keepaspectratio]{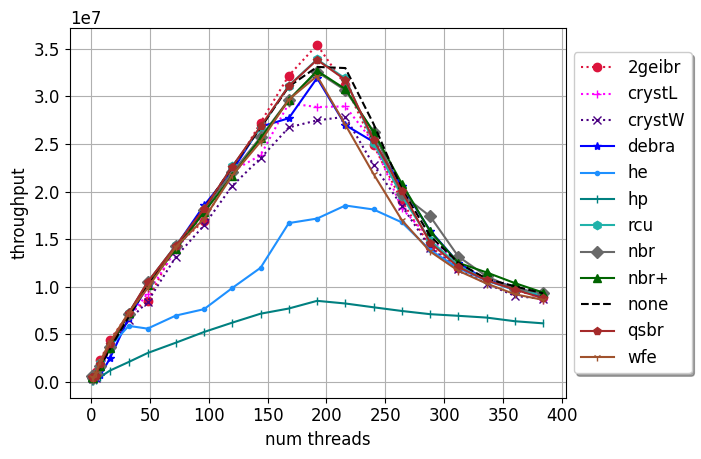}\hfill
            \includegraphics[width=0.33\linewidth, height=6cm, keepaspectratio]{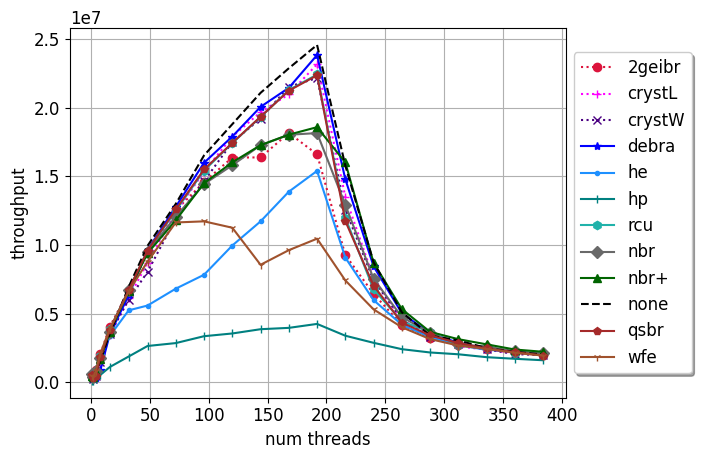}\hfill
            \includegraphics[width=0.33\linewidth, height=6cm, keepaspectratio]{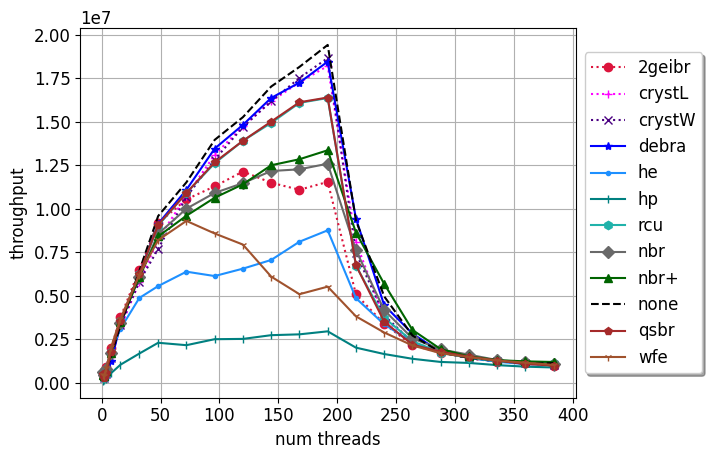}\hfill
            \caption{Lazylist (LL) throughput. Updates: Left: 10\%. Middle: 50\%. Right: 100\%. Max size:2000.}
            \label{fig:e1ll2000}
        \end{subfigure}
        \begin{subfigure}{\textwidth}
            \includegraphics[width=0.33\linewidth, height=6cm, keepaspectratio]{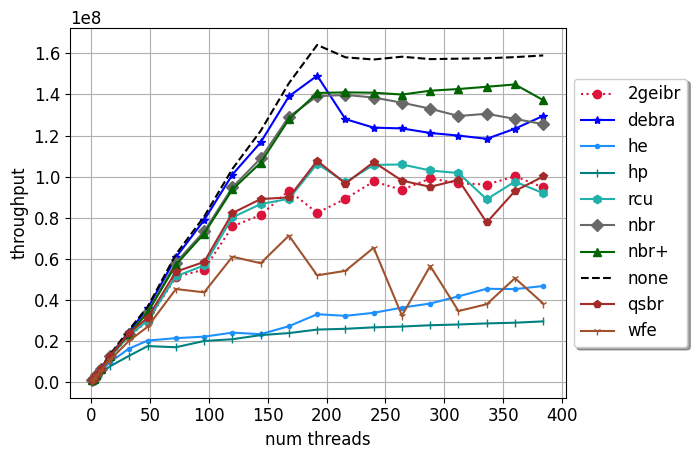}\hfill
            \includegraphics[width=0.33\linewidth, height=6cm, keepaspectratio]{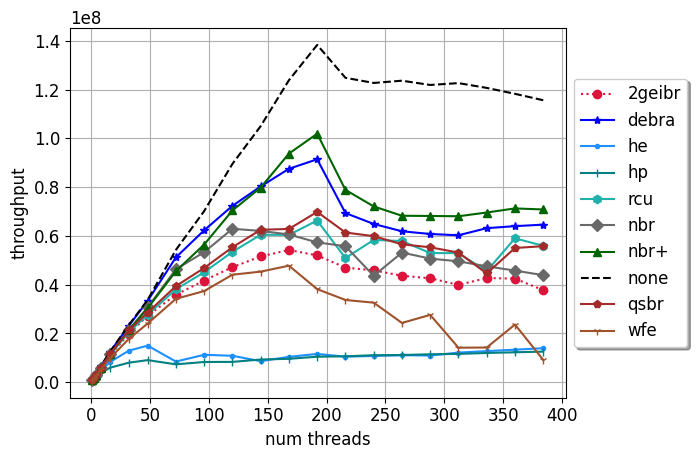}\hfill
            \includegraphics[width=0.33\linewidth, height=6cm, keepaspectratio]{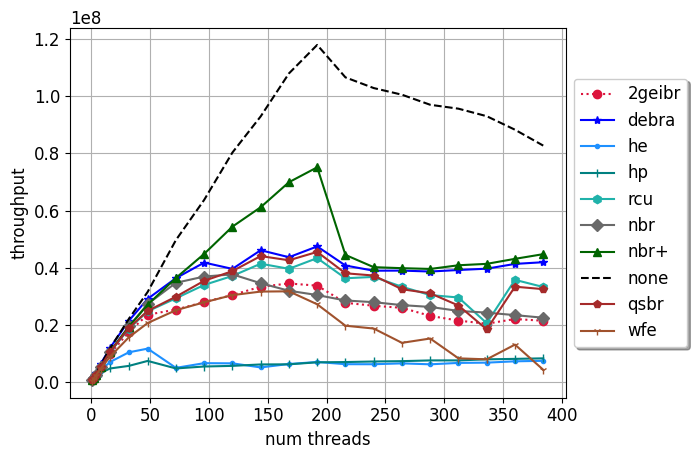}\hfill
            \caption{DGT throughput. Updates: Left: 10\%. Middle: 50\%. Right: 100\%. Max size:2000000.}
            \label{fig:e1dgt2000000}
        \end{subfigure}        
     \end{minipage}
    \caption{\textbf{E1:} Evaluation with data structures having a single read-write phase. Y axis: throughput in million operations per second. X-axis: \#threads.}
    \label{fig:e1ll}
\end{figure}

\begin{figure}
\centering
     \begin{minipage}{\textwidth}
        \begin{subfigure}{\textwidth}
            \includegraphics[width=0.33\linewidth, height=6cm, keepaspectratio]{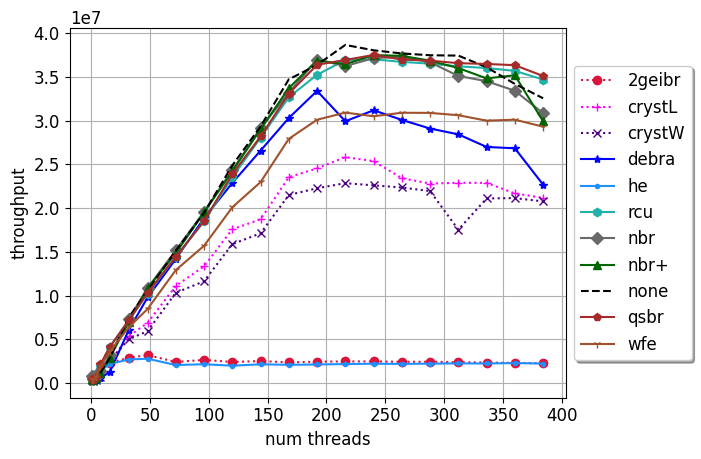}\hfill
            \includegraphics[width=0.33\linewidth, height=6cm, keepaspectratio]{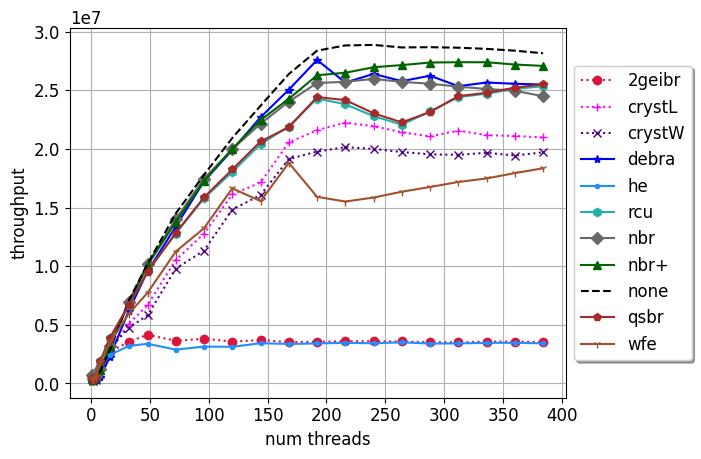}\hfill
            \includegraphics[width=0.33\linewidth, height=6cm, keepaspectratio]{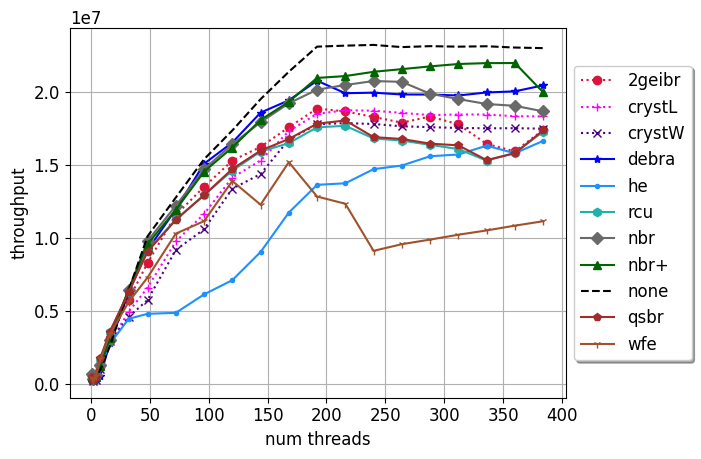}\hfill
            \caption{Harris list (HL) throughput. Updates: Left: 10\%. Middle: 50\%. Right: 100\%. Max size:2000.}
            \label{fig:e1hl2000}
        \end{subfigure}
        \begin{subfigure}{\textwidth}
            \includegraphics[width=0.33\linewidth, height=6cm, keepaspectratio]{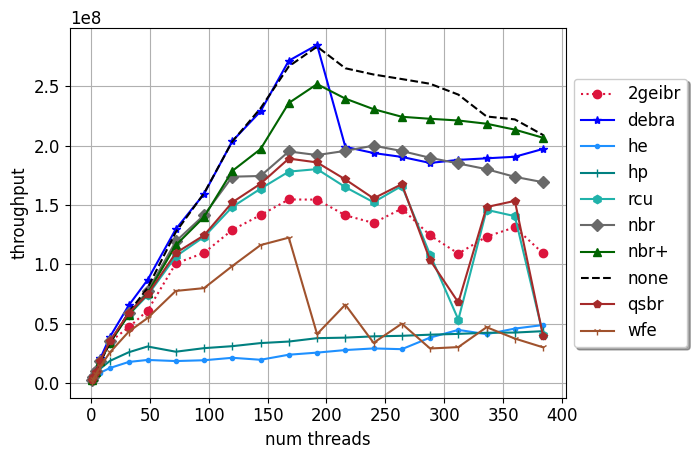}\hfill
            \includegraphics[width=0.33\linewidth, height=6cm, keepaspectratio]{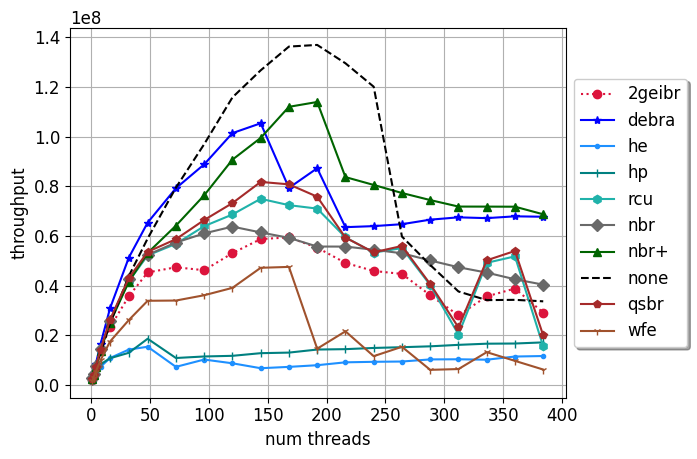}\hfill
            \includegraphics[width=0.33\linewidth, height=6cm, keepaspectratio]{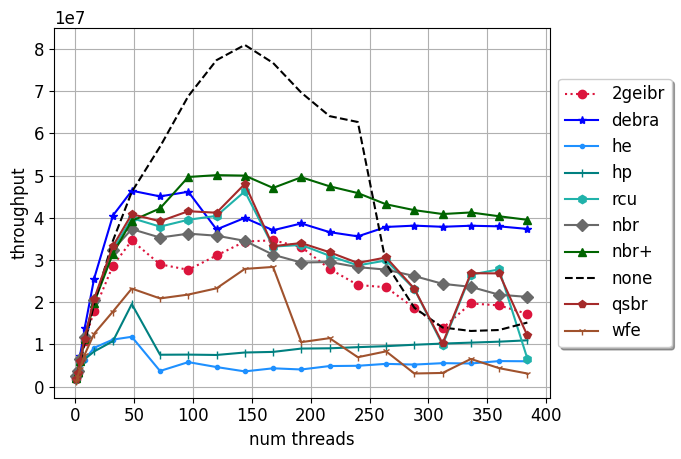}\hfill
            \caption{(a,b) tree (BABT) throughput. Updates: Left: 10\%. Middle: 50\%. Right: 100\%. Max size:2000000.}
            \label{fig:e1abtree2000000}
        \end{subfigure}
     \end{minipage}
    \caption{\textbf{E1:} Evaluation with data structures having multiple read-write phases. Y axis: throughput in million operations per second. X-axis: \#threads.}
    \label{fig:e1hl}
\end{figure}

\begin{figure}
\centering
     \begin{minipage}{\textwidth}
        \begin{subfigure}{\textwidth}
            \includegraphics[width=0.33\linewidth, height=6cm, keepaspectratio]{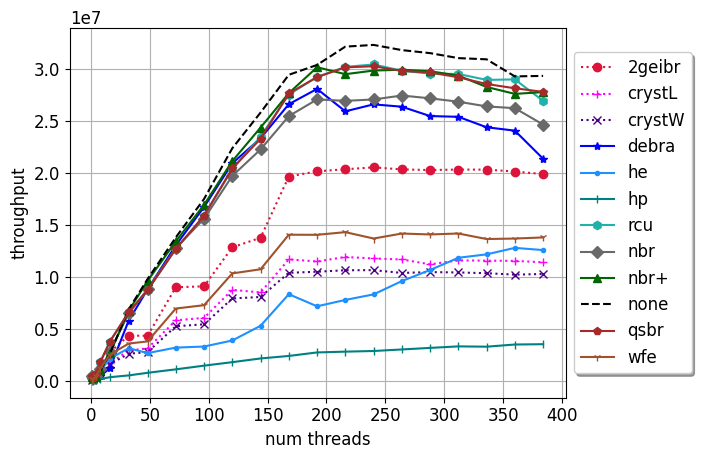}\hfill
            \includegraphics[width=0.33\linewidth, height=6cm, keepaspectratio]{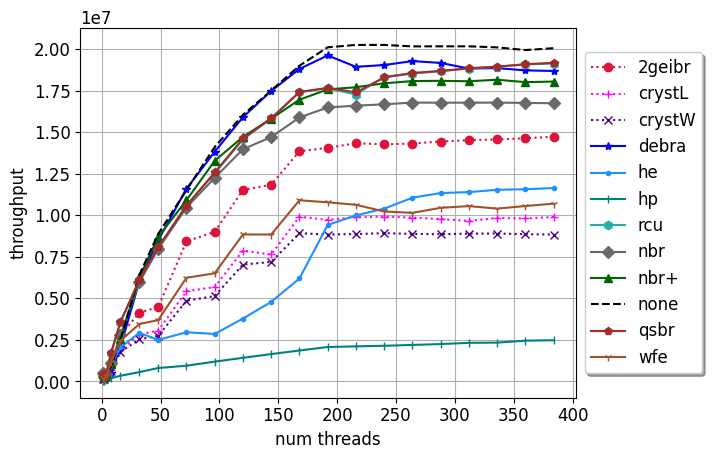}\hfill
            \includegraphics[width=0.33\linewidth, height=6cm, keepaspectratio]{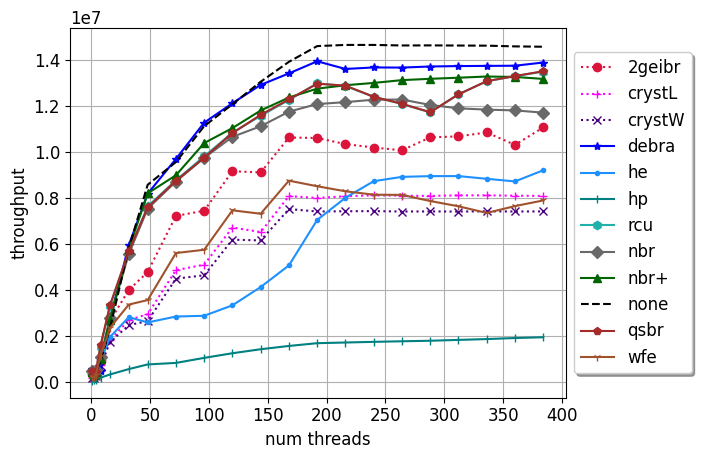}\hfill
            \caption{ Harris-Michael list (HMList) throughput. Updates: Left: 10\%. Middle: 50\%. Right: 100\%. Max size:2000.}
            \label{fig:e1hm2000}
        \end{subfigure}
        \begin{subfigure}{\textwidth}
            \includegraphics[width=0.33\linewidth, height=6cm, keepaspectratio]{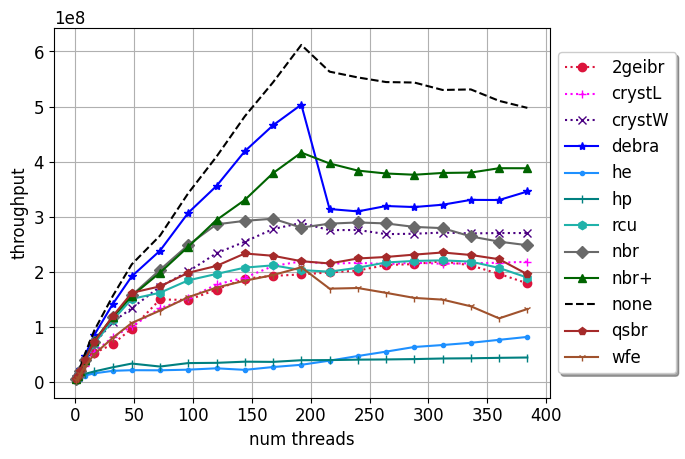}\hfill
            \includegraphics[width=0.33\linewidth, height=6cm, keepaspectratio]{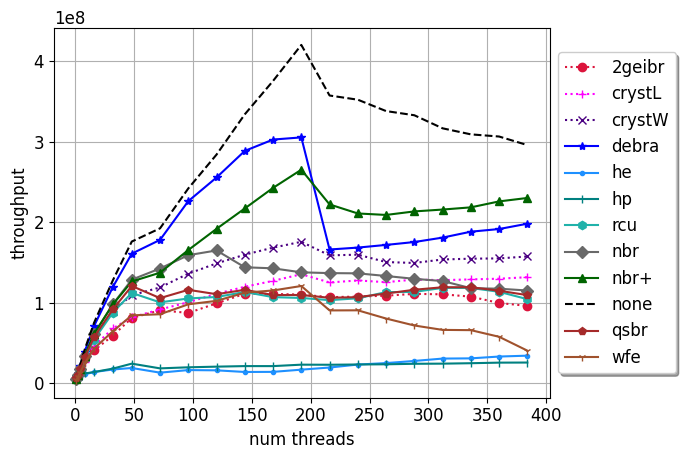}\hfill
            \includegraphics[width=0.33\linewidth, height=6cm,keepaspectratio]{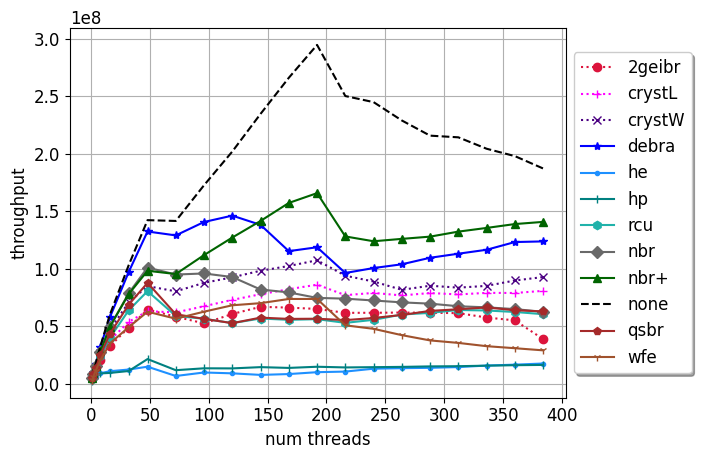}\hfill
            \caption{HMlist-based hashtable (HMHT) throughput. Updates: Left: 10\%. Middle: 50\%. Right: 100\%. load factor:6. Buckets:10K.}
            \label{fig:e1ht6}
        \end{subfigure}
     \end{minipage}
    \caption{\textbf{E1:} Evaluation with data structures that can be modified to restart from root/head. Y axis: throughput in million operations per second. X-axis: \#threads. In \nbr and \nbrp the HMlist restarts from root while for others it does not.}
    \label{fig:e1hm}
\end{figure}

\begin{figure}
\centering
     \begin{minipage}{\textwidth}
        \begin{subfigure}{\textwidth}
            \includegraphics[width=0.33\linewidth, height=6cm, keepaspectratio]{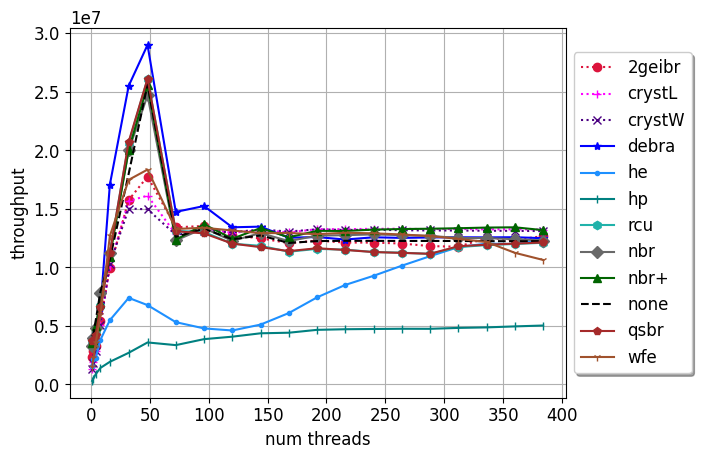}\hfill
            \includegraphics[width=0.33\linewidth, height=6cm, keepaspectratio]{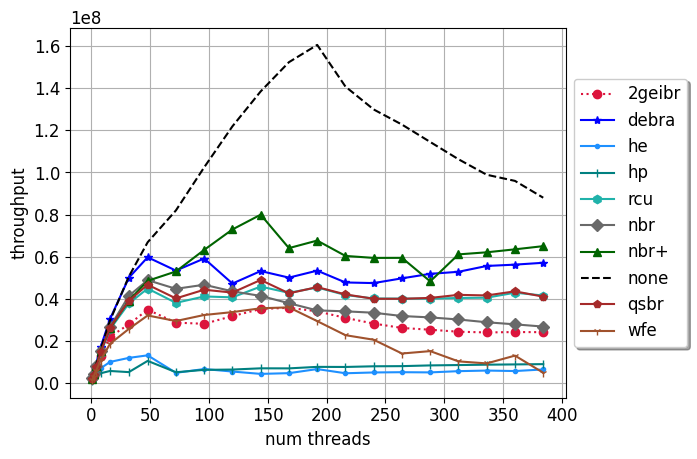}\hfill
            \includegraphics[width=0.33\linewidth, height=6cm, keepaspectratio]{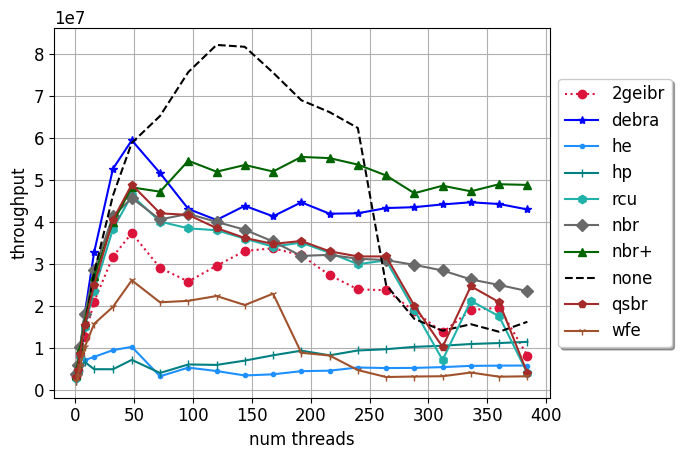}\hfill
            \caption{Left: Harris-Michael list size 200. Middle: DGT size 200K. Right: BABT size 200K.}
            \label{fig:e2small}
        \end{subfigure}
        \begin{subfigure}{\textwidth}
            \includegraphics[width=0.33\linewidth, height=6cm, keepaspectratio]{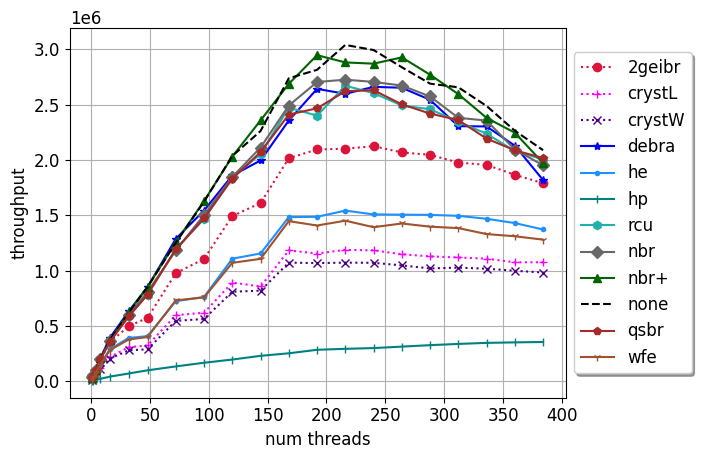}\hfill
            \includegraphics[width=0.33\linewidth, height=6cm, keepaspectratio]{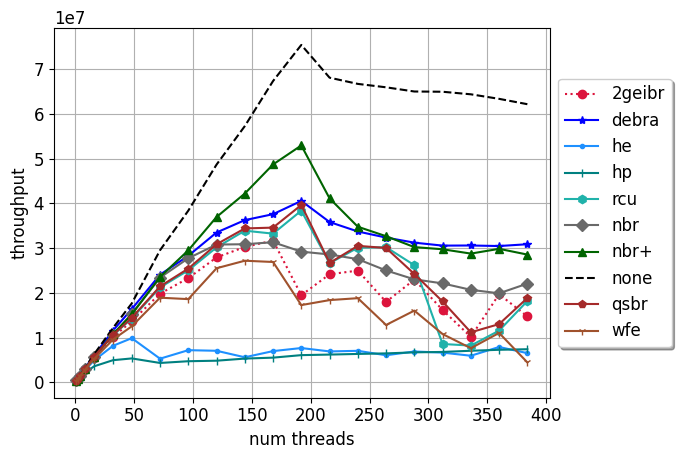}\hfill
            \includegraphics[width=0.33\linewidth, height=6cm, keepaspectratio]{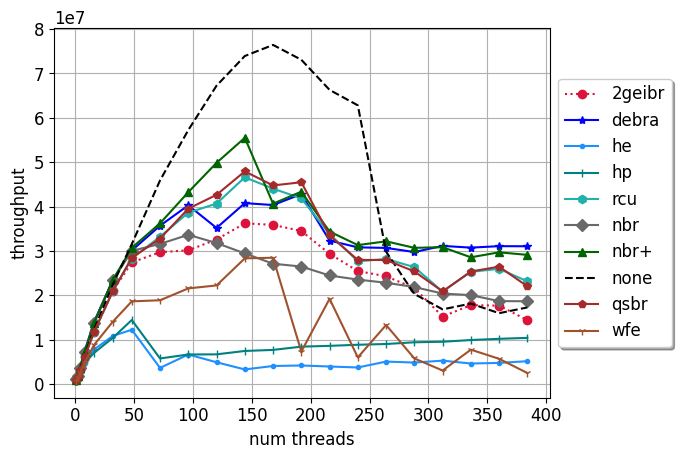}\hfill
            \caption{Left: Harris-Michael list size 20K. Middle: DGT size 20M. Right: BABT size 20M.}
            \label{fig:e2large}
        \end{subfigure}
     \end{minipage}
    \caption{\textbf{E2:} Throughput across different data structure sizes. Workload: 100\% updates. Y axis: throughput in million operations per second. X-axis: \#threads. Trees size of 20M exceeds LLC. Trees size of 200K fits LLC. Lists always fit in LLC.}
    \label{fig:e2}
\end{figure}

\subsection{Throughput}
\paragraph{Discussion of E1.}
\Figref{e1ll}, ~\ref{fig:e1hl}, ~\ref{fig:e1hm} show the throughput of all the data structures with varying workloads. In general, it can be observed that especially in balanced and update intensive workloads, \nbrp is similar to other reclamation techniques at lower thread counts, i.e up to 48 threads. Then at higher thread counts, from 48 to 192 threads, \nbrp is better than other reclamation techniques, except in the LL and HMList data structures. In oversubscribed scenarios, i.e., for thread counts greater than 192, \nbrp is competitive and sometimes outperforms the competition.

\ignore{
At a broader level, the reclamation techniques can be divided into three categories\textendash high, medium, and low performing. Except for the HL, \nbrp and \debra belong to the high-performing category, \hp and \he are in the low-performing category, and \geibr is in the medium category.
Others like \rcu, \qsbr, crystallineL and crystallineW in some cases perform high and in others perform medium.
The \wfe is in some cases medium and in others performs low.
Techniques like crystallineL, crystallineW, \geibr, \wfe, \he, and \hp have per read overhead which causes these algorithms to slow down, especially \hp and \he\footnote{We believe that \hp's poor performance is due to the high cost of \textit{mfence} used to publish the reservations which could be reduced by using more efficient \func{xchg} instructions to broadcast the reservations (refer section 11.5.1 in \url{https://www.amd.com/system/files/TechDocs/47414_15h_sw_opt_guide.pdf}).
We did that optimization separately in our experimentation and noticed that it did increase the absolute throughput for \hp but it remained significantly slower than \nbrp.}. 
On the other hand \nbr, \nbrp, \debra, \qsbr, and \rcu do not have per read overhead which explains why they belong to the high or middle-performing category. 
}

One interesting observation is the cross-over between \debra and \nbrp at the higher thread counts, particularly evident in the DGT, BABT and HMHT data structures. The slowdown of \debra at higher thread counts could be attributed to the \textit{delayed thread vulnerability}, where slow threads infrequently advance epochs, halting regular reclamation of limbo bags.
This results in the accumulation of a large number of retired records waiting to be reclaimed.
When the slow thread finally announces the latest epoch, all threads reclaim their large limbo bags, causing a \textit{reclamation burst} (many records being freed at once).
This burst harms overall throughput as reclamation bursts bottleneck the underlying allocator by increasing contention and triggering slow code paths as internal buffers in the allocator are quickly filled.
The probability of threads getting delayed increases as more threads get involved in high inter-socket and update-intensive computations. 
Furthermore, the thread count where \nbrp overtakes \debra is lower in the \textit{update-intensive} than the \textit{search-intensive} workloads.
This is because the overhead of burst reclamation sets in at lower thread counts for \textit{update-intensive} workloads.

\paragraph{Discussion of E2.}
In the second experiment, we evaluate all the reclamation algorithms with data structures of small and very large sizes. This serves two purposes. First, this illustrates the behavior under high contention at small sizes and low contention at large sizes. Second, this experiment also studies the impact of varying cache miss rates on throughput when data fits in the LLC (last level cache), and when it does not. 

\Figref{e2}, shows the HMList (left column), DGT (middle column), and BABT (right column). The first row (\figref{e2small}) depicts these data structures with sizes that fit in the LLC.
The HMList is of size 200 and both the trees are of size 200K.
The second row (\figref{e2large}) depicts these data structures with sizes that do not fit in LLC. The HMList is of size 20K and both the trees are of size 20M.
Note, the list, even at 20K size, fits in LLC and it is not possible to go beyond the size of 20K and complete the experiments in a reasonable amount of time.
So, to emulate the case of very low contention in lists we use the maximum size of 20K.
Additionally, in \Figref{e1ht6} and~\figref{hmht600} we include results for chaining hash tables with 60K buckets (fits in the LLC) and 6M buckets (exceeds the LLC), with load factors 6 and 600, respectively.



\begin{figure}
\centering
\includegraphics[width=\linewidth, height=6cm,keepaspectratio]{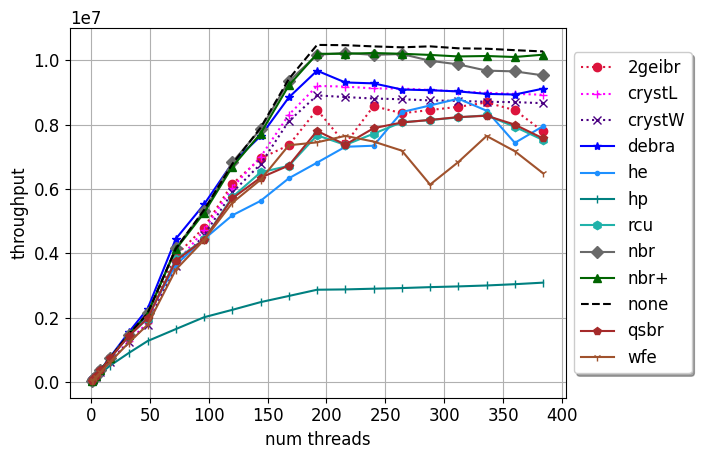}\hfill
\caption{\textbf{E2:} HMList-based Hash table with load factor 600 that does not fit in LLC.}
\label{fig:hmht600}
\end{figure}

Interestingly, \nbrp outperforms the competition, especially at the high thread counts. When combined with the analysis of E1, it is clear that \nbrp is fast [P1] and consistent [P4]. Specifically, \nbrp is comparable to other fast epoch-based algorithms like \debra, \qsbr, \rcu for the HMList at both the small and large sizes (see the left column in \Figref{e2small},~\ref{fig:e2large}).
In DGT (see the middle column in \Figref{e2small},~\ref{fig:e2large}), \nbrp is significantly faster at higher thread counts.
Similarly, in BABT (see the right column in \Figref{e2small},~\ref{fig:e2large}) \nbrp is faster at small sizes and comparable to other reclamation algorithms at large sizes. 
In the case of HMHT, \nbrp is comparable at small sizes (see the right column in \Figref{e1ht6}) as well as at large sizes (see~\figref{hmht600}).

Furthermore, experiment E2 also helps to explore two disparate usage scenarios for semicompatible data structures like HMList and the HMHT.
First, at a large size, for example 20K in HMList and a load factor of 600 in HMHT, we hypothesize restarts would be inexpensive due to low contention.
Second, at a small size, i.e. 200 in HMList and the load factor of 6 in HMHT, we assume that restarting from the head node will occur frequently due to high contention, therefore it should have high overhead and slowdown the data structures. 


For low contention, the \nbrp based implementation which enforces restarts is still faster than other reclaimers which do not restart, in HMlist and HMHT. Surprisingly, even for high contention, \nbrp outperforms other reclaimers. This suggests that, at least for semicompatible lists, \nbr's methodology is sufficiently fast and restarts in the data structure incur low overhead.

\subsection{Memory Usage}
\begin{figure}
\centering
    \includegraphics[width=0.49\linewidth, height=6cm, keepaspectratio]{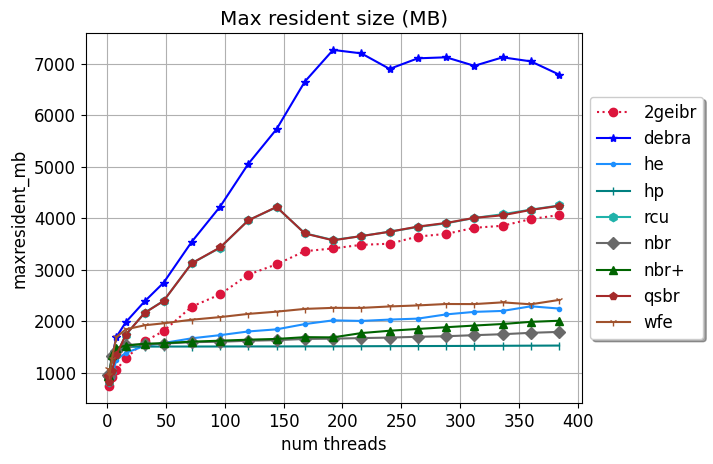}\hfill
    \includegraphics[width=0.49\linewidth, height=6cm, keepaspectratio]{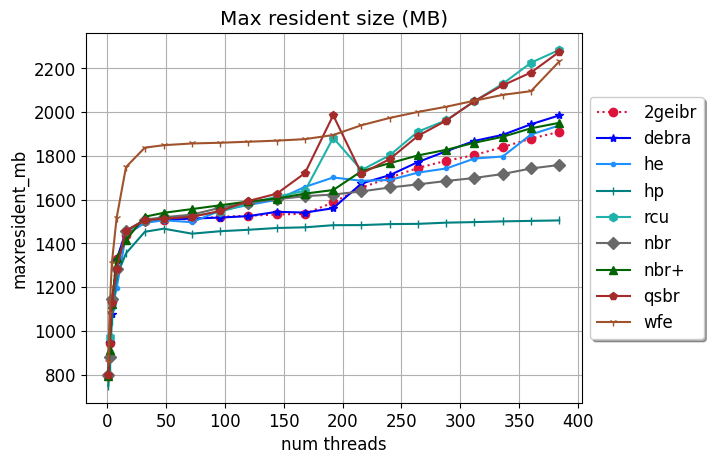}\hfill
    \caption{\textbf{E3:} Left: DGT with stalled threads. Right: DGT with no stalled threads. Max Size: 20M}
   \label{fig:e3}
\end{figure}

\paragraph{Discussion of E3.}
In our third type of experiment, we measure peak memory consumption of all reclaimers when a thread is stalled (see the middle column in \Figref{e3}) and when no thread is stalled (see the right column in \Figref{e3}) to establish that \nbr(+) bounds garbage.  

Each trial is run for 40 seconds.
For \debra, \qsbr, \rcu, \hp, \he, \nbr and \nbrp one thread is made to sleep within a data structure operation, for whole 40 seconds, imitating a stalled thread.
And, for \geibr, one thread is made to stall to maximize the interval of reservations such that maximum number of nodes are prevented from getting reclaimed, imitating pathological scheduling.

As expected, \debra, \geibr, \qsbr and \rcu do not have bounded garbage, show an increase in peak memory usage when a thread stalls (\Figref{e3}). In contrast, \nbr, \nbrp, \hp exhibit similar memory usage regardless of thread stalls.
In \he and \wfe memory usage increases marginally as a result of the stalled thread, showing some minor sensitivity to stalled threads. 

In summary, our experiments reveal that \nbr is fast, bounds garbage, and maintains the consistent performance across various scenarios. The code for our benchmark is publicly available at \url{https://gitlab.com/aajayssingh/nbr_setbench_plus}.

\newpage
\section{Summary}
\label{sec:nbrpconclusion}

In this chapter, we described the design and implementation of \nbrp: the second safe memory reclamation algorithm built using the neutralization paradigm.
We extensively evaluated both algorithms, \nbr introduced in \chapref{chapnbr}, and \nbrp introduced in this chapter, demonstrating that the neutralization algorithms are also fast and consistent in practice.

In the future, it will be interesting to test the behavior of our algorithms in a setting where our benchmark is run alongside several other data structures and unrelated processes on the same machine.

In terms of the objective we set for the neutralization paradigm in \chapref{chapnbr}, we have established three desirable traits: speed, bounded garbage, and consistent performance. In the next \chapref{chapappuse}, we discuss the remaining desirable traits, namely the applicability and ease of use of neutralization-based algorithms for many data structures.

\chapter{Practical Considerations and Guidelines}
\label{chap:chapappuse}


In the previous chapters (\chapref{chapnbr} and \chapref{chapnbrp}) we established that algorithms based on the neutralization paradigm are fast, consistent, and bound garbage, demonstrating three of the five key desirable properties that we set as our objective in \chapref{chapnbr}.
In this chapter, we discuss the remaining two properties: applicability and usability of the neutralization based algorithm.

The outline of this chapter is as follows.
In \secref{appl}, we start with a discussion on common patterns of data structures that influence whether NBR ( +) can be applied to them or not.
We also conduct a brief survey of eighteen popular data structures and report their applicability to several state of the art safe memory reclamation techniques, including NBR(+). 
Further, we discuss in detail the applicability of Hazard Pointers and NBR(+) to these data structures. 
In \secref{usb}, we demonstrate with an example how programmers can use NBR(+) in a data structure and also compare the ease of integration of two other state of the art reclamation algorithms. We end the chapter by presenting a brief set of rules that a programmer should keep in mind while using the POSIX signal interface. We hope these rules will further guide programmers to confidently use NBR(+).

\section{Applicability}
\label{sec:appl}

We study the applicability of NBR(+) to various data structures and categorize them into three broad categories: compatible, semi-compatible, and incompatible. Compatible data structures can be further classified into two types based on the pattern of read and write phases.


\subsection{Compatible data structures.}
\label{sec:comptds}

The first type of compatible data structures exhibit a single \rdp followed by a single \wtp. Examples include optimistic data structures such as lazylist~\cite{heller2005lazy} and DGT~\cite{david2015asynchronized}, where a sequence of reads identifies the target nodes (\rdp), followed by a few writes on those nodes (\wtp). Applying  \nbr(+) to these data structures is straightforward, involving invoking \texttt{checkpoint} and \Call{begin\rdp()}{} before the sequence of reads, and invoking \Call{end\rdp()}{} when the target nodes are identified and updates are executed. For read-only operations, \Call{end\rdp()}{} is invoked before the operation returns to reset the \texttt{restartable} flag and mark the start of the quiescent phase. 

The second type of compatible data structures involve a pattern of alternating \textit{read-write phases}. For example, Harris's lock-free list(HL)~\cite{harris2001pragmatic} shown in \algoref{knbr_eg} follows this pattern.
By examining our exposition on the HL, readers can gain insights into how  \nbr(+) can be applied to more sophisticated data structures that share the similar design pattern~\cite{brown2017techniques,ellen2010non,howley2012non,shafiei2013non,he2017deletion}. 

\begin{algorithm}
\small
    \caption{Demonstration that  \nbr(+) is simple to use with Harris list\cite{harris2001pragmatic} with multiple read/write phases $(\rdp \space\space \wtp)^+$ and other compatible data structures.}\label{algo:knbr_eg}
    \begin{algorithmic}[1]
    \Statex
    \Statex
\lstset{numbers=left,xleftmargin=4em,frame=single, xrightmargin=2em}
\begin{lstlisting}[]

Node* search(key, Node** left_node) {|\label{lin:ksrch}|
  Node *left_node_next, *right_node;
  search_again:
  do { 
      |\textcolor{red}{CHECKPOINT();}|    |\label{lin:ksrchchkp}|    
      |\textcolor{red}{begin\rdp{}();}|    |\label{lin:ksrchrdp}|    
      Node *t = head;
      Node *t_next = head.next;
      do{ |\label{lin:ksrchdo}|
          if(!is_marked_reference(t_next)){
            (*left_node) = t;
            left_node_next = t_next;
          }
          t = get_unmarked_reference(t_next);
          if (t == tail) break;
          t_next = t.next;
      }while(is_marked_reference(t_next) or (t.key<search_key)); |\label{lin:ksrchwhile}|
      right_node = t;
      |\textcolor{red}{end\rdp{}(left\_node, right\_node);}|   |\label{lin:ksrchwtp}| 
    
      if (left_node_next == right_node) 
        if ((right_node != tail) && is_marked_reference(right_node.next))
          goto search_again;
        else 
          return right_node; |\label{lin:ksrchret}| 
      // unlink and retire one or more marked nodes.
      if (|\textcolor{red}{CAS}|(&(left_node.next), left_node_next, right_node)) |\label{lin:auxupd}|
        if ((right_node != tail) && is_marked_reference(right_node.next))
          goto search_again;|\label{lin:secrdp}|
        else 
          return right_node;
  } while(true);
}
\end{lstlisting}
\Statex
\label{fig:alg3}
\algstore{part2}
\end{algorithmic}
\end{algorithm}

\begin{algorithm}
\small
\begin{algorithmic}[1]
\algrestore{part2}
\Statex
\Statex
\lstset{numbers=left,xleftmargin=4em,frame=single, xrightmargin=2em}
\begin{lstlisting}[]    
// note: search() reserves the node due to call to 
// |end\rdp()| before the search returns.
// no new unreserved nodes are accessed.
bool insert(key) {
  Node *right_node, *left_node;
  do{
      right_node = search (key, &left_node); |\label{lin:invsrch}|
      if((right_node!=tail) && (right_node.key==key)) 
        return false; 
      Node *new_node = new Node(key);
      new_node.next = right_node; 
      if (|\textcolor{red}{CAS}|(&(left_node.next), right_node, new_node)) |\label{lin:finwt}|
        return true;
  }while (true)
}
\end{lstlisting}
    \end{algorithmic}
\end{algorithm}

In HL, during an update or contains operation, a sequence of reads is performed to identify target nodes (\lineref{ksrchdo}-\ref{lin:ksrchwhile}), followed by auxiliary updates to unlink marked nodes from the list(\lineref{auxupd}). 
The sequence of reads is then restarted from the \textit{head} of the list, potentially repeating multiple times if marked nodes are encountered, resulting in a sequence of \textit{read-write phases}.

To apply  \nbr(+) to data structures with alternating \textit{read-write phases}, the following steps are followed. First, the \texttt{checkpoint} and \textsc{begin\rdp()} operations are invoked at the beginning of the sequence of reads. Then, the \textsc{end\rdp()} operation is called when auxiliary updates begin. After that, the \texttt{checkpoint} and \textsc{begin\rdp()} operations are invoked again when the sequence of reads restarts from the head of the list. Finally, the \textsc{end\rdp()} operation is invoked when the final intended update is executed. In the provided example, \lineref{ksrchrdp} and \lineref{ksrchwtp} indicate the start and end of \rdp and \wtp, respectively. For more detailed examples and discussions on incorrect placements of the \textsc{begin\rdp()} and \textsc{end\rdp()} functions a programmer should be aware of, refer to \secref{nbrrules}.

If a neutralization signal is received during the first \rdp, all threads will discard their shared node references and restart from the \texttt{checkpoint} corresponding to the first \textsc{begin\rdp()} invocation. Similarly, if the neutralization signal is received during the second \rdp, after an auxiliary update, the thread will discard all references acquired during this second \rdp and jump back to the \texttt{checkpoint} corresponding to the second \rdp.

It is crucial to note that each \rdp, including the second \rdp, starts from the head node of the data structure. As a result, there are no shared node references carried forward from the previous \rdp or \wtp (auxiliary update phase). This behavior ensures that each \rdp is treated as a new operation, disregarding any previously acquired node references. Thus, the requirement of acquiring all references to shared nodes during the current phase and discarding them (refer to \rdp rules in \secref{dsassmp}) when restarting from the corresponding checkpoint is met, ensuring the safety of reclamation.

\subsubsection{Limitation: restarting from the root.}
\label{sec:wtplimit}

If instead of restarting from head (or entry point), threads were to continue searching from a midpoint in the list, such as resuming from a shared node $R$ that was reserved during the previous \wtp, would introduce a risk of dereferencing a freed node. Although $R$ itself cannot be freed due to reservation, the nodes it points to may not be reserved and thus could be freed. Consequently, following any pointer from $R$ could lead to accessing a freed node, resulting in a crash.

Fortunately, there are many concurrent data structures in the literature that naturally restart from the head after auxiliary updates and exhibit alternating \textit{read-write phases}. These data structures, are suitable for integration with \nbr. Examples of such data structures include Harris' list~\cite{harris2001pragmatic}, Brown's lock-free ABTree, chromatic tree, AVL tree (B17)~\cite{brown2017techniques}, Natarajan et al.'s lock-free binary search tree~\cite{natarajan2014fast}, and several others~\cite{howley2012non,shafiei2013non,he2017deletion}, including the recently proposed elimination (a,b) Tree~\cite{srivastava2022elimination}. In our experiments in \secref{nbrpeval}, we utilized the Harris list and ABTree among these options.

\subsection{Semi-compatible data structures.}
\label{sec:semicomptds}
The need to restart from the entry point (head in lists or root in trees) at the beginning of each read phase presents a challenge in directly applying  \nbr(+) to data structures with patterns of alternating \textit{read-write phases} that do not restart from the root node. For example, the Harris-Michael list~\cite{michael2004hazard} and certain search trees~\cite{ellen2014amortized, drachsler2014practical, brown2020non, ramachandran2015castle}. However, it is possible to adapt  \nbr(+) for use with these data structures by modifying their operations to restart from the root after auxiliary updates. This adaptation, requires careful consideration as it may introduce trade-offs, potentially affecting progress guarantees, necessitating a new amortized complexity analysis, or introducing additional overhead.

\begin{figure}
\centering
            \includegraphics[width=0.33\linewidth, height=5cm, keepaspectratio]{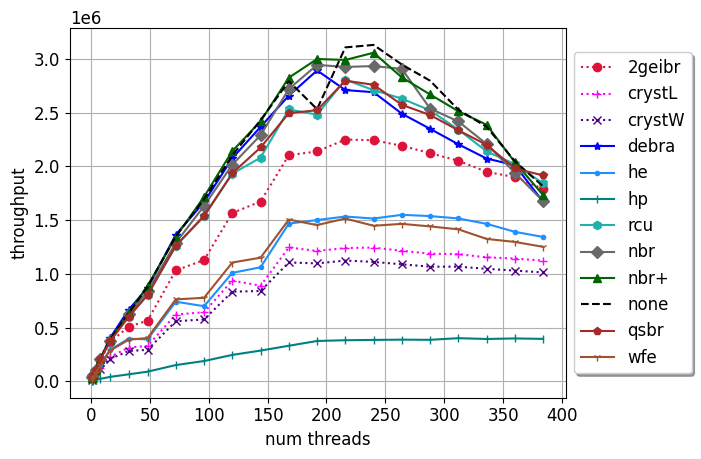}\hfill
            \includegraphics[width=0.33\linewidth, height=5cm, keepaspectratio]{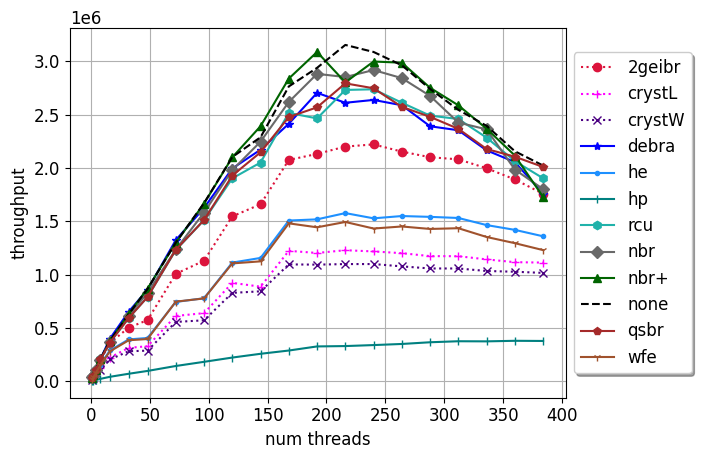}\hfill
            \includegraphics[width=0.33\linewidth, height=5cm, keepaspectratio]{figs/nbrfig/hm/throughput-hmlist-u50-sz20000.png}\hfill
\caption{Harris-Michael list.
Figure showing the impact of transforming Harris-Michael list to restart from root to apply  \nbr(+)  is negligible. In  \nbr(+) and \nbrp the HMlist restarts from root while for others it does not.
Left: 10\% updates. Middle: 50\% updates. Right: 100\% updates. Max size:20000.}
\label{fig:semicomp_hmlist}
\end{figure}

Nevertheless, we observe that modifying data structures to restart from the root after auxiliary updates is often a viable solution that does not significantly increase overhead. In scenarios with high contention, such as the Harris-Michael list, where multiple threads contend to unlink a same marked node, the majority of threads ($n-1$) would already restart from the root~\cite{michael2004hazard}. By enforcing a restart from the root for all threads, the number of threads requiring a restart increases by just one, resulting in $n$ threads restarting instead of $n-1$. This behavior aligns with the Harris list~\cite{harris2001pragmatic}, where all threads contending on the auxiliary CAS already restart from the root. In low contention scenarios, the impact of restarting from the root on performance is minimal.

Experimental results further support the notion that the cost of restarting from the root is negligible when adapting the Harris-Michael list to work with  \nbr(+) compared to using other reclamation techniques which use the list in its original form.~\figref{semicomp_hmlist} illustrates the analysis of the impact of restarting due to contention induced by varying workloads. Additionally,~\figref{e2small} \& \figref{e2large} (left column) provide an analysis of the impact of restarts resulting from contention induced by different data structures. These findings demonstrate that the overhead associated with restarting from the root in practical scenarios is generally low. Moreover, in search trees, given uniform node access distribution, contention is low, the performance difference between restarting from the root and continuing traversal from an ancestor is expected to be minimal due to shorter search paths.

By considering these factors, it is evident that modifying data structures to restart from the root after auxiliary updates allows for effective integration with  \nbr(+)  while maintaining reasonable performance characteristics.

\subsection{Incompatible data structures.}
%

Certain data structures, such as two concurrent relaxed-balance AVL trees~\cite{bronson2010practical}, require rotations after update operations to maintain height balance. These rotations may discover new nodes that were not accessed during the previous search phase, making it impractical to reserve nodes in advance for subsequent write phases. As a result, the  \nbr(+) technique does not apply to these data structures. Further, modifying their implementation to initiate rotations from the root would require extensive code changes which may require reproving progress or correctness.

Similarly, a recently introduced lock-free interpolation tree~\cite{brown2020non} periodically rebuilds its subtrees by repeatedly visiting and marking all nodes in old subtree which were not reserved beforehand to maintain balance. And, in order to visit and mark the nodes, it does not restart from the root, which violates the requirement of \nbr. Consequently,  \nbr(+) doesn't apply, also neither the DEBRA+ technique nor Hazard Pointers can be applied to this tree. At present, we are unaware of any SMR technique with bounded garbage that is compatible with the interpolation tree.

\subsection{Compatibility of  \nbr(+)  vs. other SMR algorithms}
\label{sec:apnappl}
%

\begin{table*}
\fontsize{8pt}{10pt}\selectfont
\centering
\begin{tabular}{lllllll}
\toprule
\textbf{Source}
& \textbf{Data structure}              & \textbf{Sync. type} & \textbf{ \nbr(+) }      & \textbf{EBR} & \textbf{DEBRA+} & \textbf{\makecell{HP/TS/IBR/HE\\/WFE/HY/QSense}} \\
\midrule
LL05\cite{heller2005lazy}             & linked list                 & opt. locks      & Yes       & Yes & No     & \makecell{No (reason,\\ similar to \cite{brown2015reclaiming})}                \\
\rowcolor[rgb]{0.753,0.753,0.753} HL01\cite{harris2001pragmatic}                                          & linked list                 & lock-free       & Yes       & Yes & *      & Yes                \\
HM04\cite{michael2004hazard}          & linked list                 & lock-free       & No        & Yes & *      & Yes                \\

\rowcolor[rgb]{0.753,0.753,0.753} \rowcolor[rgb]{0.753,0.753,0.753}DVY14a\cite{drachsler2014practical}    & partially external BST      & locks           & **        & Yes & No     & No \cite{brown2015reclaiming}                \\
EFRB10\cite{ellen2010non}             & external BST                & lock-free       & No       & Yes & *      & No \cite{brown2015reclaiming}                \\
\rowcolor[rgb]{0.753,0.753,0.753} NM14\cite{natarajan2014fast}                                            & external BST                & lock-free       & Yes       & Yes & *      & No \cite{brown2015reclaiming}                 \\
EFRB14\cite{ellen2014amortized}                                         & external BST                & lock-free       & No        & Yes & *      & No \cite{brown2015reclaiming}                \\
\rowcolor[rgb]{0.753,0.753,0.753} DGT15\cite{david2015asynchronized}                                      & external BST                & ticket locks    & Yes       & Yes & No     & \makecell{No (no marking,  \\cannot validate HP) {   }{ }}                \\
HJ12\cite{howley2012non}                                                & internal BST                & lock-free       & Yes       & Yes & *      & No (similar to \cite{brown2015reclaiming})                \\
\rowcolor[rgb]{0.753,0.753,0.753} RM15\cite{ramachandran2015fast}       & internal BST                & lock-free       & No        & Yes & No     & No (similar to \cite{brown2015reclaiming})                \\

BCCO10\cite{bronson2010practical}     & partially external AVL      & opt. locks      & No        & Yes & No     & Yes                \\
\rowcolor[rgb]{0.753,0.753,0.753} DVY14b\cite{drachsler2014practical}                                     & partially external AVL      & locks           & No        & Yes & No     & No \cite{brown2015reclaiming}                \\
HL17\cite{he2017deletion}             & external relaxed AVL tree   & lock-free       & Yes       & Yes & Yes    & No (similar to \cite{brown2015reclaiming})                 \\
\rowcolor[rgb]{0.753,0.753,0.753} B17b\cite{brown2017techniques}                                          & external AVL                & lock-free       & Yes       & Yes & Yes    & No \cite{brown2015reclaiming}                \\

S13\cite{shafiei2013non}              & patricia trie               & lock-free       & Yes       & Yes & *      & No \cite{brown2015reclaiming}                \\
\rowcolor[rgb]{0.753,0.753,0.753}BER14\cite{brown2014general}                                                                   & external chromatic tree     & lock-free       & Yes       & Yes & Yes    & No \cite{brown2015reclaiming}                \\
B17a\cite{brown2017techniques}        & external (a,b)-tree         & lock-free       & Yes       & Yes & Yes    & No \cite{brown2015reclaiming}                \\
\rowcolor[rgb]{0.753,0.753,0.753}BPA20\cite{brown2020non}                                                & external interpolation tree & lock-free       & No        & Yes & No     & No (similar to \cite{brown2015reclaiming})                \\
\bottomrule
\end{tabular}
\caption{Applicability of SMR algorithms. 
*compatible, but requires non-trivial data structure specific recovery code. 
**This is likely possible if code is restructured to reserve all relevant nodes before acquiring any locks.
This table was further updated in the work by Jung, Lee, Kim and Kang~\cite{jung2023applying}.  
}
\label{tab:appltab}
\end{table*}

This section surveys a carefully curated list of data structures, summarized in \tabref{appltab}, and reports whether it is compatible with  \nbr(+) and other SMR algorithms including EBR, DEBRA+ and HP (and variants of HP, including HE, IBR, WFE, ThreadScan, HY and QSense).



To safely apply  \nbr(+) the following requirements should be satisfied, as was discussed in \secref{wtplimit} and \secref{nbrreq}.


\begin{requirement}
\label{req:reqnbr}
Each \rdp in a data structure using  \nbr(+) should start from the root.
\end{requirement}

\begin{requirement}
\label{req:reqnbr2}
All references to shared pointers that could be accessed within a \wtp should be reserved before entering it.
\end{requirement}

Safely applying HPs to a data structure is more subtle.
Typically, accessing shared nodes in a data structure operation using the original form of HPs is a three step process where pointers to nodes are reserved in a hand-over-hand manner~\cite{michael2004hazard}. 
{\bf (1) announcing a hazard pointer:} requires a thread to save the shared node in a single-write multi-reader (SWMR) memory location.
{\bf (2) store-load memory order fence:} is necessary (on TSO) to ensure the reclaimers timely collect and skip freeing the announced hazard pointers.
{\bf (3) reachability validation:} is required to ensure the node was reachable from the root at the time it was announced to avoid announcing an already un-linked node that could be freed simultaneously.
When these three steps execute successfully, it can be said that the thread has {\em acquired} a hazard pointer.

\tabref{appltab} presents a survey of eighteen data structures, wherein \nbr{} applies to eleven, \debra{+} applies to four and \hp applies only to three. However this does not necessarily mean  \nbr(+) is strictly more applicable than \hp.

Qualitatively, \nbr{} does not apply to data structure implementations where read and write phases interleave in a way that leads to violation of shared node access requirements (referred to as~\reqref{reqnbr} and \reqref{reqnbr2}).
On the other hand, it is not clear how \hp should be applied to data structures where a sequence of logically deleted (or marked) nodes can be traversed~\cite{brown2015reclaiming}.
This scenario arises in an important and substantial class of well known data structures, including unbalanced binary trees~\cite{ellen2010non, natarajan2014fast, david2015asynchronized, ellen2014amortized, howley2012non, ramachandran2015fast, drachsler2014practical}, relaxed AVL trees~\cite{he2017deletion}, Chromatic trees~\cite{brown2014general}, B+trees\cite{braginsky2012lock}, ab-trees~\cite{brown2017techniques, srivastava2022elimination}, linked lists~\cite{heller2005lazy}, skip lists\cite{aksenov2023splay}, and Euler tour trees ~\cite{howley2012non}.
We have not studied the applicability of  \nbr(+)  to \textit{all} of these data structures, however several of them do appear in our survey.

Motivated in part by  \nbr(+) , Petrank et al.~\cite{sheffi2023era} formalized  \nbr(+) 's data structure template and requirements (\reqref{reqnbr} and \reqref{reqnbr2}), and gave a name to data structures that satisfy them: \textit{access-aware data structures}. 
This class of algorithms is the basis for the definition of what it means for an SMR algorithm to be \textit{widely applicable} in that paper. 
More precisely, an SMR algorithm that can be applied to all access-aware data structure implementations is said to be widely applicable.

 \nbr(+) is of course applicable to all access aware data structure implementations and is therefore widely applicable according to this definition.
On the other hand, HP and its many variants are not applicable to all access-aware data structures, and are thus not widely applicable. 
It appears that \debra{+} is also not widely applicable as it does not support lock-based access-aware data structure implementations. 

\subsubsection{Details of HP applicability}
Having discussed the requirements for ensuring safety and progress in a data structure, we next discuss the applicability of HP to the data structures in \tabref{appltab}.

In LL05\cite{heller2005lazy} searches are wait-free and updates use an optimistic locking pattern. If HP is applied to LL05 it is possible that a thread could repeatedly fail to acquire a HP on a node that is marked but not yet unlinked (because the thread that marked it is stalled or crashed). This breaks wait freedom for searches. 

DGT2015\cite{david2015asynchronized}, on the other hand, traverses the nodes in a synchronization-free manner (as in the lazy list) and uses version numbers and ticket locks to perform updates. However, since DGT15 doesn't use marking, it is not clear how HPs could be acquired safely.

In HJ12~\cite{howley2012non}, hazard pointer acquisition could fail indefinitely if a thread that marks a node fails before unlinking it. More specifically, threads that would try to help this failed thread to finish unlinking the node would need to acquire an HP on the marked node, and it is not clear how this HP could be acquired (without acquiring HPs to all nodes on the search path from the root). This could prevent all threads from making progress. Such issues were described in \cite{brown2015reclaiming}. Similarly, in HL17~\cite{he2017deletion} and the concurrent interpolation search tree of Brown et~al.~\cite{brown2020non}, it is not clear how a thread trying to acquire an HP on a node could verify that it is still reachable from the root. 

BCCO10\cite{bronson2010practical} uses lazy deletion in the sense that to delete a node with 2 children it converts it to a routing node (treating this as if the AVL tree is an external tree) and this routing node is deleted lazily at the time of next rebalancing step that involves it. Thus, it appears that we can use HP as one can leverage version based validation to know when a node is definitely reachable and if such a validation fails one can simply restart from the root. Note this doesn't theoretically impact progress guarantee of searches (unlike lazylist or DGT15) as they are already blocking due to hand over hand optimistic validation. However, in practice, HPs would necessitate restarts from the root when validation fails, whereas BCCO nominally attempts to continue traversal from the parent node in this case.
The logical ordering tree of DVY14\cite{drachsler2014practical} is based on BCCO10, but DVY14 does not use version numbers, so a search has no way to tell whether a node is currently in the tree. Thus, it is not clear how one could use HPs in DVY14.

\subsubsection{Details of  \nbr(+)  applicability}
Reasoning about the applicability of  \nbr(+) is comparatively easy, as one just needs to confirm that every \rdp restarts from root, and that it is possible to reserve all nodes before entering a \wtp (\reqref{reqnbr} and~\ref{req:reqnbr2}, respectively).
In other words, each thread should restart from root after \textit{helping} in search phase and no new pointers to shared nodes should be accessed in a \wtp.


\nbr, at first may appear to be easily applicable to EFRB10~\cite{ellen2010non} as it has lockless searches which either end in an update or helping and operations restart searches from root after helping. However, one helping step could lead to recursively more helping updates which could discover new pointers that could not be known before the first helping update. As a result such pointers can not be reserved before the first helping update (write phase), (violating \reqref{reqnbr2}).
Similarly,  \nbr(+) could not be applied to EFRB14~\cite{ellen2014amortized} as, after helping, an operation restarts from nearby ancestors and not from the root (violating \reqref{reqnbr}). Restarting from a nearby ancestor is crucial to achieve the amortized complexity result in that work.

In HJ12\cite{howley2012non}, the first lock-free internal BST, searches are sometimes required to help an ongoing update to avoid missing a node that is concurrently moved upwards in the tree, a situation that can arise when a search is concurrent with a two-child delete operation.
However, in HJ12, after helping, searches restart from the root (satisfying \nbr{'s} requirement that a \rdp should start from root), and all nodes required to do the helping update can be known beforehand through the descriptor (satisfying \nbr{'s} requirement that all nodes to be accessed in a \wtp should be reserved). Thus,  \nbr(+) can be used in HJ12.

Similarly, data structures designed using the tree update template of Brown et~al.~\cite{brown2017techniques}, for example lock-free chromatic trees, relaxed (a,b)-trees, relaxed AVL trees, and weak AVL trees HL17~\cite{he2017deletion} are compatible with \nbr.
In the template, operations do typically synchronization-free searches to find target node(s) (similar to our standard \rdp), then check whether they need to help (similar to entering \wtp during a search), and in the event that an operation $O$ helps another, $O$ typically restarts from the root.
Helping is implemented using descriptors which contains pointers to all nodes that would be required to execute the helping update, so  \nbr(+) could reserve all node pointers in the descriptor and enter \wtp for the helping update and then restart its search (\rdp) from the root.

The lock-free Patricia Trie (S13) \cite{shafiei2013non} too follows a pattern in which searches are synchronization-free, and updates may perform auxiliary helping using descriptors and then restart from the root. Thus,  \nbr(+) applies to S13 as we can know all nodes to be accessed in \wtp beforehand, and after helping, the operation can be restarted from the root, satisfying both \reqref{reqnbr} and ~\ref{req:reqnbr2}.  

DGT15\cite{david2015asynchronized} has sync-free searches followed by locking predetermined nodes for updates, which is similar to the \rdp followed by single \wtp pattern of LL05\cite{heller2005lazy}. Thus,  \nbr(+) applies to DGT15.

The unbalanced external BST of Natarajan and Mittal (NM14) \cite{natarajan2014fast} too has a pattern where each operation starts with synchronization-free search that returns a $SeekRecord$ object followed by possible modifications to the data structure.
During updates, the operation may possibly help by accessing nodes only pointed by a $SearchRecord$ object and then subsequently restart from root. 
Deletion consists of two modes: injection and cleanup. Both involve data structure modification, therefore, both should be done in \nbr{'s} \wtp. Additionally, it is possible that injection mode may succeed and subsequent cleanup may fail, in that case the operation is required to start from the root, satisfying the requirements for \nbr. 

The unbalanced external BST in DVY14\cite{drachsler2014practical} does a synchronization-free search and then executes updates using locks. But within the update phase it may obtain pointers to new nodes, for example, the successor, children or parent of a node, which may violate the \reqref{reqnbr2} of \nbr.
So, in order to apply  \nbr(+) to DVY14, one must modify DVY14 to perform all of the aforementioned reads of new node pointers in the update phase before the first lock is acquired, then validate after lock acquisitions that the values of those reads have not changed (and restart if validation fails). Note,  \nbr(+) would not work with the balanced variant of the DVY14 tree, which does bottom-up rebalancing without restarting from the root between rotations.

BCCO~\cite{bronson2010practical} Uses optimistic concurrency control (OCC) techniques to avoid locking nodes in searches as much as possible but occasionally locks and immediately unlocks a node as part of traversal, which suggests  \nbr(+)  cannot be used.
However, as described in~\cite{arbel2018getting}, this locking is an optional part of the BCCO algorithm (intended to improve fairness under heavy contention) and can simply be removed.
Unfortunately, this algorithm performs recursive bottom-up rebalancing without restarting from the root between rotations.
To use \nbr, one would need to restart from the root after each rotation, which would be a substantial algorithmic change.

In RM15~\cite{ramachandran2015fast}, an insert operation, does a sync-free search followed by a CAS on an appropriate child pointer. If the CAS fails, the pointer is re-read to find a descriptor, which helps complete the conflicting operation and the operation restarts from the root.
For safe application of \nbr, we must reserve both children of node, and if either child is actually a descriptor, we must reserve all pointers in the descriptor. At this point, if we see a descriptor, we might as well help it before attempting the CAS (if the CAS would be doomed to fail anyway). Unfortunately, this changes the progress argument (although we don't think it actually changes the progress property).
Delete in RM15 are the real problem for applying \nbr. after the inject part, the cleanup part requires obtaining and dereferencing new pointers without restarting from the root. modifying this algorithm to work with  \nbr(+) would require sweeping changes.


\subsection{Problem with application of HE and IBR}
In this section, we discuss why HE and IBR do not apply to data structures where a sequence of unlinked nodes could be traversed. We will use a lazylist~\cite{heller2005lazy} as an example.

\subsubsection{Problem with HE}
HE maintains a timestamp representing a global epoch (\textit{ge}) which is incremented occasionally. Each node is assigned the current \textit{ge} when it is first allocated and when it is subsequently retired:
These timestamps are stored in the node's \textit{birth\_epoch} and \textit{retire\_epoch} fields, respectively.
Each access to a new node follows three steps: 1) reserves the current \textit{ge} at a \textit{single writer multi reader} (SWMR) memory location, 2) creates a local reference to the node, and 3) validates that the reserved \textit{ge} has not changed since step 1. If the validation fails, the steps are continuously retried unless the validation succeeds, i.e. until the \textit{ge} reserved before the read in step 2 is the same as the \textit{ge} after the read. This ensures that a valid (reachable) node was reserved.
To free a node, a thread compares the node's \textit{birth\_epoch} and \textit{retire\_epoch} with the reserved time stamps of all threads. 
If any of the reserved epochs overlap with the interval between the birth\_epoch (be) and the retire\_epoch (re), then the node is not safe to free. Otherwise, the node is freed.
\\
\\
Now, we will trace a problematic execution in a lazylist. Consider a lazylist, $n1 \longrightarrow n2 \longrightarrow n3 \longrightarrow n4 \longrightarrow  n5$. 
Let some threads delete nodes n2, n3 and n4 in that order, such that the final list is $n1\longrightarrow n5$, and the unlinked nodes form a sequence $n2\longrightarrow n3 \longrightarrow n4$ and finally, the unlinked node n4 points to the linked node n5. Note, in such data structures it is possible to reach a correct node by following a sequence of unlinked nodes. Also, for lazylist only two SWMR slots per thread are needed to reserve nodes currently being accessed, similar to hazard pointers.

Assume, a thread T2 retires n2, n3 and n4, and each node has $be=re=0$. We will annotate the nodes with timestamps, as appropriate. For example, n2, n3 and n4 have \textit{be} and \textit{re} 0, so we denote them n2(be=0, re=0), n3(be=0, re=0), n4(be=0, re=0).
\begin{enumerate}
    \item Say, T1 has a private reference to n2 (\textit{ge}=0), where n2 (\textit{ge}=0) means the reference was acquired when \textit{ge} was 0 by reserving it. Now, T1 sleeps.
    \item Another thread T2 retires n2($be=re=0$), n3($be=re=0$) and n4($be=re=0$)
    \item Now, \textit{ge} changes to 1 and then to 2.
    \item T1 wakes up and creates a local reference $l_2 \gets n2.next$. Since the latest epoch is 2, therefore, it reserves epoch 2 at reservation slot 1.
    \item T1 again accesses $l_3 \gets n3.next$, and reserves epoch 2 at reservation slot 2.
    \item Now, let us say T2 reclaims n4. Since for n4, $be=re=0$, and the reserved epoch is 2, there is no reservation conflict during the scan. Thus, n4 is freed.
    \item T1 dereferences it local reference $l_3$ to node n4. Since, n4 has been freed already, T1 will segfault.
\end{enumerate}
\subsubsection{Problem with IBR}
IBR, similar to HE, uses per node \textit{be} and \textit{re} timestamps along with two reservations per thread to determine whether a node can safely be reclaimed.
The two reservations correspond to the \textit{ge} observed when starting an operation and the \textit{ge} observed at the most recent access to a node during an operation.
Thus the reservation is in form of an interval of time stamps. If any node's \textit{be} and \textit{re} overlaps with the reserved interval, then the node is said to be unsafe for freeing. The two ends of the interval are stored in two per thread SWMR locations, \textit{LR} (for lower end of the reservation interval) and \textit{UR} (for upper end of the reservation interval), respectively.
\\
\\
To understand why IBR does not apply to a lazylist, let's use the lazylist example again. Note, n2, n3 and n4 formed a sequence of unlinked nodes in the lazylist example. 
Let a thread T2 unlinks  n2 ($be=0, re=0$), n3($be=2, re=2$), and n4($be=4, re=4$), in that order. Let's assume the current \textit{ge=4}. But, before  n2 was unlinked a thread T1 already had a private reference to n2 with \textit{LR=0} and \textit{UR=0} (reserved interval of (0, 0)) and slept. Now let's say T2 frees n3 since its $be=re=2$ doesn't overlap with T1's reserved interval of (0,0). Next, if T1 wakes up and tries to access $n3 \gets n2.next$ it reserves the interval (0,4) and returns a reference to a now freed node n3. This leads to segfault if n2 is dereferenced. 


\begin{figure}
\begin{minipage}{.49\textwidth}
\begin{lstlisting}[basicstyle=\ttfamily,frame=tlrb,morekeywords={*, recl_start_op,recl_end_op, recl_retire,...},numbers=none, frame=L]{Name}
void OP_DEBRA()
{
|\colorbox{light-gray}{recl\_start\_op}|
RETRY:
    pred=head; 
    curr=pred.next;
    while (key |$\leq$| curr.key) {
        pred=curr;
        curr=cur.next;
    }
    
    if (key == curr.key) {
        return false;
    }
    
    lock(pred);
    lock(curr);
    if (!validate()) {
        unlock(pred); 
        unlock(curr);
        goto RETRY;
    }
    
    do update
    unlock(pred); 
    unlock(curr);
|\colorbox{light-gray}{recl\_end\_op \hspace{2mm}}|
}
\end{lstlisting}
\subcaption{DEBRA}
\label{fig:usbdebra}
\end{minipage}\hfill
\begin{minipage}{.49\textwidth}
\begin{lstlisting}[basicstyle=\ttfamily,frame=tlrb,morekeywords={*, start_op,end_op, recl_retire, upgrade_to_write_phase, save_for_write_phase,...},numbers=none, frame=L]{Name}
void OP_NBR+()
{
RETRY:

|\colorbox{light-gray}{CHECKPOINT()};|
|\colorbox{light-gray}{begin\rdp()};| //clears any reservations
    pred = head; 
    curr = pred.next;
    while(key |$\leq$| curr.key) {
        pred=curr;
        curr=cur.next;
    }

|\colorbox{light-gray}{end\rdp(pred, curr)};| //reserve pred and curr
    if (key == curr.key) {
        return false;
    }
    
    lock(pred); lock(curr);
    if (!validate()) {
        unlock(pred);
        unlock(curr);
        goto RETRY;
    }
    
    do update
    unlock(pred); 
    unlock(curr);
}
\end{lstlisting}
\subcaption{\nbr(+)}
\label{fig:usbnbr}
\end{minipage}
\caption{Complexity of using \debra and  \nbr(+) on a lazy list.  \nbr(+) is moderate in complexity.}
\label{fig:usb_debranbr}
\end{figure}

\begin{figure}
\begin{minipage}{.49\textwidth}
\begin{lstlisting}[basicstyle=\ttfamily,frame=tlrb,morekeywords={*, start_op,end_op, recl_retire, upgrade_to_write_phase, save_for_write_phase,...},numbers=none, frame=L]{Name}
void OP_NBR+()
{
RETRY:
|\colorbox{light-gray}{CHECKPOINT()};|
|\colorbox{light-gray}{begin\rdp()};| //clears any reservations
    pred = head; 
    curr = pred.next;
    while(key |$\leq$| curr.key) {
        pred=curr;
        curr=cur.next;
    }

|\colorbox{light-gray}{end\rdp(pred, curr)};| //reserve pred and curr
    if (key == curr.key) {
        return false;
    }
    
    lock(pred);
    lock(curr);
    if (!validate()) {
        unlock(pred); 
        unlock(curr);
        goto RETRY;
    }
    
    do update
    unlock(pred);
    unlock(curr);
}
\end{lstlisting}
\subcaption{\nbr(+)}
\label{fig:usbnbrp}
\end{minipage}\hfill
\begin{minipage}{.49\textwidth}
\begin{lstlisting}[basicstyle=\ttfamily,frame=tlrb,morekeywords={*, recl_start_op,recl_end_op, recl_retire, upgrade_to_write_phase, save_for_write_phase,...},numbers=none, frame=L]{Name}
void OP_HP()
{
RETRY:
    pred = head;
    curr = pred.next;
    |\colorbox{light-gray}{protect(curr) RETRY on fail}|;
    while (key |$\leq$| curr.key) {
        |\colorbox{light-gray}{unprotect(pred)}|;
        pred=curr; 
        curr=cur.next;
        |\colorbox{light-gray}{protect(curr) RETRY on fail}|;
    }
    if (key == curr.key) {
        |\colorbox{light-gray}{unprotect(pred)};| |\colorbox{light-gray}{unprotect(curr)}|;
        return false;
    }
    lock(pred); 
    lock(curr);
    if (!validate()) {
        |\colorbox{light-gray}{unprotect(pred)};| |\colorbox{light-gray}{unprotect(curr)}|;
        unlock(pred);
        unlock(curr);
        goto RETRY;
    }
    do update
    |\colorbox{light-gray}{unprotect(pred)};|
    |\colorbox{light-gray}{unprotect(curr)}|;
    unlock(pred);
    unlock(curr);
}
\end{lstlisting}
\subcaption{HP}
\label{fig:usbhp}
\end{minipage}
\caption{Complexity of using  \nbr(+) and \hp on a lazy list. \debra $<<$  \nbr(+) $<<$ \hp. HP is more complex.}
\label{fig:usb_nbrhp}
\end{figure}

\newpage
\section{Usability}
\label{sec:usb}

\figref{usb_debranbr} and \figref{usb_nbrhp} compares the difficulty of using \hp,  \nbr(+) and \debra in the insert operation of the lazy list of Heller et~al.~\cite{heller2005lazy}.
As \figref{usbhp} demonstrates, \hp is cumbersome to use because it requires a programmer to protect every record by \textit{announcing} hazard pointers, using a store/load fence or $xchg$ instruction to ensure that each announcement is visible in a timely manner by other threads, validating that the announced record is still safe before dereferencing it, and restarting if validation fails.
Programmers also need to unprotect records that they will no longer dereference, further increasing the need for intrusive code changes. 
On the contrary, applying  \nbr(+) to a data structure operation is, intuitively, similar to performing two-phased locking, in the sense that the primary difficulty revolves around identifying where the \wtp should begin, and which records it will access.

Recall the requirements discussed in \secref{nbrreq}: the programmer needs to invoke \func{CHECKPOINT}{} immediately followed by \func{begin\rdp} before the operation accesses its first shared record, in this example, at the start of the traversal for target records. 
Then s/he must invoke \func{end\rdp}, which marks the start of the write phase and reserves the nodes that will be accessed within the write phase, before modifying any shared records. The programmer passes the nodes required to be reserved as parameters to \func{end\rdp}. 
In this example, the write-phase begins just before the lock acquisition on pred.
If there are no modifications to be performed in an operation, for example, as in the \func{contains} operation of the lazy-list, then the programmer can simply invoke \func{end\rdp} before returning from the operation.

\debra is simplest as it requires programmers to invoke just two functions corresponding to the start and the end of a data structure operation 
(\figref{usbdebra}). 

In terms of programmer effort,  \nbr(+) finds a middleground between DEBRA and HPs. 
Although  \nbr(+) is slightly more involved than DEBRA, we believe that the benefits due to \nbr's bounded garbage property and better performance outweigh the extra effort of identifying which shared records will be modified by the \wtp and where in the code to invoke \func{end\rdp}. 

Just to give readers a quantitative view of the amount of programming effort needed to use HP and \nbr(+) we measured number of extra reclamation related lines of code needed to be written in our implementation of \func{insert(), delete() and contains()} methods for the lazylist and DGT. We observed that  \nbr(+) required only 10 extra lines of code in comparison to 30 extra lines of code needed to use HP.

As mentioned earlier, in \nbr, before a thread enters a \wtp, it must reserve all the records that will be accessed in the \wtp.
In some data structures it might not be possible to determine \textit{precisely} which records \textit{will} be accessed in the \wtp.
For example, in a tree, an operation may decide \textit{during} the write phase whether to modify the left or right child pointer. 
To apply  \nbr(+) in such a tree, one can simply reserve \textit{both} pointers. 
(Nevertheless there may be some data structures where it is infeasible to reserve \textit{all} of the records that might be accessed in a \wtp.)


\ignore{
    Not the complexity of interfaces but also comparison of per read overhead, per op overhead would be nice.
    Debra + recovery code to provide speed and robustness, nbr that too with out requiring user to provide complex recovery code. 
    
    IBR requires per node meta data in form of birth and retire epochs nbr doesnt.
    HP per read fence and validation making ds complex still slow.
    points from literature to call out performance drawbacks in state of art to and some lack robustness while some lack speed... etc..

    [Herlihy:]
    Herlihy first proposed transformation of sequential data structures into lf/wf form which are inspired by the optimisitic concurrency control mechanism of having a sync-free phase where operation read desired records and then have a validation based update phase. Such transformation is also referred as normalised form by petrank etal.
}

 
\section{Rules to correctly use sigsetjmp and siglongjmp}
\label{sec:nbrrules}

The subroutines \ssj~and \slj~help to perform non-local gotos to alter the flow of control within C/C++ programs. 
The two subroutines are used in pairs. Upon invocation, \ssj~saves the state of registers that includes stack pointer, frame pointer, and program counter into a buffer of type \var{jump\_buf}, a subsequent call to \slj, at some later time, resets the registers to the saved values. In other words, when a \slj~ is called, the program behaves as if it returned from its most recent call to \ssj, which saved the content of registers. 
Any modifications to existing variables between the \ssj~and \slj~may or may not be preserved.
Similarly, the execution of any system call could leave the program in an undefined state. More specifically, global variables will retain their values, and stack variables or variables in shared memory may or may not. We try to codify some rules below that should be sufficient (but not always necessary) to ensure that changes made between \ssj~and  a subsequent \slj~call are not accidentally left in an undefined state~\cite{page2014signals, unixprgsigs, ieeestdsig}.


\begin{enumerate}
    \item Do not modify any global variable.
    \item Do not modify any thread-local variables.
    \item Do not modify stack-allocated variables that have been defined prior to the call to \ssj.
    \item Do not perform any heap allocation or de-allocation.
    \item Do not perform I/O including writing to files, handling network packets, or syscalls.
    \item Do not return from a procedure that calls \ssj~ before the call to \slj.  
\end{enumerate}

To understand the pitfalls of not adhering to the above rules, we discuss the following scenarios.

\subsection{Scenario 1: Memory Leaks}
\begin{lstlisting}[xleftmargin=4em, xrightmargin=2em]
void func_scenario_a()
{
    RETRY: 
    begin_read(); 
    foo_t *tmp_foo = malloc(sizeof(foo_t)); // allocate temporary object
    do_some_reads(tmp_foo);
    free(tmp_foo); // de-allocate temporary object
    begin_write();
    do_some_writes();
} 
\end{lstlisting}
Suppose that the thread is neutralized during a call to \func{do\_some\_reads()} in line 6, then, the allocation on line 5 will be repeated, thus leaking the allocation of the foo\_t object. Rule 4) should prevent these kinds of scenarios.

One can get around this problem by preallocating the needed objects before the \func{sigsetjmp}, and freeing any unused ones at the end.
It is also possible to maintain a thread-local array of preallocated objects, which can be used between \func{sigsetjmp} and \func{siglongjmp}, and freed at the end.

\subsection{Scenario 2: Double freeing}
\begin{lstlisting}[xleftmargin=4em, xrightmargin=2em]
void func_scenario_b()
{ 
    foo_t *tmp_foo = malloc(sizeof(foo_t)); // allocate temporary object
    RETRY:
    begin_read(); 
    do_some_reads(tmp_foo);
    free(tmp_foo); // de-allocate temporary object
    do_some_more_reads(); 
    begin_write(); 
    do_some_writes(); 
}
\end{lstlisting}

Suppose that the thread is neutralized during \func{do\_some\_more\_reads()} in line 8, then, the tmp\_foo object will have been de-allocated. When the loop retries the read phase, the \func{free(tmp\_foo)} in line 7 will be re-executed, causing a double-free and possibly a crash of the application or data corruption. Rule 4) should prevent this kind of scenario.

Scenario 1 \& 2 together also mean that calling syscalls, file I/O and network I/O will also cause errors.

\subsection{Scenario 3: Update to thread local variables with higher scope}

\begin{lstlisting}[xleftmargin=4em, xrightmargin=2em]
thread_local int tls_a = 0;
thread_local int tls_b = 0;
void func_scenario_c()
{ 
    tls_a++; 
    RETRY: 
    begin_read(); 
    tls_b++; 
    do_some_reads(); 
    begin_write(); 
    do_some_writes(); 
    assert(tls_a == tls_b);
}
\end{lstlisting}
Suppose that the thread is neutralized during \func{do\_some\_reads()} in line 9, then, the increment is re-executed for \var{tls\_b} but not for \var{tls\_a}, causing application invariant to assert in line 12. Rule 1 \& 2) should prevent these kinds of scenarios.
However, note that such modifications can still be done, if one is just using a thread local variables to compute statistics, and one doesn't mind them being incremented repeatedly due to repeated \func{setlongjmp} calls. In such cases, approximate values are usually acceptable.

\subsection{Scenario 4: Updates to stack variables with higher scope}

\begin{lstlisting}[xleftmargin=4em, xrightmargin=2em]
void func_scenario_d()
{ 
    int stack_a = 0; int stack_b = 0; 
    stack_a++; 
    RETRY: 
    begin_read(); 
    stack_b++; 
    do_some_reads(); 
    begin_write(); 
    do_some_writes(); 
    assert(stack_a == stack_b); 
}
\end{lstlisting}

Suppose that the thread is neutralized during \func{do\_some\_reads()} in line 9, then, the increment is re-executed for \var{stack\_b} but not for \var{stack\_a}, causing application invariant to assert in line 12. Rule 3) should prevent these kinds of scenarios.
\section{Summary}
\label{sec:conclusion}

The chapter classifies data structures into three categories based on their compatibility with NBR: compatible, semi-compatible, and incompatible.

\begin{itemize}
    \item Compatible Data Structures: These structures~\cite{heller2005lazy, david2015asynchronized, harris2001pragmatic, brown2017techniques,howley2012non,shafiei2013non,he2017deletion, srivastava2022elimination, natarajan2014fast} can directly leverage NBR. The chapter details two types of compatible data structures and how NBR applies to them. Examples include optimistic data structures like lazy lists and lock-free lists.
    \item Semi-compatible Data Structures: These structures~\cite{drachsler2014practical, brown2020non, ramachandran2015castle, michael2004hazard} require modifications to work with NBR, potentially affecting performance or progress guarantees. The chapter discusses the trade-offs involved in adapting such structures.
    \item Incompatible Data Structures: Certain data structures~\cite{bronson2010practical, ellen2010non, ellen2014amortized, ramachandran2015fast, brown2020non}, due to their inherent operations, are not suitable for NBR. Examples include structures that involve rotations or rebuild subtrees without restarting from the root node.
\end{itemize}

In addition, we survey a list of eighteen popular safe memory reclamation techniques and discuss their applicability to NBR(+). We also compare the applicability of NBR(+) and several other state of the art safe memory reclamation algorithms to these data structures.  Furthermore, we discussed the relative complexity of using NBR(+), HP, and DEBRA with an example data structure code, concluding that NBR(+) is easier to use than HP and moderately complex to use than DEBRA.

With the end of this chapter, we end our discussion on the neutralization paradigm based safe memory reclamation algorithms: \nbr and \nbrp we presented in \chapref{chapnbr} and \chapref{chapnbrp}, respectively. We started with an objective to design a safe memory reclamation algorithms which satisfies five desirable properties together. This led us to propose the neutralization paradigm for safe memory reclamation, which yielded two algorithms, NBR and NBR+. 
In \chapref{chapnbr} and \chapref{chapnbrp} we showed that these algorithms are fast, consistent, and bound garbage. In this chapter, we demonstrated these algorithms are widely applicable and easy to use, thus concluding that our neutralization paradigm based safe memory reclamation algorithms satisfy the five desirable properties simultaneously.

\chapter{Reactive Synchronization Paradigm}
\label{chap:chaprsp}



Safe memory reclamation techniques that utilize per-read reservations, such as hazard pointers and hazard eras, often face significant overhead when traversing concurrent data structures. This is primarily due to the need to announce a reservation and enforce appropriate ordering before each read by making it globally visible using fences. In real-world read-intensive workloads, this overhead is amplified because, even if relatively little memory reclamation actually occurs, the full overhead of reserving records before use is still incurred while traversing data structures.

In this chapter, we propose a reactive synchronization paradigm by combining POSIX signals and delayed reclamation, introducing a publish-on-ping algorithm. This method eliminates the need to make reservations globally visible before use. Instead, threads privately track which records they are accessing and share this information on demand with threads that intend to reclaim memory. 
We present a family of publish-on-ping algorithms. Specifically, 
we apply the publish-on-ping algorithm to hazard pointers and hazard eras, resulting in HazardPointersPOP and HazardErasPOP algorithms, respectively.
Furthermore, the capability to retain reservations during traversals in data structure operations and publish them on demand facilitates the construction of a variant of hazard pointers (EpochPOP). This variant uses epochs to approach the performance of epoch-based reclamation in the common case where threads are not frequently delayed while retaining the robustness of hazard pointers.

The outline of this chapter is as follows. 
In \secref{popintroduction}, we describe the asymmetric overhead in state-of-the-art reclamation algorithms, motivating the need for the reactive synchronization paradigm discussed in \secref{secrsp}. We present an overview of the implementation, termed publish-on-ping (POP), in \secref{secpop}. In \secref{oralg}, we review hazard pointers, hazard eras, and epoch-based reclamation algorithms to present their respective POP-based variants in \secref{popalgos}. Then, in \secref{popcorr}, we discuss the correctness and robustness of the three POP algorithms. In \secref{popeval}, we evaluate the performance of the POP-based implementation of hazard pointers, hazard eras, and epoch-based reclamation. Finally, in \secref{pop-relatedwork}, we describe additional related work and conclude in \secref{popconclusion}. 

\section{Introduction}
\label{sec:popintroduction}


Several SMR algorithms are \textit{pointer-based} (also known as \textit{reservation-based}).
The most notable example, the celebrated hazard pointer (HP) algorithm~\cite{michael2004hazard}, has received a Dijkstra award, and is set to be included in the C++26 standard~\cite{michael2017hazard}.
Unfortunately, in HPs, 
whenever a thread encounters a new shared node, such as a node in a list, the thread must \textit{reserve} a pointer to the node by (1) storing it in a single-writer multi-reader (SWMR) slot in a shared array, (2) executing memory fence instruction(s) to \textit{publish} this reservation, making it visible to other threads (and preventing instruction reordering), and (3) re-reading a pointer from which this node was encountered to verify that the reserved node is still reachable at some time after the reservation was published.

The need to fence every time a new node is encountered can incur high overhead in linked data structures.
This overhead is even more pronounced in common read-intensive workloads wherein, even though memory reclamation may be infrequent, and relatively lightweight, the overhead it imposes on read-only operations is not. 

A more recent technique, called hazard eras (HEs)~\cite{ramalhete2017brief}, uses monotonically increasing global timestamps and reservation of timestamps, instead of pointers. 
Roughly speaking, the overhead in HEs is more coarse grained, as a timestamp represents many nodes. 
The use of timestamps reduces how often memory fences are needed.
For example, if a thread is about to reserve a timestamp that it has already reserved (because a new node being accessed corresponds to the same timestamp as a previously accessed node), the previously published timestamp can be reused (without additional fencing).
However, the overhead of HEs can still be substantial when the global timestamp changes frequently.

Another alternative is to use a fast epoch based reclamation (EBR) algorithm~\cite{hart2007performance,fraser2004practical,mckenney1998read}.
However, these techniques are not robust: A delayed thread can prevent all threads from reclaiming memory resulting in unbounded memory consumption. 
Subsequent reclamation techniques like NBR~\cite{singh2021nbr, singhTPDS2023NBRP}, VBR~\cite{sheffi2021vbr} and IBR~\cite{wen2018interval} provide a variety of robustness guarantees, but they have other tradeoffs in, for example, ease of integration, applicability to data structures, assumptions regarding whether memory pages are returned to the operating system, and so on.
NBR requires data structure operations to have a particular structure, consisting of read- and write-phases, with specific requirements in each phase. 
Optimistic techniques like VBR 
require a type preserving allocator, and cannot allow memory pages to be returned to the operating system (without tricks involving trapping and ignoring segmentation faults).
These techniques cannot be used as 
drop-in replacements for hazard pointers without making sacrifices in ease of use, portability, or applicability to certain data structures. 
Consequently, it is more desirable if we can eliminate this asymmetric overhead without sacrificing the desirable traits of the original safe memory reclamation algorithms.  

\subsubsection{Goal}
\textit{``To publish the reservations (references in hazard pointers or epochs in hazard eras) on demand to reclaimers to boost the performance of the state-of-the-art concurrent reclamation techniques"}

\section{Reactive Synchronization}
\label{sec:secrsp}

The key to safety from use-after-free errors in most pointer based techniques is that threads traversing a data structure eagerly reserve and publish (making the reservations visible to all threads) before accessing nodes during traversals, and threads reclaiming nodes must first scan all the reservation to avoid freeing reserved nodes.

It is highly pessimistic to eagerly publish reservations (and fence) 
just because there \textit{might be} some concurrent reclaimer thread about to free some nodes.
This is especially true considering that the majority of the time, the reserved nodes are not the ones being freed.
Even if there is very little (or no) reclamation in a given workload, traversals must pay this cost for every node visited.
For better scalability of the data structure, it is preferable to synchronize reclamation information (e.g., fence to publish reservations) exactly when a reclamation event is about to occur. 


We therefore posit the following: Is it possible for each traversal to \textit{publish} its reservation set exactly at the moment when a reclaimer is about to scan the reservations of all threads? Additionally, can this be achieved without compromising the simplicity and usability of the original algorithms? This strategy aims to remove the uneven overhead that safe memory reclamation algorithms impose on concurrent data structures. We term this method of desigining a safe memory reclamation as the \textit{reactive synchronization paradigm}. In the following section, we discuss the technique to implement such reactive synchronization.

\section{Publish on Ping}
\label{sec:secpop}

On POSIX compliant operating systems, threads can send signals to all threads in the system, and in response to
a signal, other threads execute a signal handler. This implies that a thread can provoke a reaction from other threads whenever needed. We leverage the communication pattern enabled by POSIX signals to allow threads to maintain reservations or epochs without publishing
them and publish the reservations or epochs on demand when signaled. 

In more detail, a thread that wants to reclaim memory sends signals to all other threads, and a thread receiving the signal publishes its reservations (which it had simply been tracking locally until this point), and issues a \textit{single} memory fence.
Once all threads have published their reservations, the reclaimer can scan them and free any nodes that are not reserved.
We name this technique as \textbf{Publish-On-Ping (POP)}.


The neutralization paradigm algorithms NBR(+) discussed in \chapref{chapnbr} and \ref{chap:chapnbrp} also utilized POSIX signals. However, the recipient's response to signals differs between POP and NBR(+). In NBR(+), a thread aiming to reclaim memory sends signals to all other threads, causing them to discard all pointers they hold and \textit{restart} their current traversals (unless they have already made changes to the data structure). In POP, a thread that wants to reclaim memory still sends signals to all other threads, but a thread that receives a signal does not need to restart its operation or alter its control flow. Consequently, POP overcomes a significant limitation in NBR, which required data structure operations to be carried out as a sequence of read and write phases, each read phase beginning its search from an entry point in the data structure. This constraint made it challenging to use the neutralization paradigm with certain helping-based data structures ~\cite{ellen2010non, ellen2014amortized, michael2004hazard}, as described in \secref{appl} of \chapref{chapappuse}.

In the following section, we illustrate how POP can be applied to a number of advanced safe memory reclamation algorithms. This yields a family of efficient and robust algorithms that significantly enhance their original versions, while maintaining their fundamental simplicity. Applying POP to hazard pointers results in the development of an algorithm called \textit{HazardPointerPOP}, and its application to hazard eras leads to the creation of an algorithm called \textit{HazardEraPOP}.
Perhaps surprisingly, EBR can also be incorporated into \textit{HazardPointerPOP}, as a sort of \textit{fast path}, resulting in an algorithm called \textit{EpochPOP} that achieves a performance similar to that of EBR, while retaining the same robustness guarantees.
The original state of the art algorithms are explained first, after which their improved versions based on POP are presented.


\ignore
{
\subsubsection{POP specific Terminology}
\textit{Minimize the terms to highlight publishing term....}
Typically, in \textit{deferred memory reclamation}, every thread has a local list, called \textit{retire list}, to which they add the \textit{retired} nodes until they are safe to be freed.
In the context of reclamation algorithms, a thread with a reference to a shared node in the data structure is termed a \textit{reader}, while a thread deleting an node from the data structure is called a \textit{reclaimer}. Readers can save a pointer or a timestamp, representing a collection of nodes, in their local or at shared memory locations. These are referred to as  \textit{reservations}.
These \textit{reservations} are made globally visible for reclaimers, referred to as \textit{publishing}.
Threads are considered to be in a \textit{quiescent state} between consecutive data structure operations. 
}


\section{Original Reclamation Algorithms}
\label{sec:oralg}
Deferred reclamation techniques exhibit a key characteristic: instead of immediately reclaiming a deleted data structure node, threads append them to their retire lists. This practice ensures that the retired nodes are safe to be freed and amortizes the reclamation overhead across multiple operations. When a retire list reaches a given threshold size, threads attempt to free the nodes, utilizing the underlying synchronization mechanism of the reclamation technique.

The synchronization mechanisms vary between techniques and ensure that the reclaimers only free \textit{safe} nodes and \textit{readers} do not access \textit{unsafe} nodes. For example, in hazard pointers, synchronization requires \textit{ readers} to reserve and publish nodes before accessing them, and \textit{reclaimers} to first scan reservations to identify all the \textit{unsafe} nodes, and subsequently freeing only the \textit{safe} nodes from their retire list.
Similarly, in epoch-based techniques, \textit{reclaimers} must wait until all threads have gone quiescent at least once since the node was retired before freeing it.

In this section, we revisit hazard pointers (HP)~\cite{michael2004hazard}, Hazard Eras ~\cite{ramalhete2017brief}, and a read-copy-update style (RCU)~\cite{hart2007performance} implementation of epoch based reclamation, hereafter, referred to as epoch based reclamation (EBR) unless otherwise specified.
HP is actively utilized in MongoDB and in Meta's folly open source C++ library, while the RCU is part of the Linux OS. 
Both techniques have recently been proposed for addition to the C++ standard library~\cite{michaelKenny2017proposed,michael2017hazard}.

Later we explain our publish-on-ping implementations of these algorithms.

\subsection{Hazard Pointers}
\begin{algorithm}
\small
    \caption{Hazard Pointer algorithm as in the benchmark by Wen, Izraelevitz, Cai, Beadle and Scott~\cite{wen2018interval}.}
    \label{algo:orighp}
    \begin{algorithmic}[1]
        \State \texttt{const reclaimFreq} \Comment{frequency of reclaiming retire list} \label{lin:var-orighpreclaimfreq}
        \State \texttt{thread\_local int tid} \Comment{current thread id}
        \State \texttt{list<T*> retireList [NTHREAD]} \label{lin:var-orighpretirelist}
        \State \texttt{atomic<T*> sharedReservations [NTHREAD][MAX\_HP]} \label{lin:var-orighpsharedreservations}  

        \Procedure{\texttt{T*} read}{\texttt{atomic<T*> \&ptrAddr, int slot}} \label{lin:orighp-proc-read}
            \Repeat
                \State \texttt{T* readPtr $\gets$ *ptrAddr}
                \State \texttt{sharedReservations[tid][slot] $\gets$ readPtr} \LComment{store load fence.}
            \Until{\texttt{readPtr $=$ *ptrAddr}} \label{lin:orighp-loopexit}
            \State \Return readPtr \label{lin:orighp-return}
        \EndProcedure
        \Statex
        \Procedure{retire}{\texttt{T* ptr}} \label{lin:proc-orighpretire}
            \State \texttt{myRetireList $\gets$ retireList[tid]}
            \State \texttt{myRetireList.append(ptr)}
            \If {\texttt{myRetireList.size() $\geq$ reclaimFreq}}
                \State {\Call{reclaimHPFreeable}{myRetireList}}
            \EndIf
        \EndProcedure    
    \Statex
        \Procedure{clear}{ } \label{lin:proc-orighpclearall}
            \For{$slot=0, \dots, MAX\_HP$}
                    \State \texttt{sharedReservations[tid][slot] $\gets$ NULL}\label{lin:orighpsetnull}
            \EndFor        
        \EndProcedure        
        \Procedure{reclaimHPFreeable}{myRetireList} \label{lin:proc-origreclaimhpfreeable}
                \LComment{collect all published reservations}
                \State \texttt{set<T*> collectedReservations $\gets$ \{\}} \label{lin:orighpcollectstart}
                    \ForAll {\texttt{<tid, slot> $\in$ sharedReservations[tid]}}
                        \State {\texttt{objPtr $\gets$ sharedReservations[tid][slot]}}
                        \State {\texttt{collectedReservations.insert(objPtr)}}
                    \EndFor \label{lin:orighpcollectend}
                \LComment{free all nodes not reserved}
                \ForAll {\texttt{objPtr $\in$ myRetireList}} \label{lin:orighpreclaimstart}
                    \If{\texttt{objPtr $\notin$ collectedReservations}}
                        \State {\texttt{free(objPtr)}}
                    \EndIf
                \EndFor \label{lin:orighpreclaimend}
        \EndProcedure

\end{algorithmic}
\end{algorithm}
\subsubsection{The Overview}
The key principle underlying the operations of Hazard Pointers (HP) involves a contract between \textit{readers} and \textit{reclaimers}. \textit{Readers} \textit{reserve} and \textit{publish} pointers to nodes currently being accessed in single-writer multi-reader (SWMR) locations for \textit{reclaimers}. 
\textit{Reclaimers}, in turn, ensure that they scan all SWMR locations to collect all reserved nodes and only free those whose reservations have not been \textit{published}. This ensures safety from use-after-free errors.

Crucial to publishing reservations in HP is that every read of a shared node pointer should execute a memory fence. Specifically, \textit{readers} should follow these steps to timely publish their reservation to a pointer: 1) save the pointer to the node being read, 2) execute a memory fence, 3) validate that the pointer saved in step 1 is still \textit{reachable}, if not, retry or abort the operation.
This ensures that the pointer was \textit{reachable} at the time it was reserved (in step 1) \textendash a crucial condition for the correct application of HPs~\cite{michael2004hazard}.
Without the memory fence, the reservation of the pointer in step 1 could be reordered after the validation of reachability step (step 3). This could lead to reservation of a node that has already been deleted, resulting in unsafe accesses in the future.

\subsubsection{Programmer's view of HP}
For programmers, a standard hazard pointers reclamation interface includes the following three main functions: 

(1) \Call{read}{ }: Used for every read of a new data structure node. \textit{Readers} use \Call{read}{} to ensure that threads \textit{reserve} and \textit{publish} the node in a shared location.
(2) \Call{clear}{ }: Used to remove reservations when threads exit the operation. 
(3) \Call{retire}{ }: Used during update operations, \textit{reclaimers} employ \Call{retire}{} to add the \textit{deleted} nodes to their retire lists. During the \Call{retire}{}, \textit{reclaimers} collect every other threads's published reservations. Subsequently, in an iterative fashion, they free those nodes from their retire list that were not present in their collected reservations.

\algoref{orighp} shows the psuedocode for one implementation of hazard pointers similar to that found in the benchmark used in~\cite{wen2018interval}. Threads append retired nodes to their \texttt{retireList} and use slots in  \texttt{sharedReservations} to reserve nodes currently being used. \texttt{MAX\_HP} is the maximum number of nodes a thread can reserve at a time. Typically, in concurrent data structure only a limited set of nodes are used at a given time so it suffices to have a constant number of slots to reserve.

The \Call{read}{} function takes the address of an atomic variable (usually the next field of an already reserved node) from which a new pointer to the next node has to be read (\lineref{orighp-proc-read}). It within a loop performs the above mentioned steps, i.e. reads the pointer from the address, atomically writes it to a reservation slot in its \textit{sharedReservations} array, then validates that the address still has the pointer that has been reserved, ensuring it was not concurrently changed. These steps repeat until the validation succeeds, following which the pointer to the reserved node is returned to the data structure operation. 

Within its \Call{retire}{} function (\lineref{proc-orighpretire}), threads append retired nodes to their \textit{retireList} and when it exceeds its size beyond \textit{reclaimFreq} it frees the nodes that are not reserved. Specifically, invoke the \Call{RECLAIMHPFREEABLE}{} function that collects all the reserved nodes by scanning all thread's reservation slots in \textit{sharedReservations} array (\lineref{orighpcollectstart} - \lineref{orighpcollectend}) and then frees all nodes that are not reserved (\lineref{orighpreclaimstart} - \lineref{orighpreclaimend}).

After a thread has completed its data structure operation, it can clear its reservation using \Call{clear}{} (\lineref{proc-orighpclearall}).

\subsubsection{The Problem}
The per read memory fences incur high overhead, leading to poor scaling of data structures. Our Perf tool analysis, on a Hazard-Michael list of size 100 nodes where 128 threads execute 50\% inserts and 50\% delete operations showed that searches approximately spend $\approx$ 50\% of CPU cycles on reading HPs, while searches in a leaky implementation of the same list only spend $approx$ 15\% of CPU cycles in reading the HPs.  

\subsection{Hazard Eras}
\subsubsection{The Overview}

Unlike hazard pointers, Hazard Eras~\cite{ramalhete2017brief} maintain a global monotonically increasing epoch variable. Each thread reserves the current value of this epoch when accessing a node, rather than reserving the node itself. Additionally, each node maintains its birth epoch and retire epoch, representing its lifespan during which it is reachable in a data structure. The key idea is that, before freeing a node, a thread compares the node's lifespan to the reserved epochs. If no thread has reserved an epoch that intersects with the node's birth and retire epochs, then no thread could hold a hazardous reference to the node, making it safe to free.

\algoref{orighe} shows the pseudocode of the hazard eras whose interface is similar to the hazard pointers. However, the difference is that hazard eras augments each node with two epoch fields and reserves eras in which a node is accessed instead of pointers to the nodes.

\begin{algorithm}
    \small
    \caption{Hazard Era ~\cite{ramalhete2017brief, wen2018interval}.}
    \label{algo:orighe}
    \begin{algorithmic}[1]
    \State \texttt{const reclaimFreq} \Comment{frequency of reclaiming retire list} 
    \State \texttt{thread\_local int tid} \Comment{current thread id}
    \State \texttt{list<T*> retireList [NTHREAD]} 
        \State \texttt{atomic<int> sharedReservations [NTHREAD][MAX\_HE]} 
        \State \texttt{atomic<int> epoch} \Comment{incremented periodically}
        \State \texttt{int collectedReservations[NTHREAD][MAX\_HE]$\gets$\{\}} 
        \Statex
        \Procedure{\texttt{T*} read}{\texttt{atomic<T*> \&ptrAddr, int slot}} \label{lin:orige-procread}
            \State \texttt{int oldEra $\gets$ sharedReservations[tid][slot]} 
            \While{\textit{True}}
                \State \texttt{T* readPtr $\gets$ *ptrAddr}
                \State \texttt{int newEra $\gets$ epoch}                
                \If{oldEra == newEra}
                    \Return readPtr
                \EndIf
            \State \texttt{sharedReservations[tid][slot] $\gets$ newEra} \LComment{store load fence.}
            \State \texttt{oldEra $\gets$ newEra} 
            \EndWhile
        \EndProcedure
    \Statex
        \Procedure{retire}{\texttt{T* ptr}} \label{lin:proc-origheretire}
            \State \texttt{myRetireList $\gets$ retireList[tid]}
            \State \texttt{myRetireList.append(ptr)}
            \State \texttt{ptr.retireEpoch $\gets$ epoch}
            \If {\texttt{myRetireList.size() $\geq$ reclaimFreq}}
                \State \texttt{epoch.fetch\_and\_add(1)}
                \State {\Call{reclaimHEFreeable}{myRetireList}}\label{lin:rechefreeable}
            \EndIf
        \EndProcedure    
    \Statex
        \Procedure{clear}{ } \label{lin:orighe-proc-clearall}
            \For{$slot=0, \dots, MAX\_HE$}
                    \State \texttt{localReservations[tid][slot] $\gets$ NONE}
            \EndFor        
        \EndProcedure        

    \Statex
        \Procedure{reclaimHEFreeable}{myRetireList} 
                \LComment{collect all published reservations}
                    \ForAll {\texttt{<tid, slot> $\in$ sharedReservations[tid]}} \label{lin:orighecollectstart}
                        \State {\texttt{reservedEra $\gets$ sharedReservations[tid][slot]}}
                        \State {\texttt{collectedReservations[tid][slot] $\gets$ reservedEra}}
                    \EndFor \label{lin:orighecollectend}
                \LComment{free all nodes not reserved}
                \ForAll {\texttt{objPtr $\in$ myRetireList}} \label{lin:orighereclaimstart}
                    \If{\Call{canFree}{objPtr}} \label{lin:orighecanfree}
                        \State {\texttt{free(objPtr)}}
                    \EndIf
                \EndFor \label{lin:orighereclaimend}
        \EndProcedure

\algstore{algnewhe}
\end{algorithmic}
\end{algorithm}

\begin{algorithm}
    \small
\begin{algorithmic} [1]                   
\algrestore{algnewhe}
        \Procedure{canFree}{objPtr}
            \ForAll {\texttt{<tid, slot> $\in$ collectedReservations[tid]}}
                \State {\texttt{reservedEra $\gets$ collectedReservations[tid][slot]}}
                \If{reservedEra $<$ objPtr.birthEra OR reservedEra $>$ objPtr.retireEra OR reservedEra == NONE}\label{lin:orighecanfreecomp}
                    \State \textit{continue}
                \EndIf
                \State{}
                \Return{False}
            \EndFor
            \State{}
            \Return{True}
        \EndProcedure
\end{algorithmic}
\end{algorithm}

The function \Call{read}{} takes the address of an atomic variable from which a new pointer to the next node must be read (\lineref{orige-procread}). Within a loop, it performs the following steps: it loads the previously reserved era, reads the pointer from the address, and compares the previously reserved epoch to the latest value of the epoch. If the epoch has changed, it reserves the latest epoch and attempts to fetch the pointer again. If the epoch has not changed, the latest epoch is already reserved, so it is not necessary to reserve it again, and the pointer to the node is returned. The data structure can safely dereference the node, as it cannot be freed while the epoch is reserved (assuming a retired node cannot be accessed from another node).

Within its \Call{retire}{} function (\lineref{proc-origheretire}), threads append retired nodes to their \textit{retireList}. When the list exceeds \textit{reclaimFreq} in size, it frees the nodes that are not reserved. Specifically, the \Call{ReclaimHEFreeable}{} function is invoked (at \lineref{rechefreeable}), which collects all reserved epochs by scanning all thread reservation slots in the \textit{sharedReservations} array (\lineref{orighecollectstart} - \lineref{orighecollectend}).
Later, at \lineref{orighecanfree}, \Call{canFree}{} is invoked, which checks whether the lifespan of the retired node intersects with any of the reserved epochs. If the lifespan of the node intersects with a reserved epoch, it skips freeing the node because the node might be accessed by the thread that reserved the epoch. Otherwise, no thread can access the node, and it is safely freed (\lineref{orighereclaimend}). 

After a thread has completed its data structure operation, it can clear its reservation using \Call{clear}{} (\lineref{orighe-proc-clearall}).

\subsubsection{The Problem}

On the reader's side, a fence corresponding to the write for reserving the epoch is incurred only if the global epoch changes during the pointer read to the node. However, the frequency of memory fences is inversely proportional to the frequency of global epoch updates, which tends to be higher with high contention or update rates. Reducing the rate of change of the global epoch decreases the frequency of memory fences but increases the memory footprint, and vice versa.

Hazard Eras may have a larger memory footprint than hazard pointers because an epoch may protect a set of nodes proportional to the size of the data structure. This approach loses the precision of the hazard pointers in bounding the number of unreclaimed nodes at a given time. Additionally, Hazard Eras require a change in the node's memory layout.

\subsection{Epoch Based Reclamation}
\label{sec:rcu}

\begin{algorithm}
    \small
\caption{Epoch Based Reclamation~\cite{wen2018interval}}
\label{algo:rcuebr}
    \begin{algorithmic}[1]
        \State \texttt{const reclaimFreq, epochFreq} \label{lin:var-reclaimFreq} \label{lin:var-epochFreq} 
        \State \texttt{atomic<int> epoch} \Comment{incremented periodically} \label{lin:epoch}
        \State \texttt{thread\_local int counter, tid} \Comment{tid is current thread id}
    \State \texttt{atomic<int> reservedEpoch[NTHREAD]} \label{lin:var-reservedEpoch}
    \State \texttt{list<T*> retireList [NTHREAD]} \label{lin:var-retireList}

    \Statex    
    \Procedure{startOp}{ } \label{lin:proc-startOp}
        \If{\texttt{0 == ++counter \% epochFreq}}
            \State {\texttt{epoch.fetch\_add(1)}} \label{lin:incrementEpoch}
        \EndIf
        \State {\texttt{reservedEpoch[tid] $\gets$ epoch}} \label{lin:reservEpoch}
    \EndProcedure        

    \Statex
        \Procedure{retire}{\texttt{T* ptr}}
            \State \texttt{myRetireList $\gets$ retireList[tid]}
            \State \texttt{myRetireList.append(ptr)}
            \State \texttt{ptr.retireEpoch $\gets$ epoch}
            \If {\texttt{$0 ==$ myRetireList.size() $\%$ reclaimFreq}}
                \State {\Call{reclaimEpochFreeable}{\texttt{myRetireList}}}
            \EndIf
        \EndProcedure    
        
    \Statex
        \Procedure{reclaimEpochFreeable}{\texttt{myRetireList}} \label{lin:proc-recEpochFree}
            \State { minReservedEpoch $\gets$ reservedEpoch.min()}
            \ForAll {\texttt{objPtr $\in$ myRetireList}}
                \If{\texttt{objPtr.retireEpoch} $<$ \texttt{minReservedEpoch}}
                    \State {\texttt{free(objPtr)}}
                \EndIf
            \EndFor
        \EndProcedure

    \Statex
    \Procedure{endOp}{ }
        \State {\texttt{reservedEpoch[tid] $\gets$ MAX}}
    \EndProcedure
    \end{algorithmic}
\end{algorithm}

\subsubsection{The Overview}
Epoch based reclamation (EBR) has multiple variants, such as those proposed by Harris ~\cite{harris2001pragmatic}, Fraser \cite{fraser2004practical}, RCU~\cite{hart2007performance}, and Brown's Debra~\cite{brown2015reclaiming}. The key concept across these algorithms is that threads execute data structure operations in progression of epochs.
\textit{Reclaimers} can free retired nodes in or before a given epoch $e$ once every thread has completed execution in epoch $e$ or earlier and has transitioned to a more recent epoch (greater than $e$). 
Implying that for nodes retired on or before epoch $e$, all threads have gone quiescent at least once, making the nodes \textit{safe} for reclamation. In RCU~\cite{michaelKenny2017proposed} terminology, while performing data structure operations, threads are assumed to be executing a read-side critical section. A \textit{reclaimer} can only free those nodes that are \textit{retired} before the beginning of the oldest read-side critical section.

\algoref{rcuebr}, shows an example implementation.
It maintains a monotonically increasing shared \func{epoch} variable (assuming that it never overflows). It is incremented after a certain number of operations to represent a progression of epochs. Threads utilize a \func{reservedEpoch} array, with one single-writer multi-reader slot for each thread to save and publish the epoch in which they are executing. A thread declares its entry into a read-side critical section by publishing the current value of \func{epoch} in its \func{reservedEpoch} slot. Declares the exit from the previous read-side critical section by publishing a maximum possible value MAX (assuming \func{epoch} can never be the same as this value).

When a thread decides to reclaim its \func{retireList}, it finds the minimum epoch reserved by a thread from the \func{reservedEpoch} array. Then it frees those nodes retired before the minimum reserved epoch. In the implementation, this is identified by comparing the \func{retireEpoch} of the retired nodes that are associated with the nodes at the time it is appended to its \func{retireList}.
Since frequent increments in shared \func{epoch} and freeing of \func{retireList} are overhead, these are amortized over multiple operations to boost performance.

\subsubsection{The Problem}
\label{sec:rcurobustness}
The main concern with EBR is that a thread could get stuck in a long-running data structure operation, leading to a delayed exit from an old read-side critical section due to arbitrary system-level reasons, such as page fault servicing or thread scheduling. These delays might be significant enough to cause the minimum declared epoch to lag far behind the current global \func{epoch}. This situation prevents all threads from freeing their retire lists, resulting in a drastic increase in system memory consumption and possibly leading to out-of-memory errors. This issue is commonly referred to as a lack of \textit{robustness}.

For example, in the given implementation, a thread stuck in a long-running data structure operation, after reserving a small epoch value, will prevent other threads from freeing all nodes retired on and after the reserved epoch, even though all other threads are at much larger, more recent epochs. Consequently, the \func{retireList} of all threads could continuously grow, leading to increasingly higher memory consumption.

\subsection{Summary: The Problem and the Solution}
Programmers face unattractive choices: choose robust but slow hazard pointers; opt for robust and fast reservation algorithms (e.g., \cite{dice2016fast, balmau2016fast}) but tolerate intrusiveness or less portable, hardware-dependent solutions; choose fast but not robust EBR; or opt for hazard eras, which are a middle ground between hazard pointers and EBR, compromising on the tighter upper bound on the memory footprint by trading off the number of memory fences during reads. In practice, hazard eras are still slow in many data structures, as shown in \chapref{chapnbrp}.

As a solution, we introduce versions of these algorithms based on publish-on-ping, which are fast and retain the original traits of these algorithms.

\section{Publish-on-Ping Algorithms}
\label{sec:popalgos}

\subsection{The Overview}
The key aspect of publish-on-ping is that threads can read new pointers and reserve them locally without immediate publishing to \textit{reclaimers}. This eliminates the need for costly memory fences per read. \textit{Reservations} are published only when a thread attempts to reclaim its retire list. In other words, when a thread wants to reclaim its retire list, it signals all threads to publish their local \textit{reservations} to shared locations. The \textit{reclaimer} then scans these shared locations to collect all \textit{reservations} and subsequently frees retired nodes in its retire list that are not reserved by other threads.

The publish-on-ping is implemented using a POSIX signal and a corresponding signal handler. \textit{Reclaimers} use a \func{pthread\_kill} call to signal all other threads in the system (\textbf{ping}). Other threads execute a signal handler to assign their local reservations to corresponding shared locations (\textbf{publish}).

In a nutshell, the publish-on-ping paradigm eliminates the memory fence overhead from the principal traversal path by allowing threads to publish reservations only when demanded by infrequent reclamation events through a simple and neat use of signaling. This elimination of overhead from the read path benefits data structures, especially in read-dominated workloads.

Moreover, the paradigm is equally effective for other related techniques that incur similar per-read memory fences during traversals, such as Hazard Eras (HE). We apply publish-on-ping to HE and observe performance improvements, as demonstrated in \secref{popeval}.

However, an important challenge is ensuring that \textit{reclaimers}, after sending pings to all threads, can provably establish a time when all \textit{reservations} are considered published. This is crucial for safely freeing nodes in the retire lists. We discuss this aspect in detail while describing our example HazardPointerPOP implementation in the next section, and it applies to all the publish-on-ping variants we propose.

\subsection{HazardPointersPOP}
Having discussed the fundamental concept of publish-on-ping, we now apply it to HP to eliminate the need for publishing reservations during traversals or reads of new nodes in data structures. The resulting algorithm is termed Hazard pointers publish-on-Ping (HazardPointersPOP). 

\begin{algorithm}
\small
    \caption{HazardPointerPOP: Hazard Pointer Publish-on-Ping.}
    \label{algo:pophp}
    \begin{algorithmic}[1]
    \State \texttt{const reclaimFreq} \Comment{frequency of reclaiming retire list} \label{lin:var-reclaimfreq}
    \State \texttt{thread\_local int tid} \Comment{current thread id}
    \State \texttt{list<T*> retireList [NTHREAD]} \label{lin:var-retirelist}
    \BeginBox[draw=black, dashed]
        \State \texttt{T* localReservations [NTHREAD][MAX\_HP]} \label{lin:var-localreservations}
        \State \texttt{atomic<T*> sharedReservations [NTHREAD][MAX\_HP]} \label{lin:var-sharedreservations} 
        \State \texttt{atomic<int> publishCounter [NTHREAD]} \label{lin:var-publishcounter} 
        \State \texttt{thread\_local int collectedPublishCounters [NTHREAD]} \label{lin:var-collectedpublishcounters} 
    \EndBox
        \Procedure{\texttt{T*} read}{\texttt{atomic<T*> \&ptrAddr, int slot}} \label{lin:pophp-proc-read}
            \Repeat
                \State \texttt{T* readPtr $\gets$ *ptrAddr}
                \BeginBox[draw=black, dashed]
                \State \texttt{localReservations[tid][slot] $\gets$ readPtr} \LComment{no store load fence needed.}
                \EndBox
            \Until{\texttt{readPtr $=$ *ptrAddr}} \label{lin:pophp-loopexit}
            \State \Return readPtr \label{lin:pophp-return}
        \EndProcedure
    \Statex
        \Procedure{retire}{\texttt{T* ptr}} \label{lin:proc-retire}
            \State \texttt{myRetireList $\gets$ retireList[tid]}
            \State \texttt{myRetireList.append(ptr)}
            \If {\texttt{myRetireList.size() $\geq$ reclaimFreq}}\label{lin:pophpsize}
                    \State \texttt{epoch.fetch\_and\_add(1)}
                \BeginBox[draw=black, dashed]
                    \State {\Call{collectPublishedCounters}{ }} \label{lin:collectpublishedcounters}
                    \State {\Call{\textbf{pingAllToPublish}}{ } } \label{lin:pingalltopublish}
                    \State {\Call{waitForAllPublished}{ }} \label{lin:waitforallpublished}
                \EndBox
                \State {\Call{reclaimHPFreeable}{myRetireList}}
            \EndIf
        \EndProcedure    
    \Statex
        \Procedure{clear}{ } \label{lin:proc-clearall}
            \For{$slot=0, \dots, MAX\_HP$}
                \BeginBox[draw=black, dashed]
                    \State \texttt{localReservations[tid][slot] $\gets$ NULL}\label{lin:setnull}
                    \LComment{no store load fence needed}
                \EndBox
            \EndFor        
        \EndProcedure        
\algstore{algpophp}
\end{algorithmic}
\end{algorithm}

\begin{algorithm} 
\small
\caption{HazardPointerPOP: Continued.}
\label{algo:contd}
\begin{algorithmic} [1]                   
\algrestore{algpophp}
        \Procedure{reclaimHPFreeable}{myRetireList} \label{lin:proc-reclaimhpfreeable}
                \LComment{collect all published reservations}
                \State \texttt{set<T*> collectedReservations $\gets$ \{\}} \label{lin:collectstart}
                    \ForAll {\texttt{<tid, slot> $\in$ sharedReservations[tid]}}
                        \State {\texttt{objPtr $\gets$ sharedReservations[tid][slot]}}
                        \State {\texttt{collectedReservations.insert(objPtr)}}
                    \EndFor \label{lin:collectend}
                \LComment{free all nodes not reserved}
                \ForAll {\texttt{objPtr $\in$ myRetireList}} \label{lin:reclaimstart}
                    \If{\texttt{objPtr $\notin$ collectedReservations}}
                        \State {\texttt{free(objPtr)}}
                    \EndIf
                \EndFor \label{lin:reclaimend}
        \EndProcedure

\BeginBox[draw=black, dashed]
    \Statex
        \Procedure{pingAllToPublish}{ } \label{lin:proc-pingalltopublish}
            \ForAll {\texttt{othertid $\neq$ tid} }
                \State {\texttt{pthread\_kill(othertid, \dots)}}
            \EndFor
        \EndProcedure
        \LComment{signal handler}
        \Procedure{publishReservations}{ } \label{lin:proc-publishreservations}
            \For{$ihp=0, \dots, MAX\_HP$}
                \State \texttt{sharedReservations[tid][ihp] $\gets$ localReservations[tid][ihp]}
            \EndFor
            \BeginBox[draw=blue, dotted]
            \State \texttt{publishCounter[tid] $\gets$ publishCounter[tid]+1} \label{lin:publishcounter}
            \EndBox
        \EndProcedure

    \Statex
        \Procedure{collectPublishedCounters}{ } \label{lin:proc-collectpublishedcounters}
            \ForAll {\texttt{tid}}
                \State {\texttt{collectedPublishCounters[tid] $\gets$ publishCounter[tid]}}
            \EndFor
        \EndProcedure    
    \Statex
        \Procedure{WaitForAllPublished}{ } \label{lin:proc-WaitForAllPublished}
            \LComment{establish all threads have executed signal handler}
            \ForAll {\texttt{othertid $\neq$ tid}} \label{lin:verifstart}
            \Repeat
            \Until{\texttt{publishCounter[othertid] $>$ collectedPublishCounter[othertid]}}
            \EndFor \label{lin:verifend}
        \EndProcedure    
\EndBox
    \end{algorithmic}
\end{algorithm}

\algoref{pophp} and \algoref{contd} show the implementation of the proposed HazardPointersPOP algorithm. From a programmer's perspective, it maintains the same interface as HP, making it a seamless drop-in replacement. However, unlike the eager publishing paradigm in HP, threads in HazardPointersPOP save nodes locally in their own \func{localReservations} array during \Call{read}{}. The \Call{read}{} procedure repeatedly reads the pointer to the node, saves it in a corresponding slot in \func{localReservations}, and then rereads the pointer. Assuming a system with a maximum of \func{NTHREADS} threads, each thread has a fixed number of slots represented by \func{MAX\_HP}. The loop exits only when the pointer remains unchanged on the second read, at which point the pointer is returned.

When a thread is about to go \textit{quiescent}, i.e., while exiting the data structure operation, it resets the local reservations by setting the corresponding slots to NULL (\lineref{setnull}) using \Call{clear}{}. These local reservations are published by writing to a corresponding shared array of single-writer multi-reader slots, called \func{sharedReservations} (\lineref{var-sharedreservations}).

Threads maintain a per-thread list, depicted as \func{retireList} (\lineref{var-retirelist}), to which they append retired nodes with a call to \Call{retire}{} within their update operations. When the list size exceeds a given threshold set in \func{reclaimFreq} (\lineref{pophpsize}), a \textit{reclaimer} invokes \Call{pingAllToPublish}{} to trigger the publishing of local reservations.

In \Call{pingAllToPublish}{} (\lineref{proc-pingalltopublish}), a \textit{reclaimer} uses \func{pthread\_kill()} to ping all threads. The threads receiving these pings execute a signal handler, called \Call{publishReservations}{} (\lineref{proc-publishreservations}). Within the signal handler, other threads write all their local reservations to the shared array. Once all threads complete execution of \Call{publishReservations}{}, the reservations become visible to the \textit{reclaimer}.

After every thread publishes its reservations, the \textit{reclaimer} can invoke \Call{reclaimHPFreeable}{} to free all the retired nodes that are not reserved (\lineref{proc-reclaimhpfreeable}). Specifically, within \Call{reclaimHPFreeable}{}, the \textit{reclaimer} collects all the reservations and frees those that are not in the collected reservations (\lineref{collectstart}-\lineref{reclaimend}).

An important aspect we deliberately delayed discussing for clarity is the challenge of determining when all threads have executed their \Call{publishReservations}{} so that the \textit{reclaimer} can safely proceed to free its \func{retireList} after the call to \Call{pingAllToPublish}{} returns.

To establish such a time, HazardPointerPOP employs the array \func{publishCounter} (\lineref{var-publishcounter}), where each slot is a counter that increases monotonically and belongs to a unique thread. When a thread completes publishing, it increments its slot in \func{publishCounter} (\lineref{publishcounter}). \textit{Reclaimers} observe every thread's \func{publishCounter} value before and after pinging all threads to ensure that every thread has finished publishing by comparing the previously read \func{publishCounter} values with the reread values.

Specifically, \textit{reclaimers} read every thread's \func{publishCounter} value into their thread-local \func{collectedPublishCounters} array using \Call{collectPublishedCounters}{} at \lineref{collectpublishedcounters}. Then, they invoke \Call{pingAllToPublish}{} (\lineref{pingalltopublish}), followed by a call to \Call{waitForAllToPublished}{} (\lineref{waitforallpublished}), wherein \textit{reclaimers} repeatedly reread every thread's \func{publishCounter} value and compare it with the previously observed values. They only exit the loop when all threads have incremented their \func{publishCounter} value at least once since the time the \textit{reclaimer} collected the initial \func{publishCounter} values (\lineref{verifstart} - \lineref{verifend}).

\subsubsection{\Call{WaitForAllPublished}{ } completes in finite time:}
The Publish-on-ping mechanism relies on the assumption that, upon being signaled, all threads finish executing their signal handlers within a finite time.
This is vital so that the \textit{reclaimer}, after signalling all threads, eventually exits its while loop.
In \chapref{chapnbr}, we experimentally established that the waiting time for a signal to be delivered is short
and techniques such as DEBRA+~\cite{brown2015reclaiming}, NBR~\cite{singh2021nbr}, PEBR~\cite{kang2020marriage} rely on a similar assumption. 






Infact, the utilization of POSIX signals gives us access to the underlying scheduler, which we leverage to relax the traditional asynchronous memory model. In the traditional asynchronous memory model, learning the execution state of a thread (i.e., whether it has terminated or is merely delayed) is not possible.
However, with POSIX signals, we can learn if a thread has terminated or is delayed arbitrarily.
Specifically, when a thread sends an IPI, and the receiver has terminated, the sender learns the execution state of the target thread as an error message is returned to it. This information can be used to eliminate the terminated thread from requiring to publish reservations, as a terminated thread cannot access shared memory. 
When a thread is delayed arbitrarily, like due to context switching, modern schedulers are reasonably reliable in ensuring that a signal handler will be executed within a bounded time, as shown in~\chapref{chapnbr}.

\subsubsection{Note on a Useful Property of HazardPointersPOP}
HazardPointersPOP enables threads to privately track reservations using a lightweight \Call{read}{} and publish them on demand with a single fence when required by a \textit{reclaimer}. In essence, even if threads are stuck in a long-running execution, one can ping the stalled thread to learn which node it might currently be accessing. This feature allows us to develop an efficient variant of hazard pointers that approaches the performance of EBR. Specifically, in common cases, threads follow a fast path similar to EBR algorithms and when threads suspect delays, the publish-on-ping mechanism is used to assist a thread who is unable to reclaim.  
In the \secref{secepochpop}, we introduce this variant of hazard pointers that incorporates EBR.

\subsection{HazardErasPOP}

We apply pubish-on-ping to HE that results in the algorithm called HazardErasPOP. 

\begin{algorithm}
\small
    \caption{HazardEraPOP: Hazard Era Publish-on-Ping.}
    \label{algo:pophe}
    \begin{algorithmic}[1]
    \State \texttt{const reclaimFreq} \label{lin:hevar-reclaimfreq}\Comment{frequency of reclaiming retire list} 
    \State \texttt{thread\_local int tid} \Comment{current thread id}
    \State \texttt{list<T*> retireList [NTHREAD]} \label{lin:hevar-retirelist}
        \State \texttt{int localReservations [NTHREAD][MAX\_HE]} 
        \State \texttt{atomic<int> sharedReservations [NTHREAD][MAX\_HE]} \label{lin:hevar-sharedreservations}
        \State \texttt{atomic<int> publishCounter [NTHREAD]} \label{lin:hevar-publishcounter}
        \State \texttt{thread\_local int collectedPublishCounters [NTHREAD]}
        \State \texttt{int collectedReservations[NTHREAD][MAX\_HE]$\gets$\{\}} 

        \Procedure{\texttt{T*} read}{\texttt{atomic<T*> \&ptrAddr, int slot}}
            \State \texttt{int oldEra $\gets$ localReservations[tid][slot]} 
            \While{\textit{True}}
                \State \texttt{T* readPtr $\gets$ *ptrAddr}
                \State \texttt{int newEra $\gets$ epoch}                
                \If{oldEra == newEra}
                    \Return readPtr \label{lin:hevalsucc}
                \EndIf
            \State \texttt{localReservations[tid][slot] $\gets$ newEra} \label{lin:lochereserve}\LComment{no store load fence needed.}
            \State \texttt{oldEra $\gets$ newEra} 
            \EndWhile
        \EndProcedure
    \Statex
        \Procedure{retire}{\texttt{T* ptr}} \label{lin:heproc-retire}
            \State \texttt{myRetireList $\gets$ retireList[tid]}
            \State \texttt{myRetireList.append(ptr)}
            \State \texttt{ptr.retireEpoch $\gets$ epoch}
            \If {\texttt{myRetireList.size() $\geq$ reclaimFreq}}
                \BeginBox[draw=black, dashed]
                    \State {\Call{collectPublishedCounters}{ }}\label{lin:hecollectpublishedcounters}
                    \State {\Call{\textbf{pingAllToPublish}}{ } } \label{lin:hepingalltopublish}
                    \State {\Call{waitForAllPublished}{ }} \label{lin:hewaitforallpublished}
                \EndBox
                \State {\Call{reclaimHEFreeable}{myRetireList}}
            \EndIf
        \EndProcedure    
    \Statex
        \Procedure{clear}{ } \label{lin:he-proc-clearall}
            \For{$slot=0, \dots, MAX\_HE$}
                    \State \texttt{localReservations[tid][slot] $\gets$ NONE} \label{lin:setnone}
                    \LComment{no store load fence needed}
            \EndFor        
        \EndProcedure 
        \Statex
        \Procedure{reclaimHEFreeable}{myRetireList} \label{lin:heproc-reclaimhefreeable}
                \LComment{Same as in \algoref{orighe}}
        \EndProcedure
    \Statex
        \Procedure{canFree}{objPtr}
            \LComment{Same as in \algoref{orighe}}
        \EndProcedure
\end{algorithmic}
\end{algorithm}

\algoref{pophe} shows the implementation of the proposed HazardErasPOP algorithm. From a programmer's perspective, it maintains the same interface as HP, HE, or HazardPointersPOP. However, unlike the eager publishing reservations paradigm in HE, threads in HazardErasPOP save the nodes (to reserve) locally in their \func{localReservations} array during \Call{read}{}.

The \Call{read}{} procedure fetches the pointer to the node currently being read, reads the value of the current epoch, and compares it with the previously saved epoch value in the slot at \func{localReservations}. 
The slot corresponds to the current pointer being fetched.
If the previously locally reserved epoch and matches the current epoch, i.e. the current epoch has not changed while the pointer to the node was being read, then the node is safe to be read, and the pointer to the node is returned. 
Otherwise, the new epoch value is locally reserved (in \func{localReservations} without publishing, at \lineref{lochereserve}) and the process is repeated until the global epoch remains unchanged while the value at the input atomic pointer containing the address of the node currently being read is successfully fetched (\lineref{hevalsucc}).

When a thread is about to go \textit{quiescent}, i.e., while exiting the data structure operation, it resets the local reservations by setting the corresponding slots to NONE (\lineref{setnone}) using \Call{clear}{}.
These local reservations are then published to a shared array of single-writer multi-reader slots, called \func{sharedReservations} (\lineref{hevar-sharedreservations}), when pinged by a reclaimer.

Threads maintain a per thread list, depicted as \func{retireList} (\lineref{hevar-retirelist}), to which they append the retired nodes with call to \Call{retire}{ } during their update operations.
When the list size exceeds a threshold set in \func{reclaimFreq} (\lineref{hevar-reclaimfreq}), a \textit{reclaimer} invokes \Call{pingAllToPublish}{ } to trigger the publishing of local reservations. This function is same as the one described for \algoref{contd}.


After every thread publishes its reservations, the \textit{reclaimer} can invoke \Call{reclaimHEFreeable}{ } to free all retired nodes that are not reserved (\lineref{heproc-reclaimhefreeable}). Specifically, within \Call{reclaimHEFreeable}{ }, the \textit{reclaimer} collects all reservations and frees those that are not in the collected reservations. This is similar to the original hazard era algorithm shown in \algoref{orighe}.

The functions \Call{publishReservations}{ }, \Call{collectPublishedCounters}{ }, 
\Call{pingAllToPublish}{ }, and \Call{waitForAllToPublished}{ } are the same as shown in \algoref{contd}.



\subsection{EpochPOP}
\label{sec:secepochpop}

\ignore{
\textbf{High level view of EpochPOP}
EpochPOP builds on HazardPointersPOP as follows.
In the common case, threads execute data structure operation largely as they would in epoch based reclamation.
A data structure operation begins by reading a global epoch number (a timestamp), and announcing the value it read in a SWMR per-thread slot in a shared array, then proceeds as usual.
Nodes unlinked from the data structure are stored in a per-thread list.
Periodically, threads increment the global epoch, and scan other threads' announcements to identify the oldest announced epoch. 
Threads can free nodes unlinked from the data structure in epochs older than the oldest announced epoch.
However, unlike traditional EBR, all threads also continue to execute the steps of the HazardPointersPOP algorithm, locally reserving nodes before accessing them. 
If threads announce the epochs sufficiently often, they are able to reclaim memory regularly to keep the size of the lists with unlinked nodes below a user-specified threshold, then the POP mechanism is not needed at all!
But, if thread delays hinder frequent announcements of epochs, preventing reclamation of the lists, then the POP mechanism is used to facilitate the reclamation.
}

\subsubsection{The overview}
In EpochPOP, threads operate in epochs, announcing their entry and exit from the \textit{quiescent state} using a global epoch variable, similar to the EBR algorithm. However, unlike EBR, threads privately track local reservations during traversals, utilizing lightweight reads of HazardPointerPOP. 

\textit{Reclaimers}, periodically scan all the announced epochs freeing nodes retired before the minimum announced epoch, provided they can keep the retire list sizes below a threshold. In rare case where thread delays prevent a \textit{reclaimer} from freeing its retired nodes, the \textit{reclaimer} employs publish-on-ping to force all threads to publish their current reservations, emptying the retire list while excluding the reserved nodes.  

\begin{algorithm}
\small
\caption{EpochPOP}
\label{algo:rcupophp}
    \begin{algorithmic}[1]
    \State \texttt{const reclaimFreq, epochFreq, C} \label{lin:rcuvar-epochfreq}
    \State \texttt{atomic<int> epoch} \label{lin:rcuvar-epoch}
    \State \texttt{thread\_local int tid, counter}
    \State \texttt{atomic<int> reservedEpoch[NTHREAD]} \label{lin:rcuvar-reservedepoch}
    \State \texttt{list<T*> retireList [NTHREAD]} \label{lin:rcuvar-retirelist}
    \State \texttt{T* localReservations [NTHREAD][MAX\_HP]}
    \State \texttt{atomic<T*> sharedReservations [NTHREAD][MAX\_HP]} 
    \State \texttt{atomic<int> publishCounter [NTHREAD]}
    \State \texttt{int collectedPublishCounters [NTHREAD]}
    \Statex    
    \Procedure{startOp}{ }
        \If{\texttt{0 == ++counter \% epochFreq}}
            \State {\texttt{epoch.fetch\_add(1)}} \label{lin:incepoch}
        \EndIf
        \State {\texttt{reservedEpoch[tid] $\gets$ epoch}}
    \EndProcedure    
    
    \Statex
        \Procedure{\texttt{T*} read}{\texttt{atomic<T*> \&ptrAddr, int slot}} \label{lin:proc-read}
            \Repeat
                \State \texttt{T* readPtr $\gets$ *ptrAddr}
                \BeginBox[draw=black, dashed]
                \State \texttt{localReservations[tid][slot] $\gets$ readPtr}
                \EndBox
            \Until{\texttt{readPtr $=$ *ptrAddr}} \label{lin:loopexit}
            \State \Return readPtr \label{lin:return}
        \EndProcedure    
    \Statex
        \Procedure{retire}{\texttt{T* ptr}}
            \State \texttt{myRetireList $\gets$ retireList[tid]}
            \State \texttt{myRetireList.append(ptr)}
            \State \texttt{ptr.retireEpoch $\gets$ epoch}

            \If {\texttt{$0 ==$ myRetireList.size() $\%$ reclaimFreq}}
                \State {\Call{reclaimEpochFreeable}{\texttt{myRetireList}}} \label{lin:reclaimepochfreeable}
                \If{\texttt{myRetireList.size() $\ge$ C*reclaimFreq}} \label{lin:pophpstylebegin}
                \BeginBox[draw=black, dashed]
                    \State {\Call{collectPublishedCounters}{ }} 
                    \State {\Call{\textbf{pingAllToPublish}}{ } } 
                    \State {\Call{waitForAllPublished}{ }}
                \EndBox
                \State {\Call{reclaimHPFreeable}{myRetireList}} \label{lin:rcureclaimhpfreeable}
                
                \EndIf
            \EndIf
        \EndProcedure    
\algstore{algepochpop}
\end{algorithmic}
\end{algorithm}       

\begin{algorithm}   
\small
\caption{EpochPOP: Continued.}
\label{algo:epochpopcontd}
\begin{algorithmic} [1]                   
\algrestore{algepochpop}
        \Procedure{reclaimEpochFreeable}{\texttt{myRetireList}}
            \State { minReservedEpoch $\gets$ reservedEpoch.min()}
            \ForAll {\texttt{objPtr $\in$ myRetireList}}
                \If{\texttt{objPtr.retireEpoch} $<$ \texttt{minReservedEpoch}}
                    \State {\texttt{free(objPtr)}}
                \EndIf
            \EndFor
        \EndProcedure

    \Statex
        \Procedure{reclaimHPFreeable}{myRetireList}
                \State \texttt{set<T*> collectedReservations $\gets$ \{\}}
                    \ForAll {\texttt{<tid, slot> $\in$ sharedReservations[tid]}}
                        \State {\texttt{objPtr $\gets$ sharedReservations[tid][slot]}}
                        \State {\texttt{collectedReservations.insert(objPtr)}}
                    \EndFor
                \ForAll {\texttt{objPtr $\in$ myRetireList}}
                    \If{\texttt{objPtr $\notin$ collectedReservations}}
                        \State {\texttt{free(objPtr)}}
                    \EndIf
                \EndFor
        \EndProcedure
    \Statex
    \Procedure{endOp}{ }
        \State {\texttt{reservedEpoch[tid] $\gets$ MAX}}
        \State {\Call{clear}{ }} \label{lin:rcuclearall}
    \EndProcedure
    \end{algorithmic}
\end{algorithm}

\subsubsection{Description of algorithm}

The example implementation in \algoref{rcupophp} and \algoref{epochpopcontd} describes the EpochPOP building upon the EBR implementation presented in \secref{rcu}. All line references in this section refer to \algoref{rcupophp} and \algoref{epochpopcontd}.
Each thread maintains a \func{retireList} to collect the retired nodes (\lineref{rcuvar-retirelist}), \func{reservedEpoch} to announce the current epoch in (\lineref{rcuvar-reservedepoch}), and a monotonically increasing \func{epoch} variable (\lineref{rcuvar-epoch}). 
The \func{epoch} is incremented periodically using the value of \func{epochFreq} \lineref{rcuvar-epochfreq}.

Similar to the original EBR, a thread announces its transition to the quiescent state by reserving the current epoch in the appropriate slot of \func{reservedEpoch} with a call to \Call{startOp}{}. To access new nodes of the data structure within an operation, the thread uses \Call{read}{}, privately reserving them without immediate publishing (using a fence), similar to HazardPointerPOP. Threads enter the \textit{quiescent state} by announcing a maximum possible epoch through \Call{endOp}{} and clearing the local reservations (\lineref{rcuclearall}).

\textit{Reclaimers} can continue freeing their retire lists as they normally do in EBR unless a thread delay is suspected. During \Call{retire}{} calls, threads append retired nodes to their retireList by associating the current epoch as their \func{retireEpoch}. When the list reaches a threshold size, they invoke \Call{reclaimEpochFreeable}{} (\lineref{reclaimepochfreeable}). In this procedure, they find the minimum epoch reserved by a thread from the \func{reservedEpoch} array and then free the nodes retired before the minimum reserved epoch.

If a \textit{reclaimer} suspects delays, it invokes a robust reclamation process similar to HazardPointersPOP during the \Call{retire}{} call. In the example implementation, if, after attempting EBR-style reclamation (\lineref{reclaimepochfreeable}), the thread finds that its \func{retireList} is still not empty (e.g., more than half of the \func{retireList} remains unreclaimed), it assumes that some threads reserving an older epoch have been delayed. This triggers reclamation in the HazardPointerPOP style, where the \textit{reclaimer} pings all threads to publish their reservations and then frees its \func{retireList} using \Call{reclaimHPFreeable}{} (\lineref{rcureclaimhpfreeable}).

Privately tracking reserved pointers is crucial for threads trying to reclaim nodes, in the event where a delayed thread prevents freeing of it retired nodes. This approach helps reclaimers to precisely determine nodes that a stuck thread is currently accessing, allowing the \textit{reclaimer} to safely free nodes in its retire list and skip a bounded set of reserved nodes. 

One might argue that simply pinging all threads and waiting until they complete executing their signal handlers would be sufficient. 
However, the issue with this approach is that if a thread is stuck in an erroneous control flow, such as an infinite loop, the \textit{reclaimer} will not learn anything about the nodes that could be accessed by the stuck threads. 

Alternatively, one might suggest publishing the currently announced epoch of the stuck thread instead of reserving pointers and publishing them when signalled.
With this approach, again, for the stuck thread in an erroneous loop, although, the reclaimer will learn about a range of nodes which might be safe to be freed, but this information is no better than what the reclaimer already knows by scanning the announced epoch in EBR mode.
This is because the announced epoch remains unchanged unless the stuck thread starts a new operation or restarts to publish reserve the latest epoch.
Thus, the reclaimer can only free those nodes that were retired before the minimum published epoch and may have to leave an unbounded number of nodes unreclaimed.  

The question of determining what nodes are safe to be freed compelled earlier techniques using signals, such as  Debra+~\cite{brown2015reclaiming} and NBR~\cite{singh2021nbr} to forcibly change control flow, forcing threads to restart. However, EpochPOP does not need to alter control flow of threads, thanks to its ability to track reservations privately and publish them on demand.

\section{Correctness}
\label{sec:popcorr}
\subsubsection{Correctness and Progress in HazardPointersPOP}

\begin{assumption}
\label{asm:sigasm}
Threads publish their reservations in a bounded time after being pinged.    
\end{assumption}
We experimentally verified in \chapref{chapnbr} that this assumption is true.

\begin{property}
    (Safety) HazardPointersPOP is safe from use-after-free errors. 
\end{property}
In order to prove HazardPointersPOP is safe, we need to establish that any \textit{reclaimer} will not free a node which other threads could subsequently access.

$Wlog.$, by the way of contradiction, let us assume, a thread $T1$ frees a node $n$ at a time $t1$ which is subsequently accessed by another thread $T2$ at a later time $t2$ ($t1 < t2$). 
In order to access $n$, $T2$ must successfully \textit{reserve} (not publish) it at an earlier time $t2'$, such that $t2' < t2$.

Similarly, to free $n$, $T1$ retires, then pings all threads to publish their reservations, and then for a bounded time waits to ensue that all threads complete publishing their reservations at a time $t1'$, such that $t1' < t1$. In this case, $T1$ pings $T2$ to publish its reservation to $n$ and $T1$ see that the reservation to $n$ is published at time $t1'$.

Now, two cases arise. 

First, $t1' < t2'$, i.e., $T2$ published its reservation for $T1$ before $T2$ reserved $n$ at $t2'$. This case can only happen if $n$ was already retired. 
In this case, since $n$ was already retired by $T1$, $T2$ will fail validation while reserving $n$. 
Thus, $T2$ cannot access $n$ without successfully reserving (HP requirement that threads reserve before access).

In the second case, $t2' < t1'$, i.e $n$ was successfully reserved before $T1$ requested $T2$ to publish the reservations.
Note, in this case, $T2$ would have published the reservation which $T1$ will scan and skip freeing $n$ as it is guaranteed to find it in the reservation list of $T2$ (\asmref{sigasm}). Hence, HazardPointersPOP is immune to use-after-free errors.  

\begin{property}
    (Liveness) HazardPointersPOP is robust.
\end{property}
A thread accumulates $r$ nodes in its retire list of which $N*H$ nodes may be reserved, where $N$ is the number of threads and $H$ is the maximum number of reservations that $N$ threads could hold at a given time. This implies, at a given time, a thread could free at least $r - N*H$ nodes, and at most $N*H$ nodes may not be freed.
Since $N*H$ is a constant, the amount of unreclaimed garbage per thread is always constant. Therefore, popHP is robust. 

\subsubsection{Correctness and Progress in HazardErasPOP}

\begin{property}
    (Safety) HazardErasPOP is safe from use-after-free errors. 
\end{property}
In order to prove that HazardErasPOP is safe, we need to establish that any \textit{reclaimer} will not free a node that other threads could subsequently access.

$Wlog.$, by the way of contradiction, let us assume that a thread $T1$ frees a node $n$ at a time $t1$ which is subsequently accessed by another thread $T2$ at a later time $t2$ ($t1 < t2$). 

In order to access $n$, $T2$ must successfully protect it at an earlier time $t2'$, such that $t2' < t2$. 
Similarly, to free $n$, $T1$ retires, then pings all threads to publish their reservations, and then waits, for a bounded time, to ensue that all threads complete publishing their reservations at a time $t1'$, such that $t1' < t1$.

Now, two cases arise. 

First, $t1' < t2'$, that is, $T2$ published all its reservations for $T1$ to see before $T2$ reserved $n$ at $t2'$. This is only possible if $T2$ reserved the retired $n$, which is not possible because $T2$ will fail to reserve $n$ at $t2'$ because $T1$ before pinging would have changed the global epoch.

In the second case, $t2' < t1'$, that is, $n$ was successfully reserved before $T1$ requested $T2$ to publish the reservations.
Note, in this case, $T2$ would have published the reservation of the epoch value at the time the pointer to $n$ is read, and $T1$ will scan the reservation and is guaranteed to find the reserved epoch. Now, when $T1$ attempts to free $n$ it will notice that the reserved epoch is less or equal to the retire epoch, i.e. lifespan of $n$ overlaps with the reserved epoch and there will not free $n$.
Hence, HazardErasPOP is immune to use-after-free errors.

\begin{property}
    (Liveness) HazardErasPOP is robust.
\end{property}
HazardErasPOP retains the original robustness property of Hazard Eras. A reserved epoch could only prevent reclamation of the nodes whose lifespan intersects with the epoch. All nodes allocated after or retired before the reserved epoch can continue to be reclaimed. 

\subsubsection{Correctness and Progress in EpochPOP}

\begin{property}
    (Safety) EpochPOP is free from use-after-free errors. 
\end{property}
EpochPOP, mostly runs classic EBR synchronization between \textit{readers} and \textit{reclaimers}.
In scenarios where no delayed threads are detected, \textit{reclaimers} only free nodes whose retire epoch indicates that they were retired before the oldest announced epoch across all threads.
The success of the aforementioned condition indicates that all threads have gone quiescent at least once since the node (which a \textit{reclaimer} wishes to free) was retired. Consequently, no thread could hold a reference to this node.

When a delayed thread is detected, a \textit{reclaimer} takes the following steps:
\begin{enumerate}
    \item Signals all threads to timely publish the reservations they were maintaining all along (\asmref{sigasm}).
    \item Waits in a bounded loop to ensure that all reservations are published before proceeding to free its retire list.
    \item Scans the reservations and frees only the nodes that are not reserved.
\end{enumerate}
In this way, EpochPOP ensures that no use-after-free errors occur.

\begin{property}
    (Liveness) EpochPOP is robust.
\end{property}
This is ensured by the ability of the algorithm to maintain reservations while traversing the data structures and detect delayed threads. This allows threads to ensure continuous reclamation of nodes in its retire list, only skipping a bounded set of reserved nodes across all threads.

\section{Evaluation}
\label{sec:popeval}
We implemented HazardPointersPOP, a publish on ping implementation of hazard pointers; HazardErasPOP, a publish on ping version of hazard eras, and EpochPOP. These implementations were evaluated within the setbench benchmark\textendash a benchmark also used in techniques including DEBRA~\cite{brown2015reclaiming}, NBR~\cite{singh2021nbr} and TokebEBR~\cite{kim2024token}. The performance of these techniques was then compared with traditional hazard pointers (HP), hazard eras (HE), EBR and a dummy implementation with no reclamation (NR).
Each of these memory reclamation schemes was applied to three distinct data structures: Harris-Michael list~(HML)~\cite{michael2004hazard}, HML chaining based hashtable (HMLHT), and the external binary search tree of David et~al. (DGT)~\cite{david2015asynchronized}.

\paragraph{Experimental Setting.}
All experiments were conducted on an Intel Xeon Platinum 8160 machine, with 132 MB of L3 Cache and 384 GB of DRAM. The machine is equipped with 4 NUMA nodes, each hosting 24 cores running at 2.1 GHz with 2-way hyperthreading, amounting to a total of 48 logical threads per node and 192 logical threads across all 4 sockets.

Our benchmark, compiled using \texttt{std=c++17} with \texttt{-O3} optimization, was executed on Ubuntu 20.04 with kernel version 5.8.0-55. All experiments were performed with \texttt{numactl --interleave=all}, using the MiMalloc allocator.
Threads were pinned in a pattern such that first all 24 physical cores were filled within a socket, followed by pinning to the other 24 logical threads on the same socket, before moving on to the other sockets.

\paragraph{Allocator Choice.}
Recently, Brown et. al.~\cite{kim2024token}, identified a negative performance interaction between deferred memory reclamation schemes and jeMalloc~\cite{evans2006scalable}, a widely used allocator for evaluating reclamation techniques.
This issue stems from contention on free lists, leading to suboptimal scaling of reclamation techniques on large-scale machines. MiMalloc~\cite{leijen2019mimalloc} addresses this problem by employing a multilevel sharding approach for free lists. To ensure that our evaluation remains unaffected by these challenges, we use MiMalloc as our benchmark.

\paragraph{Experimental Methodology.}
In each trial of our experiment, threads prefill the data structure up to half of the maximum fixed key range. Subsequently, they enter an execution phase, performing data structure operations for 5 seconds. During this phase, a randomly chosen insert, delete, or contains operation is repeatedly invoked with a key randomly selected from the given key range in the data structure.

In our experiments, we use read-intensive workloads with 90\% contains, 5\% inserts, and 5\% deletes, as well as update-only workloads with 50\% inserts and 50\% deletes. At the end of the 5 seconds of the execution phase, the measured execution throughput (number of operations per second) and peak memory consumption in bytes are reported. Our plots depict throughput averaged over 5 trials, each run with a thread sequence of 1, 4, 8, 16, 32, 64, 128, and 190. In particular, we did not observe any significant variation between trials, the standard deviation between trials being below 3\%.

For trees, the key range is set to $[0, 2 \times 10^6]$, for lists, we use a key range of $[0, 2 \times 10^3]$, and for hash tables, the key range is $[0, 6 \times 10^6]$ with $10^6$ buckets and a load factor of 6 keys per bucket. Each \textit{reclaimer} uses a maximum retired bag size of 32000 nodes before attempting reclamation. Techniques using epochs increments it every $NTHREADS\times EPOCHFREQ$ allocation, where $EPOCHFREQ$ is a constant (set to 100 in our experiments).  

\begin{figure}
\centering
     \begin{minipage}{\textwidth}
        \begin{subfigure}{\textwidth}
            \includegraphics[width=0.33\linewidth, keepaspectratio]{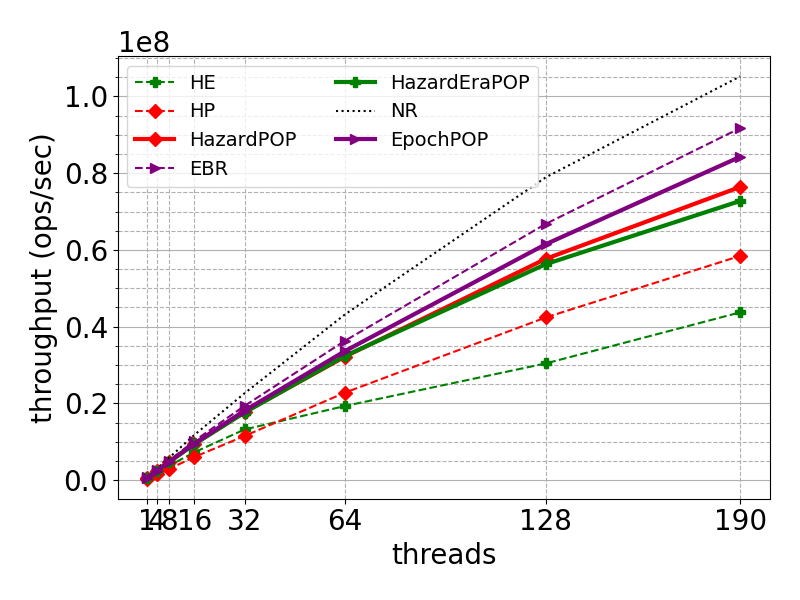}\hfill
            \includegraphics[width=0.33\linewidth, keepaspectratio]{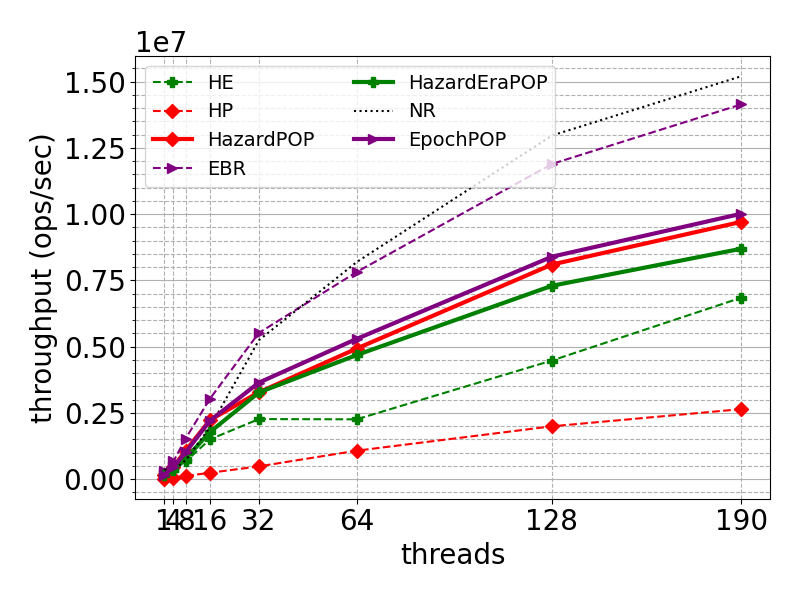}\hfill
            \includegraphics[width=0.33\linewidth, keepaspectratio]{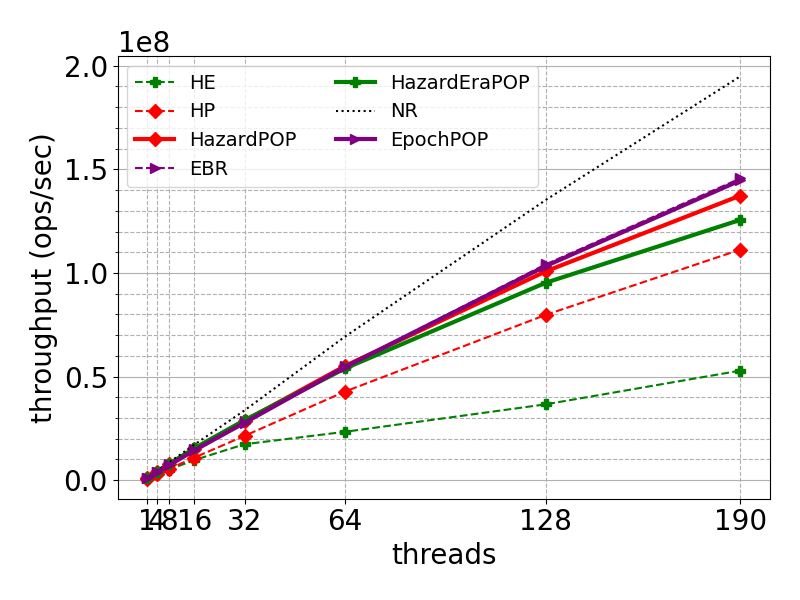}\hfill
            \caption{update heavy: 50\% inserts and 50\% deletes. }
            \label{fig:exp50p}
        \end{subfigure}
        \begin{subfigure}{\textwidth}
            \includegraphics[width=0.33\linewidth, keepaspectratio]{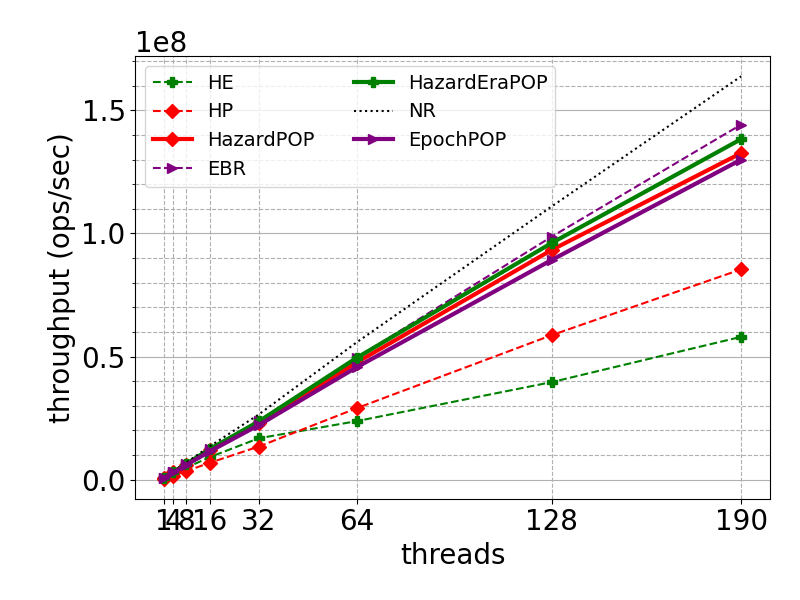}\hfill
            \includegraphics[width=0.33\linewidth, keepaspectratio]{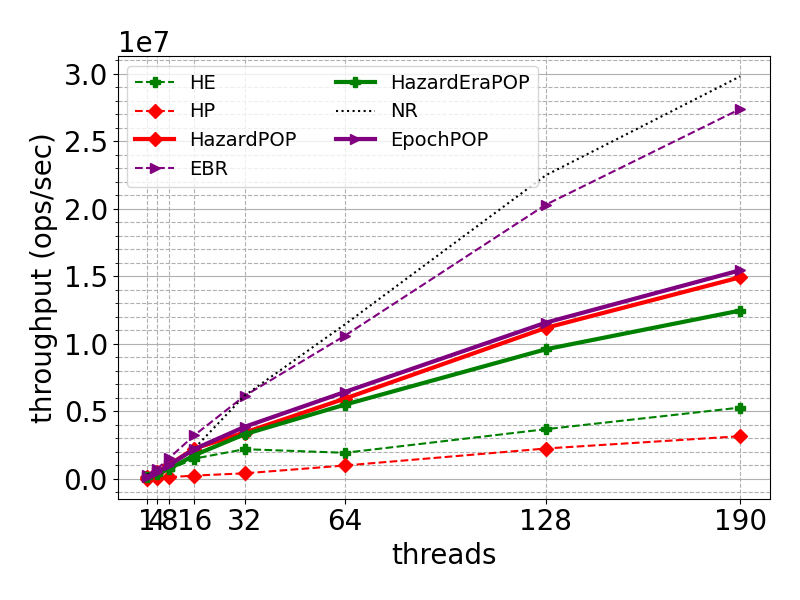}\hfill
            \includegraphics[width=0.33\linewidth, keepaspectratio]{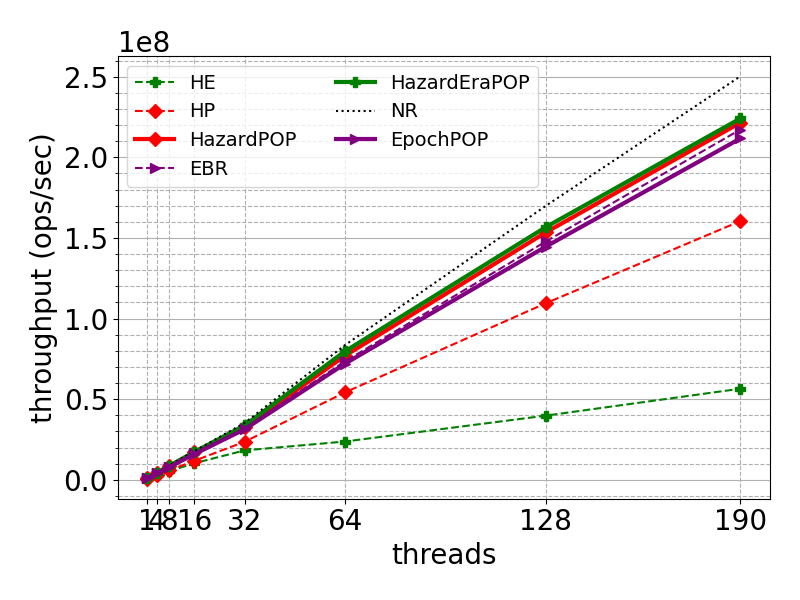}\hfill
            \caption{read heavy: 5\% inserts, 5\% deletes and 90\% contains.}
            \label{fig:exp10p}
        \end{subfigure}
     \end{minipage}
    \caption{EXP1: Throughput across different data structure sizes. Left: DGT, Center: Harris-Michael list (HML), right: Hash table with HML based chaining (HMLHT). Y-axis: throughput in millions of operations per second. X-axis: \#threads.}
    \label{fig:e1}
\end{figure}

We design our experiments with two objectives. First (EXP1, see \figref{e1}) is to evaluate the performance of our proposed publish-on-ping implementations HazardPointersPOP, HazardErasPOP and EpochPOP. Second (EXP2, \figref{pop-exp2}) aims to assess the peak memory consumption of publish-on-ping variants with and without stalled threads. For clarity, the publish-on-ping reclaimers are represented using solid lines in the plots.

\paragraph{Discussion.}

\begin{figure*}
    \centering
    \includegraphics[width=0.49\linewidth, keepaspectratio]{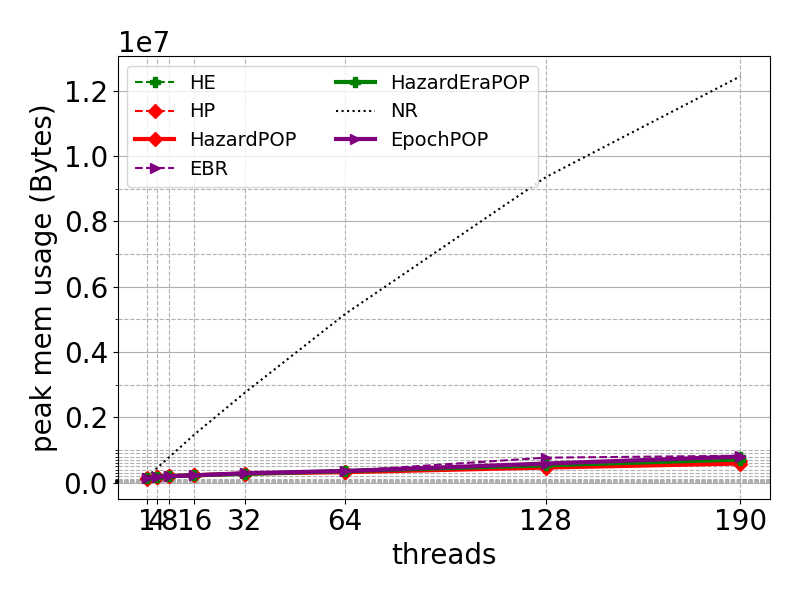}\hfill
    \includegraphics[width=0.49\linewidth, keepaspectratio]{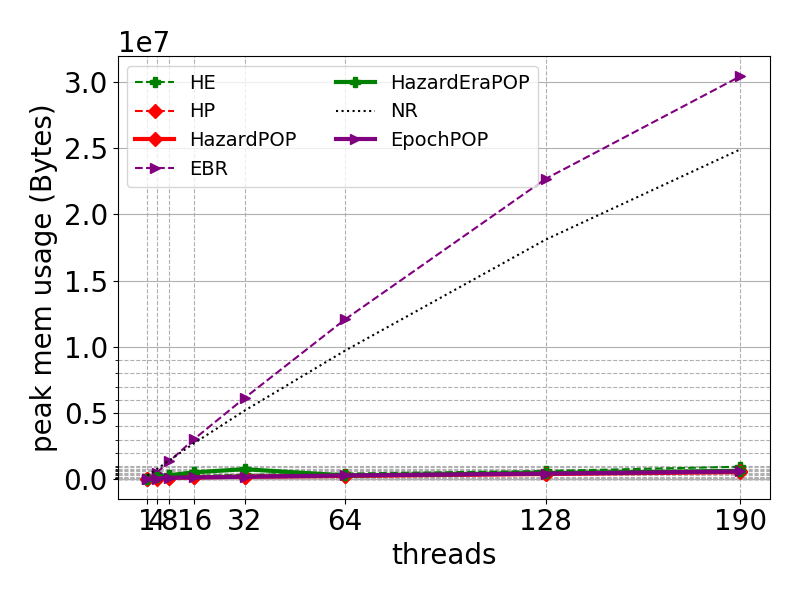}\hfill
    \caption{Exp2: Peak Memory Usage for DGT. Left: with no Stalled threads. Right: with stalled threads.}
    \label{fig:pop-exp2}
\end{figure*}

\figref{e1} depicts EXP1. We organize the plots in a grid where rows represent workload type and columns represent data structures. The top row shows throughput (operations per second) for the update-only workload, and the bottom row shows these data structures with read-intensive workload. The left column depicts DGT, the middle column is for HML and the rightmost column is for HMLHT.

In all data structures, the impact of reduction in overhead due to the elimination of publishing of reservation with every pointer access translates to improvement in overall throughput and scalability. For example, in \figref{exp50p}, HazardPointersPOP improves HP by up to 1.2$\times$ in DGT and HMLHT, while for HML we see an improvement of up to 3.8$\times$; HazardErasPOP improves HE by 1.5$\times$ in DGT and HML, whereas by 2.5$\times$ in HMLHT. In read-intensive workloads (\figref{exp10p}), the performance improvement is amplified, as the overhead of publishing reservations is reduced further with infrequent reclamation. Therefore, HazardHP improves HP by up to 1.4$\times$ in DGT and HMLHT, whereas for HML we see up to 5$\times$ improvement; HazardErasPOP improves HE by 2.4$\times$ in DGT and HML, whereas by 3$\times$ in HMLHT.

On average, the EpochPOP algorithm incurs up to 1.5$\times$ the overhead of EBR across both workloads for HML. This increase is primarily due to the additional \texttt{read()} function in EpochPOP, which keeps local reservations and adds to the already high pointer-chasing overhead in the lists. Additionally, the cost of robustness due to signaling also contributes to the overhead, both of which are absent in EBR. However, in data structures with minimal traversal overhead (low pointer chasing), such as DGT and HMLHT, EpochPOP performs similarly to EBR.

\begin{figure}
\centering
     \begin{minipage}{\textwidth}
        \begin{subfigure}{\textwidth}
            \includegraphics[width=0.33\linewidth, height=6cm, keepaspectratio]{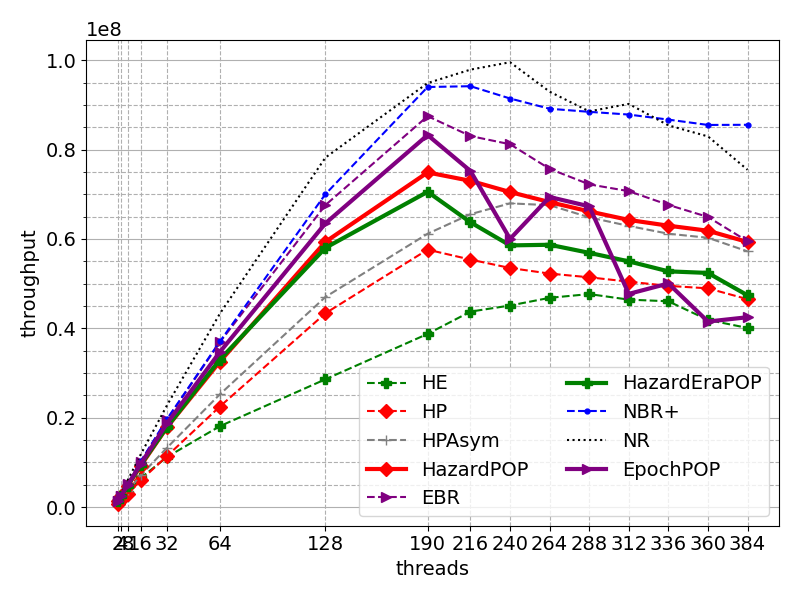}\hfill
            \includegraphics[width=0.33\linewidth, height=6cm, keepaspectratio]{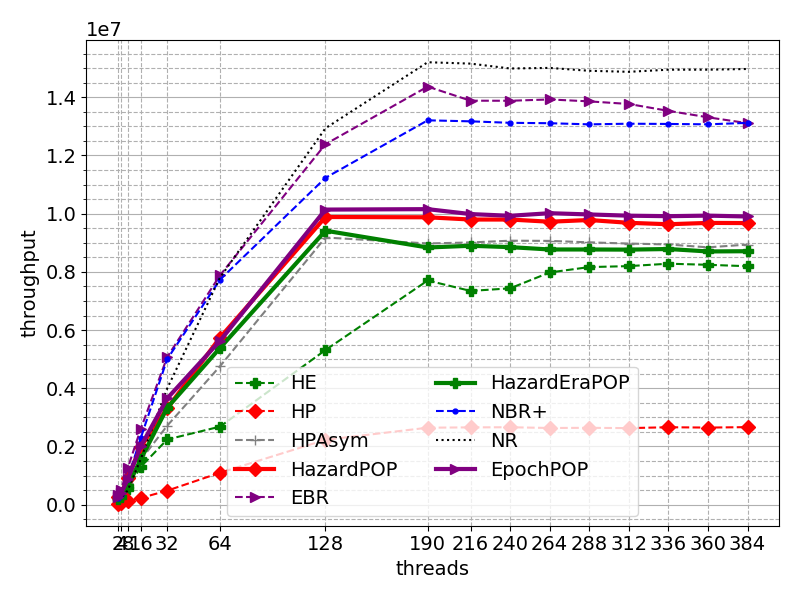}\hfill
            \includegraphics[width=0.33\linewidth, height=6cm, keepaspectratio]{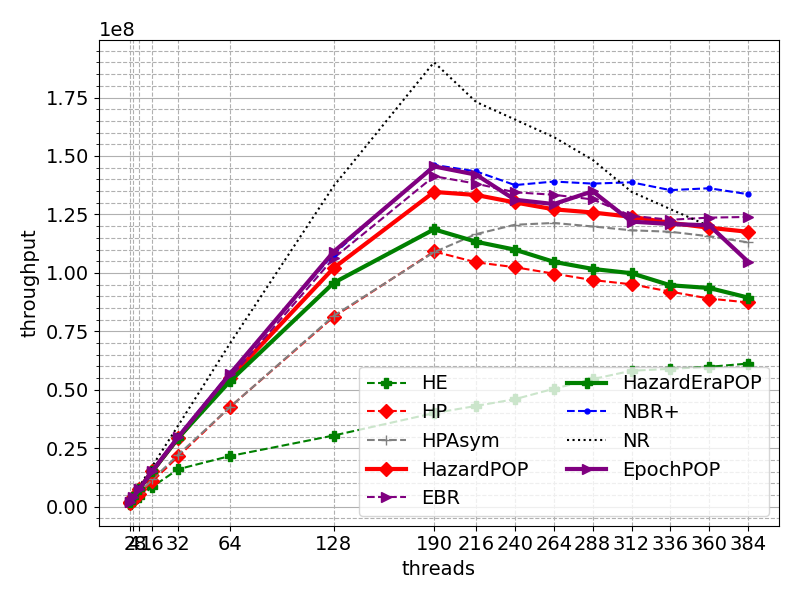}\hfill
            \caption{update heavy: 50\% inserts and 50\% deletes. }
            \label{fig:reb-exp50p}
        \end{subfigure}
        \begin{subfigure}{\textwidth}
            \includegraphics[width=0.33\linewidth, height=6cm, keepaspectratio]{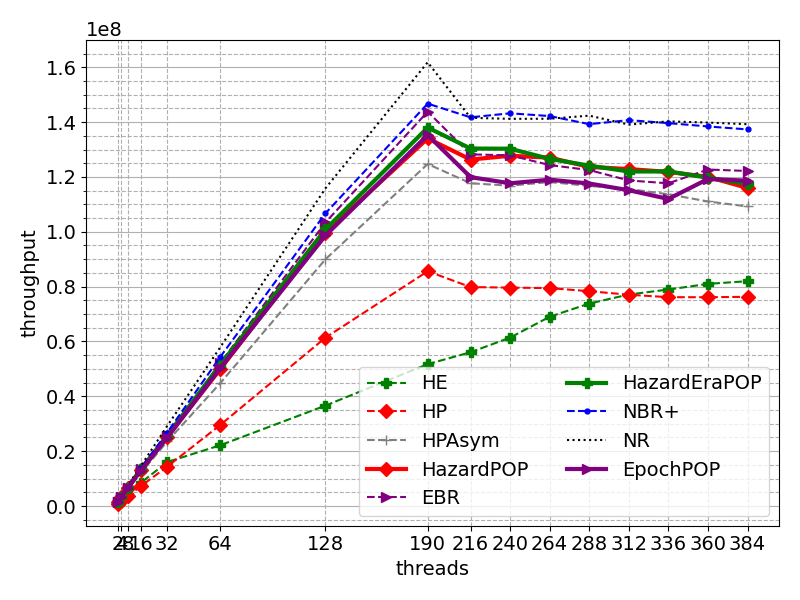}\hfill
            \includegraphics[width=0.33\linewidth, height=6cm, keepaspectratio]{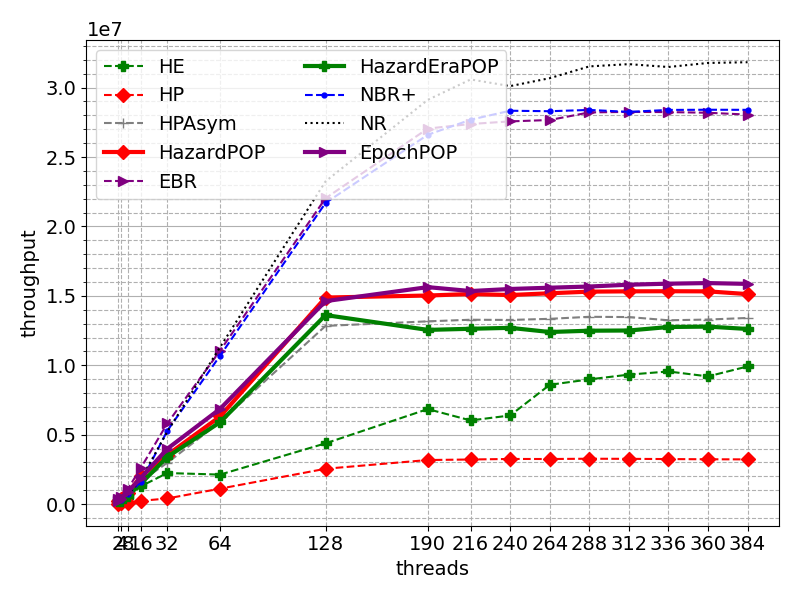}\hfill
            \includegraphics[width=0.33\linewidth, height=6cm, keepaspectratio]{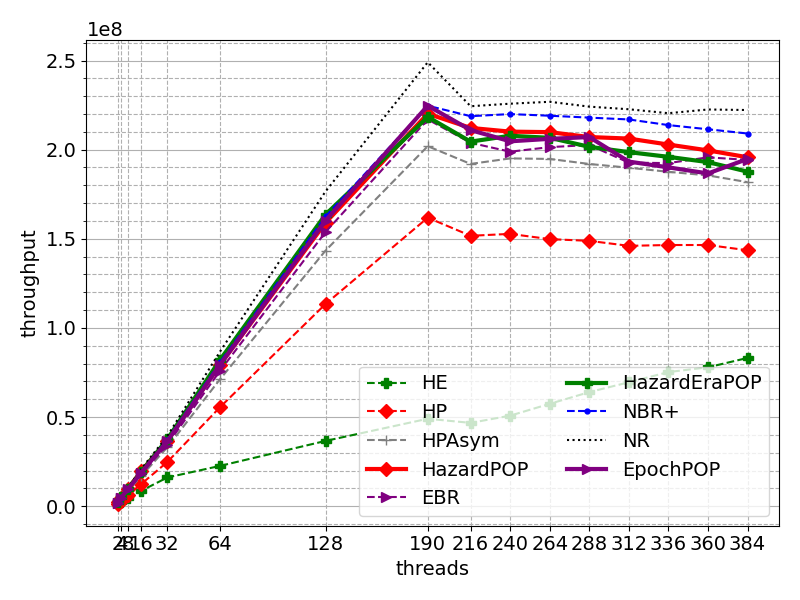}\hfill
            \caption{read heavy: 5\% inserts, 5\% deletes and 90\% contains.}
            \label{fig:reb-exp10p}
        \end{subfigure}
     \end{minipage}
    \caption{EXP1: Throughput across different data structure sizes. Left: DGT, Center: Harris-Michael list (HML), right: Hash table with HML based chaining (HMLHT). Y-axis: throughput in millions of operations per second. X-axis: \#threads.
    HPAsym: is the asymmetric fence based implementation of HP in Linux. NBR+: Neutralization-based reclamation. As shown in the plots, our HazardPOP technique is mostly faster than HPAsym.  }
    \label{fig:reb-e2}
\end{figure}

In \figref{pop-exp2}, we measure peak memory consumption of all the reclaimers without (on the left) and with (on the right) stalled threads. In the plot with stalled threads (on the right), we sleep one thread for the duration of the experiment to simulate a delayed thread. As expected, in absence of stalled threads (left plot), all reclaimers including EBR are able to reclaim memory regularly contributing similar low peak memory except the baseline NR, which never reclaims memory. On the other hand, in the presence of stalled thread the lack of robustness in EBR shows up as exponentially increasing peak memory usage, whereas publish-on-ping techniques maintain their lower peak memory usage, like other robust techniques.

\subsubsection{Additional Experiments with Oversubscription and Asymmetric Fences}

In this section, we evaluate the performance of POP algorithms against highly optimized hazard pointers with asymmetric fences, particularly under conditions of oversubscription. Figure ~\ref{fig:reb-e2} presents the EXP1 results, comparing Linux's asymmetric-fencing-based implementation of Hazard Pointers (HPAsym) with our NBR+ approach discussed in \chapref{chapnbrp}. The figure also illustrates the performance of POP-based techniques when oversubscribed (with 190 threads).

\section{Additional Related Work}
\label{sec:pop-relatedwork}

In this section we exclusively focus on the deferred pointer-based techniques which aim to eliminate or reduce the overhead on the traversal path.

Drop the Anchor (DTA)~\cite{braginsky2013drop} integrates epoch-based reclamation (EBR) and Hazard Pointers (HP). It relies primarily on EBR and enhances robustness against rare stalled threads by employing a recovery mechanism. This mechanism utilizes periodically published hazard pointers. During recovery, the process duplicates the range of nodes reachable from the stalled thread into the data structure, enabling other threads to resume reclamation without having to worry about freeing nodes reachable from the stalled threads.

Another approach assumes an alternative memory model called temporally bounded total store order (TBTSO)~\cite{morrison2015temporally} and applies it to HPs to guarantee that reserved pointers will be published to all threads within a bounded time, facilitating safe memory reclamation. In \cite{balmau2016fast}, Balmau et. al. employ context switches triggered by periodic scheduling of auxiliary processes per core to timely flush hazard pointers.
In these techniques, reclaimers wait for a set interval of time to pass since an node is retired to ensure that reservations to the node (if any) would be published, reclaiming the node if it is not reserved. 
These techniques incur overhead to periodically publish reservations, even when threads might not be reclaiming.
Dice et. al.~\cite{dice2016fast} advocate using the write-protection feature of memory pages that triggers a global barrier to facilitate timely publication of hazard pointers before threads reclaim. However, it could block threads trying to reserve a node if a reclaimer stalls after write protecting the page the node resides on.   



Several subsequent techniques have aimed to reduce or eliminate the need for memory fence during traversals in HP.
For example, Cadence~\cite{balmau2016fast} utilizes system-level memory fences triggered by context switches. 
This allows \textit{readers} to defer publishing reservations (using explicit memory fence) until a context switch occurs. \textit{Reclaimers}, to free a retired node, wait for a context switch to occur since the node was retired. This waiting period ensures that the reservations to the node, if any, become visible and the \textit{reclaimer} can free it if it is not found to be reserved.

Cadence employs auxiliary threads pinned to each core to force the context switches at regular intervals. These auxiliary threads compete with primary worker threads that execute data structure operations and assume that rescheduling always triggers a memory fence. This mechanism requires the instrumentation of data structure nodes with a timestamp to determine whether a global memory fence has occurred since the node was retired. The overhead resulting from auxiliary threads enforcing a global memory fence, though amortized over multiple operations, is incurred even if the threads do not reclaim.

Dice et al. in ~\cite{dice2016fast} discuss a technique to eliminate memory fences from traversals by leveraging write-protect feature in modern operating systems, which enforces a global memory barrier.
\textit{Reclaimers} briefly write-protect all memory pages associated with hazard pointers and then remove the protection, ensuring a global memory fence is executed. This makes all the hazard pointers visible before attempting to free retired nodes.
However, this technique requires all threads to block in an interrupt handler until the write protection is revoked. In the same paper, the authors also discuss a hardware extension suggesting the addition of a dedicated store buffer unit along with two new instructions to maintain hazard pointers.

\section{Summary}
\label{sec:popconclusion}

In this chapter, we have discussed the reactive synchronization paradigm that implements a publish-on-ping algorithm for reclamation, utilizing POSIX signals to expedite reservation-based techniques, such as hazard pointers and hazard eras. 
Unlike other signal-based techniques, such as Debra + and our previous technique NBR(\chapref{chapnbr}), our approach does not require a reclamation-triggered change in control flow in data structure operations. It can seamlessly serve as a drop-in replacement for hazard pointers\textendash a technique set to be included in the C++26 standard library.

Furthermore, we integrate epochs alongside the publish-on-ping variant of hazard pointers. The resulting algorithm, EpochPOP, not only significantly reduces the performance gap with epoch-based reclamation but also matches it in some cases. In the future, it would be worth exploring whether other synchronization problems exhibiting such an asymmetric safe memory reclamation-like pattern can leverage our approach.

Our publish-on-ping implementations of hazard pointers and hazard eras, applied to the Harris-Michael list, an external binary search tree, and a chaining hash table, demonstrate notable performance improvements, ranging from 1.2X to 5X and 1.5 to 3X faster (respectively) compared to the hazard pointer and hazard era implementations taken from the benchmark published with interval-based memory reclamation. EpochPOP is at most 1.6X slower for the list, but performs similarly to epoch-based reclamation for the tree and hash table.

\ignore{
\begin{itemize}
    \item Add parallel jemalloc experiments.
    \item Highlight Asymmetric fence exp and oversubscription experiments.
    \item In family of algorithms include IBR, QSBR as well.
    \item A pseudocode of usability. Example list and HP and PoPHP shown init. May be copy from NBR+ listing and modify for POP.
    \item Applicability add POP column too to talk about applicability. may be HP POPHP, EBR - POPEBR, HE POPHE etc.---> POPEBR can have wider app if HP++ technique is deployed. 
    \item MODEL: like the HE full paper, I should use reader for a thread that could deref a pointer. And separate it from writer and reader in NBR. More neat and natural copy of terms from literature.
\end{itemize}
}

\chapter{Hardware-Software Co-design Paradigm}
\label{chap:chapirp}


In the previous \chapref{chaprsp}, we presented the reactive synchronization paradigm to design safe memory reclamation algorithms that helps to eliminate the asymmetric synchronization overhead that existing reclamation algorithms incur on concurrent data structures.

In this chapter, we present the third paradigm of leveraging hardware-software co-design to safely reclaim memory in concurrent data structures. Essentially, it uses cache-level events that precede use-after-free errors and exposes these events to programmers through novel memory access instructions that programmers use to design their data structures. 
 
The resulting technique named Conditional Access is easy to program, incurs no extra space at the programming level, is comparable to well-known epoch-based techniques in terms of speed, and at the same time achieves immediate memory reclamation, enabling immediate memory reuse to keep memory footprint comparable to their sequential counterparts. In addition, the use of the proposed memory access instructions saves data structure operations from incurring costly cache misses, which are the main concern for the performance of concurrent software on modern architectures.

The outline of this chapter is as follows. 
First, we introduce the problem with current techniques and their impact on real systems in \secref{caintroduction}. This sets the stage for our observation about hardware-level events during reclamation, which motivated the proposal of the hardware-software co-design paradigm discussed in \secref{hwcodesign}. In \secref{hbrspec}, we present a high-level overview of our Conditional Access (CA) algorithm and its key components, along with a proposed hardware implementation.

In \secref{applyin-to-ds}, we explore how programmers can integrate Conditional Access instructions into optimistic data structures, using a stack and a lazy linked list as examples. This is followed by a proof of correctness and a discussion on the progress guarantees of Conditional Access in \secref{cacorr}.

In \secref{caeval}, we evaluate the performance and memory efficiency of Conditional Access by implementing it with multiple data structures and comparing it with several state-of-the-art reclamation algorithms. Additional related work is discussed in \secref{carelated}, with a summary of the chapter in \secref{casummary}.

\section{Introduction}
\label{sec:caintroduction}

\paragraph{Motivation.}

Current \smr(SMR)~\cite{brown2015reclaiming, michael2004hazard, alistarh2014stacktrack, alistarh2017forkscan, alistarh2018threadscan, balmau2016fast, cohen2018every, nikolaev2019hyaline, wen2018interval, detlefs2002lock, hart2007performance, nikolaev2020universal, singh21nbr} algorithms used in many optimistic data structures \textit{delay reclamation} and free nodes in batches to trade-off space in favor of high performance and safety.
When the batches are too small, the data structure's \textit{throughput} suffers due to more overhead from frequent reclamation. On the other hand, when the batches are too large, 
though the reclamation overhead is amortized due to reduced frequency of reclamation, 
the occasional freeing of large batches causes long program interruptions and  dramatically increases \textit{tail latency} for data structure operations~\cite{kim2024token}.%

Larger batch sizes also increase the \textit{memory footprint} of applications, which makes memory utilization and allocation challenging in virtualized environments~\cite{vmwareunderstanding}.
For example, increased memory footprints of virtual machines (VMs) or processes due to large batch sizes preclude the host machine from taking advantage of \textit{Memory Overcommitment} where the available dynamic memory could otherwise be shared amongst multiple VM instances (or other processes).

Besides requiring programmers to find an acceptable batch size (i.e., \textit{reclamation frequency}), most fast epoch-based or epoch reservation-based SMR algorithms \cite{brown2015reclaiming, nikolaev2019hyaline, wen2018interval, nikolaev2021crystalline} also have to determine an optimal increment frequency of a global timestamp (sometimes referred to as \textit{epoch frequency}). The values of these parameters in tandem influence the time and space efficiency of SMR algorithms.
Choosing an optimal value for these parameters can be quite challenging since they vary depending on the type of data structure, workload, and machine characteristics.

From a security perspective, the extended lifetime of nodes could be exploited for carrying out Denial of Service attacks, where many objects to be freed are accumulated, causing out-of-memory (OOM) errors\cite{MemDisagg19}. 
Such DoS attacks were reported in RCU implementations in the Linux kernel, where it was shown that a large number of deferred objects (or nodes) can be generated by a malicious user performing file open-close operations in a loop\cite{mckenney2006extending}.

In summary, the current paradigm of delayed reclamation (also known as deferred or batch reclamation) in non-blocking data structures has significant drawbacks, primarily due to the extended lifetime of retired nodes. Batch reclamation seems to be an unavoidable compromise to maintain the performance of the associated data structures. Without this approach, the overhead introduced by reclamation algorithms could significantly degrade performance, making them impractical. In this chapter, we explore whether it is possible to avoid deferred reclamation while still achieving high performance.

\subsubsection{Goal}

\textit{``To recycle retired nodes immediately, similar to sequential data structures, and yet achieve the same speed as state-of-the-art reclamation algorithms that perform batch reclamation."}   

\subsubsection{Observation}
We examine how a typical concurrent memory reclamation algorithm works on concurrent data structures using first principles.
For example, consider a lazy list\cite{heller2005lazy}.
Operations on a lazy list consist of a synchronization-free \emph{traversal phase} followed by an atomic \emph{update phase}.
During a delete operation, threads mark and then unlink the target node. They defer freeing the unlinked node by saving the node in the thread's retired list. This node cannot be reclaimed unless all threads coordinate to confirm that they have discarded their local references to the node, which is managed by the reclamation algorithm.

Fundamentally, a concurrent reclamation algorithm ensures that use-after-free errors do not occur. This typically requires all threads to coordinate. 
Threads that may have reference to a node learn whether it is safe to access, i.e., it has not been deleted since they last acquired the reference to the node. Alternatively, a thread attempting to reclaim a node learns that no other thread holds a reference to the node, allowing it to safely free the node. \textbf{We observe that this information can be inferred from the cache-level events that precede a store operation marking a node.}

Consider a hypothetical concurrent execution of threads $T1$ on core $C1$ and $T2$ on core $C2$ in a lazy list. Say, $T1$ executes \textit{delete(N)} and $T2$ executes \textit{lookup(N)}. 
Now, at the moment $T1$ arrives at N, $C1$ fetches N's address in its L1 cache in \emph{shared(S)} state.
Similarly, when $T2$ arrives at N, $C2$ also gets N's address in its L1 cache in the \emph{shared(S)} state.
Later, when $T1$ marks N for deletion, $C1$ upgrades N's cache line to \emph{modified(M)} state, \emph{invalidating(I)} cache lines containing N at other cores.
In this case, $T2$'s cache line containing N in \emph{shared(S)} state is \emph{invalidated(I)}. 
If $T2$ could somehow harness the cache invalidation message, it can learn that $T1$ has marked N for deletion. Therefore, $T2$ can simply avoid accessing the probably freed node N without needing redundant coordination between all threads at the application level to obtain the same information.

We introduce a straightforward hardware extension that captures cache-level events and makes information about potential use-after-free errors accessible to programmers through novel conditional memory access instructions. These instructions allow programmers to safely interact with shared nodes in their data structures. If a use-after-free error is detected, programmers, rather than simply incurring a segmentation fault, can decide how to handle it, such as by restarting the data structure operation. 
A detailed explanation of our method is provided in \secref{hbrspec}, followed by examples illustrating how programmers can use our technique with various classes of data structures.

\section{Hardware-Software Co-design Paradigm}
\label{sec:hwcodesign}
A \uaf error can be considered a special case of a read-write data race, where a shared memory location is accessed after it has been freed by a different thread.
In modern systems with coherent caches, \uaf errors are always preceded by events in the cache coherence protocol.
Consider a traditional MESI (Modified, Exclusive, Shared, Invalid) protocol: To store a value at a location \texttt{X} that is currently in the shared state, a core $C$ first invalidates copies of \texttt{X} at other cores by sending \textit{invalidation} messages to all other cores.
Upon receiving such a message, a core \textit{invalidates} its copy of that location and responds with an acknowledgement message.
Once $C$ has received acknowledgements from all other cores, it has exclusive access to \texttt{X}.
A thread that reads \texttt{X} after it is freed will respond to these invalidation messages before reading \texttt{X}.
This reveals that, at the level of the coherence protocol, readers are aware that the memory location they are trying to access may have been concurrently modified. 
Moreover, a subsequent read of \texttt{X} must begin with a cache miss\textemdash ~an avoidable overhead if the information about concurrent modification could be harnessed.

However, read-write data races and use-after-free errors are indistinguishable at the architectural level, complicating the identification of use-after-free errors based solely on cache coherence events. To address this, one could interpret invalidation messages as potential indicators of use-after-free errors, although with the possibility of false positives. By monitoring these messages and exposing them through specialized memory access instructions, we can facilitate safe memory reclamation.

The programmer in data structure operations can then use these memory access instructions to perform memory accesses that they suspect \textit{might} result in \uaf errors. Underneath these instructions, 
the hardware can \emph{tag} the corresponding cache lines, indicating to the hardware that \textit{invalidation} of such a cache line is an event of interest.
Similarly, reclaimers can do store before freeing a node, so they can be sure to trigger a cache event that revokes other threads' access to that node.
The reclaimer can then immediately free the node.
Any thread that has tagged this node before it was freed, and subsequently tries to access the tagged node, will observe that the corresponding cache line has been \textit{invalidated}, and is potentially freed. 

In summary, readers and reclaimers can leverage cache coherence events to prevent use-after-free errors. This approach, which integrates hardware and software mechanisms, is referred to as the hardware-software co-design paradigm for safe memory reclamation.

\section{Conditional Access}
\label{sec:hbrspec}
Inspired by the recent \textit{Memory Tagging} proposal of Alistarh et~al.~\cite{alistarh2020memory}, we propose a hardware mechanism called \textit{Conditional Access}, which tracks a set of addresses accessed by a thread and monitors invalidation messages for those addresses.
It allow readers to efficiently determine whether a node they are trying to access has been freed.
As we will see when we explain \ca in more detail, \ca offers cread(): a core instruction to our immediate memory reclamation technique which has no equivalent instruction in Memory Tagging. \ca, unlike Memory Tagging, has no explicit AddTag instruction.
From the perspective of implementation, \ca requires no changes to the underlying coherence protocol, whereas Memory Tagging’s Invalidate and Swap (IAS) Instruction requires changes to the coherence protocol as this single instruction can invalidate many (potentially non-contiguous) remote cache lines (potentially spanning many pages). 

While traditional SMR algorithms typically ensure that reclaimers defer reclamation until a node is entirely inaccessible to readers, our approach permits immediate freeing by reclaimers. This method sidesteps previously mentioned issues related to delayed reclamation, achieves an ideal memory footprint similar to that of sequential data structures by eliminating garbage nodes, and maintains high throughput.


\textit{Conditional Access} exposes the cache-level events that pertain to a potential use-after-free error to the programmers using new hardware memory access instructions. 
Specifically, programmers use a \textit{tag} instruction to start tracking a location for invalidation messages. A location being tagged is valid if it has not been deleted already. Validity of a node being tagged is established by verifying that a set of previously tagged locations have not been invalidated.
Subsequent accesses to the address use \textit{conditional read} (\texttt{cread}) or \textit{conditional write} (\texttt{cwrite}) instructions. These conditional instructions allow a thread to read or write a new location \textit{only if} a set of programmer defined set of \textit{tagged} locations
has not changed since they were previously read.


\textit{Conditional Access} is ideal for implementing data structures for which 
one can prove a read is safe if a small set of previously read locations have not changed since they were last read.
For example, in a linked list that sets a \textit{marked bit} in a node before deleting it, if a thread reads the next pointer of an unmarked node, and at some later time its marked bit and next pointer have not been changed, then it is still safe to dereference its next pointer.
In such a data structure, once a node is unlinked and marked, it can \textit{immediately} be freed, since doing so will merely cause subsequent \texttt{cread}s or \texttt{cwrite}s on the node to fail, triggering a \textit{restart} (an approach common to many popular SMRs~\cite{michael2004hazard, cohen2018every, wen2018interval, cohen2015automatic}).

\textit{Conditional Access} can be thought of as a generalization of load-link/store-conditional (LL-SC) where the load is also conditional, and the store can depend on many loads.
Whereas an LL effectively tags a location, and an SC untags the location,
in \textit{Conditional Access}, locations are not automatically untagged when they are written.
So, multiple \texttt{cread}s and \texttt{cwrite}s can be performed on the same set (or a dynamically changing set) of tagged locations.

\textit{Conditional Access} also has some similarities to a restricted form of transactional memory.
However, whereas hardware transactional memory (HTM) is increasingly being disabled due to security concerns, we believe \textit{Conditional Access} can be implemented more securely.
For example, since a thread becomes aware of concurrent updates to its tagged nodes only when it performs a \texttt{cread} or \texttt{cwrite} and then checks a status register, we can avoid some timing attacks that are made possible by the immediacy of aborts (as a result of conflicting access by the other threads) in current HTM implementations.

Much of the information needed to efficiently implement \textit{Conditional Access} is already present in modern cache coherence protocols.
We propose a simple extension where tagging is implemented at the L1 cache level without requiring changes to the coherence protocol. 
At a high level, each L1 cache line has an associated \textit{tag} (a single bit). 
This \textit{tag} is set by a \texttt{cread} on any location in that cache line, and unset by an \texttt{untagOne} on a location in that cache line or an \texttt{untagAll} instruction.

Each core tracks invalidations of its own tagged locations. 
In SMT architectures, where $k$ hyperthreads share a core, each hyperthread tracks invalidations of its own tagged locations.
The extensions we require to the cache, and between the cache and processor pipeline, are a strict subset of those needed to implement HTM. This strongly suggests that \ca implementations can be practical and efficient.

\ignore{
\ca enables memory footprints similar to those of sequential data structures.
This is desirable in modern data centers, to save costs related to memory over-allocation and to facilitate Memory Overcommitment~\cite{vmwareunderstanding}. 
Further, immediate reclamation can help in avoiding exploits that use the extended lifetime of unlinked objects in delayed reclamation algorithms to leak private data.
It also has the potential to prevent denial of service attacks in which threads induce a schedule that causes batches of unreclaimed memory to grow unboundedly, leading to out-of-memory errors.
Such attacks have been reported in RCU implementations in the Linux kernel~\cite{mckenney2006extending}.
}

\subsection{Design}
\label{sec:seccadesign}

This section outlines the memory access instructions that form the programming interface in \ca, along with the abstract structures that support it.

\paragraph{Additional Storage:}
\begin{itemize}
    \item [(A)] Each core \textit{tags} addresses it wishes to monitor for invalidation requests. These tagged addresses can be abstractly represented by \hbrset.
    \item [(B)] Each core also maintains an \hbrbit, initially clear, which is set when access is revoked for any address in its \hbrset. For simplicity, we assume that \hbrset's capacity is unbounded in this section. Efficiently approximating this set is the subject of Section~\ref{sec:hwimpl}.
\end{itemize}

\paragraph{Remote Events:}
For each entry in a core $C$'s \hbrset, the hardware is required to detect whether any other core has invalidated that cache line since $C$ tagged it.
If another core invalidates this cache line, 
the hardware must set $C$'s \hbrbit.

\paragraph{Memory Access Instructions:} 
\begin{itemize}
    \item [\textbf{(1)}] \textbf{\func{\crd addr, dest}:}
Similar to a \texttt{load} instruction, \crd updates register \texttt{dest} with the value at the address in register \texttt{addr}, but with two key differences: \textit{tagging} and \textit{conditional access}\footnote{For simplicity of
  presentation, we do not parameterize \crd by the number of bytes to read
  from memory, or consider different addressing modes.  In a practical
  system, several opcodes will be needed for these purposes.}. 
More specifically, \crd atomically
checks if \texttt{addr} is in \hbrset, and if not, adds it to \hbrset. 
It also checks if \hbrbit is set, and if so, skips the load, and updates some other processor state, such as a flag register, to indicate that there may have been a \uaf error.
In this case, we say the \crd{} has \textit{failed}.
Otherwise, it loads the value at \texttt{addr} into \texttt{dest}, indicating that the memory access was safe.
In this case, we say the \crd{} has \textit{succeeded}. 

\item [\textbf{(2)}] \textbf{\func{\cwr addr, v}:}
Unlike \crd, 
\cwr does not update \hbrset.
Atomically: 
\cwr checks if the \hbrbit is set or \texttt{addr} is not in the \hbrset, in which case the store is skipped and a processor flag is set to indicate that the \cwr has \textit{failed} (suggesting there may have been a \uaf error).
Otherwise, it stores \texttt{v} at \texttt{addr}, and we say the \cwr has \textit{succeeded}.

It is worth discussing here why \cwr fails when it executes on an \texttt{addr} which is not in the \hbrset.
This design decision rules out uses where programmers may invoke \cwr before invoking a \crd (or in other words before first tagging a location). This helps to avoid tagging during a \cwr, which could incur significant delays if the access misses in the L1 Cache, making it easier to avoid tricky time-of-check to time-of-use (TOCTOU) issues. In particular we would prefer to avoid scenarios where a \cwr misses in the L1, waits for the data, and takes exclusive ownership of the line, only to discover that the \hbrbit has been set during the wait (e.g., because some of the other previously tagged cache lines could have been invalidated), thus eventually failing the \cwr. By requiring \crd to be performed first, we move the high latency parts of this operation into a shared mode access, potentially reducing invalidations and coherence traffic.

\item [\textbf{(3)}] \textbf{\func{\utag{addr}}:}
The \utag{} instruction does not access memory.
Its purpose is to allow the programmer to remove an address from the \hbrset.
If \texttt{addr} is not in \hbrset, \utag{} has no effect. 
Once an address is removed from a core's \hbrset, subsequent remote invalidations of the address will \textit{not} set the core's \hbrbit.

\item [\textbf{(4)}] \textbf{\func{\utagall}:}
\utagall 
clears the \hbrset and unsets the value of \hbrbit.
It is intended to be used in two cases: (1) when a \crd{} or \cwr{} fails, at which point a data structure operation will need to be retried; and (2) before returning from a successful data structure operation.

Note, for SMT architectures with multiple hardware threads \textit{additional storage} and \textit{remote events} are required per hardware thread, instead of per core.

\end{itemize}
\subsection{Hardware Implementation}
\label{sec:hwimpl}
\ca can be implemented by a straightforward extension of existing caches, such that modifications are only introduced between a processor and its primary cache, e.g., the L1 data cache.
Based on our prototyping on a multicore simulator, we believe these changes are a strict subset of those required to implement HTM, which implies \ca is practical and efficient to implement.

Implementing \ca requires integrating \hbrset and \hbrbit as hardware components.
The proposed instructions use those data structures to track relevant invalidation messages, which are generated by the underlying cache coherence protocol.  

\begin{itemize}
    \item [(A)]The \hbrset can be approximated by adding one \texttt{tag} bit to each cache line of a core's L1 data cache. 
This is similar to how hardware transactional memory approximates its read and write sets.
\item [(B)] The \hbrbit, which tracks the invalidations of the addresses in a thread's \hbrset, requires adding one bit for each core.
One way the \hbrbit could be implemented is by adding it to the condition code or flag registers of the host architecture (e.g., EFLAGS on x86).
\end{itemize}

Note, in SMT architectures, where k hyperthreads share a core, each hyperthread will track which of its cache lines are tagged and track invalidations of its tagged locations. For instance, on a 2-way SMT architecture, two \emph{tag} bits and two \hbrbit{s}, one for each hardware thread, will be required.

Given these changes, we can now harness the cache coherence protocol to detect unsafe accesses.
When a \crd adds an address to the \hbrset, it loads that line into the
cache and sets the \textit{tag bit} for that line. 
The subsequent departure of the cache line from the cache could indicate a potential use-after-free error.
%
There are two ways in which the line can subsequently depart the cache:
remote invalidation or a local associativity conflict. 
In either case, the cache must notify the hardware thread that its \hbrbit
must be set, so that its subsequent \crd{} or \cwr{} will fail. 
For remote invalidations, doing so must be atomic with acknowledging the remote request.
For associativity conflicts, doing so must be atomic with fetching new data
from the memory hierarchy.
There are cases when \crd{} and \cwr{} can fail spuriously. We discuss these cases later to keep the exposition simple.
The atomicity requirements for \utag and \utagall are simpler: they cannot
be reordered with respect to loads and stores by the same hardware thread.
Furthermore, \utagall must clear the \hbrbit for future operations.

Besides the aforementioned two ways, in SMT architectures, a thread's \hbrbit can be set upon a write to any of the \emph{tagged} shared cache lines by another thread (in the case of hyperthreading) or on a context switch.
Setting the bit on a context switch is more straightforward to implement since it enables the operating system to avoid keeping track of invalidations on behalf of switched-out thread. 
These properties provide a foundation for the \ca to be used in multiuser systems.

Intuitively, tagging in \ca facilitates a kind of local protection of shared memory locations that does not trigger any additional coherence traffic. This is contrary to popular paradigms like hazard pointers~\cite{michael2004hazard} or other reservation-based~\cite{wen2018interval} techniques, which always trigger global cache traffic between threads.

The \hbrset size is bounded by the associativity of the cache and therefore \hbrset could overflow.
This would lead to eviction of tagged addresses (in \hbrset), causing \hbrbit to be set.
This, in turn, could lead to spurious failures of subsequent \crd{s} or \cwr{s} which could stall progress.
However, in practice it is not an issue because in most cases the \hbrset is small.
Our experiments (\secref{caeval}) show associativity does not have any significant impact on progress for the workloads we consider.

\section{Usage Requirements and Applicability}
\label{sec:applyin-to-ds}
In this section we discuss how \ca can be used
to achieve safe memory reclamation with optimistic data structures such as lists~\cite{heller2005lazy} and external binary search trees~\cite{ellen2010non}.
Operations of many such data structures have a search phase consisting of multiple reads, wherein a thread continuously traverses the next fields of nodes until it has visited a set of nodes it is interested in, where the operation eventually takes effect.
After reaching the nodes of interest, the operation may perform zero 
or more writes.
For example, in a linked list, a thread might traverse multiple links to find a predecessor and current node where an operation should take effect.

For ease of exposition we assume that each node fits in a single cache line, and a cache line contains only one node.
Thus, adding a node to a core's \hbrset implies adding a cache line containing the node to the core's \hbrset.
\subsection{Requirements}
We start by stating the following high-level directives required for all data structures to be able to correctly use \ca.


(DI) \textbf{Replace and Analyse:} 
\textit{Replace}: all read/write accesses to nodes that can be freed should be substituted by the corresponding \crd{} and \cwr{} instructions. This enables \ca to tag a node and monitor it for concurrent modification and notify programmers by updating a flag register.
\textit{Analyze}: If a \crd{} or \cwr{} fails, the operation should immediately \utagall and retry. A failed instruction implies a node could have been concurrently freed, therefore any future access will not be safe. 
Note, a \crd{} and \cwr{} could also fail spuriously due to various hardware reasons, such as cache eviction due to cache associativity. A more detailed discussion appears where we discuss progress in \secref{cacorr}.

(DII) \textbf{Validate Reachablility:}
A node is tagged when it is first \crd{}. 
In order to ensure that the tagged node is valid, it should be verified it was reachable in the data structure after the fact.

We now demonstrate how these directives can be applied to use \ca in different classes of optimistic data structures. Depending upon the data structures, DI could be partially relaxed, as we will see in the example of a lazy list or DII may not be needed as we will see in the example of a lock free stack. The lazy list requires some more rules which are detailed in the \secref{ll}.

\subsection{Usage in Data Structures with Single Writes}
Data structures with a single write in their update phase include some list based stacks~\cite{treiber1986systems} and queues~\cite{michael1996simple}, both of which we have implemented.
For the purpose of illustration, we will consider a list-based unbounded lock-free stack. In such a stack, a \func{push} operation involves reading a \texttt{top} pointer, allocating a new node for a key value to be pushed, and then doing a Compare-and-Swap (CAS)
to set the node as new \texttt{top}. Likewise, a \func{pop} operation consists of reading the \texttt{top}, and then atomically setting the \texttt{top} to its next node. After a \func{pop}, the unlinked node cannot be freed if a concurrent thread might still access it.


The original operations of the stack could be upgraded to enable \ca by simply replacing every read with \crd{} and the CAS with \cwr{} (DI). Then the \func{pop} operation could immediately free the unlinked node as shown in  \algoref{stack}.
Note, in our pseudocode \texttt{CAFAIL} is set when a \crd{} or \cwr{} fails. This is similar to updating a flag register.

\noindent
\textbf{Linearizability of the upgraded operations}.
The correctness follows from the fact that the \texttt{top} is read using \crd{}, which adds it to the corresponding thread's \hbrset, upon which the thread starts monitoring for any subsequent modifications to \texttt{top}.
Since the top itself is never deleted it is guaranteed to be always in the data structure at the time it is added to the \hbrset (DII). 
At the beginning of an operation the \hbrbit is clear and it is only set when the thread receives an invalidation request for the \texttt{top} when it is modified elsewhere. 
Later, the thread attempts to change the \texttt{top} using \cwr{} which atomically checks the \hbrbit for any interfering memory access. 
It fails if the \hbrbit is set.  This causes the thread to remove the top from its \hbrset, clear the \hbrbit using \utagall, and then retry the operation.
Otherwise, the thread succeeds by changing \texttt{top} to another node.
The push and pop operations can be linearized on the successful \cwr{} at \lineref{pushlp} and \lineref{poplp} in \algoref{stack}, respectively.
\begin{algorithm}[!htbp]
    \caption{Using \ca with unbounded lock free stack. Lines annotated as LP are the linearization points. 
    }
    \label{algo:stack}
    \footnotesize
    \begin{algorithmic}[1]
        \State type Node \{Key key, Node *next\}
        \State class Stack \{Node *top\}
        \State \#define \texttt{CA\_CHECK} \IfThenNoS{\texttt{CAFAIL}}{\texttt{untagAll(); goto retry;}}
        \Statex
        \Procedure{push}{key}
        	\State newtop = new node(key);
            \State retry:
            \State t $\leftarrow$ \crdp{top}; \texttt{CA\_CHECK}
            \State newtop->next = t;
            \State \cwrp{\&top}{newtop}; \texttt{CA\_CHECK} \label{lin:pushlp}\Comment{LP}
        \EndProcedure
        \Statex
        \Procedure{pop}{ }
            \State retry:
            \State t $\leftarrow$ \crdp{top}; \texttt{CA\_CHECK}
            \If {NULL == t}
                \State \texttt{untagAll(); and return;}
            \EndIf
            \State \cwrp{\&top}{t->next}; \texttt{CA\_CHECK} \label{lin:poplp}\Comment{LP}
            \State free(t) \label{lin:stackfree}
        \EndProcedure
    \end{algorithmic}
\end{algorithm}

Note, the call to free at \lineref{stackfree} is safe because
whenever a core C1 modifies the \texttt{top} (either for push or pop), any other core C2 having access to \texttt{top} will fail its \cwr{} because C1 will invalidate C2's tag by setting its \hbrbit.


\ca is ABA-safe despite the fact that it allows immediate reuse of freed objects.
Suppose a thread T1, in order to insert a new node, reads an address A from the \texttt{top} into a local variable $t$.  Then just before it executes a CAS to set \texttt{top} to $t$'s next, some other thread T2 removes A by setting a node at address B as the new \texttt{top}, frees A, and then pushes a new node at this recycled address A, making it the new \texttt{top}.
Now T1 would succeed its CAS (based on address comparison) as the \texttt{top} still contains the address A, which matches the expected address stored in its local variable $t$.  Thus it incorrectly succeeds when it should have failed\textemdash a typical ABA case.
\ca prevents this error as \cwr{}, unlike a CAS, is not based on comparing two values. Instead, it relies on the underlying cache invalidation messages to detect that a location has been modified since it was last read.




\subsection{Usage in Data Structures having Multiple Writes with Locks}
\label{sec:ll}
Another category of linked concurrent data structures have update operations wherein threads optimistically traverses a sequence of nodes ending in multiple updates within a critical section guarded by locks. One example is the lazy list~\cite{heller2005lazy}.

Operations of data structures with such design patterns could be upgraded to enable the proposed technique's \smr using the following broad guidance:
\begin{enumerate}
    \item In the search phase, use DI to replace all reads with \crd{}. Use \utag{} to remove previously traversed nodes from current thread's \hbrset when they are no longer required to prove that a node, to be accessed in the future, is reachable in its data structure at the time it is tagged.
    If a \crd{} fails during the traversal then do \utagall{} and retry the search.
    
    For read only operations this will suffice. Update operations require  the following steps:
    \item Use try locks 
    designed using \crd/\cwr{} (\algoref{lock}) to lock all the nodes identified at the end of the search. This marks the beginning of a critical section to execute the updates atomically. 
    If lock acquisition fails on any of the nodes then unlock the previous nodes (if any), do \utagall and retry the operation. 
    \item Within the critical section use normal writes to execute intended updates.
    This is safe because the nodes are guarded by the critical section and therefore cannot be concurrently updated or reclaimed (partial relaxation of DI). 

    \item If the update is a delete, mark the node before unlinking it. This helps reading threads to satisfy the requirement in DII, i.e. they do not tag a node which has been already unlinked from the data structure. 
    
    \item Finally, unlock any locked nodes and execute \utagall before exiting the operation. Note that unlock may use regular stores instead of \cwr{},  since locked nodes cannot be freed by other threads.
\end{enumerate}

By the way of example of a lazylist (in reference to \algoref{list}) we will demonstrate how we can easily upgrade it to use \ca.

Using D1, all the regular reads are replaced by \crd{s} in searches, as shown in \algoref{list}.
As explained in the specification, the \crd{s} atomically: add a node to current thread's \hbrset, if it is not in it already, check that the \hbrbit is clear, and complete a normal read, if the condition succeeds.
However, if the \crd{} fails (when \texttt{CAFAIL} is set) then it could be the case that a subset of the nodes in the \hbrset have been modified (potentially deleted) since they were last accessed, therefore it may not be safe to read them. 
In such a case the \hbrset is emptied, the \hbrbit is unset using \utagall, and the search is retried. Otherwise it continues and eventually stops when some \pred and \curr nodes of interest are found (\lineref{locateret} in \algoref{list}).



Note, if we do not untag previous nodes during searches, then since \crd{s} tag nodes, we will have all the nodes in the search path added to a thread's \hbrset.
This could cause \crd{s} to fail when any node (relevant to current access or not) in the search path is modified, which forces operations to retry repeatedly. In other words, certain updates will be serialized, as if threads acquired a global lock, which will inhibit concurrency.
As a remedy to this problem, threads can untag previous nodes using \utag and are only required to keep two consecutive nodes tagged at any given time, which is equal to the number of nodes required to carry out updates, much like hand-over-hand locking.

Furthermore, in order to guarantee searches are safe we need to ensure that at the time a node is tagged it is reachable in the list 
(DII). To see why, assume a case where a thread accesses content of an arbitrary node using \crd{}. During this \crd{}, atomically: a cache line containing the node will be tagged and then its content will be loaded. Now, if the node was already marked before it was tagged by the \crd{}, then a subsequent \crd{} would succeed even though the node is marked (logically deleted), which is not safe as the node could be reclaimed (a \uaf error). 
This is resolved by validating that a node is not marked, immediately after the \crd that tagged it.
If the node is found to be marked, validation fails and the corresponding operation untags all nodes and retries. 
For example, in the lazy list a node is first tagged during \crd at \lineref{tagged} in \algoref{list}, due to a \func{validate()} invoked from \lineref{tagged2},
~\ref{lin:tagged3}, or~\ref{lin:tagged4}. If \func{validate()} returns False due to the node being marked then the operations untags all nodes and retries.
This way DII is satisfied.

\begin{algorithm}[t]
    \caption{\ca based lock. Precondition: the node containing the lock field should be \crd{} so that it is tagged.}
    \label{algo:lock}
    \footnotesize
    \begin{algorithmic}[1]
        \Procedure{tryLock}{bool *lock}
            \State lockVal $\leftarrow$ \crdp{lock};
            \IfThen{\texttt{CAFAIL} or 1 $==$ lockVal}{return False;}
            \Statex
            \State \cwrp{lock}{1};
            \IfThenNoS{\texttt{CAFAIL}}{return False;}
            \State return True;  
        \EndProcedure

        \Statex

        \Procedure{unlock}{bool *lock}
            \State *lock $\leftarrow$ 0; \Comment{safe as a node can only be mutated by owner.}
        \EndProcedure
    \end{algorithmic}
\end{algorithm}

\begin{algorithm}
\small
    \caption{Using \ca with lazylist~\cite{heller2005lazy}. 
    Lines annotated as LP are the linearization points.
    }
    \label{algo:list}
    \begin{algorithmic}[1]
        \State type Node \{Key key, lock, mark, Node *next\}
        \State class Lazylist \{Node *head\}
        \State \#define \texttt{CA\_CHECK} \IfThenNoS{\texttt{CAFAIL}}{\texttt{untagAll(); goto retry;}}
        \Statex
        \Procedure{validate}{Node *node} \label{lin:validatep}
            \State isMarked $\leftarrow$ \crdp{node->mark}; \IfThenNoS{\texttt{CAFAIL}}{return False;} \label{lin:tagged}
            \IfThenElse{\texttt{isMarked}}{return False;}{return True;}
        \EndProcedure
        \Statex
        \Procedure{locate}{Key key}
            \State retry:
            \State pred $\leftarrow$ \crdp{head}; \texttt{CA\_CHECK} \label{lin:tagged1}
            \State \Call{validate}{pred}; \IfThenNoS{$False$}{\texttt{untagAll(); goto retry;}} \label{lin:tagged2}
            \State curr $\leftarrow$ \crdp{pred->next}; \texttt{CA\_CHECK} 
            \State \Call{validate}{curr}; \IfThenNoS{$False$}{\texttt{untagAll(); goto retry;}} \label{lin:tagged3}
            \State currkey $\leftarrow$ \crdp{curr->key};  \texttt{CA\_CHECK}
            \While{currkey < key}
                \State \utagp{pred};
                \State pred $\leftarrow$ \crdp{curr}; \texttt{CA\_CHECK}
                \State curr $\leftarrow$ \crdp{curr->next}; \texttt{CA\_CHECK}
                \State \Call{validate}{curr}; \IfThenNoS{$False$}{\texttt{untagAll();goto retry;}} \label{lin:tagged4}
                \State currkey $\leftarrow$ \crdp{curr->key};  \texttt{CA\_CHECK}
            \EndWhile
            \State return $\langle$ pred, curr, currkey $\rangle$; \label{lin:locateret}
        \EndProcedure
        \Statex
        \Procedure{contain}{key}
            \State $\langle$ pred, curr, currkey $\rangle$ $\leftarrow$ locate(key); \Comment{LP:When currkey was read.}
            \State \utagall{\texttt{()}};
            \State return (currkey == key);
        \EndProcedure
        \Statex        
    \algstore{ca-part1}
    \end{algorithmic}
\end{algorithm}
        
\begin{algorithm}
\small
    \caption{\algoref{list} continued.
    }
    \begin{algorithmic}[1]
    \algrestore{ca-part1}
        \Procedure{insert}{key} \label{lin:insproc}
            \State retry:
            \State $\langle$ pred, curr, currkey $\rangle$ $\leftarrow$ locate(key); \Comment{$LP$ when insert fails.}
            \IfThen{currkey == key}{\utagall{}; \& return False;}
            \If{False == tryLock(\&pred->lock)} \Comment{attempt locking pred.} \label{lin:ilockpred}
                \State \utagall{} \& retry;
            \EndIf
            \If{False == tryLock(\&curr->lock)} \Comment{attempt locking curr.} \label{lin:ilockcurr}
                \State unlock(\&pred->lock);
                \State \utagall{} \& retry;
            \EndIf
            \State node $\leftarrow$ new Node(key, curr);
            \State pred->next $\leftarrow$ node; \Comment{$LP$ when insert succeeds.} \label{lin:insadd}
            \State unlockAll() and \utagall{}; 
            \State return True;
        \EndProcedure
        \Statex
        \Procedure{delete}{key} \label{lin:delproc}
            \State retry:
            \State $\langle$ pred, curr, currkey $\rangle$ $\leftarrow$ locate(key); \Comment{$LP$ when delete fails.}
            \IfThen{currkey != key}{\utagall{}; \& return False;}
            \If{False == tryLock(\&pred->lock)} \label{lin:dlockpred} \Comment{attempt locking pred.}
                \State \utagall{}; \& retry;
            \EndIf
            \If{False == tryLock(\&curr->lock)} \label{lin:dlockcurr}\Comment{attempt locking curr.}
                \State unlock(\&pred->lock);
                \State \utagall{}; \& retry;
            \EndIf
            \State curr->mark $\leftarrow$ true; \Comment{$LP$ when delete succeeds.} \label{lin:delmark}
            \State pred->next $\leftarrow$ curr->next; 
            \State unlockAll(); \& \utagall{}; 
            \State \texttt{free(curr)};
            \State return True;
        \EndProcedure        
    \end{algorithmic}
\end{algorithm}        

One may further ask, what if the node was marked (already logically deleted) and also freed before it is tagged? In that case a subsequent \crd{} could succeed as its \hbrbit will not be set since no update will occur after the node was tagged. This could cause a \uaf error.
However, this cannot happen because, in order to free the node, a reclaimer has to unlink it by modifying the next field of its predecessor, which is already in the thread's \hbrset. Thus, if the predecessor node is modified the thread's \hbrbit will be set and the \crd will fail, preventing unsafe access. This invariant is maintained during a search that eventually yields a \pred and \curr that were reachable in list at the time they were tagged.

Later, before starting the updates, locks on the \pred and \curr nodes are acquired (\lineref{ilockpred} \&~\ref{lin:ilockcurr} for \func{insert()} and \lineref{dlockpred} \&~\ref{lin:dlockcurr} for \func{delete()}). However, it may happen that after the search returns the nodes and before the locks are acquired some thread may delete these nodes. In that case if the lock is accessed with normal reads and writes then the thread may attempt acquiring lock on a freed node which could lead to undefined behaviour: unlike \crd{s}, regular reads do not have the ability to check whether the nodes have been modified. Thus, to resolve this issue we provide \crd/\cwr{} based try locks which only acquire the lock on a node if it has not been modified (deleted) concurrently.

\algoref{lock}
depicts the implementation of this lock.
It has a precondition that the node containing the lock field should have been previously accessed using \crd{} so that it gets added to its thread's \hbrset{}, enabling a \crd/\cwr{} to verify through \hbrbit whether the node has been modified since then. 
In further detail, a thread does a \crd{} on the lock variable. If it sets \texttt{CAFAIL}, the node of which the lock is part might have been deleted; if it returns $1$, it means that lock is busy. In both the cases, the lock acquisition fails. Otherwise, a thread proceeds to acquire the lock by setting the lock field to 1 using a \cwr{}, which again checks if the node containing the lock field has not been modified (possibly deleted). 
If the check succeeds it writes 1 to the lock field and returns \texttt{True}, indicating that lock acquisition is successful. Otherwise if the \cwr{} fails (by setting \texttt{CAFAIL}) it returns \texttt{False} indicating that lock acquisition has failed, and the operation which invoked the lock untags all nodes and retries.

The insert operation(\lineref{insproc}, \algoref{list}), first executes \texttt{locate}, which returns tagged \pred and  \curr nodes along with \texttt{currkey} (key field of \curr). If the key to be inserted is already present in the list then the operation returns false. Otherwise, the key is not present and needs to be inserted.
To insert the key, first the \ca based trylocks on the \pred and \curr nodes are acquired, then a new node is created and inserted between pred and curr. Following that all locks are released, nodes are untagged, and then the operation returns true.

Note, because the \pred and \curr nodes are already locked, no other thread could ever modify them without acquiring lock first. Therefore, validation to check whether the node has been concurrently freed is not needed. This allows us to use normal reads and writes instead of \crd/\cwr{} within the critical section. 

The delete operation(\lineref{delproc}, \algoref{list}) invokes \texttt{locate}, which returns tagged \pred and \curr nodes along with currkey. If the key to be deleted is not present in the list then the operation untags the nodes and returns false.
Otherwise, similar to the insert operation, it acquires the trylocks on both nodes, does a write on the \curr node to set its \texttt{mark} field, unlinks the \curr node, unlocks and untags both the nodes, frees the \curr node and then returns true.


\subsection{Usage in lock-free data structures with multiple writes }


The main complication of using \ca with lock-free data structures with multiple writes is that if used naively 
an aborted data structure operation could leave the data structure in an inconsistent state.
This would happen if earlier writes succeed, a later write fails, and the operation then attempts to untagAll and retry. 
It would require a more complex variant of the \cwr{} primitive that allows validating and updating multiple locations at the same time.
Supporting \ca based immediate reclamation for these data structures may seem to be a natural extension of the current work.  
While the hardware required for the primitive that operates atomically on multiple locations will likely overlap with that of hardware transactional memory, ensuring both progress and correctness would require careful design, so we leave this extension beyond the scope of the present work.

However, some relatively simple lock-free data structures with multiple writes, such as the Harris list~\cite{harris2001pragmatic} or the Harris-Michael list~\cite{michael2004hazard} can potentially be modified to use \ca. For example, in the Harris list, while deleting, if after marking the unlinking step fails, then instead of retrying one could exit the operation. Later, any search could unlink the node as it traverses the list. Similarly, in the Michael\&Scott queue~\cite{michael1996simple}, the second CAS to set the tail is a clean-up CAS. If it fails, instead of retrying, a thread can exit the operation and later enqueues will fix the tail pointer, so \ca appears to be applicable.


\section{Correctness}
\label{sec:cacorr}

If all the aforementioned rules are followed to enable \ca in the lazylist then the list is linearizable and all access in it are safe. \texttt{Contains} or unsuccessful \texttt{inserts} and \texttt{deletes}, which behave like \texttt{contains}, can be linearized at the time when the key of \curr was read. Whereas, successful \texttt{inserts} and \texttt{deletes} can be linearized when a new node is linked (\lineref{insadd}) or a node is marked (\lineref{delmark}), respectively. Also, because \crd{s} never dereference an unreachable (unlinked) node, \uaf errors do not occur. Therefore the lazy list with \ca is safe. The following section discusses the correctness in detail.


\begin{assumption}
List operations are implemented using the provided guidelines to enable \ca based immediate reclamation.
\begin{enumerate}
    \item Replace and Analyse (DI) rule is followed. 
    \item Validate Reachability (DII) is followed. In order to ensure that only reachable and unmarked nodes are tagged, marked field of a node is validated after it is tagged.
    \item In updates, normal locks are replaced with \ca based try locks.
    \item Before freeing a thread must write on the node (i.e update the mark field).
\end{enumerate}
\label{asm:wellbehave}
\end{assumption}

For \ca based programs to be correct, all data structure operations should do shared memory reads and writes using \crd{} and \cwr{}, a thread is required to write on a node before freeing it, and when a node is tagged it should be reachable and unmarked. 
Additionally, for \textbf{DI} to work we assume that the underlying hardware provides cache coherency mechanism like MSI, MESI or other such equivalent mechanisms. 


\begin{property}
\label{prop:cAccessAtomic}
A node is tagged the first time its content is \crd{} and it stays tagged until explicitly untagged using \utag{} or \utagall.
\end{property}

\begin{property}
\label{prop:writeinvalidates}
When any field of a node is written the underlying cache coherency mechanism invalidates all the cached copies of the location at other threads by setting their \hbrbit to true, which stays true unless unset by \utagall.
\end{property}


\begin{claim}
\label{clm:cAccessSucc}
A conditional access i.e. a \crd or \cwr on a node only succeeds if none of the tagged nodes have been invalidated since they were tagged.
\end{claim}

\begin{lemma}
\label{lem:correcttag}
A node's predecessor is unmarked and its next field points to the node at the time it is tagged.
\end{lemma}
This implies that a thread will never \crd{} next field (or discover new nodes) from a marked node. 
\begin{proof}
We prove the lemma by induction on the sequence of nodes in the \ca based lazylist. \\
\noindent
\textbf{Base case:} Initially when the search starts it does a \crd on \head, thus loads the address of head node in a local pointer \pred. 
Subsequently, the head node's mark field is \crd{} which atomically: tags the head node and loads the value of its mark field. Since the head pointer is never changed and the head node is never marked both the head node is valid when it was tagged. And any subsequent \crd{} on the fields of head node would fail if any of the fields of the head node change. 
Later, the head node's next field is \crd{} and saved in a local pointer \curr followed a \crd{} on the mark field of the \curr node. The latter \crd{} tags the curr node and then loads the value of its mark field.

So at this point, when the \curr node is tagged its predecessor \pred is guaranteed to be not marked because a head node is never marked and its next field is guaranteed to be unchanged (still points to \curr) because otherwise the \crd on \curr nodes's mark field would have failed by virtue of \propref{cAccessAtomic}, \propref{writeinvalidates}, and \clmref{cAccessSucc}. Also note that the \curr node's mark field is verified to be unmarked. So, both the \pred and \curr nodes are unmarked and \pred points to \curr for the base case.
\\
\noindent
\textbf{Induction Hypothesis:} Say, this is true for any arbitrary \pred and \curr nodes in search path of a thread. That is both the \pred and \curr nodes are unmarked and \pred points to \curr. Now we shall prove that at the time a node next to \curr, say \suc, is tagged then the \curr is still unmarked and points to \suc. \\
\noindent
\textbf{Induction Step:}  
In order to be able to access \suc to tag it, a thread first does a \crd{} on its predecessor node that is \curr to load its next field. If the \crd{} succeeds implies that \curr node hasn't changed since it was tagged. Meaning neither it has been marked or its next field has changed. Therefore, when marked field of \suc is \crd to be tagged its predecessor is unmarked and points to \suc. Otherwise, the \crd would fail by virtue of the \propref{cAccessAtomic}, \propref{writeinvalidates}, and \clmref{cAccessSucc}. 

Hence, the lemma is true for an arbitrary node.
\end{proof}

\begin{theorem}[\ca is Safe]
\label{thm:hbrsafe}
Threads cannot access reclaimed nodes in \ca based list.
\end{theorem}
\begin{proof}
Let us assume the contrary. That is thread could access reclaimed nodes. This may happen in \ca based lazylist in following two scenarios:
\begin{enumerate}
    \item A thread reclaims a node without setting its mark field to true. In this case, readers which have the node tagged will not receive an invalidation message and thus will not have their \hbrbit set. As a result the readers, unaware of the deletion, will succeed subsequent \crd{} which could cause \uaf error. But this is not possible because we assume that reclaimer's rule is followed.
    \item A thread tags a node after it has been marked so that subsequent \crd{} or \cwr{} will not receive any invalidation request for the node. And, therefore will not have their \hbrbit set. Which in turn would cause conditional accesses to succeed. This case too is not possible because of the \lemref{correcttag}.  
\end{enumerate}
Hence, both the the scenarios contradict our assumption implying use of \ca in lazylist is safe.
\end{proof}







\begin{theorem}[\ca is ABA free] Conditional accesses are immune to ABA problem.
\label{thm:abasafe}
\end{theorem}
\begin{proof}
Assume, that the \ca based lazylist have ABA problem. Therefore, there exists a case where a memory location \textit{m} is conditionally accessed by a thread $T_i$ (A), and then \textit{m} is modified by another thread $T_j$ (B), followed by another conditional memory access on \textit{m} by $T_i$ (A), such that the second memory access succeeds without recognising the fact that \textit{m} was changed since its first access. 

Since, whenever a memory location is first \crd{} it is tagged, $T_i$ tags \textit{m} when it does its first \crd{} on \textit{m}. Later if \textit{m} is modified by $T_j$ then by virtue of the cache coherence mechanism it invalidates all the remote copies of \textit{m}, in process invalidating the tags (\propref{writeinvalidates}). 
Therefore, when $T_i$ does \crd{} again on \textit{m} it will find that \textit{m} has been invalidated causing it to fail (\clmref{cAccessSucc}). Since, a \crd{} cannot fail and succeed at the same time, our initial assumption that the second \crd{} on \textit{m} by $T_i$ succeeds even when another thread modifies it after $T_i$ first accessed it is wrong. Therefore, ABA issue cannot occur.

\end{proof}


\paragraph{Progress:}
If \ca instructions are implemented in hardware such that spurious failures due to associativity evictions or interrupts are eliminated and conditional accesses could only fail due to real data conflicts then searches for \ca based data structures could be guaranteed to be lock-free. Because, a failure of a conditional access would imply that some thread successfully executed an update on a conflicting address and thus at least one thread is guaranteed to make progress.

On the other hand, if the implementation of proposed \ca semantics could not guarantee freedom from spurious failures then a fallback technique could be used. 

\paragraph{facilitating progress.}
As noted conditional accesses are vulnerable to spurious failures due to limits on hardware resources, like overflow of tagged cachelines because of associativity evictions. To reduce the probabilty of such spurious failures we propose \utag{} which helps in maintaining only a minimal set of tagged references by untagging previous locations which will not be required to establish correctness of searches.

\section{Evaluation}
\label{sec:caeval}

We prototype \ca (CA) using the Graphite multicore simulator~\cite{miller2010graphite}.  Our modifications were restricted to the L1 data cache level; we did not change the cache coherence protocol.
Graphite is configured to use a directory based MSI cache coherency protocol with a private 32K L1 and a shared inclusive 256K L2 cache. Each cacheline is 64 bytes and each thread runs on a dedicated simulated core with a basic branch prediction mechanism and an out-of-order memory subsystem.

We evaluate the scalability and memory efficiency of CA using microbenchmarks that stress test the lazy list and an external binary search tree (extbst).  We also use stack and hash table microbenchmarks to evaluate CA at different contention levels.
The keys in the lazy list, stack and hash table range from 0 to 1K; the extbst keys range from 0 to 10K. The hash table has 128 buckets,
where each bucket is a lazy list.  

Each of these data structures are made to use the following safe memory reclamation techniques: a leaky implementation(\textbf{none}:), \ca (\textbf{ca}), the 2geibr variant of IBR (\textbf{ibr}), \textbf{rcu}, quiescent state based reclamation (\textbf{qsbr}), hazard pointers(\textbf{hp}), and hazard eras (\textbf{he}). CA reclaims each deleted node immediately and requires no other parameters. The other reclamation schemes were configured to attempt reclamation after every 30 successful remove operations (\textit{reclamation frequency}). For epoch based schemes (ibr, rcu, qsbr and he) the epoch were configured to change after every 150 allocations (\textit{epoch frequency}). These values are the default in the IBR benchmark\cite{wen2018interval}.

Each trial in each experiment prefills its data structure to 50\% full and executes 3K operations per thread. The number of threads varies from 1 to 32. Each time a thread invokes a data structure operation, it randomly chooses an operation with a random key. In our experiments threads choose insert or delete with equal probability of 0\%, 5\% or 50\%, allowing us to run experiments with 0\% (read only), 10\% and 100\% updates, respectively.
Because the insert and delete probabilities are equal in all our workloads the data structure size remains roughly constant, storing half the elements in the key range.
For each workload configuration we report the average of three runs. There was no significant variance across the runs.

\begin{figure*}[ht]
     \begin{minipage}{\textwidth}
        \begin{subfigure}{\textwidth}
            \includegraphics[width=0.33\linewidth, height=6cm, keepaspectratio]{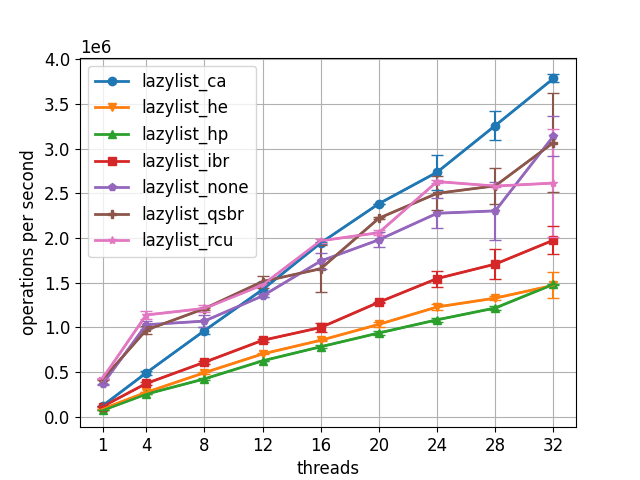}\hfill
            \includegraphics[width=0.33\linewidth, height=6cm, keepaspectratio]{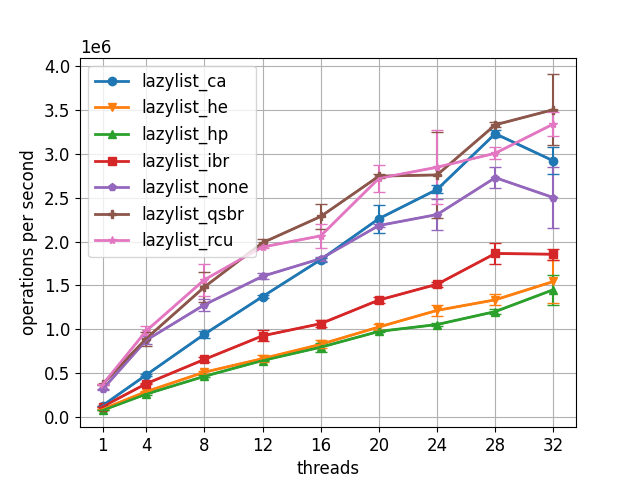}\hfill
            \includegraphics[width=0.33\linewidth, height=6cm, keepaspectratio]{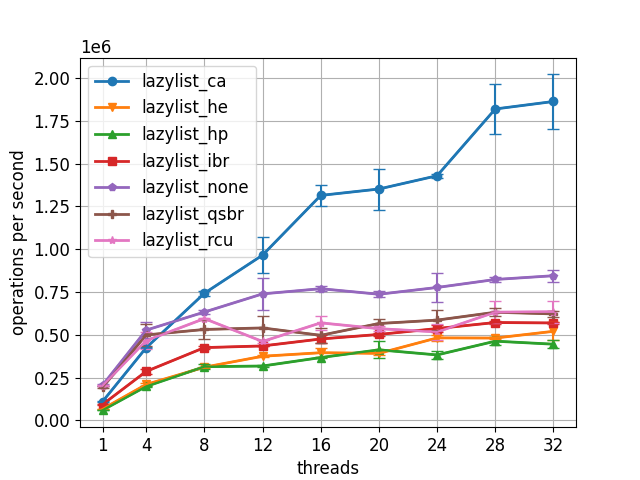}\hfill
            \label{fig:list}
        \end{subfigure}
        \begin{subfigure}{\textwidth}
            \includegraphics[width=0.33\linewidth, height=6cm, keepaspectratio]{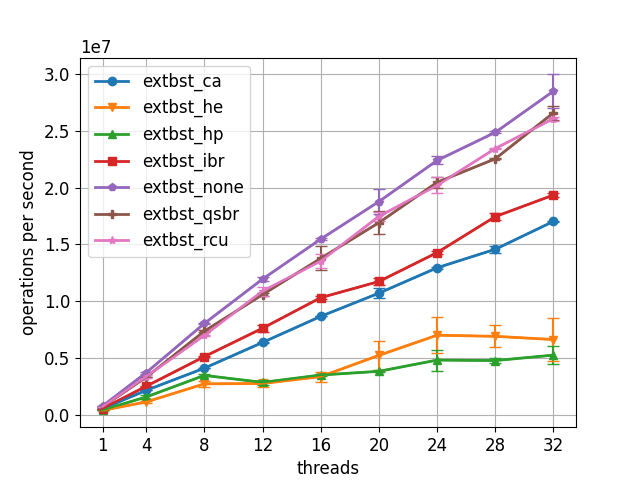}\hfill
            \includegraphics[width=0.33\linewidth, height=6cm, keepaspectratio]{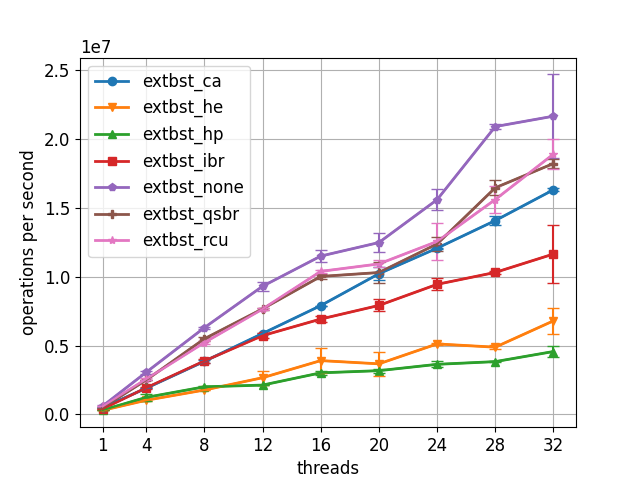}\hfill
            \includegraphics[width=0.33\linewidth, height=6cm, keepaspectratio]{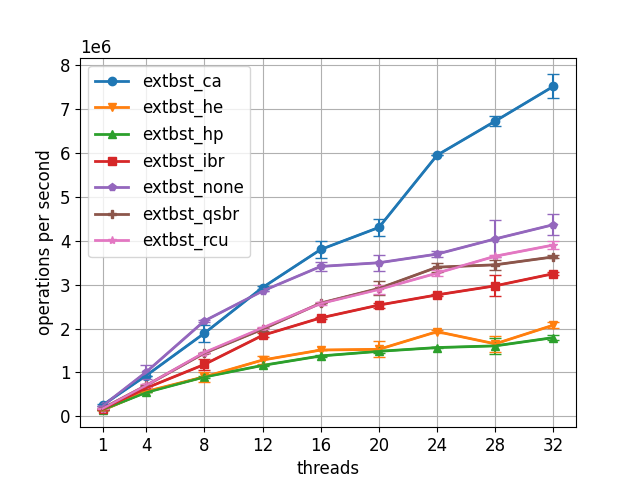}\hfill
            \label{fig:extbst}
        \end{subfigure}        
     \end{minipage}

    \caption{Evaluation of throughput. Y axis: throughput. X axis: \#threads. Left: 0i-0d. Middle: 5i-5d. Right: 50i-50d. (Top Row) Lazy linked-list, size:1K. (Bottom Row) External BST, size:10K. 
    }
    \label{fig:exp1}
\end{figure*}

\begin{figure*}[ht]
     \begin{minipage}{\textwidth}
        \begin{subfigure}{\textwidth}
            \includegraphics[width=0.33\linewidth, height=6cm, keepaspectratio]{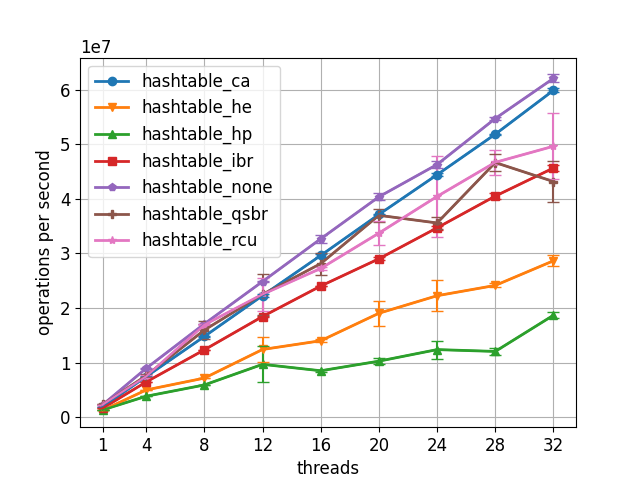}\hfill
            \includegraphics[width=0.33\linewidth, height=6cm, keepaspectratio]{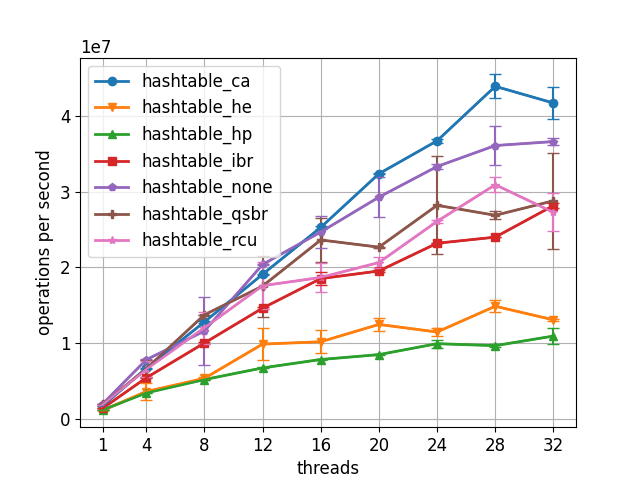}\hfill
            \includegraphics[width=0.33\linewidth, height=6cm, keepaspectratio]{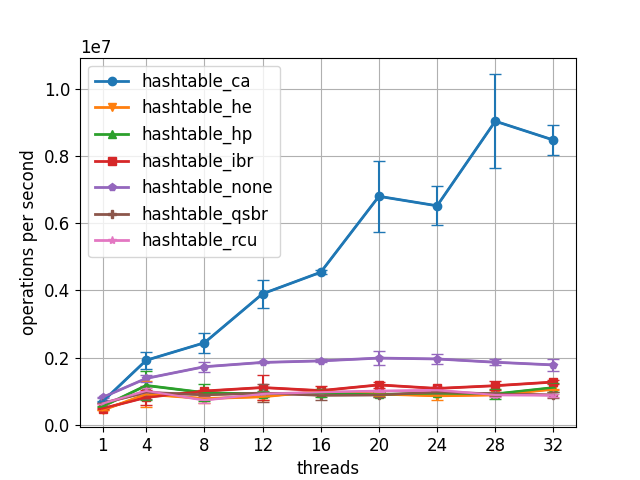}\hfill
            \label{fig:hashtable}
        \end{subfigure}
        \begin{subfigure}{\textwidth}
            \includegraphics[width=0.33\linewidth, height=6cm, keepaspectratio]{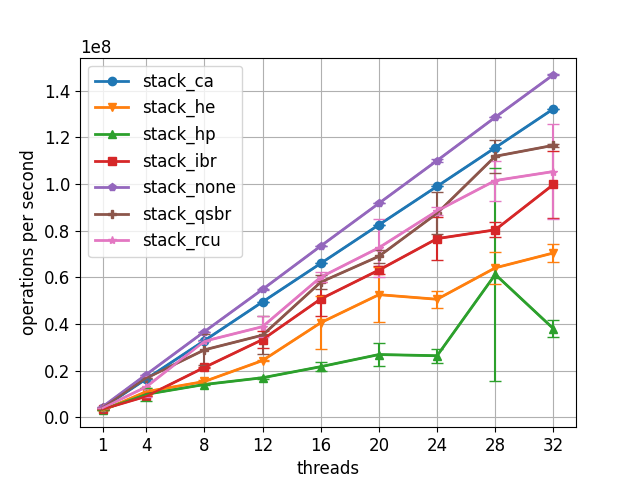}\hfill
            \includegraphics[width=0.33\linewidth, height=6cm, keepaspectratio]{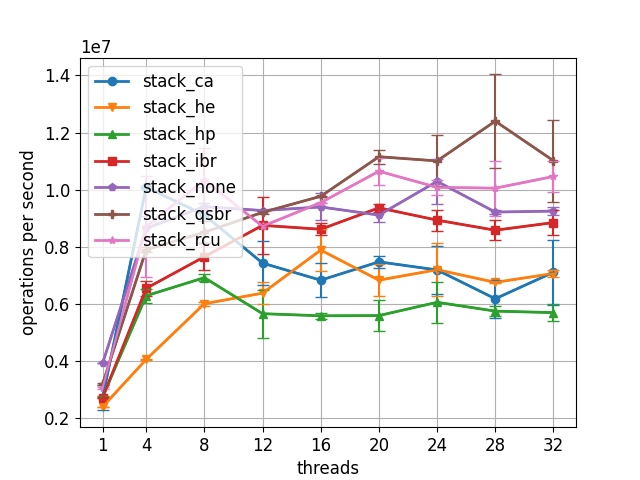}\hfill
            \includegraphics[width=0.33\linewidth, height=6cm, keepaspectratio]{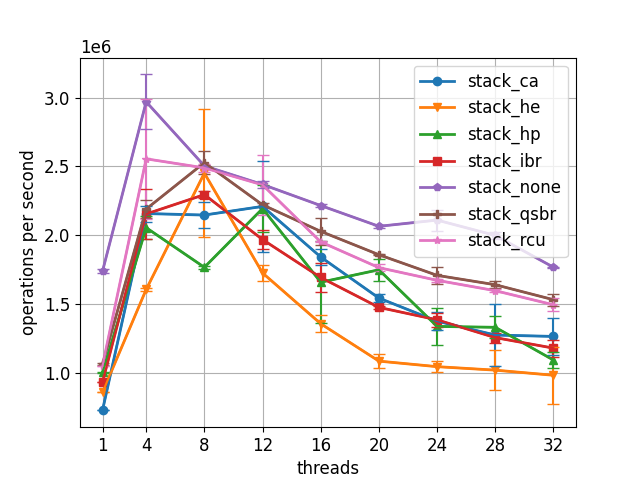}\hfill
            \label{fig:stack}
        \end{subfigure}
     \end{minipage}
    \caption{Evaluation of throughput. Y axis: throughput. X axis: \#threads. Left: 0i-0d. Middle: 5i-5d. Right: 50i-50d. (Top Row) Chaining Hash table, \#buckets:128, MAX size 128K. (Bottom Row) Stack.
    }
    \label{fig:exp2}
\end{figure*}

Throughout the experiments in \figref{exp1} and \figref{exp2}; \texttt{hp}, \texttt{he} and \texttt{ibr} are generally slower than the other algorithms.
This mainly can be attributed to high per-read overheads, as these algorithms have read/write fences to access or update reservations and epochs, respectively. 
Additionally, these algorithms have reclamation overhead which requires scanning of reservations to determine which records are safe to free.
In general this results in poor cache behaviour and high operation latency.

On the other side, \texttt{rcu} and \texttt{qsbr} have no per-read overhead.  Their main overhead arises from their reclamation events, where batches of retired objects are freed after scanning the epochs of all the processes.  This is amortized over multiple operations. As a result these algorithms are faster and perform similar to the baseline \texttt{none}, across workloads and data structures. 

In read-only workloads, CA is comparatively slower than \texttt{rcu}, \texttt{qsbr} and \texttt{none}.  This is due to the increased latency: checking the \hbrbit after each \crd{} increases the instruction count.  Since there are no conflicts, these checks are superfluous.
However, in workloads with updates, CA is closer to or faster than \texttt{rcu}, \texttt{qsbr} and \texttt{none}. 
It even outperforms these algorithms in high contention scenarios (i.e., high updates and high thread counts). This is due to the fact that CA avoids read-write fences for both readers and reclaimers. 
CA brings additional benefits.  Immediate reclamation improves cache and TLB locality, especially relative to \texttt{none}; it discovers failures earlier than other algorithms, which enables it to restart without wasting as much work; and it avoids some cache miss latencies.
All these contribute to low latency and higher throughput.  The low cost of cache misses is due to a property that unlike regular reads, in \crd{s} the impact of cache misses remains confined to its core \cite{alistarh2020memory}. We explain this in following paragraphs.

In data structure operations with normal reads and writes, all threads that share memory locations experience latency due to cache misses. Consider a lazy list, and suppose that thread T1 is about to acquire locks on its \pred and \curr nodes.  Suppose that another thread T2 has read the \pred and is about to re-read it.
At this point, at the cache level, both T1 and T2 will have copies of the cache lines corresponding to these addresses in the shared state.
When T1 acquires a lock on \pred it does a write.  This triggers coherence traffic: all other readers of \pred must invalidate their copies of the cache line (here T2).  When T2 reads \pred again it will suffer a cache miss as its copy of the cache line is invalid. In order to serve the cache miss:
\begin{itemize}
    \item T2 triggers a cache level transaction to fetch the latest copy of the cache line and waits for a response.
    \item T1, which has the cache line in M state may be forced to write its copy of the cache line back to the memory hierarchy, and also supply it to T2. 
\end{itemize}

This wastes T2's compute time, because T2 will ultimately see that the line has changed, necessitating that it restart its operation.  If it had not waited, it could have already restarted and executed multiple instructions. 
Furthermore, since T1 acquired the lock on \pred it is likely to write to \pred again.  T2's request caused the line to downgrade from M to S in T1's cache, so a subsequent write by T1 will need to begin with an ownership request that causes T2 to re-invalidate the line. Such frequent downgrading to shared state and upgrading to modified state interferes with the gains made by write buffering and makes it difficult to hide the cache latency. These overheads worsen with increases in contention on shared locations.

On the other hand, in data structures designed using \crd{} and \cwr{}, T2's second \crd{} will fail validation and retry its operation by detecting that the line is no longer present, \emph{without requesting a new copy of the line}.  Unlike the aforementioned issues with regular reads/writes, CA allows T2 to skip requesting the value of \pred, which prevents global cache traffic.  This avoids read latency for T2, and also helps T1 to avoid a cache state upgrade transactions. In other words, unlike regular read/write based data structures, the impact of failure to access cache lines in data structures with CA remains confined to a local core~\cite{alistarh2020memory}. 

Thus, in all the data structure implementations, the lower L1 data cache latencies that result from the aforementioned properties allow CA to be as good as the other algorithms, if not faster, when contention is high.  In read only workloads, CA is slower or comparable to the baseline (none) and other fast algorithms (\texttt{qsbr} and \texttt{rcu}) mainly due to overhead of its higher instruction count. 

\begin{figure}
\centering
\includegraphics[width=0.5\textwidth]{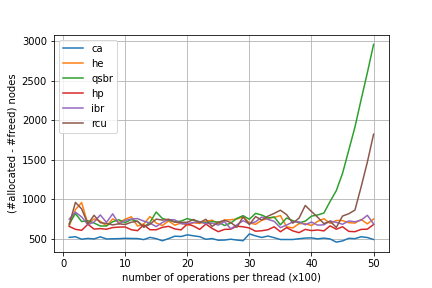}
\caption{Memory Consumption: Shows Number of nodes allocated but not yet freed for a list of size $\sim$500. Insert and delete percent is 50 each. For 16 threads.}
\label{fig:memconsume}
\end{figure}

\figref{memconsume} looks at memory overheads: For each of the reclamation schemes, we measure the number of nodes that were allocated but not yet freed (Y axis) during execution of the lazy list data structure after every 1000 operations (X axis).  This test exposes the amount by which the memory footprint of a data structure increases when paired with different reclamation schemes. For this experiment, we use a lazy list with values in the range of 0 to 1000, initially pre-filled with 500 nodes.  The experiment has 16 threads operating on the list.  During the measured part of the experiment all threads execute insert and delete operations with equal probability of 50\% (100\% update workload). Each thread runs 5000 ops. 

In the ideal case, at any time during the experiment the list size should be roughly 500 due to the workload characteristics, and the number of nodes deleted but yet not freed should be zero.  The CA scheme has a consistent reading of roughly 500 nodes that are allocated but not freed; these are the nodes that are still reachable in the list. This confirms that we are achieving immediate reclamation and keeping the memory footprint low.  Since the other reclamation schemes defer memory reclamation of deleted nodes by collecting them in a local retired list, the number of nodes that have not been reclaimed increases, which in turn leads to increased memory footprint.  This intuition is verified in the chart, as hp, he, ibr, rcu and qsbr all report a higher number of un-reclaimed nodes It is worth mentioning that, since in qsbr and rcu a delayed thread could prevent reclamation of all threads, the number of unreclaimed nodes could increase without bound.  Had we run the experiment for longer, the number of unreclaimed nodes for these schemes would be expected to balloon as soon as any thread context switched.

Alternatively, the performance boost can also be due to the approach to design \ca algorithm that takes into account the lower level layers that could mitigate the negative interaction that might occur at the lower level in the software stack, for example, similar to the one we reported in ~\cite{kim2024token}.

\section{Limitations and Miscellaneous Discussion}
\label{sec:calim}

In this section, we discuss some limitations and their possible solutions along with some questions regarding our approach that could arise in a reader's mind.
\begin{enumerate}
    \item \ca assumes that a single cacheline does not contain multiple nodes to avoid accidental untagging due to cacheline sharing (this is also an assumption in previous work like Memory Tagging by Alistarh et al.~\cite{alistarh2020memory}).
    In order to support multiple nodes in the same cache line one could use a Counting Bloom filter to make sure nodes are not untagged by accident. Alternatively, one could use more than one bit per cache line. E.g., 2 bits per cache line would support two data structure nodes.

    In many cases in concurrent data structures it is actually preferable to pad nodes so that only one node appears in each cache line---sometimes to avoid false sharing, and other times to avoid cache line crossings (individual nodes that span more than one cache line because of odd node sizes, such as 48 bytes)~\cite{arbel2018getting}.
    
    \item 
    One might wonder whether, in a multiprogrammed system, context switches between threads could create an ABA problem on the tagged addresses themselves. However, this can be easily handled by clearing tags (or simply setting the \hbrbit, which achieves the similar effect) on a context switch. 
    This essentially aborts operations that are concurrent with a context switch, and so, in principal, could threaten progress, but context switches are typically infrequent, happening on a much longer time scale than individual concurrent data structure operations.
    Similarly, the \hbrbit is set if any cache line corresponding to a data structure node is evicted.
    
    \item 
    From a security standpoint, additional study is necessary to verify that \ca remains resistant to meltdown/spectre vulnerabilities related to speculative side-channel attacks.
    Nonetheless, we posit that \ca exhibits a lower attack bandwidth compared to hardware transactional memory (HTM). 
    Unlike HTM, \ca does not have immediacy of aborts. That is, in \ca a thread, as far as true data conflict is concerned, fails mainly due to its own steps in programs and it must check if the \crd{}/\cwr{} it issued has failed, and must manually clean up and restart its operation.
    This should be slower than the way that HTM immediately aborts and jumps to a handler on a memory conflict. 
    Consequently, we expect \ca to not offer meltdown-level bandwidth for speculation-based side-channel attacks.

    \item  Memory access instructions in \ca are prone to spurious failures that may affect progress guarantees. The failures due to associativity conflicts could be avoided if tagging were implemented with a TCAM. (Note that TCAMs have been used previously, e.g., in IBM's transactional memory implementations, to avoid associativity conflicts.) If progress is still a concern, one could add a fallback path, taking inspiration from transactional lock elision~\cite{rajwar2001speculative, afek2015amalgamated}. Our expectation is that the fallback path should execute less frequently than in HTMs (especially if we avoid associativity conflicts by using a TCAM).
    Some possible approaches to a fallback path include lock-based execution, as in TLE, and epoch-based reclamation which we would like to explore as part of the future work.

\end{enumerate}

There are similarities between \ca and the \textit{Optimistic Access} line of algorithms (see~\secref{oaparadigm}), but we believe that immediate reclamation has not yet been efficiently solved. Software solutions like VBR~\cite{sheffi2021vbr}, OA~\cite{cohen2015efficient}, AOA~\cite{cohen2015automatic}, and FA~\cite{cohen2018every} essentially assume that a thread never calls free, and also that memory is type stable – if address A of type T is freed, a subsequent malloc will only return A if it will be used as type T. This way, one can reuse nodes immediately without segfaulting. (Alternatively, one could trap segfaults, but that has downsides like complicating program debugging.) \ca does not require this assumption.
After conditional access was proposed, a custom lock-free allocator was designed~\cite{moreno2023releasing} that resolves the issue with the ability of these techniques to invoke free. 

Furthermore, unlike \textit{Optimistic Access} style algorithms like VBR~\cite{sheffi2021vbr}, \ca also does not need wide CAS, and does not need to change the memory layout of objects to augment mutable fields with version stamps (which increases the size of objects, possibly having a negative impact on performance).
As an extension to \ca it would be interesting to explore whether the techniques in \ca could be used to accelerate the \textit{Optimistic Access} line of software solutions and, in particular, to determine whether data structures in normalized form could be adapted for use with \ca.

\section{Additional Related Work} 
\label{sec:carelated}
\subsection{Discussion on Reclamation Techniques}
In this section, we focus on techniques which could provide immediate reclamation~\cite{zhou2017hand, sheffi2021vbr} and therefore are most closely related to \ca. \Chapref{chapsurvey} contains a detailed survey of the safe memory reclamation techniques.

Zhou et al.\cite{zhou2017hand} make use of a sequence of short hardware transactions which execute in hand over hand fashion to design concurrent data structures that retain the property of immediate memory reclamation. 
The technique relies on augmenting the data structure with a table of metadata, which can be a source of false conflicts.  Consequently, it does not appear to be as general as \ca.  Moreover, we found that the frequent starting and committing of transactions for read-only operations introduced significant latency.


VBR~\cite{sheffi2021vbr} attaches metadata to each mutable field of each node in a concurrent data structure.  
It also requires a type preserving allocator, where unlinked nodes can never be returned to the operating system.
Threads can detect use-after-free errors through the per-field metadata, which is updated atomically with the corresponding field.
While VBR can support immediate reclamation, it is most efficient when it waits until it has a batch of nodes to reclaim in a single operation.

On the other hand, \ca does not require any metadata to achieve safe reclamation and only makes use of the implicit book-keeping of the underlying cache-coherence protocol. 
In addition, since it does not make any assumptions about the number of threads present in the system, it is fully adaptive~\cite{herlihy2003space}. Furthermore, whereas HTM can accelerate timing-based attacks by leveraging the immediacy with which a thread is aborted upon a memory conflict~\cite{Lipp2018meltdown}, in particular, transaction rollbacks could lead to data leaks~\cite{intelTAA}, we believe \ca is less risky, since threads must poll to learn of remote coherence events.

\subsection{Discussion on Similar Synchronization Techniques}

\ca is inspired by, but quite different from, the Memory Tagging proposal of Alistarh et al.~\cite{alistarh2020memory}. Perhaps the most significant difference is that \ca solves the safe memory reclamation problem (and moreover offers immediate reclamation), in addition to providing useful synchronization primitives for designing concurrent data structures. In contrast, Memory Tagging does not address the memory reclamation problem, and it requires a data structure designer to rely on separate safe memory reclamation algorithms, which come with their own tradeoffs.

In reference to the programming interface, \ca offers \crd, which is critical to our immediate memory reclamation technique. The \crd instruction has no equivalent instruction in Memory Tagging, and it is not clear how one could implement \crd using memory tagging.
We also streamlined tagging by integrating it into \crd, whereas Memory Tagging requires a programmer to use an explicit \textit{AddTag} instruction before reading.

From an implementation standpoint, \ca does not require changing the underlying coherence protocol, whereas Memory Tagging’s Invalidate and Swap (IAS) instruction does, as this single instruction can invalidate many (potentially non-contiguous) remote cache lines (potentially spanning many pages). Moreover, \ca requires only 1 bit per cache line (2 bits per cache line in case of 2-way hyperthreading), whereas Memory Tagging needs to additionally maintain a set of addresses to invalidate with IAS.

It is worth noting that at the outset \ca may appear similar to HTM with early release (ER)~\cite{sonmez2007unreadtvar, skare2006early}. Possibly, one could achieve many aspects of our work by using HTM with early release. However, this would introduce various downsides.
HTM defaults to putting all reads and writes into the read/write sets. This includes the stack, the allocator, library code, etc. In data structures, many reads and writes would need to be released, which would increase the instruction count significantly. This could reduce performance and might yield a less convenient interface than \ca.

Practically, some commercial HTMs have a region based (not per access) \textit{disable tracking} feature, for instance, Intel's new TSXLDTRK~\cite{intelSDM}, and IBM's TSUSPEND/TRESUME~\cite{le2015transactional}, but this does not \textit{release} load tracking of already-read locations, it only prevents tracking of future accesses. This is different from \ca's proposed \texttt{untag} instruction which allows the release of any previously accessed location. Among Early Release proposals, we are not aware of any that release writes, although AMD's 2008 ASF proposal allowed per-access decisions about whether or not to track~\cite{christie2010evaluation}. However, ASF remains unimplemented. Additionally, we have not experimented with TSXLDTRK, but TSUSPEND suffers from relatively high overhead.

Unlike HTM, \ca does not need a write-set at all, which admits a simpler implementation in hardware, as well as simpler conflict tracking and resolution.
We think \ca solves an important problem for optimistic data structures with less hardware (as demonstrated in \secref{hwimpl}). Our hope is that our hardware-software codesign approach to \ca will enable the concurrent data structure community to discover novel and efficient solutions to existing concurrency problems.

\section{Summary}
\label{sec:casummary}

In this chapter, we introduced \ca, a hardware extension that enables concurrent data structures to reclaim memory immediately, without introducing new inter-thread coordination.
\ca reconciles the trade-off between speedup and immediate reclamation by leveraging the underlying cache coherence mechanism which readily provides the information about the probable event where a thread might attempt to access a concurrently freed node, therefore eliminating the need for redundant and suboptimal bookkeeping at the application level.

\ca is fast.
Unlike state-of-the-art safe memory reclamation algorithms, it does not require tuning to achieve high performance, and is tailored to the needs of modern optimistic data structures. In particular, it allows immediate reuse of memory without compromising speedup of the data structures, which is desirable in the universal memory allocator of the FreeBSD kernel.

We prototype \ca on Graphite, which is an open source multicore simulator, and implement a benchmark comprised of multiple state-of-the-art memory reclamation techniques and data structures.

To date, we have used \ca for simple non-blocking data structures, as well as optimistic lock-based data structures.
In the future, it would be interesting to determine whether \ca\ can also be used for more complex lock-free data structures.
We also believe that there are exciting opportunities at the interface between \ca and non-volatile main memory technologies.

\ignore
{

TODO:
\subsubsection{System and Architectural Background}
\textbf{HW}:Discuss MSI, MESI cache protocol and behaviour of a shared memory access and underlying working which helps readers to understand the working of Conditional Access.
this moves to a bg section in CA chapter.

-- \ajay{A figure of a NUMA architecture system showing processor, cache levels.}

\noindent
\textbf{OS}: 

--Signals (NBR chapter specific)

--allocators (General) -- explain how memory is alloctaed and reused etc...

--virtual memory (general)

Explain memory Subsystem and how programs use reuse memory from software level to Hw level. The info reader need to understand SMRs and put them in context. Then Signals and Cache explanation moves to specific chapters.

\textbf{Memory Ordring and COnsistency \& discussion on memory model}: To help readers understand the HP reodering and Placement of of fences in NBR and CA. Chapter 2.2 in Trev Book.
Talk of C++ ordering and memory model.

\textbf{Describe Atomic instructions}: FAA, CAS. used in my algorithms.
\textbf{Describe Linerizability, Non blocking progress, NBDS, and optimistic data structure types}
}


\chapter{Conclusion}
\label{chap:chapconclusion}

This dissertation contributes to advances in the state of the art in safe memory reclamation for solving use-after-free errors in concurrent data structures using non-blocking mechanisms, with three major objectives.

Designing a fast safe memory reclamation algorithm that has a bounded memory footprint has been a challenge. 
In practice, it has been even more challenging to design a fast and bounded memory footprint reclamation algorithm that consistently preserves these two properties 
across various levels of contention and types of workloads, while also avoiding alterations to the memory layout of data structures, not needing particular compiler or architectural support, and being applicable to a broad array of data structures. Our first contribution in this dissertation addresses this problem.

We adopt an approach that, while similar in spirit to the optimistic family of reclamation algorithms, differs significantly in practice. At an abstract level, unlike state-of-the-art methods where readers either pessimistically reserve references or wait for references to become safe to free, our method allows readers to access memory with only local reservations (no costly fence and validations like reservations in HP) 
and eliminates the need for reclaimers to wait. Instead, reclaimers enforce coordination between threads to ensure safety.

Concurrent data structures typically have a long read phase followed by a write phase. The key idea is to permit readers to access memory locations without reservations, while preparing them to discard their references if a reclaimer is concurrently reclaiming those locations. If readers cannot discard their references, they must reserve them. Rather than waiting for nodes to become safe for reclamation, reclaimers enforce reclamation by sending signals to all threads. In response, readers either discard their references and restart or confirm that their references are reserved and thus cannot be reclaimed. We refer to this paradigm as neutralization.

We use this paradigm to implement NBR, presented in \chapref{chapnbr}, a neutralization-based algorithm, followed by a highly optimized version, NBR+, presented in \chapref{chapnbrp}. These algorithms address the long-standing challenge of designing fast, practical, and bounded memory footprint safe reclamation algorithms that perform consistently across various contention levels and system settings, are easy to use, and apply to a wide range of data structures, as shown in \chapref{chapappuse}. Furthermore, our integration of these techniques with multiple data structures demonstrates their simplicity and wide applicability compared to other hybrid approaches.

A current limitation of neutralization-based algorithms is that they require programmers to distinguish between the read and write phases of a data structure. Although this is straightforward for most data structures, Sheffi and Petrank~\cite{sheffi2023era} classified these as "access-aware" data structures, and more complex structures may require expert skills. The technique does not apply to data structures where the read phase does not begin from an entry point in the data structure, such as head in lists or root in trees. Recently, Kim, Jung, and Kang explored a hazard pointer-based checkpointing technique that we believe could potentially address this limitation~\cite{kim2024expediting}. 
As part of the future extension of this work, it will be interesting to further extend applicability of NBR and also evaluate the algorithms with user space signals made available through the introduction of \texttt{senduipi} instruction on Intel's Sapphire Rapid architectures.

Presented in \chapref{chaprsp}, our second major contribution is the introduction of a paradigm for enabling reactive synchronization in the design of safe memory reclamation algorithms. This paradigm specifically addresses the issue of uneven synchronization overhead imposed by state-of-the-art safe memory reclamation algorithms on the data structures with which they are paired.
In practice, threads frequently read shared-memory locations and rarely modify them. Delete events, which can cause use-after-free errors, are even rarer. Given this observation, it is an overhead on the entire system to publish reclamation-related synchronization information (references in hazard pointers and eras in hazard eras) eagerly for every read. It is more efficient to publish this information only when a reclamation event occurs.

To achieve this, we use signals again, but to enable readers to publish reclamation-related information only when a reclamation event occurs, which we call the "publish on ping" algorithm. We designed a family of publish-on-ping reclamation algorithms using publish-on-ping with hazard pointers and hazard eras. The most important application of publish-on-ping is to allow hazard pointers to function as epoch-based reclamation mechanisms almost all the time. When a thread is unable to reclaim, it pings all threads to publish their reservations, which they have been silently maintaining. To the best of our knowledge, this is the first approach where the fast path and slow path run concurrently without switching between the two; in fact, a thread can reclaim using the slow path, while another operates in the fast path.

One advantage of publish on ping is that it does not interfere with applicability and usability of the original algorithms, and yet allows for drastic improvements in their performance. 

Recently proposed user-space signals have demonstrated a performance improvement of up to 10× over POSIX signals~\cite{useripilinux}. The first hardware implementation of these user-space signals is available on Intel's Sapphire Rapids machines. As part of the future work, it would be interesting to study how much more speedup the neutralization-based algorithms and publish-on-ping based algorithms can achieve using these fast user-space signals.
Additionally, in the EpochPOP variant, inspired by Erez Petrank's suggestion, we will explore reducing the number of signals by selectively signaling threads that appear to be stuck.

In our third and final contribution presented in \chapref{chapirp}, we examine the issue of safe memory reclamation from the first principles and observe events at the architectural level of the system stack. We note that much of the synchronization required for safe memory reclamation already occurs at the architectural level. Therefore, exposing these events to programmers through new hardware instructions is beneficial to efficiency, as redundant synchronization is eliminated. To the best of our knowledge, this is the first time in the literature of safe memory reclamation algorithms that such a hardware-software co-design paradigm has been proposed.

A surprising benefit of this codesign is that it not only eliminates redundant synchronization at the programmer level, but also enables safe memory reclamation to achieve a sequential data structure-like ideal memory footprint, effectively eliminating deferred reclamation. The resulting algorithm, called Conditional Access, enables immediate reclamation and can offer significant advantages in modern data center environments, particularly in terms of performance, resource allocation, and security. 
In the future, it would be interesting to apply Conditional Access to more data structures, starting with those in normalized form~\cite{timnat2014practical}.
In addition, we would like to prototype Conditional Access on FPGAs and validate the benefits shown on the graphite simulator.

\bibliographystyle{plain}
\cleardoublepage 
\phantomsection  
\renewcommand*{\bibname}{References}

\addcontentsline{toc}{chapter}{\textbf{References}}

\bibliography{uw-ethesis.bib}





\end{document}